\documentclass[11pt,aps,pra,notitlepage,tightenlines,nofootinbib,superscriptaddress]{revtex4-2}

\usepackage{fancyhdr}
\fancypagestyle{header}{
  \fancyhf{}%
  \fancyhead[R]{\hyperref[toc]{\thepage}}%
}

\makeatletter

\newcommand{\apptocfile}{atoc}

\let\apptoc@orig@appendix\appendix
\renewcommand{\appendix}{%
  \apptoc@orig@appendix
  \let\apptoc@orig@addtocontents\addtocontents
  \long\def\addtocontents##1##2{%
    \def\apptoc@ext{##1}%
    \def\apptoc@toc{toc}%
    \ifx\apptoc@ext\apptoc@toc
      \apptoc@orig@addtocontents{\apptocfile}{##2}%
    \else
      \apptoc@orig@addtocontents{##1}{##2}%
    \fi
  }%
}

\newcommand{\appendixtableofcontents}{%
  \begingroup
    \setcounter{tocdepth}{3}%
    \phantomsection
    \let\addcontentsline\@gobblethree
    \section*{Appendices}%
    \pdfbookmark[1]{Appendices}{apxcontents}%
    \@starttoc{\apptocfile}%
  \endgroup
}

\makeatother

\usepackage{newpxtext,newpxmath}

\let\coloneqq\relax

\usepackage[latin1]{inputenc}
\usepackage{amsthm}
\usepackage{amssymb}
\usepackage{amsmath}
\usepackage{bbold}
\usepackage{bbm}
\usepackage[pdftex, backref=page]{hyperref}
\usepackage{braket}
\usepackage{dsfont}
\usepackage{mathdots}
\usepackage{mathtools}
\usepackage{enumerate}
\usepackage[shortlabels]{enumitem}
\usepackage{csquotes}
\usepackage{stmaryrd}
\usepackage[cal=boondox]{mathalfa}
\usepackage{graphicx}
\usepackage{stackengine}
\usepackage{scalerel}
\usepackage{tensor}       
\usepackage{xcolor}
\usepackage{array}
\usepackage{makecell}
\newcolumntype{x}[1]{>{\centering\arraybackslash}p{#1}}
\usepackage{tikz}
\usepackage{pgfplots}
\usetikzlibrary{shapes.geometric, shapes.misc, positioning, arrows, arrows.meta, decorations.pathreplacing, decorations.pathmorphing, patterns, angles, quotes, calc}
\usepackage{booktabs}
\usepackage{xfrac}
\usepackage{siunitx}
\usepackage{centernot}
\usepackage{comment}
\usepackage{chngcntr}
\usepackage[normalem]{ulem}

\usepackage{bm}

\newtheorem{thm}{Theorem}
\newtheorem*{thm*}{Theorem}

\newtheorem*{prop*}{Proposition}
\newtheorem{lemma}[thm]{Lemma}
\newtheorem*{lemma*}{Lemma}

\newtheorem*{algorithm*}{Algorithm}

\newtheorem*{corollary*}{Corollary}
\newtheorem{cor}[thm]{Corollary}
\newtheorem*{cor*}{Corollary}

\newtheorem*{cj*}{Conjecture}
\newtheorem{Def}[thm]{Definition}
\newtheorem*{Def*}{Definition}

\newtheorem*{question*}{Question}
\newtheorem{problem}[thm]{Problem}
\newtheorem*{problem*}{Problem}

\newtheorem*{claim*}{Claim}

\makeatletter
\def\thmhead@plain#1#2#3{%
  \thmname{#1}\thmnumber{\@ifnotempty{#1}{ }\@upn{#2}}%
  \thmnote{ {\the\thm@notefont#3}}}
\let\thmhead\thmhead@plain
\makeatother

\theoremstyle{definition}

\newcommand{\bb}{\begin{equation}\begin{aligned}\hspace{0pt}}
\newcommand{\bbb}{\begin{equation*}\begin{aligned}}
\newcommand{\ee}{\end{aligned}\end{equation}}
\newcommand{\eee}{\end{aligned}\end{equation*}}
\newcommand*{\coloneqq}{\mathrel{\vcenter{\baselineskip0.5ex \lineskiplimit0pt \hbox{\scriptsize.}\hbox{\scriptsize.}}} =}

\newcommand{\ketbra}[1]{\ket{#1}\!\!\bra{#1}}

\newcommand{\ketbraa}[2]{\ket{#1}\!\!\bra{#2}}

\renewcommand{\epsilon}{\varepsilon}

\DeclareMathOperator{\Tr}{Tr}

\DeclareMathAlphabet{\pazocal}{OMS}{zplm}{m}{n}

\DeclareMathOperator{\diag}{diag}

\newcommand{\HH}{\pazocal{H}}

\newcommand{\NN}{\pazocal{N}}

\newcommand{\lsmatrix}{\left(\begin{smallmatrix}}
\newcommand{\rsmatrix}{\end{smallmatrix}\right)}

\stackMath

\stackMath

\makeatletter
\newcommand*\rel@kern[1]{\kern#1\dimexpr\macc@kerna}
\newcommand*\widebar[1]{%
  \begingroup
  \def\mathaccent##1##2{%
    \rel@kern{0.8}%
    \overline{\rel@kern{-0.8}\macc@nucleus\rel@kern{0.2}}%
    \rel@kern{-0.2}%
  }%
  \macc@depth\@ne
  \let\math@bgroup\@empty \let\math@egroup\macc@set@skewchar
  \mathsurround\z@ \frozen@everymath{\mathgroup\macc@group\relax}%
  \macc@set@skewchar\relax
  \let\mathaccentV\macc@nested@a
  \macc@nested@a\relax111{#1}%
  \endgroup
}

\counterwithin*{equation}{part}
\counterwithin*{thm}{part}
\counterwithin*{figure}{part}

\tikzset{meter/.append style={draw, inner sep=10, rectangle, font=\vphantom{A}, minimum width=30, line width=.8, path picture={\draw[black] ([shift={(.1,.3)}]path picture bounding box.south west) to[bend left=50] ([shift={(-.1,.3)}]path picture bounding box.south east);\draw[black,-latex] ([shift={(0,.1)}]path picture bounding box.south) -- ([shift={(.3,-.1)}]path picture bounding box.north);}}}
\tikzset{roundnode/.append style={circle, draw=black, fill=gray!20, thick, minimum size=10mm}}
\tikzset{squarenode/.style={rectangle, draw=black, fill=none, thick, minimum size=10mm}}

\definecolor{Blues5seq1}{RGB}{239,243,255}
\definecolor{Blues5seq2}{RGB}{189,215,231}
\definecolor{Blues5seq3}{RGB}{107,174,214}
\definecolor{Blues5seq4}{RGB}{49,130,189}
\definecolor{Blues5seq5}{RGB}{8,81,156}

\definecolor{Greens5seq1}{RGB}{237,248,233}
\definecolor{Greens5seq2}{RGB}{186,228,179}
\definecolor{Greens5seq3}{RGB}{116,196,118}
\definecolor{Greens5seq4}{RGB}{49,163,84}
\definecolor{Greens5seq5}{RGB}{0,109,44}

\definecolor{Reds5seq1}{RGB}{254,229,217}
\definecolor{Reds5seq2}{RGB}{252,174,145}
\definecolor{Reds5seq3}{RGB}{251,106,74}
\definecolor{Reds5seq4}{RGB}{222,45,38}
\definecolor{Reds5seq5}{RGB}{165,15,21}

\allowdisplaybreaks

\usepackage{cleveref}
\crefname{thm}{Theorem}{Theorems}
\crefname{figure}{Figure}{Figures}

\theoremstyle{plain}
\newtheorem{mainproblem}{Problem}
\newtheorem{maintheorem}{Theorem}

\usepackage{algorithm}
\usepackage{algpseudocode}
\usepackage{mdframed}
\usepackage{setspace}
\setdisplayskipstretch{} 
\usepackage{xcolor}
\definecolor{blueviolet}{rgb}{0.2, 0.2, 0.6}
\definecolor{webgreen}{rgb}{0,.5,0}
\definecolor{webbrown}{rgb}{.6,0,0}
\usepackage{hyperref}
\hypersetup{
    colorlinks,
    linkcolor=blueviolet,
    citecolor=webgreen,
    urlcolor=webbrown
}
\usepackage{cleveref}

\begin{document}
\pagestyle{header}

\title{Exponential separation in sensing continuous signals via squeezing} 

\author{Francesco Anna Mele}
\thanks{These authors contributed equally to this work.}
\affiliation{California Institute of Technology, Pasadena, California 91125, USA}

\author{Nadine Meister}
\thanks{These authors contributed equally to this work.}
\affiliation{California Institute of Technology, Pasadena, California 91125, USA}

\author{Haimeng~Zhao}
\affiliation{California Institute of Technology, Pasadena, California 91125, USA}

\author{Senrui Chen}
\affiliation{California Institute of Technology, Pasadena, California 91125, USA}

\author{Hsin-Yuan Huang}
\affiliation{Oratomic, Pasadena, California 91125, USA}
\affiliation{California Institute of Technology, Pasadena, California 91125, USA}

\begin{abstract} 
\begin{spacing}{1.2}
Quantum sensing traditionally focuses on using quantum resources such as squeezing and entanglement to improve precision for sensing fixed signals. However, in many applications such as gravitational-wave detection and electromagnetic-field sensing, the signal evolves continuously and varies through time. Additionally, the learner is free to prepare, control, and measure the sensor at \emph{arbitrary} times, possibly chosen adaptively. In this work, we establish exponential separation in sensing time for continuously evolving signals due to the available squeezing. The signals we study are characterized by a pattern size~\(T\), and we find that a sensor with squeezing at least $\omega(\sqrt{\log T})$ offers a $\mathrm{poly}(T)$ sensing time, whereas those with squeezing at most $o(\sqrt{\log T})$ must use an exponential sensing time in $T$.
\end{spacing}
\end{abstract}

\maketitle

\section{Introduction}

A central goal of quantum information is to understand and harness the advantages quantum technologies offer. Beyond advantages in solving computational problems \cite{Feynman1982,Shor,grover97,dalzellQuantumAlgorithmsSurvey2023,liu2021rigorous,zhao2026massive}, quantum technologies also enable us to learn about the physical world more efficiently and facilitate scientific discoveries \cite{caves1981quantum,giovannetti2004quantum,abbott2016observation}.
An example of such a learning task is to sense an unknown signal in nature, which might be an electromagnetic signal, an exerted force, or some spacetime strain from gravitational waves. To do so, one prepares a probe sensor in a specific quantum state, strategically controls its evolution under the signal, measures it, and repeats to accumulate experimental data and infer the desired properties of the signal~\cite{degen2017quantum}.

It is well known that when sensing a static signal, quantum resources such as entanglement and squeezing enable quadratic improvement in sensitivity \cite{giovannetti2004quantum,giovannetti2011advances,pezze2018quantum}.
This quadratic advantage achieves the ultimate limit known as the Heisenberg limit and cannot be further improved asymptotically for static signals. 
This raises a natural question:
\begin{center}
    \emph{Can quantum resources lead to exponential separations in sensing time-varying signals?}
\end{center}

A fruitful theoretical abstraction to establish exponential separations in learning about the physical world is to study the query or sample complexity of learning properties of quantum channels and states.
Recent works have established exponential separations with various quantum resources in a growing range of state and channel learning tasks \cite{huang2022quantum,Liu_2025,huang2021information,aharonov2022quantum,chen2021exponential,chen2021hierarchy,Chen_2024,chen2022pauli,seif2024entanglement,chen2025efficientpaulichannelestimation,kim2025fundamental,oh2024entanglement,cotler2026quantum,coroi2025exponential,kannan2026exponential,prabhu2026exponential,king2024exponential,chen2024adaptivity,king2025triply,caro2024learningquantumprocesses,bravoprieto2026quantum}, some of which have been experimentally demonstrated \cite{huang2022quantum,Liu_2025,kannan2026exponential}. We discuss the works most closely related to sensing signals and their relation to our setting in Appendix~\ref{sec:related_work}.  
Nevertheless, it is important to note that querying the effective channel is only a particular way of sensing the underlying signal, and therefore separations in channel query complexity may not survive once continuous-time control is allowed.
A simple example is estimating the magnitude of a sinusoidal signal with known frequency. It can be easily solved by applying spin echo control~\cite{degen2017quantum}, while forcing one to query the effective channel of a full period makes the task impossible.

\begin{figure}[t]
    \centering
    \includegraphics[width=\linewidth]{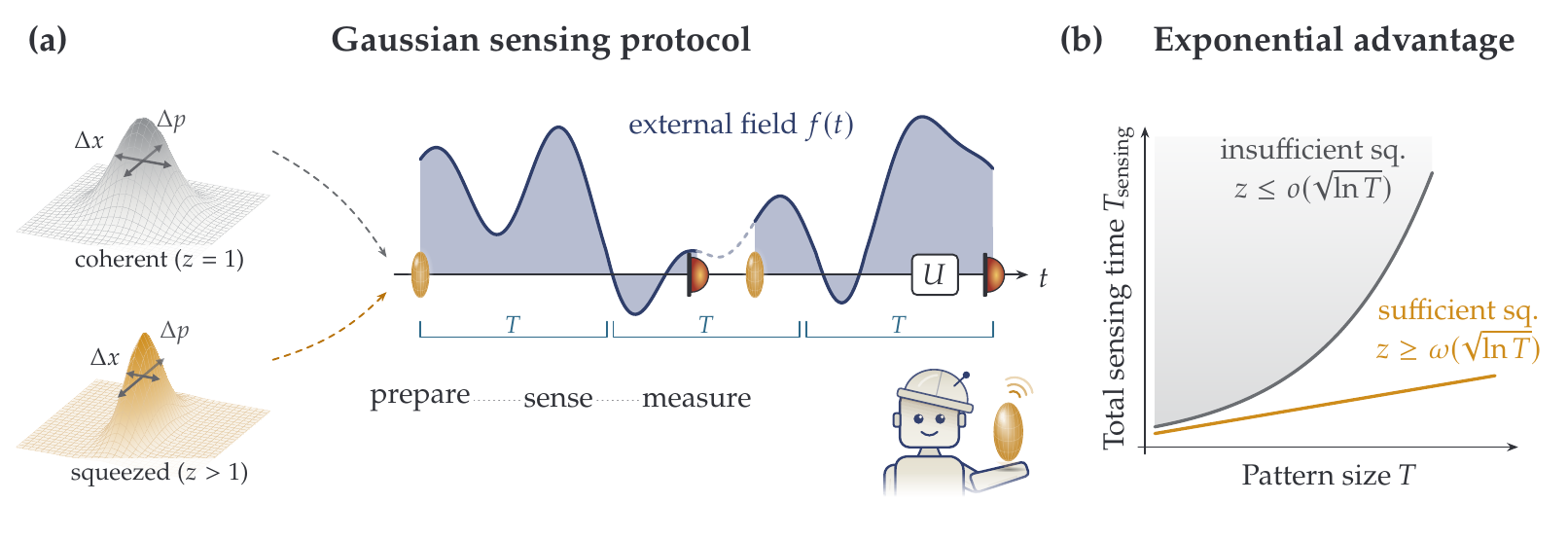}
    \caption{(a) Within our single-mode sensing framework, we consider Gaussian sensing protocols with bounded squeezing. These protocols allow the learner to prepare the sensor in single-mode Gaussian states with squeezing at most $z$, let the sensor evolve for any choice of time $t$ while performing arbitrary squeezing-free Gaussian unitaries $U$ and destructive Gaussian measurements at arbitrary times, and perform unrestricted classical processing of the measurement outcomes. States with no squeezing have equal uncertainty in both phase quadratures $\Delta x = \Delta p$, whereas states with higher squeezing $(z>1)$ have reduced uncertainty in one of the quadratures. (b) For the sensing problems studied in this work, sufficient squeezing, i.e.~$z = \omega(\sqrt{\log T})$, enables polynomial total sensing time. In contrast, for $z = o(\sqrt{\log T})$, the sensing time is fundamentally required to be exponential in $T$. Here, $T$ denotes the duration of the time-correlation pattern hidden in the time-varying signal, as defined in the sensing tasks in Subsection~\ref{subsec_def_prob}. }
    \label{fig:fig1}
\end{figure}

Thus, to obtain rigorous end-to-end separations beyond the channel query model, it is important to take continuous-time control into account.
Ref.~\cite{allen2025quantumcomputingenhancedsensing} made progress by designing a quantum computing enhanced sensing protocol for detecting weak oscillating signals based on Grover's algorithm. 
Their quantum algorithm exhibits polynomial advantage over conventional strategies. Yet, it remains unclear whether simple quantum resources and control can lead to \emph{exponential} separations in sensing time-varying signals.\footnote{Note that Refs.~\cite{cotler2026quantum,kannan2026exponential} establish exponential advantages for sensing time-varying signals in \emph{channel-query models}, where access to the signal is discretized into repeated queries to an effective input-output channel. However, in their model, the learner is not allowed to interleave control operations with the signal evolution within each query.}

In this work, we show how the simple quantum resource of squeezing can lead to end-to-end exponential separations in sensing time-varying signals. We introduce sensing tasks for which sufficient squeezing enables a single-mode continuous-variable (CV) sensor to solve the tasks efficiently (Theorem~\ref{thm:main}). In contrast, we rigorously prove that without sufficient squeezing, any adaptive Gaussian sensing protocol---even with control and measurements at arbitrary times---requires exponentially longer sensing time.

Our results show that time correlation patterns hidden in a time-varying signal can give rise to exponential separations in sensing with squeezing in a single mode. The same mechanism also allows us to convert other methods for increasing the sensitivity of the sensor into exponential separations, such as increasing the coupling strength. The benefits of increasing sensitivity and squeezing in general are self-evident (i.e.~they effectively reduce the measurement noise relative to the signal), but rigorously proving that this can lead to exponential separations in sensing time-varying signals requires a careful and somewhat puzzling mathematical analysis, because the learner is free to choose not only its operations but also their \emph{timing} adaptively.
While no-go results against adaptive operations have been extensively studied in quantum learning settings involving states and channels~\cite{chen2021exponential,huang2022quantum,chen2022complexitynisq,Chen_2024}, allowing the learner to also choose \emph{when} these operations are performed is specific to sensing time-varying signals and introduces an additional layer of technical difficulty.
To our knowledge, this freedom has been addressed rigorously only in Ref.~\cite{allen2025quantumcomputingenhancedsensing} in a rather different setting, using very different techniques.
To account for this, we develop a mathematical framework, which we call the \emph{hierarchical tree representation}, that captures this additional freedom and allows us to establish the exponential separation.

\section{Main Results}

Continuous variable systems such as bosonic modes are often used as quantum sensors, where the signals are encoded as displacements in their phase space. The displacements can then be read out by homodyne detection or more generally Gaussian measurements \cite{BUCCO,braunstein2005quantum,weedbrook2012gaussian,adesso2014continuous,andersen2010continuous}. 
For example, in laser-interferometric gravitational-wave detectors such as LIGO \cite{abbott2016observation,tse2019quantumenhanced}, a gravitational wave effectively induces a small quadrature displacement in the outgoing optical mode, which is then measured through a readout scheme functioning as homodyne detection \cite{danilishin2012quantum,tse2022squeezed}. 
Similarly, microwave signals can be sensed in circuit-QED platforms as displacements of the bosonic modes in superconducting resonators \cite{blais2021circuit}. 

A simple model of such CV sensors is therefore a bosonic mode evolving under the Hamiltonian
\bb
    \hat H(t)=f(t)\,\hat p,
\ee
where $f(t)$ is the time-varying signal whose property we wish to learn and $\hat{x}, \hat{p}$ are the position and momentum operators of the mode satisfying the canonical commutation relation $[\hat x, \hat p]=i$.
We provide an introduction to CV systems in \Cref{sec_prel}, and discuss how this sensing model arises in existing experimental platforms in \Cref{sec:exp_platforms}.
After evolving for time $t$, in the absence of control, the mode is displaced to the position
\begin{equation}
	\hat x(t) = \hat x(0) + \int_0^t f(t')\,\mathrm{d}t'
\end{equation}
in the Heisenberg picture.
One can then learn properties of the signal $f(t)$ by strategically interleaving the evolution with control and measurements.

In these bosonic platforms~\cite{abbott2016observation,tse2019quantumenhanced,blais2021circuit}, the sensor is typically prepared in a squeezed Gaussian state~\cite{BUCCO,weedbrook2012gaussian}, and the precision in measuring a displacement is fundamentally limited by the available squeezing due to the Heisenberg uncertainty principle. With quantum technology, one can squeeze the mode further to increase the precision in one quadrature at the expense of its conjugate. The amount of squeezing can be characterized by a parameter $z\in[1,\infty)$, where $z=1$ corresponds to no squeezing (and hence equal fluctuations in both conjugate quadratures), while the standard deviation when measuring the squeezed quadrature is lower bounded by order $1/z$, with this scaling attained by pure squeezed states and vanishing in the ideal limit $z\to\infty$. See \Cref{fig:fig1}(a) for a pictorial representation.

Thus, motivated by these experimental settings~\cite{abbott2016observation,tse2019quantumenhanced,blais2021circuit}, a natural class of sensing strategies is given by \emph{Gaussian protocols with squeezing bounded by $z$}, defined as follows. Apart from the time evolution driven by the Hamiltonian $\hat H(t)=f(t)\,\hat p$, these protocols allow:
\begin{itemize}[(1)]
\item[(1)] preparation of single-mode Gaussian input states with squeezing at most $z$;
\item[(2)] application of single-mode Gaussian unitaries without further squeezing during sensing, including phase-space rotations and displacements;
\item[(3)] application of arbitrary (destructive) Gaussian measurements, including homodyne and heterodyne detection;
\item[(4)] arbitrary classical post-processing of the measurement outcomes.
\end{itemize} 
Thus, these protocols consist of preparing the sensor, strategically applying squeezing-free Gaussian unitaries during the signal evolution, measuring it (destructively) with a Gaussian measurement, and then repeating the procedure. Importantly, the times at which these operations are performed are completely arbitrary and may even be chosen adaptively based on previous measurement outcomes. The higher the available squeezing $z$, the more powerful the resulting protocol becomes for sensing the unknown signal. In the following, we rigorously show that sufficient squeezing enables a single-mode CV sensor to solve certain sensing tasks of time-varying signals exponentially faster.

\subsection{Learning hidden patterns in time-varying signals}\label{subsec_def_prob}

We consider sensing tasks with the goal of learning hidden temporal correlations in a time-varying signal.
To make these correlations explicit, we focus on signals built from a sequence of consecutive unit-duration pulses (we fix a time scale $\Delta t=1$).
The $i$-th pulse is associated with a real coefficient $m_i$.
We group these pulses into independent blocks of $T$ pulses, and call $T\in\mathbb{N}^+$ the \emph{pattern size}.
The sensing problems below encode the hidden temporal structure in the correlations among the $T$ pulse coefficients $m_i$ within each block.

More precisely, we consider continuous signals of the form
\begin{equation}\label{eq_def_signal}
    f(t)=\sum_{i\ge1}m_i\,\varphi(t-i+1),
\end{equation}
where $\varphi:\mathbb{R}\to\mathbb{R}$ is a fixed continuous pulse profile that vanishes outside $[0,1]$, is positive on $(0,1)$, and is normalized so that $\int_0^1\varphi(t)\,\mathrm{d}t=1$.
Thus, during the time window $t\in[i-1,i]$, the signal consists of the $i$-th pulse, whose shape is fixed by $\varphi$ and whose overall scale is set by $m_i$. The normalization ensures that $\int_{i-1}^{i}f(t)\,\mathrm{d}t=m_i$, so that $m_i$ is precisely the integrated signal over the $i$-th pulse. The resulting signal is illustrated in \Cref{fig:fig2}.

\begin{figure} 
    \centering
    \includegraphics[width=\linewidth]{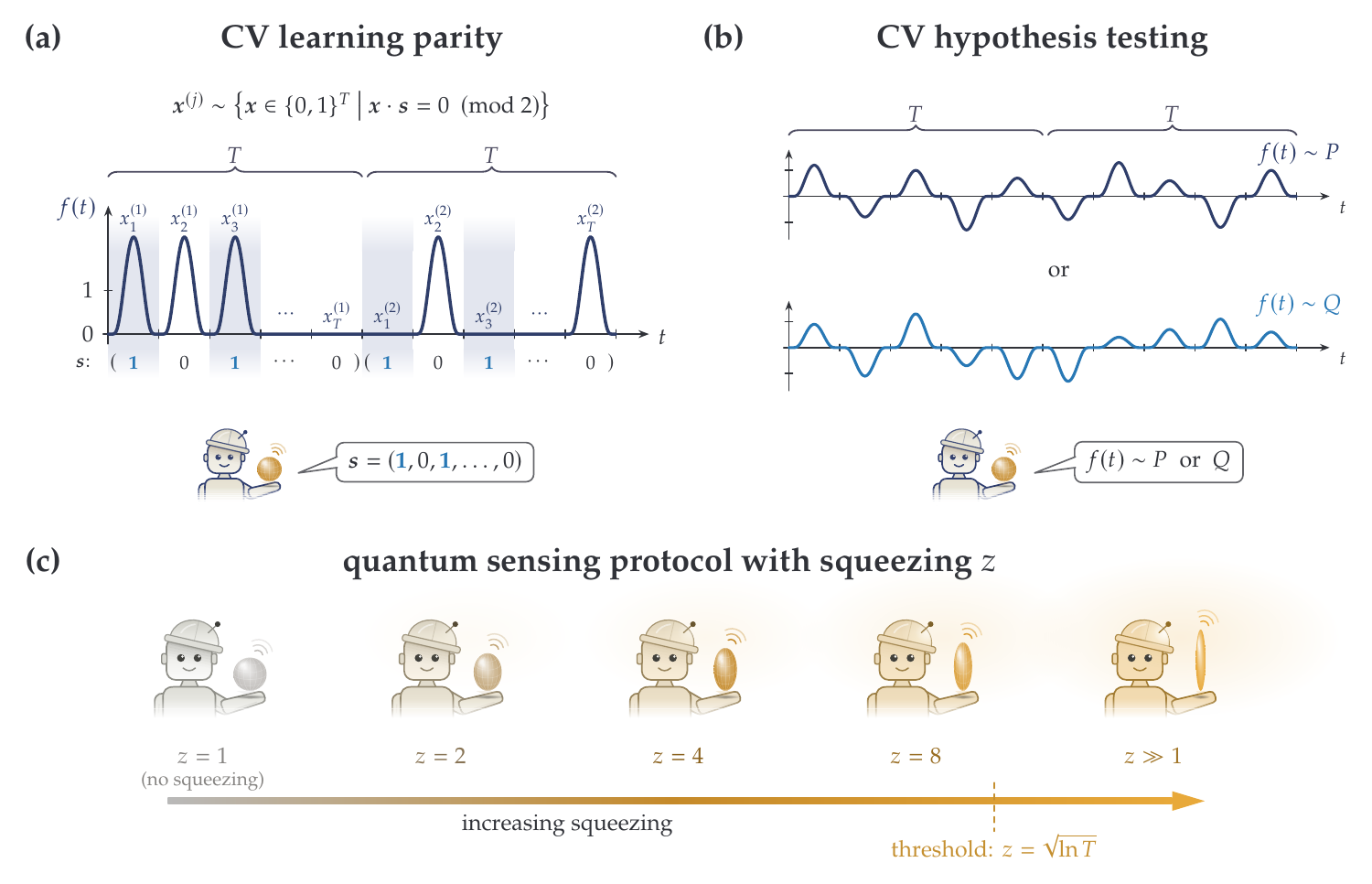}
    \caption{The two sensing problems considered in this work are (a) the CV Learning Parity Problem and (b) the CV Hypothesis Testing Problem, whose signals are depicted here. In (a), the coefficients within each block of duration $T$ form a random bit string $\mathbf{x}$ satisfying the parity constraint $\mathbf{x}\cdot\mathbf{s}\equiv0\pmod 2$, and the goal is to learn $\mathbf{s}$. In (b), the signs of the coefficients within each block are sampled according to one of two distributions, $P$ or $Q$, and the goal is to determine which source generates the signal. For both problems, the minimum sensing time scales exponentially in $T$ for $z=o(\sqrt{\log T})$, while squeezing above the threshold $z=\omega(\sqrt{\log T})$ enables polynomial sensing time.}
    \label{fig:fig2}
\end{figure}

\subsubsection{CV Learning Parity Problem}

The first sensing task we consider involves learning a simple pattern encoded in the temporal correlations of the signal $f(t)$. Specifically, we divide the coefficients $m_i$ into consecutive blocks of $T$ time intervals. Within each block, the coefficients $m_i$ are random bits whose joint distribution is constrained by a \emph{hidden parity relation}, determined by an unknown bitstring $\mathbf{s}\in\{0,1\}^T$, which is precisely the hidden pattern we wish to learn. Thus, the information about $\mathbf{s}$ is encoded in the joint statistics of the pulse coefficients $m_i$ within each block of duration $T$. The sensing task is to learn this hidden string $\mathbf{s}$ from the resulting time-varying signal. We state the task precisely as follows.

\begin{mainproblem}[(CV Learning Parity Problem)]
Let $T\in\mathbb{N}^+$ be the pattern size and let $\mathbf{s}\in\{0,1\}^T$ be an unknown bitstring. For each $j\geq1$, independently sample a bitstring
$\mathbf{x}^{(j)}=(x_1^{(j)},\ldots,x_T^{(j)})$
uniformly from
\[
    \left\{
    \mathbf{x}\in\{0,1\}^T
    \,\middle|\,
    \sum_{i=1}^T x_i s_i\equiv0\;(\mathrm{mod}\,2)
    \right\}.
\]
The coefficients of the time-varying signal $f(t)$ in Eq.~\eqref{eq_def_signal} are then given by
$(m_{(j-1)T+1},\ldots,m_{jT})=\mathbf{x}^{(j)}$.

We are given access to a single-mode sensor evolving under the Hamiltonian $\hat H(t)=f(t)\hat p$ from $t=0$ to $t=T_{\mathrm{sensing}}$, and we can control and measure it using a Gaussian sensing protocol with squeezing bounded by $z$. The goal is to output a guess $\widehat{\mathbf{s}}\in\{0,1\}^T$ such that $\widehat{\mathbf{s}}=\mathbf{s}$ with probability $\ge 2/3$, using sensing time $T_{\mathrm{sensing}}$ as short as possible.

\end{mainproblem}

The CV Learning Parity Problem is illustrated pictorially in \Cref{fig:fig2}(a).

\subsubsection{CV Hypothesis Testing Problem}
The second sensing task we consider is to distinguish between two possible sources of the signal $f(t)$. More precisely, in each block of duration $T$, the signs of the coefficients $m_i$ are drawn from one of two known distributions, $P$ or $Q$, and the same source generates the entire signal. The goal is to determine whether the observed time-varying signal is generated by $P$ or by $Q$.

We are interested in the case where the two sources look identical when only a small part of the signal is observed, so that they can be distinguished only through correlations involving almost the entire block. At the same time, we require that the two distributions are separated by at least an inverse polynomial in $T$, so that the difficulty of the distinguishing task does not simply arise from $P$ and $Q$ being extremely close to each other. We state the task precisely as follows.

\begin{mainproblem}[(CV Hypothesis Testing Problem)]
Let $T\in\mathbb{N}^+$ be the pattern size and let $c\in[T]$ satisfy $c=O(T^{0.99})$.
Let $P$ and $Q$ be two known probability distributions over $\{0,1\}^T$ whose marginals are identical on any choice of at most $T-c$ bits.\footnote{We make this assumption so that observing any such subset of bits does not allow one to distinguish between the two distributions.}
At the same time, assume that their total variation distance satisfies $\operatorname{TV}(P,Q)\ge 1/\operatorname{poly}(T)$.\footnote{We make this assumption so that the distinguishing problem is not hard simply because $P$ and $Q$ are extremely close.}

Let $D\in\{P,Q\}$ be the unknown source generating the signal. For each $j\geq1$, independently sample a sign pattern $(x_1^{(j)},\ldots,x_T^{(j)})\sim D$,
together with independent magnitudes
$b_1^{(j)},\ldots,b_T^{(j)}\sim\mathrm{Unif}([1/2,3/2])$.
The coefficients of the time-varying signal $f(t)$ in Eq.~\eqref{eq_def_signal} are then given by
\[
(m_{(j-1)T+1},\ldots,m_{jT})
=
\bigl(
(-1)^{x_1^{(j)}}b_1^{(j)},
\ldots,
(-1)^{x_T^{(j)}}b_T^{(j)}
\bigr).
\]
We are given access to a single-mode sensor evolving under the Hamiltonian $\hat H(t)=f(t)\hat p$ from $t=0$ to $t=T_{\mathrm{sensing}}$, and we can control and measure it using a Gaussian sensing protocol with squeezing bounded by $z$. The goal is to output a decision $\widehat D\in\{P,Q\}$ such that $\widehat D=D$ with probability $\ge 2/3$, using sensing time $T_{\mathrm{sensing}}$ as short as possible.
\end{mainproblem}
The CV Hypothesis Testing Problem is illustrated pictorially in \Cref{fig:fig2}(b).

\subsection{Exponential separations with a single squeezed mode}

For both sensing tasks described above, we rigorously establish exponential separations in sensing time by tuning only the amount of squeezing available for preparing the single-mode sensor.
Specifically, for squeezing $z=\omega(\sqrt{\log T})$, there exists a very simple Gaussian sensing protocol that solves these problems in polynomial sensing time, whereas for $z=o(\sqrt{\log T})$, any Gaussian sensing protocol---no matter how sophisticated---must use exponential sensing time in $T$. We summarize these results in the following theorem:

\begin{maintheorem}[(Exponential separations in sensing time-varying signals with a single squeezed mode)]
\label{thm:main}
For both the CV Learning Parity Problem and the CV Hypothesis Testing Problem with pattern size $T$, the following hold:
\begin{enumerate}
    \item if $z=\omega(\sqrt{\log T})$, there exists a Gaussian sensing protocol with squeezing bounded by $z$ that solves the problem with sensing time $\operatorname{poly}(T)$; and
    \item if $z=o(\sqrt{\log T})$, any Gaussian sensing protocol with squeezing bounded by $z$ that solves either problem must use sensing time $\operatorname{exp}(T)$.\footnote{More precisely, if $z=o(\sqrt{\log T})$, the sensing time must scale at least as $\exp(T^{1-o(1)})$. More generally, for arbitrary squeezing $z$, any such Gaussian sensing protocol must use sensing time at least $\exp\left(\Omega\left(T e^{-O(z^2)}\right)\right)$, while there exists a Gaussian sensing protocol with sensing time $\exp\left(O\left(T e^{-\Omega(z^2)}\right)\right)$, up to polynomial factors. See the Appendix for explicit bounds without asymptotic notation.}
\end{enumerate}
\end{maintheorem}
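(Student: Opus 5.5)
The plan has two halves: an explicit protocol for the upper bound, and a reduction of every adaptive Gaussian protocol to a noisy-oracle model for the lower bound. Both halves rest on one observation: a single pulse measured with squeezing $z$ yields an observation of $m_i$ with Gaussian noise of standard deviation $\Theta(1/z)$. Since $m_i\in\{0,1\}$ or $\pm[1/2,3/2]$, the bit of pulse $i$ is then read correctly with failure probability $\epsilon\approx e^{-\Theta(z^2)}$.

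\textbf{Upper bound.} For each block, I prepare a fresh sensor squeezed in $\hat x$ just before each pulse $i$, let it evolve over $[i-1,i]$, and homodyne $\hat x$ at time $i$. This gives $y_i=m_i+\xi_i$ with $\xi_i\sim\mathcal N(0,\Theta(1/z^2))$. Rounding gives all $T$ bits of the block correctly with probability
\[
(1-\epsilon)^T\ge 1-T e^{-\Omega(z^2)},
\]
which tends to $1$ when $z=\omega(\sqrt{\log T})$. For parity learning, I collect $O(T)$ blocks. With constant probability most of them are error-free, and I recover $\mathbf s$ by Gaussian elimination on the linear system, using a robust variant: repeat, take a majority vote over random subsets, or check candidates against fresh blocks. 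For hypothesis testing, I collect $\mathrm{poly}(T)/\mathrm{TV}(P,Q)^2$ nearly clean blocks and run the optimal empirical test between $P$ and $Q$. The total sensing time is $\mathrm{poly}(T)$. For general $z$, this same protocol with $\exp(O(Te^{-\Omega(z^2)}))$ repetitions is what the footnote's bound reflects.

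\textbf{Lower bound.} First, I will argue that any adaptive Gaussian protocol with squeezing at most $z$ is simulable by a classical learner with access only to noisy linear functionals of the $m_i$.
\begin{itemize}
\item Squeezing-free Gaussian unitaries are rotations and displacements, and the signal acts as a displacement $\int f\,\hat p$-generated. So a sensor prepared at time $a$ and measured at time $b$ produces outcomes distributed as a Gaussian whose mean is a linear functional $\sum_i w_i m_i$, with weights $w_i=\int_{i-1}^i \varphi(t)\,g(t)\,dt$ set by the control rotation profile $g$.
\item The noise covariance is at least that of a state with squeezing $z$, so the effective noise on any unit-norm direction is at least $1/z$.
\item Because the sensor is destructive and single-mode, each run's outcome depends on $m_i$ only through this noisy projection.
\end{itemize}
I then organise the possibly adaptive choice of timings as the paper's hierarchical tree: each node is a run with an interval and a weight vector, and the children are indexed by outcomes.

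The core estimate is that each run, restricted to a block, is at best a noisy view of each coordinate with per-coordinate signal-to-noise ratio $O(z)$. I would model it as the composition of a binary symmetric or Gaussian channel acting independently on each $x_i$, with flip probability at least $e^{-O(z^2)}$, followed by post-processing. I expect to justify this by decomposing the Gaussian noise into independent per-coordinate noise of variance $\Omega(1/z^2)$ plus leftover noise. I also need the $\varphi$-shape, and the random magnitudes $b_i$ in the hypothesis-testing problem, to prevent cancellations from hiding signs.

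Given this reduction, parity learning with each bit independently erased or flipped with probability $\epsilon=e^{-O(z^2)}$ is information-theoretically hard. A single block's noisy view has correlation with any parity of weight $k$ bounded by $(1-2\epsilon)^k$. A Fourier / KL-chain-rule argument over the tree then bounds the information gained per block by $(1-2\epsilon)^{\Omega(T)}$ for typical $\mathbf s$. In the same way, $P$ and $Q$ agree on all $(T-c)$-marginals, so their noisy images differ only in Fourier levels above $T-c$, which gives
\[
\mathrm{TV}\le 2^{T}(1-2\epsilon)^{T-c}.
\]
More carefully, I would use a $\chi^2$ bound with level-$k$ weights $(1-2\epsilon)^{2k}$. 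Summing over the tree bounds the number of blocks needed, and hence the sensing time, by
\[
\exp\!\left(\Omega\!\left(Te^{-O(z^2)}\right)\right),
\]
which is $\exp(T^{1-o(1)})$ when $z=o(\sqrt{\log T})$.

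\textbf{Main obstacle.} The hardest step will be making the per-coordinate noise reduction rigorous despite arbitrary adaptive timing. A single run can straddle blocks, start mid-pulse, or weight pulses unequally, and later runs may depend on earlier outcomes within the same block. I would first try to show that overlapping runs on the same block are, conditionally, still jointly dominated by independent per-pulse noise. The point is that each pulse's coordinate is observed by each run through independent fresh vacuum or squeezed noise, and that runs covering only part of a pulse see only a $\varphi$-weighted fraction of $m_i$ with the same noise floor. Such partial runs therefore only worsen the SNR.
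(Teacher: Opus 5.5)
Your upper-bound half matches the paper: pulse-by-pulse homodyne readout, where for $z=\omega(\sqrt{\log T})$ all $O(T^2)$ bits of $O(T)$ blocks are read correctly with high probability, so plain Gaussian elimination already works, and a likelihood-ratio test on the noisy strings handles hypothesis testing. The lower-bound half, however, rests on a reduction that is false. Your core estimate says every run, restricted to a block, is a garbling of independent per-pulse channels with flip probability $e^{-O(z^2)}$, so its correlation with any weight-$k$ parity is at most $(1-2\epsilon)^k$.

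Now take a single run spanning the whole block, applying $R_{\pi/2}$ before and $R_{-\pi/2}$ after each bin with $s_i=0$. Its homodyne outcome is distributed as $\mathcal{N}[x\cdot s,1/(2z^2)]$. Reading it modulo $2$ reveals $x\cdot s \bmod 2$ with a bias that is a constant independent of $|s|$, already at $z=1$. This is exactly the protocol of Theorem~\ref{thm_correct3}, which solves the problem in sensing time $O(T)$ once $\bar s$ is known in advance --- impossible in your model.

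Quantitatively, Gaussian noise of variance $\Theta(1/z^2)$ on $\sum_{j\le t}w_jx_j$ with $|w_j|\approx 1$ can only be split into independent per-pulse noises of variance $O(1/(tz^2))$. The effective per-pulse flip probability therefore degrades to $e^{-\Theta(tz^2)}$. The same coherent run also breaks your hypothesis-testing argument whenever magnitudes are fixed. With $|m_i|=1$, $P$ uniform and $Q$ the even-weight distribution, the outcome $T-2|x|+N$ reveals $|x|\bmod 2$ with constant advantage per block, even though all $(T-1)$-marginals agree.

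So the hardness cannot come from per-coordinate noise. It comes from what the learner does not know, or cannot control, during sensing.

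For parity learning, the paper first reduces to the partially revealed problem, so that one may average over a uniformly random nonzero $s$. It then factorizes $P_s(l)-P_0(l)=\prod_i A^{(u_i)}_{s_i}(\mathbf{y}_i)$ along each branch of the $T$-time tree. Plancherel gives $\mathbb{E}_{\sigma\ne0}\|A^{(u)}_\sigma\|_2\le\sqrt{5/\pi}\,z\,2^{-t/2}$ for a run of length $t$. A single run can thus be strongly correlated with only a few of the $2^t$ parities, and its Walsh coefficients are small only \emph{on average over} $\sigma$ (Lemmas~\ref{lemma_l2_A_tau_ge_2}--\ref{lemma_tau_ge_2_fu_bound}). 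No bound uniform over all high-weight $s$ can hold.

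For hypothesis testing, $p$ and $q$ are known, so there is no such averaging. Instead, each fixed coefficient obeys $\|A^{(u)}_\sigma\|_1\le 3z|\sigma|(0.91)^{|\sigma|}+e^{-|\sigma|^2/2}$, because the random amplitudes contribute a factor $|\mathbb{E}_b\sin(b\omega)|<0.91$ per coordinate in Fourier space (Lemma~\ref{lemma_large_t_A_tau_ge_2_ht}). The $b_i$ are the source of hardness, not a safeguard against cancellations.

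Finally, your bound $2^T(1-2\epsilon)^{T-c}$ is vacuous, since $2(1-2\epsilon)>1$ for small $\epsilon$. The prefactor must be the number of surviving Walsh modes, $\sum_{r<c}\binom{T}{r}=e^{O(c\log T)}$, which is exactly where $c=O(T^{0.99})$ is used.
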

The full proof of Theorem~\ref{thm:main} is given in the Appendix; see Corollary~\ref{cor:cv_parity_advantage} for the CV Learning Parity Problem and Corollary~\ref{cor:cv_hypothesis_testing_advantage} for the CV Hypothesis Testing Problem. Here, we outline the main ideas.
\subsubsection{Proof outline}
\textbf{\textit{A simple upper bound.}} To prove the polynomial sensing-time scaling for $z=\omega(\sqrt{\log T})$ in Theorem~\ref{thm:main}, it suffices to consider perhaps the simplest Gaussian sensing protocol one can think of.
At the beginning of each pulse (i.e.~at each integer time), prepare the sensor in a zero-mean pure Gaussian state with squeezing $z$ along the $x$ quadrature, let it evolve for one unit of time, and perform an $x$-homodyne measurement.
Since $\int_{i-1}^{i}f(t)\,\mathrm{d}t=m_i$, the measurement outcome is simply the corresponding pulse coefficient $m_i$ plus Gaussian noise with standard deviation of order $1/z$.
Hence, the corresponding bit can be recovered with error probability $e^{-\Omega(z^2)}$.
When $z=\omega(\sqrt{\log T})$, this error is small enough that polynomially many blocks can be read reliably, after which classical post-processing solves either sensing task, giving total sensing time $\operatorname{poly}(T)$. For the full proofs of these upper bounds, see Theorem~\ref{thm_upp1} for the CV Learning Parity Problem and Theorem~\ref{thm_upper_bound_hyp_testing} for the CV Hypothesis Testing Problem.

Note, however, that the simple protocol above, which reads each pulse separately, need not be optimal. In fact, one can provably do better at $z=1$ for the CV Learning Parity Problem: let the sensor evolve coherently across an entire block of duration $T$, interleaving the evolution only with phase-space inversions, and performing a single homodyne measurement at the end of the block. With this strategy, one obtains a provable strict improvement; see \Cref{sec:coherent_parity_readout}. Thus, although the above ``pulse-by-pulse readout'' protocol is perhaps the most natural one, more sophisticated protocols can probe the signal's temporal correlations more effectively. To prove that low squeezing nevertheless forces exponential sensing time, one must therefore rule out not only pulse-by-pulse readout but every coherent and adaptive sensing strategy. In the next paragraph we show how to prove such a lower bound in this fully continuous-time setting.

\medskip

\textbf{\textit{Continuous-time lower bound.}} The main difficulty in proving Theorem~\ref{thm:main} is therefore establishing the sensing-time lower bound in the regime $z=o(\sqrt{\log T})$ for every Gaussian sensing protocol with that squeezing bound. A general Gaussian sensing protocol is a sequence of prepare-evolve-measure experiments. In each experiment, the learner prepares the sensor, lets it evolve under the signal while interleaving arbitrary squeezing-free Gaussian unitaries, and finally performs a Gaussian measurement. The learner may choose when to prepare the state, how long to let it interact with the signal, and when to apply Gaussian unitaries and measurements, with all these choices made adaptively.

The first part of this hardness proof is conceptually standard~\cite{chen2021exponential,huang2022quantum,chen2022complexitynisq,Chen_2024,chen2025efficientpaulichannelestimation,allen2025quantumcomputingenhancedsensing}. We start by reducing both problems to a distinguishing task. For the CV Learning Parity Problem, this amounts to distinguishing whether the hidden pattern is $\mathbf{0}$ or a random nonzero string $\bar{\mathbf{s}}$, revealed only after the sensing procedure, while the CV Hypothesis Testing Problem is already a distinguishing task.
An adaptive sensing protocol can then be represented by a \emph{learning tree}~\cite{chen2021exponential,huang2022quantum}: each node of the tree records the \emph{transcript} accumulated up to that point, namely the previous measurement outcomes and adaptive choices, and specifies the next experiment chosen by the learner, while its branches correspond to the possible measurement outcomes. Under the two hypotheses, the protocol induces two different distributions over the full transcripts, and successful discrimination requires their TVD to be bounded below by a constant. Thus, it is enough to show that the TVD remains small for any allowed adaptive protocol unless the sensing time is sufficiently long.

We first perform a change of clock that maps the continuous pulsed signal to a piecewise-constant one, without changing the number of $T$-duration blocks reached by the protocol.
We then show that any single prepare-evolve-measure experiment can be simulated by sampling from a two-dimensional Gaussian distribution whose mean is linear in the signal coefficients and whose covariance is fixed by $z$. It therefore suffices to establish hardness in this stronger statistical model.

The remaining complication is the timing: each experiment's starting and ending times are arbitrary and can depend adaptively on all previous measurement outcomes.
Thus, a single experiment may lie entirely inside a pulse, span several pulses, or even cross several blocks of duration $T$.
To handle this continuous-time freedom, we establish a useful reduction: any experiment can be simulated by at most five canonical subexperiments whose starting and ending times belong to a restricted set, at the cost of only a constant-factor increase in the effective squeezing.
These subexperiments naturally fall into three different temporal scales, which leads us to introduce a suitable learning-tree framework for this continuous-time setting that we call the \emph{hierarchical tree representation}, illustrated in \Cref{fig:hierarchical_tree_main}.

At the coarsest scale, a \emph{full-time tree} describes experiments whose durations are multiples of $T$.
A one-block experiment, highlighted in yellow in \Cref{fig:hierarchical_tree_main}, is resolved by a finer \emph{$T$-time tree}, where experiments begin and end at integer times within the block. A one-unit-time experiment, highlighted in blue in \Cref{fig:hierarchical_tree_main}, is resolved further by a \emph{unit-time tree}, where experiments may begin and end at arbitrary times within that interval.

The key difficulty is to control how the TVD between the two transcript distributions grows as the learner adapts across these three temporal scales.
We first control the finest, unit-time trees, showing that multiple measurements, even chosen adaptively at arbitrary times within a single interval, cannot outperform a single unit-time experiment. We then use this to analyze the $T$-time tree, combining the contributions of the different experiments along each branch to show that the TVD between the corresponding $T$-time tree transcript distributions (averaged over the random nonzero parity string for the CV Learning Parity Problem), denoted by $C$, is exponentially small in $T$ for low squeezing.
Finally, by analyzing the full-time tree, we show that the TVD between the full transcript distributions is at most the number of blocks sensed times the one-block bound $C$. Since successful discrimination requires the full TVD to be constant, this implies an exponential number of blocks when $z=o(\sqrt{\log T})$. For the full proofs of these lower bounds, see \Cref{sec:lower_bound_minimum_sensing_time}, and specifically Theorem~\ref{thm_lower_bound_learning_parity_gaussian} for the CV Learning Parity Problem and Theorem~\ref{thm_lower_bound_ht_gaussian} for the CV Hypothesis Testing Problem.
\begin{figure} 
    \centering
    \includegraphics[width=0.8\linewidth]{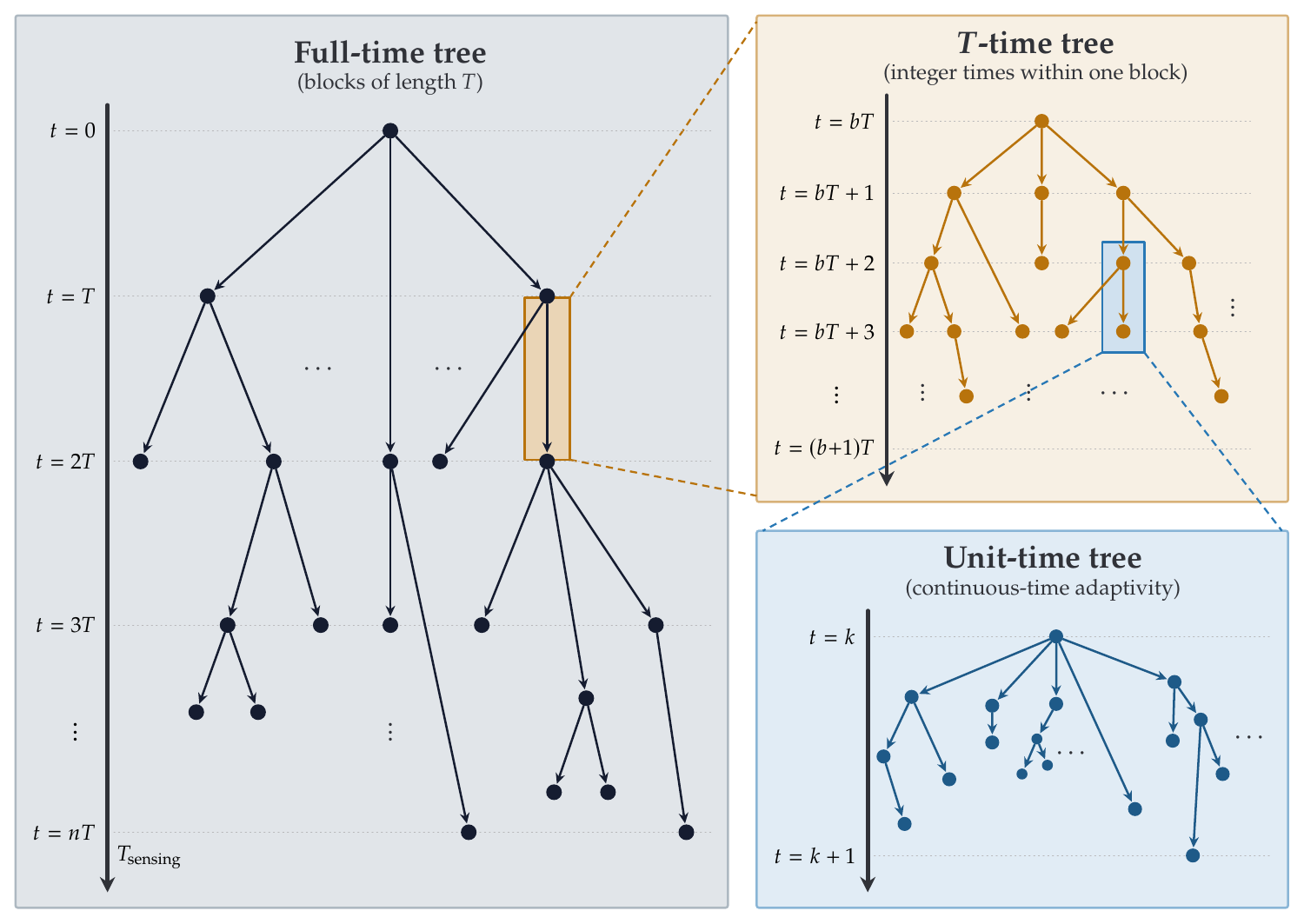}
    \caption{Hierarchical tree representation of an adaptive Gaussian sensing protocol. Any such protocol can be reduced to canonical subexperiments that naturally fall into three temporal scales. The \emph{full-time tree} describes the protocol at the scale of blocks of duration $T$, with experiments beginning and ending at multiples of $T$ and possibly spanning several blocks. A one-block experiment, highlighted in yellow, is resolved by a finer \emph{$T$-time tree}, where experiments begin and end at integer times within the block. A one-unit-time experiment, highlighted in blue, is further resolved by a \emph{unit-time tree}, which captures the remaining continuous-time adaptivity. The three trees thus describe the same adaptive protocol at progressively finer temporal scales.}
    \label{fig:hierarchical_tree_main}
\end{figure}

\section{Discussion}
In this work, we show how the simple quantum resource of squeezing can lead to exponential separations in sensing time-varying signals.
Crucially, this separation persists for fully adaptive Gaussian sensing protocols, even when the learner is free to prepare, control, and measure the sensor at arbitrary times, possibly chosen adaptively, during the signal evolution.
To handle this puzzling continuous-time setting, we introduce the \emph{hierarchical tree representation}, which allows us to lower-bound the minimum sensing time.
More specifically, in the high-squeezing regime $z=\omega(\sqrt{\log T})$, the sensing tasks defined above can be solved in polynomial sensing time in $T$, whereas in the low-squeezing regime $z=o(\sqrt{\log T})$, any Gaussian sensing protocol must require exponential sensing time in $T$.

One can easily observe that the same separation can also be phrased in terms of the amplitude of the signal instead of the squeezing of the sensor. Indeed, multiplying $f(t)$ by a factor $B$ effectively amounts to increasing the squeezing by the same factor $B$ in our original problem. Thus, at fixed constant squeezing, signals with amplitude $B=o(\sqrt{\log T})$ still require exponential sensing time in $T$, whereas for $B=\omega(\sqrt{\log T})$ the problems can be solved in polynomial sensing time. This provides an equivalent way of interpreting our results in terms of the sensitivity of the sensor rather than in terms of squeezing.

The exponential separations we prove stem from the hardness of learning patterns hidden in the temporal correlations of the signal, a feature unique to time-varying signals. This mechanism for exponential separations may apply more broadly than the specific sensing tasks considered here. For instance, sensing multiple modes in parallel can reduce the measurement noise in the same way as increasing the squeezing of a single mode, and may therefore lead to a similar exponential separation based on the same mechanism.
In fact, suppose that $n$ independent modes, each with squeezing $z$, interact simultaneously and identically with the same external signal. Measuring the modes independently along the squeezed quadrature gives Gaussian outcomes with standard deviation of order $1/z$, and averaging the $n$ outcomes reduces the standard deviation to order $1/(\sqrt{n}z)$. This is equivalent to replacing $z$ by $\sqrt{n}z$.
Therefore, for constant squeezing, $n=\omega(\log T)$ independent modes are sufficient to reach the high-squeezing regime, whereas $n=o(\log T)$ corresponds to the low-squeezing regime within this parallel sensing strategy.
This suggests a possible unified treatment in which both squeezing and the number of modes are regarded as sensing resources, and further raises interesting open questions regarding possible advantages from entanglement between multiple modes. Note, though, that our lower-bound proof does not yet extend to multiple modes, and specifically to the possibility of adaptivity between separate modes' measurements.

The Gaussian sensing protocols considered here have a prepare-evolve-measure structure: after each Gaussian measurement, the sensor state is discarded and a fresh Gaussian state is prepared. A natural direction is therefore to go beyond this setting by allowing non-destructive Gaussian measurements implementable using squeezing-free Gaussian operations, with the possibility of acting adaptively on the resulting post-measurement state. It would also be interesting to understand how the sensing-time requirements could change when other non-Gaussian yet experimentally feasible measurements, such as photon-counting, are allowed.

Beyond these theoretical extensions, for the exponential separations to be meaningful in practice, the sensing setting should remain as close as possible to realistic experimental conditions.
In this work, we focus on a single mode with squeezing, since in some scenarios engineering many identical bosonic modes may be difficult or undesirable.
For example, LIGO operates as a single mode, suppressing unwanted higher order modes that would otherwise degrade the measurement~\cite{ligo2015advanced,fricke2012dc}. Finally, the sensing tasks considered here are deliberately stylized. A natural open question is to start from experimentally relevant signals---for instance, gravitational-wave signals~\cite{abbott2016observation}---and determine whether one can establish analogous exponential separations.

\phantomsection
\section*{Acknowledgments} 
\addcontentsline{toc}{section}{Acknowledgments}
We thank Francesco Albarelli, Richard Allen, Yanbei Chen, Soonwon Choi, Su Durecki, Vittorio Giovannetti, Ludovico Lami, and John Preskill for helpful discussions. 
We are especially grateful to Amir Safavi-Naeini for initial discussions that led to this work. We are also particularly grateful to Jordan Cotler, Ishaan Kannan, and Ruohan Shen for providing extensive feedback on a final version of this manuscript. F.A.M.\ acknowledges financial support from the European Union (ERC StG ETQO, Grant Agreement no.\ 101165230). N.M. acknowledges financial support by the Air Force Office of Scientific Research under award number FA9550-23-F-0014. F.A.M.\ thanks the California Institute of Technology for its hospitality on several research visits during his PhD at Scuola Normale Superiore, and N.M.\ thanks Scuola Normale Superiore for its hospitality, part of this work having been carried out during these research visits. H.H. acknowledges support from the Broadcom Innovation Fund. H.H. acknowledges support from the U.S. Department of Energy, Office of Science, National Quantum Information Science Research Centers, Quantum Systems Accelerator. F.A.M., H.Z., S.C., and H.H. acknowledge funding provided by the Institute for Quantum Information and Matter, an NSF Physics Frontiers Center (NSF Grant PHY-2317110).

\medskip
\noindent\textit{AI statement.}--- We did the main research part of this paper during summer 2025 without the use of AI. ChatGPT and Gemini were used later on in 2026 to make minor refinements to our human-made results at the end of the project. Most notably, ChatGPT 5.6 and 6 Pro were used to extend the range of the parameter \(c\) from \(c=O(1)\) to \(c=O(T^{0.99})\) in the CV Hypothesis Testing Problem (by adapting our previous human-made proof), and to design the improved coherent protocol for \(z=1\). Claude was used to help make the figures. The authors take full responsibility for the content of the work.
\bibliography{biblio}
\bibliographystyle{unsrtnat}

\clearpage
\appendix

\appendixtableofcontents
\label{toc}

\vspace{1em}
The appendix of this paper is organized as follows. Section~\ref{sec:related_work} discusses the relation of our results to previous work. Section~\ref{sec_prel} then gives preliminaries on CV systems, followed by relevant properties of Gaussian probability distributions that we use in our proofs (Section~\ref{sec:prelim_gaussian_prob}). To conclude the background, Section~\ref{sec:exp_platforms} details how different experimental platforms represent our theoretical model of a time-dependent signal $f(t)$ coupling to a target quadrature.

With this foundation established, Section~\ref{sec:gaussian_sensing_model} defines Gaussian sensing protocols with bounded squeezing $z$. We then define the two central tasks of this work: the CV Learning Parity problem (Section~\ref{sec_cv_parity}) and the CV Hypothesis Testing problem (Section~\ref{sec_hyp_test}). Finally, our main technical results are presented in Sections~\ref{sec:lower_bound_minimum_sensing_time}, ~\ref{sec:upper_bound}, and~\ref{sec_cv_prob_hyph}. Specifically, in Sections~\ref{sec:lower_bound_minimum_sensing_time} we derive the lower bound on the minimum sensing time, and in Section~\ref{sec:upper_bound}-~\ref{sec_cv_prob_hyph}, we provide explicit sensing protocols that solve these sensing problems.

\newpage

\section{Relation to previous work}
\label{sec:related_work}

Beyond the quantum metrology literature on parameter~\cite{giovannetti2011advances,pezze2018quantum} and function~\cite{kura2020standard} estimation, a recent line of work has begun to connect exponential quantum learning advantages with sensing of signals~\cite{oh2024entanglement,cotler2026quantum,kannan2026exponential,prabhu2026exponential}.

Ref.~\cite{oh2024entanglement} proved an exponential entanglement-assisted advantage for learning bosonic random displacement channels, a setting closely related to sensing since signals can naturally induce phase-space displacements. This advantage was subsequently demonstrated on a scalable photonic platform~\cite{Liu_2025}.

More directly related to sensing, Cotler, Danielson, and Kannan~\cite{cotler2026quantum} studied quantum advantages for learning properties of signals. Among their results, they established exponential separations for learning signal properties, including a separation based on squeezing alone. The latter is particularly close to our work, although it is formulated in a prescribed interrogation setting rather than the fully continuous-time setting considered here.

More recently, Kannan et al.~\cite{kannan2026exponential} showed that coupling a bosonic sensor to a single controllable qubit can give exponential advantages for learning properties of time-varying signals, and demonstrated such advantages experimentally on a superconducting platform. In a related direction, Prabhu et al.~\cite{prabhu2026exponential} showed exponential advantages from entangled sensors for learning correlations among stochastic parameters.

These works show that quantum sensing can exhibit advantages far beyond the usual improvements in parameter-estimation precision. A key distinction is that our result is not a channel-query separation. Rather, our work asks whether such an exponential separation survives when the learner is free to interact with the sensor at arbitrary times while the signal continuously evolves. This distinction can matter. Indeed, Prabhu et al.~\cite[Appendix~D.5]{prabhu2026exponential} give an example in which optimizing the interrogation time removes an exponential separation in the number of experimental repetitions. Here, we show that for our sensing tasks, the exponential separation persists even after allowing arbitrary adaptive choices of when to prepare, control, and measure the sensor.

\newpage
\section{Preliminaries on continuous variable systems}\label{sec_prel}
In this section, we provide a concise overview of quantum information with \emph{continuous variable} (CV) systems; for further details, we refer to Ref.~\cite{BUCCO,adesso2014continuous}.

\subsection{The $n$-mode Hilbert space and its quadrature operators}
Let us start by offering an intuitive explanation of what a continuous variable system is from a quantum information perspective. In quantum information, a familiar concept is that of a \emph{qu$d$it} --- a quantum system with a $d$-dimensional Hilbert space. Specifically, the Hilbert space of a qu$d$it is defined as the span of the computational basis state vectors $\{\ket{0},\ket{1},\ldots,\ket{d-1}\}$:
\bb
    \HH_{d}\coloneqq\mathrm{Span}\{\ket{0},\ket{1},\ldots,\ket{d-1}\}\,.
\ee
In continuous-variable systems, the notion of a qu$d$it is replaced by the concept of a \emph{mode}. A mode can be thought of as a qu$d$it in the limit $d \to \infty$, that is, a quantum system with an infinite-dimensional Hilbert space of the form
\bb
    \HH_{\infty}\coloneqq\mathrm{Span}\left\{\ket{n}:\,\, n\in\mathbb{N}\right\}\,.
\ee
By definition, a \emph{continuous-variable system} is a quantum system that consists of $n$ modes and is described by the tensor product space $\HH_{\infty}^{\otimes n}$. In this sense, a continuous variable system of $n$ modes can be intuitively viewed as $n$ qu$d$its with $d=\infty$. The computational basis state vector $\ket{n}$ of the Hilbert space of a single mode is called the \emph{$n$-th Fock state}, and in quantum optics it represents the quantum state of light with $n$ photons. In particular, the state vector $\ket{0}$ is the \emph{vacuum}, i.e.~the state with zero photons. 

The \emph{annihilation operator}, denoted by $a$, is defined in the Fock basis as
\bb
    a\coloneqq\sum_{m=1}^\infty \sqrt{m}\,\ketbraa{m-1}{m}\,.
\ee
Its adjoint $a^\dagger$ is known as the \emph{creation operator}. These two operators satisfy the well-known \emph{canonical commutation relation}
\bb
    [a,a^\dagger]=\hat{\mathbb{1}}\,,
\ee
where $[A,B]\coloneqq AB-BA$ denotes the commutator and $\hat{\mathbb{1}}$ is the identity operator. Using the orthonormality of the Fock states, one can show that the $m$-th Fock state can be obtained by applying the creation operator $m$ times to the vacuum state vector:
\bb
    \ket{m}\coloneqq \frac{(a^\dagger)^m}{\sqrt{m!}}\ket{0}.
\ee
The operator $a^\dagger a$ is known as the \emph{single-mode photon number operator}, and it is diagonal in the Fock basis
\bb
    a^\dagger a=\sum_{m=0}^\infty m \ketbra{m}\,.
\ee
In this sense, $\ket{m}$ represents the state with $m$ photons.

The well-known \emph{position} and \emph{momentum} operators~\cite{BUCCO}, denoted by $\hat{x}$ and $\hat{p}$ respectively, can be defined in terms of the annihilation and creation operators as
\bb
    \hat{x}&\coloneqq \frac{a+a^\dagger}{\sqrt{2}}\,,\quad\,
    \hat{p}\coloneqq \frac{a-a^\dagger}{\sqrt{2}i}\,.
\ee
In terms of these operators, the canonical commutation relation takes the form $[\hat{x},\hat{p}]=i\hat{\mathbb{1}}$.  

From a mathematical standpoint, the Hilbert space associated with a single mode is identified with the space $L^2(\mathbb{R})$~\cite{BUCCO} which comprises all square-integrable complex-valued functions over $\mathbb{R}$. More generally, the Hilbert space of an $n$-mode continuous-variable system is given by $L^2(\mathbb{R}^n)$ --- that is, the space of square-integrable, complex-valued functions over $\mathbb{R}^n$. The set of quantum states of such a system is denoted by $\pazocal{D}(L^2(\mathbb{R}^n))$. Notably, it holds that $L^2(\mathbb{R}^n) = L^2(\mathbb{R})^{\otimes n}$, so the Hilbert space of an $n$-mode system can be naturally understood as the $n$-fold tensor product of the Hilbert space of a single mode. On this Hilbert space, one can define the position and momentum operators of the $j$-th mode, denoted respectively by $\hat{x}_j$ and $\hat{p}_j$. 
By introducing the vector of operators
\bb
    \hat{\textbf{R}}\coloneqq (\hat{x}_1,\hat{p}_1,\dots,\hat{x}_n,\hat{p}_n)^{\intercal}=(\hat{R}_1,\hat{R}_2,\dots,\hat{R}_{2n-1},\hat{R}_{2n})^{\intercal}\,
\ee
dubbed the \emph{quadrature operator vector}, the canonical commutation relations can be expressed as 
\bb
    [\hat{R}_k,\hat{R}_l]=i\,(\Omega_n)_{k,l}\hat{\mathbb{1}}\qquad\forall\,k,l\in[2n]\,,
    \label{comm_rel_quadrature}
\ee
where $\Omega_n\coloneqq \bigoplus_{i=1}^n\Omega_1$ with $\Omega_1\coloneqq \left(\begin{matrix}0&1\\-1&0\end{matrix}\right)$ being the so-called \emph{symplectic form}, and $\hat{\mathbb{1}}$ is the identity operator over $L^2(\mathbb R^n)$. The relation in~\eqref{comm_rel_quadrature} is usually expressed in the continuous-variable literature~\cite{BUCCO} in vectorial notation as $[\hat{\textbf{R}},\hat{\textbf{R}}^{\intercal}]=i\,\Omega_n\hat{\mathbb{1}}$.

\subsection{First moments, covariance matrices, and symplectic matrices}
First moments and covariance matrices play a crucial role in characterising CV systems. The \emph{first moment} $\mathbf{m}(\rho)$ of an $n$-mode quantum state $\rho$ is a $2n$-dimensional vector defined as 
\bb
    \mathbf{m}(\rho)\coloneqq\left(\Tr\!\left[\hat{R}_1\,\rho\right],\Tr\!\left[\hat{R}_2\,\rho\right],\ldots,\Tr\!\left[\hat{R}_{2n}\,\rho\right]\right)\,,
\ee
which can be written in vectorial notation as $\mathbf{m}(\rho)=\Tr\!\left[{\hat{\textbf{R}}}\,\rho\right]$.
Additionally, the \emph{covariance matrix} of $\rho$ is a $2n\times 2n$ matrix $V\!(\rho)$ with elements
\bb
	[V\!(\rho)]_{k,l}\coloneqq \Tr\!\left[\left\{{\hat{R}_k-m_k(\rho)\hat{\mathbb{1}},\hat{R}_l-m_l(\rho)\hat{\mathbb{1}}}\right\}\rho\right]= \Tr\!\left[\left\{\hat{R}_k,\hat{R}_l\right\}\rho\right]-2m_k(\rho)m_l(\rho) \, ,
\ee
for each $k,l\in[2n]$, where $\{\hat{A},\hat{B}\}\coloneqq \hat{A}\hat{B}+\hat{B}\hat{A}$ is the anti-commutator. In vectorial notation, this reads as
\bb
	V\!(\rho)&=\Tr\!\left[\left\{(\hat{\textbf{R}}-\textbf{m}(\rho)\hat{\mathbb{1}}),(\hat{\textbf{R}}-\textbf{m}(\rho)\hat{\mathbb{1}})^{\intercal}\right\}\rho\right]=  \Tr\!\left[\left\{{\hat{\textbf{R}}},{\hat{\textbf{R}}}^{\intercal}\right\}\rho\right]-2\textbf{m}(\rho)\textbf{m}(\rho)^\intercal \, .
\ee   
For example, note that $V_{11}(\rho)=2\mathrm{Var}_\rho(\hat{x}_1)$, where $\mathrm{Var}_\rho(\hat{x}_1)\coloneqq \Tr[\hat{x}_1^2\rho]-(\Tr[\hat{x}_1\rho])^2$ is the variance of the position operator on $\rho$. Analogously, it holds that $V_{22}(\rho)=2\mathrm{Var}_\rho(\hat{p}_1)$, where $\mathrm{Var}_\rho(\hat{p}_1)\coloneqq \Tr[\hat{p}_1^2\rho]-(\Tr[\hat{p}_1\rho])^2$ is the variance of the momentum operator on $\rho$. Notably, any covariance matrix $V\!(\rho)$ satisfies the matrix inequality~\cite{BUCCO}
\bb
V\!(\rho)+i\Omega_n\ge0\,,
\ee
known as the \emph{uncertainty relation}. As a consequence, since $\Omega_n$ is skew-symmetric, any covariance matrix $V\!(\rho)$ is positive semi-definite on $\mathbb{R}^{2n}$. Conversely, for any symmetric $W\in\mathbb{R}^{2n,2n}$ such that $W+i\Omega_n\ge0$ there exists an $n$-mode state $\rho$ with covariance matrix $V\!(\rho)=W$~\cite{BUCCO}.

Any covariance matrix $V$ can be written in the so-called \emph{Williamson decomposition} as follows~\cite{BUCCO}: there exists a \emph{symplectic} matrix $S$ and $n$ unique real numbers $d_1\ge d_2\ge \ldots\ge d_n\ge 1$ --- called the \emph{symplectic eigenvalues} of $V$ --- such that
\bb\label{will_eq_def}
V=SDS^{\intercal},
\ee 
where $D\coloneqq \operatorname{diag}(d_1, d_1, \ldots, d_n, d_n)$. By definition, a matrix $S$ is said to be \emph{symplectic} if it satisfies $S \Omega_n S^\intercal =\Omega_n$. Symplectic matrices satisfy the following properties:
\begin{itemize}
    \item The product of symplectic matrices is symplectic;
    \item The transpose of a symplectic matrix is symplectic;
    \item The inverse of a symplectic matrix is symplectic;
    \item The determinant of a symplectic matrix is equal to one;
    \item The singular value decomposition of a symplectic matrix $S$ can be written as follows: there exist suitable symplectic orthogonal matrices $O_1,O_2$, and suitable real numbers $z_1,\ldots,z_n\ge 1$ such that
    \bb\label{Euler_dec2}
        S=O_1ZO_2\,,
    \ee
    where $Z$ is a diagonal matrix --- known as the \emph{squeezing matrix} --- defined by
    \bb\label{sq_matttt}
    Z\coloneqq \bigoplus_{j=1}^n\left(\begin{matrix}z_j^{-1}&0\\0&z_j\end{matrix}\right)\,. 
    \ee
    The decomposition in Eq.~\eqref{Euler_dec2} is usually referred to as the \emph{Euler decomposition}.
\end{itemize} 

\subsection{Gaussian unitaries}
Among all CV systems, \emph{Gaussian systems}~\cite{BUCCO} are arguably among the most important ones. Indeed, on the one hand Gaussian systems are the most common class of systems arising both in Nature and in quantum optics laboratories, while on the other they are relatively simple to analyse mathematically. In this subsection we are going to define Gaussian unitaries, while in the forthcoming two subsections we will define Gaussian states and Gaussian measurements.

Let us start with the definition of Gaussian unitaries. By definition, a \emph{Gaussian unitary} $G$ is a unitary that can be prepared by means of evolutions induced by quadratic Hamiltonians in the quadrature operator vector $\hat{\textbf{R}}$. The most general form of a Gaussian unitary is as follows~\cite{BUCCO}:  
\bb\label{gauss_unitary}
    G =  {D}_{\textbf{r}}U_S\,,  
\ee  
where 
\bb\label{disp_real}
{D}_{\textbf{r}}\coloneqq e^{-i\,\textbf{r}^\intercal\Omega_n \hat{\textbf{R}}}
\ee
is the so-called \emph{displacement operator}~\cite{BUCCO}, characterised by the displacement vector \( \textbf{r} \in \mathbb{R}^{2n} \), while $U_S$ corresponds to the \emph{symplectic Gaussian unitary}~\cite{BUCCO}, associated with a \emph{symplectic matrix} $S$. The displacement operator \( {D}_{\textbf{r}} \), associated with a vector \( \textbf{r} \in \mathbb{R}^{2n} \), acts on the quadrature operator vector as
\bb  
    {D}_{\textbf{r}}^\dagger \hat{\textbf{R}} {D}_{\textbf{r}} = \hat{\textbf{R}} + \textbf{r} \hat{\mathbb{1}} \,,  
\ee  
while it acts at the level of first moments and covariance matrices as
\bb
    \textbf{m}({D}_{\textbf{r}}\rho {D}_{\textbf{r}}^\dagger)&=\textbf{m}(\rho)+\textbf{r}\,,\\
    V({D}_{\textbf{r}}\rho {D}_{\textbf{r}}^\dagger)&=V(\rho)\,,
\ee
for any state $\rho$. For a given symplectic matrix \( S \), the symplectic Gaussian unitary $U_S$ acts on the quadrature operator vector as 
\bb\label{laaa}
    U_S^\dagger \hat{\textbf{R}} U_S = S \hat{\textbf{R}} \,,
\ee  
while it acts at the level of first moments and covariance matrices as
\bb\label{action_gauss_moment}
    \textbf{m}(U_S\rho U_S^\dagger)&=S\textbf{m}(\rho)\,,\\
    V(U_S\rho U_S^\dagger)&=SV(\rho)S^\intercal\,,
\ee
for any state $\rho$. Moreover, given two symplectic matrices $S_1$ and $S_2$, the associated symplectic Gaussian unitary satisfies $U_{S_1S_2}=U_{S_1}U_{S_2}$, up to an irrelevant global phase. Hence, the Euler decomposition reported in Eq.~\eqref{Euler_dec2} establishes that any symplectic Gaussian unitary $U_S$ can be decomposed as
\bb\label{Gauss_eul}
    U_S=U_{O_1}U_ZU_{O_2}\,,
\ee
where $O_1,O_2$ are orthogonal symplectic matrices and $Z$ is the squeezing matrix in Eq.~\eqref{sq_matttt}. Given a squeezing matrix $Z=\diag(z_1^{-1},z_1,\ldots,z_n^{-1},z_n)$ as in Eq.~\eqref{sq_matttt}, the Gaussian unitary operation $U_Z$ is called a \emph{squeezing unitary}. Moreover, this allows us to define the squeezing of a Gaussian unitary $G$.
\begin{Def}[(Squeezing of a Gaussian unitary)]
    The squeezing of a Gaussian unitary $G$ is defined as $z\coloneqq \|S\|_\infty=\|Z\|_\infty$.
\end{Def}

\subsubsection{Single-mode Gaussian unitaries}
\label{sec:gaussian_unitaries}
Thanks to Eq.~\eqref{Gauss_eul}, the most general \emph{single-mode} Gaussian unitary $G$ is of the form
\bb\label{most_general_Gaussian_un}
    G= D_{\textbf{r}} R_{\phi} S_{z} R_{\theta}\,,
\ee
where $\textbf{r}\in\mathbb{R}^2$, $\theta\in[0,2\pi)$, $\phi\in[0,2\pi)$, $z\ge 1$, and
\begin{itemize}
    \item $D_{\textbf{r}}$ is the single-mode displacement operator with $\textbf{r}$ being the 2-dimensional displacement vector;
    \item $R_\theta\coloneqq U_{O_\theta}$ is the so-called \emph{phase-space rotation} of angle $\theta$, with $O_\theta$ being the $2\times 2$ rotation matrix of angle $\theta$:
\bb\label{eq_o_theta}
    O_\theta\coloneqq \left(\begin{matrix}\cos\theta&-\sin\theta\\ \sin\theta&\cos\theta\end{matrix}\right)\,.
\ee
    \item $S_{z}$ is the single-mode squeezing unitary with squeezing $z$, defined as $S_z\coloneqq U_{Z}$ with $Z\coloneqq \left(\begin{matrix}z^{-1}&0\\ 0&z\end{matrix}\right)\,;$
\end{itemize} 

\subsection{Gaussian states}
Among all CV quantum states, \emph{Gaussian states}~\cite{BUCCO} are arguably among the most important ones. Indeed, on the one hand Gaussian states are the most common class of states arising in Nature, while on the other they are relatively simple to analyse mathematically.

Let us now proceed with the mathematical definition of Gaussian states. By definition, an $n$-mode state $\rho$ is said to be a \emph{Gaussian state} if it can be written as a Gibbs state or a ground state of a quadratic Hamiltonian $\hat{H}$ in the quadrature operator $\hat{\textbf{R}}$~\cite{BUCCO}. Notably, it turns out that any \emph{pure} Gaussian state ${\psi}$ can be prepared by applying a Gaussian unitary $ G$ to the vacuum, i.e.~$\ket{\psi}= G\ket{0}^{\otimes n}$. A fundamental example of a Gaussian state is the single-mode vacuum state $\ket{0}$.

Gaussian states can be seen as a quantum generalisation of Gaussian probability distributions. Just as Gaussian probability distributions, any Gaussian state is uniquely identified by its first moment $\textbf{m}$ and its covariance matrix $V$~\cite{BUCCO}. That is, for any $\textbf{m}\in\mathbb{R}^{2n}$ and any $V\in\mathbb{R}^{2n\times 2n}$ such that $V+i\Omega_n\ge 0$ there exists a Gaussian state with first moment $\textbf{m}$ and covariance matrix $V$~\cite{BUCCO}. Moreover, any Gaussian state satisfies the following decomposition --- known as the \emph{normal mode decomposition}~\cite{BUCCO}.
\begin{lemma}[(Normal mode decomposition of a Gaussian state~\cite{BUCCO})]\label{lem_normal_mode}
    Let $\rho$ be an $n$-mode Gaussian state with first moment $\textbf{m}(\rho)$ and covariance matrix $V(\rho)$. Let 
   \bb\label{will_form}
        V(\rho)=SDS^{\intercal}
    \ee
    be the Williamson decomposition of $V$ as in Eq.~\eqref{will_eq_def}, where $D\coloneqq\operatorname{diag}(d_1,d_1,\ldots,d_n,d_n)$ is the diagonal matrix of symplectic eigenvalues. Then, the Gaussian state $\rho$ is unitarily equivalent --- via Gaussian unitaries --- to a tensor product of thermal states, i.e.~
   \bb
        \rho= {D}_{\textbf{m}(\rho)}U_S\left(\tau_{\frac{d_1-1}{2}}\otimes\ldots \tau_{\frac{d_n-1}{2}}\right)U_S^{\dagger}{D}_{\textbf{m}(\rho)}^\dagger\,,
    \ee
    where:
    \begin{itemize}
        \item ${D}_{\textbf{m}(\rho)}$ is the displacement operator with amplitude equal to the first moment $\textbf{m}(\rho)$;
        \item $U_S$ is the symplectic Gaussian unitary associated with the symplectic matrix $S$ that puts the covariance matrix $V(\rho)$ in Williamson decomposition (as per \eqref{will_form});
        \item For any $\nu\ge0$, the state $\tau_\nu$ is the single-mode thermal state of mean photon number $\nu$, defined as
       \bb\label{mixed_gaussian_state}
            \tau_\nu\coloneqq\frac{1}{\nu+1}\sum_{n=0}^\infty\left(\frac{\nu}{\nu+1}\right)^n\ketbra{n}\,,
        \ee
        with $\ket{n}$ being the $n$th Fock state.
    \end{itemize}
\end{lemma}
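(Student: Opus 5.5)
The plan is to reduce the general state to a state with zero first moment and diagonal covariance, and then identify that state as a product of thermal states by matching moments. Let $\rho$ have first moment $\mathbf{m}\coloneqq\mathbf{m}(\rho)$ and covariance matrix $V\coloneqq V(\rho)=SDS^\intercal$. Define
\[
\sigma\coloneqq U_S^\dagger D_{\mathbf m}^\dagger\,\rho\, D_{\mathbf m} U_S .
\]
Using the transformation rules for first moments and covariance matrices under displacements and symplectic unitaries stated above, $D_{\mathbf m}^\dagger$ shifts the first moment by $-\mathbf m$ and leaves $V$ unchanged. The unitary $U_S^\dagger=U_{S^{-1}}$ (up to phase) then maps the covariance to $S^{-1}V S^{-\intercal}=D$ and keeps the first moment at zero. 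Hence $\sigma$ has first moment $0$ and covariance matrix $D=\operatorname{diag}(d_1,d_1,\ldots,d_n,d_n)$.

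Next I will check that $\sigma$ is still Gaussian. Conjugating a Gibbs or ground state of a quadratic Hamiltonian $\hat H$ by a Gaussian unitary gives the Gibbs or ground state of $G^\dagger \hat H G$. By the displacement and symplectic actions on $\hat{\mathbf R}$, this operator is again quadratic in $\hat{\mathbf R}$. So $\sigma$ is Gaussian.

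Then I compute the moments of $\tau_{\nu}$ with $\nu=(d-1)/2$ directly in the Fock basis. Because $\tau_\nu$ is diagonal, $\Tr[a\tau_\nu]=0$, so both quadrature means vanish. For the second moments, $\langle a^2\rangle=0$ and $\langle a^\dagger a\rangle=\nu$. This gives $\langle \hat x^2\rangle=\langle\hat p^2\rangle=\nu+\tfrac12$ and $\langle\{\hat x,\hat p\}\rangle=0$. The covariance matrix is therefore $(2\nu+1)\mathbb 1_2=d\,\mathbb 1_2$. For the tensor product $\tau_{\frac{d_1-1}{2}}\otimes\cdots\otimes\tau_{\frac{d_n-1}{2}}$, cross-mode covariances vanish because the means vanish and the state factorizes. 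Its covariance matrix is thus exactly $D$, with zero first moment. Each $\tau_\nu$ is the Gibbs state of $a^\dagger a$ (or the vacuum when $\nu=0$), so the product is Gaussian.

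Finally I invoke the fact stated above that a Gaussian state is uniquely determined by its first moment and covariance matrix. This gives $\sigma=\bigotimes_j\tau_{\frac{d_j-1}{2}}$. Inverting the conjugation yields the claimed formula for $\rho$. The main subtlety is the degenerate case $d_j=1$, where $\nu=0$ and the factor is the vacuum, a ground state rather than a Gibbs state. It is covered because the definition of Gaussian states includes ground states, and the Fock-basis formula for $\tau_0$ gives $\ketbra{0}$. Apart from this, the only care needed is bookkeeping the inverse symplectic $S^{-1}$ and the identity $U_{S}^\dagger=U_{S^{-1}}$ up to a global phase, which cancels in the conjugation.
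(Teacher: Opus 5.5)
Your argument is correct in substance. There is no proof in the paper to compare it with, because the lemma is quoted from \cite{BUCCO}.

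With the Gibbs/ground-state definition the paper recalls, the standard derivation works on the Hamiltonian. Williamson's theorem applied to the positive-definite Hamiltonian matrix turns $\hat H$, after a displacement and a symplectic unitary, into $\sum_j\omega_j(a_j^\dagger a_j+\tfrac12)$. So $\rho$ is visibly a conjugated product of thermal states, and computing its moments gives symplectic eigenvalues $\coth(\beta\omega_j/2)$. You instead normalize the moments using the Williamson decomposition of $V$ itself and identify the state by uniqueness.

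Your route is shorter and works verbatim for \emph{any} symplectic $S$ with $V=SDS^\intercal$. The Hamiltonian route yields one particular $S$ and needs an extra remark: other choices differ by passive transformations within degenerate blocks, and these leave the product of thermal states unchanged. The price is that all the content is delegated to the uniqueness theorem. If Gaussianity is defined through Gaussian characteristic functions, uniqueness is immediate and yours is the standard proof. Under the Gibbs-state definition, uniqueness is usually derived \emph{from} the normal-mode form. So your proof is valid relative to the facts listed in the preliminaries, but it is not a foundational derivation.

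One step needs repair. The inference ``each $\tau_\nu$ is a Gibbs state of $a^\dagger a$ or the vacuum, so the product is Gaussian'' fails when some but not all $d_j$ equal $1$. For example, $\ketbra{0}\otimes\tau_\nu$ with $\nu>0$ is not full rank, so it is not of the form $e^{-\beta\hat H}/Z$. It is also not pure, so it is not a ground state. Your final paragraph therefore does not cover this degenerate case under the literal definition.

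The fix is to use the larger class the paper tacitly adopts, for instance states with Gaussian characteristic function. The paper's claim that a Gaussian state exists for every admissible $V$ already requires this class. In it the product is Gaussian because its characteristic function factorizes into Gaussian factors, and the rest of your proof is unchanged.

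Under the literal definition, the mixed case cannot occur for Gibbs states. There $d_j=1$ forces $\langle a_j^\dagger a_j\rangle_\sigma=0$, hence a vacuum factor, contradicting faithfulness. Ruling it out for ground states, however, essentially requires diagonalizing $\hat H$ again.
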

As a consequence of the above decomposition, a Gaussian state is pure if and only if its symplectic eigenvalues are all equal to one. This is so because the thermal state with zero mean photon number corresponds to the vacuum state, i.e.~$\tau_0=\ketbra{0}$. In particular, any pure Gaussian state $\psi$ can be written as 
\bb\label{pure_gaussian_state}
\ket{\psi}=D_{\textbf{m}(\psi)}U_S\ket{0}^{\otimes n}\,,
\ee
where $D_{\textbf{m}(\psi)}$ is the displacement operator with displacement vector given by the first moment of the state and $S$ is a symplectic matrix such that $V(\psi)=SS^\intercal$.

Moreover, let us define the \emph{squeezing} of a Gaussian state as follows:
\begin{Def}[(Squeezing of a Gaussian state)]
    Given a Gaussian state $\rho$, its squeezing $z$ is defined by
    \bb
        z=\|S\|_\infty\,,
    \ee
    where $\|S\|_\infty$ denotes the operator norm (maximum singular value) of the symplectic matrix $S$ that puts the covariance matrix of $\rho$ in Williamson decomposition, as in Eq.~\eqref{will_form}. In other words, the squeezing of a Gaussian state is the squeezing of the Gaussian unitary that creates the Gaussian state from the thermal state.
    \end{Def}
Note that, even if $S$ is not unique, the squeezing is uniquely defined because $\|S\|_\infty=\sqrt{\|SS^\intercal\|_\infty}$ and $SS^\intercal$ is uniquely determined by $V(\rho)$ (indeed, $SS^\intercal$ can be expressed as $SS^\intercal=V\# \big( \Omega V^{-1}\Omega^\intercal\big)$~\cite{Lami16}, where $\#$ is the \emph{matrix geometric mean}~\cite{BHATIA}). Moreover, note that $\|S\|_\infty=\|Z\|_\infty$, where $Z$ is the squeezing matrix which appears in the Euler decomposition of $S$ (see Eq.~\eqref{Euler_dec2}).

\subsubsection{Single-mode Gaussian states}\label{single_mode_sq}
Thanks to Williamson's decomposition in Eq.~\eqref{will_eq_def}, the Euler decomposition in Eq.~\eqref{Euler_dec2}, and the fact that the only orthogonal symplectic $2 \times 2$ matrices are rotations, it follows that the most general covariance matrix of a single-mode state $\rho$ can be parametrised by three real parameters: the squeezing $z \ge 1$, a rotation angle $\theta \in [0, 2\pi)$, and a mean photon number $\nu \ge 0$. In this form, the covariance matrix reads
\begin{equation}\label{eq_covsss}
V(\rho) = (2\nu + 1)O_\theta
\begin{pmatrix}
z^{-2} & 0 \\
0 & z^2
\end{pmatrix}
O_\theta^\intercal,
\end{equation}
where $O_\theta$ is the rotation matrix of angle $\theta$ defined in Eq.~\eqref{eq_o_theta}.

In summary, a single-mode Gaussian state is fully characterised by its first moment --- a real two-dimensional vector $\mathbf{r}$ --- and its covariance matrix, which is determined by the three parameters $(z, \theta, \nu)$ as given in Eq.~\eqref{eq_covsss}.

Thanks to Lemma~\ref{lem_normal_mode}, at the level of the Hilbert space, the most general Gaussian state $\rho$ can be written as the output state of the composition of a squeezing unitary $S_z$, a phase-space rotation $R_\theta$, and a displacement operator $D_{\textbf{r}}$ with input a thermal state $\tau_\nu$:
\bb
    \rho= {D}_{\textbf{r}} R_\theta S_z\,\tau_\nu\, S_z^\dagger R_\theta^\dagger {D}_{\textbf{r}}^\dagger\,.
\ee
Let us give some paradigmatic examples of single-mode Gaussian states:
\begin{itemize}
    \item For any $\textbf{r}\in\mathbb{R}^2$, the \emph{coherent state} $\ket{\textbf{r}}$ with amplitude $\textbf{r}$ is given by $\ket{\textbf{r}}\coloneqq{D}_{\textbf{r}}\ket0$, whose covariance matrix and first moment are given by $V(\ketbra{\textbf{r}})= \mathbb{1}$ and $\textbf{m}(\ketbra{\textbf{r}})=\textbf{r}$, respectively.
    \item For any $z\ge1$, the \emph{squeezed state} $\ket{z}$ with squeezing $z$ is given by $\ket{z}\coloneqq S_z\ket0$, whose covariance matrix and first moment are given by $V(\ketbra{z})= \left(\begin{matrix}z^{-2}&0\\ 0&z^{2}\end{matrix}\right)$ and $\textbf{m}(\ketbra{z})=0$, respectively. As $z$ grows, the variance of the position operator decreases as $z^{-2}$, while the variance of the momentum operator increases as $z^2$. In contrast, the state $R_{\frac{\pi}{2}}\ket{z}$ shows the opposite behaviour: its covariance matrix is given by $V(R_{\frac{\pi}{2}}\ketbra{z}R_{\frac{\pi}{2}}^\dagger)= \left(\begin{matrix}z^{2}&0\\ 0&z^{-2}\end{matrix}\right)$ and so, as $z$ grows, the variance of the position operator increases as $z^{2}$, while the variance of the momentum operator decreases as $z^{-2}$.
    \item For any $\nu\ge0$, the \emph{thermal state} $\tau_\nu$ with mean photon number $\nu$ has zero first moment and covariance matrix given by $V(\tau_\nu)= (2\nu+1)\mathbb{1}$.
    \item The \emph{vacuum state} $\ket{0}$ has zero first moment and covariance matrix given by $V(\ketbra{0})= \mathbb{1}$.
    \item For any $x\in\mathbb{R}$ and $z\ge 1$, the Gaussian state $D_{(x,0)} S_z \ket{0}$ has first moment equal to $(x,0)$ and covariance matrix given by $\left(\begin{matrix}z^{-2}&0\\ 0&z^{2}\end{matrix}\right)$. In the limit of infinite squeezing $z \to \infty$, such a state converges to the (generalised) eigenstate $\ket{x}$ of the position operator $\hat{x}$ with eigenvalue $x$.
    \item For any $p\in\mathbb{R}$ and $z\ge 1$, the Gaussian state $D_{(0,p)} R_{\frac{\pi}{2}}S_z \ket{0}$ has first moment equal to $(0,p)$ and covariance matrix given by $\left(\begin{matrix}z^{2}&0\\ 0&z^{-2}\end{matrix}\right)$. In the limit of infinite squeezing $z \to \infty$, such a state converges to the (generalised) eigenstate $\ket{p}$ of the momentum operator $\hat{p}$ with eigenvalue $p$.
\end{itemize}

\subsection{Squeezing as a quantum resource}
In the above section, we have introduced the definition of a single mode squeezed Gaussian state. In the following section, we detail why squeezing is a quantum resource.

To measure a signal, a sensing protocol is limited by standard measurement precision bounds imposed by the Heisenberg uncertainty principle.
To quantify the \emph{quantumness} of a given protocol, we rely on the quantum optics definitions of classicality \cite{glauber1963coherent} and adopt the framework of quantum metrology, which distinguishes classical and quantum strategies based on their ability to surpass classical measurement bounds \cite{caves1981quantum, giovannetti2004quantum}. The capacity to circumvent these measurement bounds depends on the intrinsic quantum noise of the mode's initial state, characterized by the squeezing parameter, $z$.

In Glauber's seminal 1963 work~\cite{glauber1963coherent}, he precisely defined the mathematical boundary between classical and quantum light, establishing as a key consequence that the coherent state satisfies the strict mathematical definition of a 'classical' signal.
A coherent state (a state with no squeezing, $z=1$) is a displacement of the vacuum, so its quantum noise originates purely from vacuum fluctuations and is distributed equally across conjugate phase-space quadratures $(\Delta x = \Delta p$, see Figure~\ref{fig:fig1}). By the Heisenberg uncertainty principle, $\Delta x \Delta p \geq \frac{1}{2} \implies \Delta x^2 \geq \frac{1}{2}$. Thus, the variance in the measured quadrature is strictly bounded by this irreducible vacuum noise. 
Sensing protocols restricted to $z=1$ are therefore fundamentally classical, as no single quadrature measurement can resolve a signal finer than this vacuum limit \cite{caves1981quantum}.

In contrast, squeezed states ($z > 1$) introduce a non-classical resource by allowing the variance of a chosen quadrature to go below the intrinsic vacuum noise limit. Squeezing redistributes the initial quantum fluctuations, shrinking the variance of the target quadrature at the expense of amplifying the variance in the conjugate quadrature. Therefore, squeezed states are quantum because they surpass the measurement limit that classical states are subject to~\cite{caves1981quantum}.

This theoretical divide is sharply reflected in the experimental complexity required to prepare these states.
States with no squeezing $(z=1)$, such as coherent states, can be generated trivially by turning on a laser~\cite{dowling2002quantumtechnologysecondquantum}. In contrast, generating highly squeezed states demands advanced quantum technologies, specifically nonlinear optical crystals that mediate the parametric downconversion of  photons from a pump laser into entangled pairs of lower-energy photons~\cite{BUCCO,braunstein2005quantum,weedbrook2012gaussian,adesso2014continuous,andersen2010continuous}.


Additionally, one can illustrate the inherently quantum nature of squeezing also by directly linking it to quantum entanglement.
Any multi-mode Gaussian state with no squeezing is guaranteed to be unentangled (completely separable)~\cite{BUCCO}. For example, interfering two ``classical" coherent states through a 50:50 beam splitter yields unentangled (product) states. However, interfering two single-mode squeezed states with appropriately chosen relative phases on the same beam splitter generates a two-mode entangled state. Beam splitters are linear optical elements and cannot create entanglement from classical input states on their own. Thus, in this Gaussian setting, squeezing is necessary to generate entanglement using passive linear optics.

Ultimately, the squeezing parameter $z$ quantifies the degree of \emph{quantumness} in a protocol: protocols that rely on highly squeezed states are genuinely quantum, whereas those based on weak or no squeezing remain effectively classical.

%

\subsection{Gaussian measurements}\label{sec_gauss_meas}
Let us proceed with the definition of Gaussian measurements. A
\emph{Gaussian measurement with a Gaussian seed} is a POVM of the form
$\{\frac{1}{(2\pi)^n}D_{\mathbf{r}}\sigma D_{\mathbf{r}}^\dagger\}_{\mathbf{r}\in\mathbb{R}^{2n}}$,
where $\sigma$ is a Gaussian state, dubbed the \emph{Gaussian seed}, with zero
first moment. Note that this is a proper POVM because it holds that~\cite{BUCCO}
\bb
    \frac{1}{(2\pi)^n} \int_{\mathbb{R}^{2n}}\mathrm{d}^{2n}\textbf{r}\,D_\textbf{r}\sigma D_\textbf{r}^\dagger =\hat{\mathbb{1}}\,.
\ee
Moreover, the probability distribution $p_{\rho,\sigma}(\textbf{r})$ of the outcome $\textbf{r}$ of the Gaussian measurement with Gaussian seed $\sigma$ performed on a state $\rho$ is given by 
\bb
    p_{\rho,\sigma}(\textbf{r})=\frac{1}{(2\pi)^n} \Tr[D_\textbf{r}\sigma D_\textbf{r}^\dagger \rho ]\,.
\ee
Additionally, if $\rho$ is Gaussian, one can prove that such a probability distribution is a Gaussian probability distribution with mean equal to the first moment of $\rho$ and covariance matrix given by the arithmetic mean of the covariance matrix of $\rho$ and that of the Gaussian seed $\sigma$~\cite{BUCCO}. Specifically, by denoting as 
\bb\label{gauss_prob}
    \NN[\textbf{m},V](\textbf{r})&\coloneqq\frac{e^{-\frac12 (\textbf{r}-\textbf{m})^\intercal V^{-1}  (\textbf{r}-\textbf{m})}}{(2\pi)^{n}\sqrt{\det V }}\,
\ee
the $2n$-dimensional Gaussian probability distribution with first moment $\textbf{m}$ and covariance matrix $V$, the probability distribution $p_{\rho,\sigma}(\textbf{r})$ of the outcome $\textbf{r}$ of a Gaussian measurement with Gaussian seed $\sigma$ performed on a Gaussian state $\rho$ reads
\bb\label{dist_gauss_meas}
    p_{\rho,\sigma}(\textbf{r})=\NN\!\left[\textbf{m}(\rho),\frac{V(\rho)+V(\sigma)}{2}\right](\textbf{r})\,.
\ee 
Let us give some paradigmatic examples of Gaussian measurements. We will start by introducing the heterodyne measurement, and then we will define the homodyne measurement.

\subsubsection{Heterodyne measurement}
The heterodyne measurement is the Gaussian measurement with Gaussian seed being the vacuum state, i.e.~$\sigma=\ketbra{0}$. Hence, the POVM of the heterodyne measurement is of the form $\{\frac{1}{(2\pi)^n} \ketbra{\textbf{r}} \}_{\textbf{r}\in\mathbb{R}^{2n}}$, where $\ket{\textbf{r}}$ denotes the coherent state with amplitude $\textbf{r}$. Moreover, the probability distribution $p_{\mathrm{het},\rho}$ of the outcome $\textbf{r}$ of the heterodyne measurement performed on a Gaussian state $\rho$ is a Gaussian probability distribution with the same mean as $\rho$ and covariance matrix given by $\frac{V(\rho)+\mathbb{1}}{2}$:
    \bb
        p_{\mathrm{het},\rho}(\textbf{r})\coloneqq p_{\rho,\ketbra{0}}(\textbf{r})=\NN\left[\textbf{m}(\rho),\frac{V(\rho)+\mathbb{1}}{2}\right](\textbf{r})\,.
    \ee
    Note that, e.g.~in the single-mode case, the outcome of the heterodyne measurement is a 2-dimensional vector $\textbf{r}=(x,p)$ and that the expectation value of $x$ is given by the expectation value of the position operator, i.e.~$\mathbb{E}[x]=\Tr[\rho\hat{x}]$, while its variance is given by $\mathrm{Var}[x]=\mathrm{Var}_\rho[\hat{x}]+\frac12$, where $\mathrm{Var}_\rho[\hat{x}]\coloneqq  \Tr[\rho\hat{x}^2]-(\Tr[\rho\hat{x}])^2 $ is the variance of the position operator measured on $\rho$. Analogously, the expectation value of $p$ is given by the expectation value of the momentum operator $\Tr[\rho\hat{p}]$ and its variance is given by $\mathrm{Var}[p]=\mathrm{Var}_\rho[\hat{p}]+\frac12$, where  $\mathrm{Var}_\rho[\hat{p}]\coloneqq  \Tr[\rho\hat{p}^2]-(\Tr[\rho\hat{p}])^2 $ is the variance of the momentum operator measured on $\rho$. The additional factor $\frac12$ present in the variances $\mathrm{Var}[x]$ and $\mathrm{Var}[p]$ can be interpreted, via the lens of Heisenberg's uncertainty principle, as the impossibility of measuring the position and momentum operators with just one measurement setting.

\subsubsection{Homodyne measurements}
    Before giving the definition of homodyne measurements, let us recall that \eqref{laaa} implies that for any symplectic matrix $S$ it holds that $U_S^\dagger \hat{\textbf{R}} U_S=S\hat{\textbf{R}}$. In particular, it follows that $\{\sum_{j=1}^{2n} S_{ij}\hat{\textbf{R}}_j\}_{i\in\{1,3,5,\ldots,2n-1\}}$ is a set of $n$ commuting observables. By definition, a \emph{homodyne measurement} is a projective measurement with respect to the common (generalised) eigenbasis of such $n$ commuting observables. For example, the projective measurement with respect to the common (generalised) eigenbasis of the position operators $\hat{x}_1,\hat{x}_2,\ldots,\hat{x}_n$ constitutes a homodyne measurement (corresponding to the choice $S=\mathbb{1}$). Also, the measurement with respect to the common (generalised) eigenbasis of the momentum operators $\hat{p}_1,\hat{p}_2,\ldots,\hat{p}_n$ is a homodyne measurement (corresponding to the choice $S=O_{\frac{\pi}{2}}$). More generally, for any $\theta_1,\ldots,\theta_n\in[0,2\pi)$, the measurement with respect to the common (generalised) eigenbasis of the commuting observables $\cos\!\theta_i\,\hat{x}_i+\sin\!\theta_i\,\hat{p}_i$, for $i\in[n]$, is a homodyne measurement (corresponding to the choice $S=\bigoplus_{i=1}^nO_{\theta_i}$).

    In general, the outcome of a homodyne measurement performed on an $n$-mode quantum state is an $n$-dimensional real vector. However, above we saw that the outcome of a Gaussian measurement is a $2n$-dimensional real vector. So, in which sense is a homodyne measurement a Gaussian measurement? Strictly speaking, a homodyne measurement is a \emph{limit} of Gaussian measurements. Specifically, it can be proved that the homodyne measurement associated with a symplectic matrix $S$ corresponds to the Gaussian measurement with Gaussian seed $\sigma =U_S^\dagger \ketbra{z}^{\otimes n}U_S$ in the limit of infinite squeezing $z\rightarrow\infty$, where $\ket{z}$ denotes the single-mode squeezed state with squeezing $z$~\cite{BUCCO}. Moreover, the probability distribution $p(x)$ of getting the outcome $x\in\mathbb{R}^n$ by performing the homodyne measurement associated with the symplectic matrix $S$ on a state $\rho$ is given by $p(x)\coloneqq \bra{x}U_S\rho U_S^\dagger \ket{x}$, where $\ket{x}=\ket{x_1}\otimes \ket{x_2}\ldots\otimes \ket{x_n}$ with each $\ket{x_i}$ being the (generalised) eigenvector associated with the eigenvalue $x_i$ of the position operator $\hat{x}_i$ of the $i$th mode. In particular, the mean value $\textbf{m}$ of $p(x)$ is given by
    \bb
        m_i\coloneqq \mathbb{E}[x_i]=m_{2i-1}(U_S\rho U_S^\dagger)\qquad\forall i\in[n]\,,
    \ee
    and the covariance matrix $V$ of $p(x)$ is given by
    \bb
        V_{i,j}\coloneqq \mathbb{E}[ (x_i-\mathbb{E}[x_i])(x_j-\mathbb{E}[x_j])]=\frac12\bigl[V(U_S\rho U_S^\dagger)\bigr]_{2i-1,2j-1}\qquad\forall i,j\in[n]\,.
    \ee
    If $\rho$ is a Gaussian state, it can be shown that $p(x)$ is the Gaussian probability distribution with mean value $\textbf{m}$ and covariance matrix $V$~\cite{BUCCO}:
    \bb
        p(x)=\NN[\textbf{m},V](x)\qquad \forall x\in\mathbb{R}^n\,.
    \ee
    Throughout this paper, we use the term \emph{Gaussian measurement} in the
extended sense that includes both Gaussian measurements with a Gaussian seed
and ideal homodyne measurements.

    To be more concrete, let us focus on single-mode systems. The homodyne measurement along the $x$-axis is the projective measurement with respect to the (generalised) eigenbasis $\{\ketbra{x}\}_{x\in\mathbb{R}}$ of the position operator $\hat{x}$, which can be written in the (generalised) spectral decomposition as $\hat{x}=\int_{\mathbb{R}}\mathrm{d}x\,x\ketbra{x}$. The probability distribution $p(x)$ of getting the outcome $x$ by performing such a measurement on a state $\rho$ is given by $p(x)\coloneqq \bra{x}\rho\ket{x}$. In particular, the mean value $m$ and the variance $V$ of the outcome $x$ are given by
    \bb
        m\coloneqq \int_{\mathbb{R}}x p(x)=\Tr[\hat{x}\rho]=m_1(\rho)\,,
    \ee
    and 
    \bb
        V\coloneqq \int_{\mathbb{R}} p(x)\left(x-\mathbb{E}[x]\right)^2=\mathrm{Var}_\rho(\hat{x})=\frac12V_{11}(\rho)\,.
    \ee 
    where we denoted $\mathrm{Var}_\rho(\hat{x})\coloneqq \Tr\left[\rho(\hat{x}-\Tr[\rho\hat{x}]\hat{\mathbb{1}})^2\right]$. If $\rho$ is a Gaussian state, it can be shown that $p(x)$ is the Gaussian probability distribution with mean $\Tr[\hat{x}\rho]$ and variance $\mathrm{Var}_\rho(\hat{x})$~\cite{BUCCO}, or alternatively:
    \bb
        p(x)=\NN\left[m_1(\rho),\frac{V_{11}(\rho)}{2}\right](x)\qquad \forall x\in\mathbb{R}\,.
    \ee

\section{Preliminaries on Gaussian probability distributions \label{sec:prelim_gaussian_prob}}
In this section, we recall and prove some preliminary results on Gaussian probability distributions. Let us begin with a few useful remarks about the \emph{convolution} between probability distributions. The convolution, denoted by $\ast$, between two probability distributions $p:\mathbb{R}^n\longmapsto \mathbb{R}$ and $q:\mathbb{R}^n\longmapsto \mathbb{R}$, is defined as
\bb
    p\ast q(\textbf{r})\coloneqq \int_{\mathbb{R}^n}\mathrm{d}^n\textbf{r}'\,\,p(\textbf{r}-\textbf{r'})\,\, q(\textbf{r}')\,,
\ee 
and yields a well-defined probability distribution. Specifically, the following holds. 
\begin{lemma}[(Probability interpretation of the convolution)]\label{interp_conv}
Given two independent random variables
\bb
    X&\sim p\,,\quad Y&\sim q\,,
\ee
the probability distribution of their sum $X+Y$ is exactly given by the convolution $p\ast q$:
\bb
    X+Y\sim p\ast q\,.
\ee
\end{lemma}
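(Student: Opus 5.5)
We will prove the statement by computing the cumulative distribution function of $X+Y$ directly and differentiating it. Since $X$ and $Y$ are independent, their joint law on $\mathbb{R}^n\times\mathbb{R}^n$ has density $p(\mathbf{x})\,q(\mathbf{y})$. Hence, for any Borel set $A\subseteq\mathbb{R}^n$,
\bb
    \Pr[X+Y\in A]=\int_{\mathbb{R}^n}\mathrm{d}^n\mathbf{y}\int_{\mathbb{R}^n}\mathrm{d}^n\mathbf{x}\;\mathbb{1}_A(\mathbf{x}+\mathbf{y})\,p(\mathbf{x})\,q(\mathbf{y})\,.
\ee

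In the inner integral we change variables $\mathbf{r}=\mathbf{x}+\mathbf{y}$ at fixed $\mathbf{y}$. This is a translation, so its Jacobian is one, and it turns the inner integral into $\int_A \mathrm{d}^n\mathbf{r}\,p(\mathbf{r}-\mathbf{y})$. We then exchange the order of integration, which Tonelli's theorem permits because the integrand is nonnegative. This gives
\bb
    \Pr[X+Y\in A]=\int_A\mathrm{d}^n\mathbf{r}\int_{\mathbb{R}^n}\mathrm{d}^n\mathbf{y}\;p(\mathbf{r}-\mathbf{y})\,q(\mathbf{y})=\int_A\mathrm{d}^n\mathbf{r}\;(p\ast q)(\mathbf{r})\,,
\ee
which is exactly the definition of the convolution with $\mathbf{r}'=\mathbf{y}$. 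Since this holds for every Borel set $A$, the function $p\ast q$ is a density of $X+Y$. Taking $A=\mathbb{R}^n$ also confirms that $p\ast q$ integrates to one, so it is a genuine probability distribution as claimed just before the lemma.

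The only point that needs care is justifying the exchange of integrals and the almost-everywhere finiteness of $p\ast q$. Both follow from Tonelli's theorem, because $p,q\ge0$ and $\int p=\int q=1$. No further estimates are required.
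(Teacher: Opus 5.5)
Your proof is correct and takes the same approach as the paper. The paper's proof is only the one line ``it follows from the definition of convolution,'' and you spell out the computation behind it: the product density from independence, a translation, and Tonelli, which together show $\Pr[X+Y\in A]=\int_A p\ast q$ for every Borel $A$. One small fix: your opening sentence promises a CDF-and-differentiation argument, but you never use one and do not need it, so you should reword it.
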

\begin{proof}
    It follows from the definition of convolution.
\end{proof}
Given $n\in\mathbb{N}$, a positive matrix $V\in\mathbb{R}^{n\times n}$, and a real vector $\textbf{m}\in\mathbb{R}^n$, the \emph{Gaussian probability distribution}  $\NN[\textbf{m},V]:\mathbb{R}^n\mapsto\mathbb{R}$ is a probability distribution defined as
\bb\label{gauss_probdef}
    \NN[\textbf{m},V](\textbf{r})&\coloneqq\frac{e^{-\frac12 (\textbf{r}-\textbf{m})^\intercal V^{-1}  (\textbf{r}-\textbf{m})}}{(2\pi)^{n/2}\sqrt{\det V }}\quad\forall\textbf{r}\in\mathbb{R}^n\,
\ee
where $\textbf{m}$ is the first moment and $V$ is the covariance matrix.  Notably, the convolution between two Gaussian probability distributions $\mathcal{N}[\textbf{m}_1, V_1]$ and $\mathcal{N}[\textbf{m}_2, V_2]$ yields another Gaussian probability distribution with mean equal to the sum of the means, i.e.~$\textbf{m}_1 + \textbf{m}_2$, and covariance matrix equal to the sums of the covariance matrices, i.e.~$V_1 + V_2$. 
\begin{lemma}[(Convolution between Gaussian distributions)]\label{conv_gauss}
Let $V_1,V_2\in\mathbb{R}^{n\times n}$ positive semidefinite matrices and let $\textbf{m}_1,\textbf{m}_2\in\mathbb{R}^n$. Then, it holds that
\bb\label{prop_conv}
    \mathcal{N}[\textbf{m}_1,V_1]\ast \mathcal{N}[\textbf{m}_2,V_2]= \mathcal{N}[\textbf{m}_1+\textbf{m}_2,V_1+V_2]\,,
\ee
$\mathcal{N}[\textbf{m},V]$ denotes a Gaussian probability distribution with mean $\textbf{m}$ and covariance matrix $V$, as defined in Eq.~\eqref{gauss_prob}. In particular, given two independent Gaussian random variables
\bb
    \textbf{r}_1&\sim \mathcal{N}[\textbf{m}_1,V_1]\,,\\
    \textbf{r}_2&\sim \mathcal{N}[\textbf{m}_2,V_2]\,,
\ee 
their sum is another Gaussian probability distribution satisfying
\bb
    \textbf{r}_1+\textbf{r}_2\sim \mathcal{N}[\textbf{m}_1+\textbf{m}_2,V_1+V_2]\,.
\ee
\end{lemma}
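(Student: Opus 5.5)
The cleanest route is through Lemma~\ref{interp_conv}, which reduces the convolution identity to a statement about sums of independent random vectors. We then identify the law of the sum via characteristic functions. We treat the two parts of Eq.~\eqref{prop_conv} in the reverse of the order stated. First we show that if $\textbf{r}_1\sim\NN[\textbf{m}_1,V_1]$ and $\textbf{r}_2\sim\NN[\textbf{m}_2,V_2]$ are independent, then $\textbf{r}_1+\textbf{r}_2\sim\NN[\textbf{m}_1+\textbf{m}_2,V_1+V_2]$. Then Lemma~\ref{interp_conv} says the law of $\textbf{r}_1+\textbf{r}_2$ is exactly $\NN[\textbf{m}_1,V_1]\ast\NN[\textbf{m}_2,V_2]$, which gives the identity.

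For the key step, we compute the characteristic function $\phi_{\textbf{m},V}(\textbf{k})\coloneqq\mathbb{E}[e^{i\textbf{k}^\intercal\textbf{r}}]$ of $\NN[\textbf{m},V]$ and show it equals $e^{i\textbf{k}^\intercal\textbf{m}-\frac12\textbf{k}^\intercal V\textbf{k}}$.
\begin{itemize}
\item For $V>0$, write $\textbf{r}=\textbf{m}+V^{1/2}\textbf{g}$ with $\textbf{g}$ standard normal. This reduces the computation to the one-dimensional Gaussian integral $\int e^{ikx-x^2/2}\,\mathrm{d}x/\sqrt{2\pi}=e^{-k^2/2}$, applied coordinatewise.
\item Independence then gives $\phi_{\textbf{r}_1+\textbf{r}_2}=\phi_{\textbf{m}_1,V_1}\phi_{\textbf{m}_2,V_2}=\phi_{\textbf{m}_1+\textbf{m}_2,V_1+V_2}$.
\item Uniqueness of the law given its characteristic function (Lévy's inversion theorem) concludes.
\end{itemize}
An equally valid alternative is to complete the square directly in the convolution integral. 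This yields a Gaussian in $\textbf{r}$ whose exponent involves $(V_1+V_2)^{-1}$, via the identity $V_1^{-1}-V_1^{-1}(V_1^{-1}+V_2^{-1})^{-1}V_1^{-1}=(V_1+V_2)^{-1}$. It is messier but self-contained.

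The only real obstacle is that the statement allows $V_1,V_2$ to be merely positive semidefinite. In that case the density in Eq.~\eqref{gauss_probdef} is undefined, so $\NN[\textbf{m},V]$ must be interpreted as the degenerate Gaussian measure, namely the law of $\textbf{m}+V^{1/2}\textbf{g}$. The characteristic-function argument handles this uniformly, since $\phi_{\textbf{m},V}$ is well defined for any $V\ge0$, and the convolution of measures is still given by Lemma~\ref{interp_conv}. I would state this convention explicitly. If $V_1+V_2>0$, the result is a genuine density again. One could also use a limiting argument $V_j\to V_j+\epsilon\mathbb{1}$ with weak convergence, but the characteristic-function proof makes that unnecessary.
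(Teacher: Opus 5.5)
Your proposal is correct and follows essentially the paper's route: the paper takes the Fourier transform of both sides of Eq.~\eqref{prop_conv} and uses that the transform of a convolution is the product of the transforms, which is your characteristic-function argument; passing first through Lemma~\ref{interp_conv} is only a relabelling of the same step. Your remark is also worthwhile: for merely positive semidefinite $V_1,V_2$ the density in Eq.~\eqref{gauss_probdef} is undefined, so $\NN[\textbf{m},V]$ has to be read as the degenerate Gaussian law of $\textbf{m}+V^{1/2}\textbf{g}$, a point the paper leaves implicit.
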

\begin{proof}
    Eq.~\eqref{prop_conv} is a well known result in probability theory. The quickest way to prove it is to take the Fourier transform of both
sides of Eq.~\eqref{prop_conv} and use the fact that the Fourier
transform of a convolution is the product of the Fourier transforms.
\end{proof} 
This result is particularly useful, as it enables us to derive the following lemma, which will be applied repeatedly throughout the paper.
\begin{lemma}[(Reduction via convolution)]\label{reduction_via_convolution}
Let $V,W \in \mathbb{R}^{n \times n}$ be two positive semidefinite matrices such that $W\ge V$, and let $\textbf{m} \in \mathbb{R}^n$. Then, there exists an algorithm, which is independent of $\textbf{m}$, that takes as input a sample $\textbf{r} \sim \mathcal{N}[\textbf{m}, V]$ and outputs a random variable $\textbf{r}'\sim \mathcal{N}[\textbf{m}, W]$. 
\end{lemma}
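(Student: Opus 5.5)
The plan is to add independent Gaussian noise that covers the covariance gap. Since $W\ge V$, the matrix $W-V$ is positive semidefinite. The algorithm, on input $\mathbf{r}$, draws an independent sample $\mathbf{n}\sim\mathcal{N}[0,W-V]$ using fresh randomness and outputs
\[
\mathbf{r}'\coloneqq\mathbf{r}+\mathbf{n}.
\]
This procedure uses only $V$ and $W$, not $\mathbf{m}$, so it is independent of $\mathbf{m}$ as the statement requires.

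Correctness follows from Lemma~\ref{conv_gauss}. Since $\mathbf{r}\sim\mathcal{N}[\mathbf{m},V]$ and $\mathbf{n}\sim\mathcal{N}[0,W-V]$ are independent, their sum is distributed as
\[
\mathcal{N}[\mathbf{m}+0,\,V+(W-V)]=\mathcal{N}[\mathbf{m},W].
\]
Equivalently, by Lemma~\ref{interp_conv}, the law of $\mathbf{r}'$ is the convolution of the two distributions.

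The only point needing care is that $W-V$ (and possibly $V$ itself) may be singular. In that case the density formula in Eq.~\eqref{gauss_probdef} is not defined, so $\mathcal{N}[0,W-V]$ must be read as the law of $\mathbf{0}+L\mathbf{g}$. Here $L=(W-V)^{1/2}$, or any $L$ with $LL^\intercal=W-V$, and $\mathbf{g}$ is a vector of i.i.d.\ standard normal entries. Sampling $\mathbf{n}$ this way is explicit.

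Lemma~\ref{conv_gauss} is stated for positive semidefinite covariances, so it applies directly in the singular case. If one wants to avoid degenerate densities altogether, the same conclusion follows from characteristic functions: for independent variables these multiply, giving
\[
e^{i\mathbf{k}^\intercal\mathbf{m}-\frac12\mathbf{k}^\intercal V\mathbf{k}}\cdot e^{-\frac12\mathbf{k}^\intercal(W-V)\mathbf{k}}=e^{i\mathbf{k}^\intercal\mathbf{m}-\frac12\mathbf{k}^\intercal W\mathbf{k}},
\]
which is the characteristic function of $\mathcal{N}[\mathbf{m},W]$.
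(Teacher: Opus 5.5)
Your proposal is correct and follows the paper's proof: draw independent noise $\bar{\mathbf{r}}\sim\mathcal{N}[0,W-V]$, output $\mathbf{r}+\bar{\mathbf{r}}$, and invoke Lemma~\ref{conv_gauss}. Your treatment of a singular $W-V$ (sampling $L\mathbf{g}$ with $LL^\intercal=W-V$, or checking characteristic functions) is a harmless refinement that the paper leaves implicit.
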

\begin{proof}
    Lemma~\ref{conv_gauss} implies that
    \bb
        \mathcal{N}[\textbf{m},W]=\mathcal{N}[0,W-V]\ast\mathcal{N}[\textbf{m},V]\,.
    \ee
    Hence, by sampling a Gaussian random variable $\bar{r}\sim \mathcal{N}[0,W-V]$, we have that $    \mathbf{r}'
    \coloneqq
    \mathbf{r}+\bar{\mathbf{r}}
    \sim
    \mathcal{N}[\mathbf{m},W]$. This concludes the proof.
\end{proof}
Another useful fact that will be extensively used is the following.
\begin{lemma}[(Reduction via change of variable)]\label{reduction_via_change}
Let $\textbf{m} \in \mathbb{R}^n$ and let $V \in \mathbb{R}^{n \times n}$ be a positive semidefinite matrix. Let $A\in\mathbb{R}^{n\times n}$ be an invertible matrix and let $\textbf{r}_0\in\mathbb{R}^n$. Given a sample $
\textbf{r} \sim \mathcal{N}[\textbf{m}, V]$, the vector $\textbf{r}'\coloneqq A\textbf{r}+\textbf{r}_0$ satisfies $\textbf{r}'\sim\mathcal{N}[A\textbf{m}+\textbf{r}_0,AVA^\intercal]$.
\end{lemma}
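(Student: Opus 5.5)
The plan is to show directly that the Gaussian density of $\textbf{r}$, transported through the affine map $\textbf{r}\mapsto A\textbf{r}+\textbf{r}_0$, is again a Gaussian density with the stated parameters. I would use the multivariate change-of-variables formula when $V$ is invertible, and characteristic functions to cover the degenerate positive semidefinite case.

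First I treat the case $V>0$. Since $A$ is invertible, the map $\textbf{r}\mapsto\textbf{r}'=A\textbf{r}+\textbf{r}_0$ is a smooth bijection of $\mathbb{R}^n$ with inverse $\textbf{r}=A^{-1}(\textbf{r}'-\textbf{r}_0)$ and constant Jacobian $|\det A|^{-1}$. The density of $\textbf{r}'$ is therefore
\[
|\det A|^{-1}\,\NN[\textbf{m},V]\bigl(A^{-1}(\textbf{r}'-\textbf{r}_0)\bigr).
\]
In the exponent I substitute $A^{-1}(\textbf{r}'-\textbf{r}_0)-\textbf{m}=A^{-1}(\textbf{r}'-A\textbf{m}-\textbf{r}_0)$, which turns the quadratic form into $(\textbf{r}'-\textbf{m}')^\intercal (A^{-1})^\intercal V^{-1}A^{-1}(\textbf{r}'-\textbf{m}')$ with $\textbf{m}'\coloneqq A\textbf{m}+\textbf{r}_0$. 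The middle matrix equals $(AVA^\intercal)^{-1}$. For the prefactor, $|\det A|\sqrt{\det V}=\sqrt{\det(AVA^\intercal)}$. Comparing with Eq.~\eqref{gauss_probdef} gives exactly $\NN[A\textbf{m}+\textbf{r}_0,AVA^\intercal]$.

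For a singular positive semidefinite $V$ there is no density in the sense of Eq.~\eqref{gauss_probdef}, so I would argue with characteristic functions, as in the proof of Lemma~\ref{conv_gauss}. The characteristic function of $\textbf{r}$ is $\mathbb{E}[e^{i\textbf{k}^\intercal\textbf{r}}]=e^{i\textbf{k}^\intercal\textbf{m}-\frac12\textbf{k}^\intercal V\textbf{k}}$. It follows that
\[
\mathbb{E}[e^{i\textbf{k}^\intercal\textbf{r}'}]=e^{i\textbf{k}^\intercal\textbf{r}_0}\,\mathbb{E}[e^{i(A^\intercal\textbf{k})^\intercal\textbf{r}}]=e^{i\textbf{k}^\intercal(A\textbf{m}+\textbf{r}_0)-\frac12\textbf{k}^\intercal AVA^\intercal\textbf{k}},
\]
which is the characteristic function of $\NN[A\textbf{m}+\textbf{r}_0,AVA^\intercal]$. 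Characteristic functions determine distributions, so this concludes the proof. This argument in fact covers both cases at once, so I would present it as the main proof and keep the density computation as a sanity check.

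There is no real obstacle here. The only point needing care is the degenerate-$V$ case, where the explicit density is unavailable and the characteristic-function argument is the cleaner route.
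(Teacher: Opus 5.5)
Your proof is correct, and its first step (the Jacobian computation for $V>0$, using $(A^{-1})^\intercal V^{-1}A^{-1}=(AVA^\intercal)^{-1}$ and $|\det A|\sqrt{\det V}=\sqrt{\det(AVA^\intercal)}$) is exactly what the paper's one-line proof leaves implicit. That proof reads in full: ``follows directly from the definition of a Gaussian probability distribution.''

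Your characteristic-function argument is a small but genuine addition. The paper defines $\mathcal{N}[\textbf{m},V]$ only through the explicit density in Eq.~\eqref{gauss_probdef}, which requires $V$ to be invertible, yet the lemma is stated for positive semidefinite $V$. The Fourier route covers the degenerate case, provided $\mathcal{N}[\textbf{m},V]$ is read there as the law with characteristic function $e^{i\textbf{k}^\intercal\textbf{m}-\frac12\textbf{k}^\intercal V\textbf{k}}$. It also does not use the invertibility of $A$ at all.
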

\begin{proof}
The proof follows directly from the definition of a Gaussian
probability distribution.
\end{proof}
Finally, we will also use the following lemma.
\begin{lemma}[(Reduction via marginalisation)]\label{reduction_via_marg}
    Let $\textbf{m}_1,\ldots,\textbf{m}_M \in \mathbb{R}^n$ be $M$ vectors and let $V \in \mathbb{R}^{n \times n}$ be a positive semidefinite matrix. Given independent samples
\bb
    \textbf{r}_i &\sim \mathcal{N}\!\left[\textbf{m}_i, \frac{V}{M}\right]\qquad\forall\,i\in[M]\,,
\ee
the vector $\mathbf{r}'\coloneqq \sum_{i=1}^M\textbf{r}_i$ satisfies 
\bb
    \textbf{r}'\sim \mathcal{N}\!\left[\sum_{i=1}^M\textbf{m}_i, V\right]\,.
\ee
\end{lemma}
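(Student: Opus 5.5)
The plan is to apply Lemma~\ref{conv_gauss} inductively over the $M$ independent samples. For $M=2$, the two samples $\textbf{r}_1\sim\mathcal{N}[\textbf{m}_1,V/M]$ and $\textbf{r}_2\sim\mathcal{N}[\textbf{m}_2,V/M]$ are independent. Lemma~\ref{conv_gauss} therefore gives $\textbf{r}_1+\textbf{r}_2\sim\mathcal{N}[\textbf{m}_1+\textbf{m}_2,2V/M]$.

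For general $M$, I induct on $k\in[M]$ with the claim that the partial sum $\textbf{s}_k\coloneqq\sum_{i=1}^k\textbf{r}_i$ satisfies
\bb
\textbf{s}_k\sim\mathcal{N}\!\left[\sum_{i=1}^k\textbf{m}_i,\frac{kV}{M}\right].
\ee
The base case $k=1$ is the hypothesis on $\textbf{r}_1$. For the inductive step, $\textbf{s}_{k+1}=\textbf{s}_k+\textbf{r}_{k+1}$. Here $\textbf{s}_k$ is a function of $\textbf{r}_1,\ldots,\textbf{r}_k$ only, so it is independent of $\textbf{r}_{k+1}$. Lemma~\ref{conv_gauss} (equivalently Lemma~\ref{interp_conv} combined with Eq.~\eqref{prop_conv}) then yields mean $\sum_{i=1}^{k+1}\textbf{m}_i$ and covariance $kV/M+V/M=(k+1)V/M$. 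Setting $k=M$ gives covariance $V$, which is the claim.

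There is one point of care, though not a real obstacle. Lemma~\ref{conv_gauss} is stated for positive semidefinite covariances, and $V/M$ is positive semidefinite whenever $V$ is. So no invertibility is needed, provided the Gaussian with a singular covariance is read as a limit or in the sense of its characteristic function. If one prefers to avoid this, the whole argument can be run at the level of characteristic functions. Independence makes the characteristic function of the sum the product
\bb
\prod_{i=1}^M e^{i\textbf{k}^\intercal\textbf{m}_i-\frac{1}{2M}\textbf{k}^\intercal V\textbf{k}}=e^{i\textbf{k}^\intercal\sum_i\textbf{m}_i-\frac12\textbf{k}^\intercal V\textbf{k}},
\ee
and uniqueness of characteristic functions then gives the result directly.
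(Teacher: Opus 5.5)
Your proof is correct and follows the paper's argument. The paper also iterates Lemma~\ref{conv_gauss} to write the $M$-fold convolution of the densities $\mathcal{N}[\mathbf{m}_i,V/M]$ as $\mathcal{N}[\sum_i\mathbf{m}_i,V]$, then uses Lemma~\ref{interp_conv} to identify this with the law of the sum of the independent samples; you make that induction, and the independence of each partial sum from the next sample, explicit. Your remark on singular $V$ and the characteristic-function variant are sound additions that the paper does not spell out.
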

\begin{proof}
    Lemma~\ref{conv_gauss} implies that
    \bb
        \mathcal{N}\!\left[\textbf{m}_1,\frac{V}{M}\right]\ast\mathcal{N}\!\left[\textbf{m}_2,\frac{V}{M}\right]\ast\ldots\ast \mathcal{N}\!\left[\textbf{m}_M,\frac{V}{M}\right]=\mathcal{N}\!\left[\sum_{i=1}^M\textbf{m}_i, V\right]\,.
    \ee
    Consequently, Lemma~\ref{interp_conv} concludes the proof.
\end{proof}

Now, we are going to derive an explicit expression for the total variation distance between two one-dimensional Gaussian probability distributions with the same variance.
\begin{lemma}[(Total variation distance between one-dimensional Gaussian probability distributions)]\label{tvd_one_g}
    Let $V\in\mathbb{R}^+$ and let $m_1,m_2\in\mathbb{R}$. Then, the total variation distance between $\mathcal{N}[m_1,V]$ and $\mathcal{N}[m_2,V]$ is given by
    \bb\label{eq_0_l}
        \frac12\left\|\mathcal{N}[m_1,V]-\mathcal{N}[m_2,V]\right\|_1= 2 \int_{0}^{\frac{|m_1-m_2|}{2\sqrt{V}}} \mathrm{d}x\, \mathcal{N}[0,1](x)\,.
    \ee
    In particular, by setting $t\coloneqq \frac{|m_1-m_2|}{2\sqrt{V}}$, the following upper bound holds:
    \bb\label{eq_first_upppp}
        \frac12\left\|\mathcal{N}[m_1,V]-\mathcal{N}[m_2,V]\right\|_1\le \frac{|m_1-m_2|}{\sqrt{2\pi V}}=\sqrt{\frac{2}{\pi}}t\,,
    \ee
    which is tight for $t\rightarrow 0$. Moreover, it also holds that 
    \bb\label{eq_2_l}
        \frac12\left\|\mathcal{N}[m_1,V]-\mathcal{N}[m_2,V]\right\|_1\le 1-\left(\frac{1}{6}e^{-t^2}+\sqrt{\frac{2}{\pi}}\frac{e^{-t^2/2}}{t+1}\right)\,,
    \ee
    which is tight for $t\rightarrow\infty$.
\end{lemma}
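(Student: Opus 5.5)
By translation invariance and the change of variable of Lemma~\ref{reduction_via_change}, I will first reduce to standard normals. Assume without loss of generality $m_1\le m_2$. Apply the affine map $x\mapsto (x-\tfrac{m_1+m_2}{2})/\sqrt V$; total variation distance is invariant under invertible maps. This turns the pair into $\NN[-t,1]$ and $\NN[t,1]$, with $t=\frac{|m_1-m_2|}{2\sqrt V}$.

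The two densities cross only at $0$, and $\NN[t,1]>\NN[-t,1]$ on $(0,\infty)$. So, writing $\phi\coloneqq\NN[0,1]$ and $\Phi$ for its CDF,
\begin{equation*}
\tfrac12\|\NN[-t,1]-\NN[t,1]\|_1=\int_0^\infty\bigl(\phi(x-t)-\phi(x+t)\bigr)\mathrm dx=\Phi(t)-\Phi(-t)=2\int_0^t\phi(x)\,\mathrm dx .
\end{equation*}
This gives Eq.~\eqref{eq_0_l}.

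The first bound \eqref{eq_first_upppp} follows from $\phi\le 1/\sqrt{2\pi}$, which gives $2t/\sqrt{2\pi}=\sqrt{2/\pi}\,t=|m_1-m_2|/\sqrt{2\pi V}$. It is tight as $t\to0$ because $\phi(x)\to\phi(0)$ uniformly on $[0,t]$.

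For \eqref{eq_2_l}, write $1-\mathrm{TVD}=2Q(t)$ with $Q(t)\coloneqq\int_t^\infty\phi$. Since $2\phi(t)=\sqrt{2/\pi}\,e^{-t^2/2}$, the claim is equivalent to $g(t)\ge0$ for all $t\ge0$, where
\begin{equation*}
g(t)\coloneqq Q(t)-\frac{\phi(t)}{t+1}-\frac{e^{-t^2}}{12}.
\end{equation*}
We have $g(t)\to0$ as $t\to\infty$, so it suffices to control the sign of $g'$. Using $\phi'=-t\phi$, I compute
\begin{equation*}
g'(t)=-\frac{t\,\phi(t)}{(t+1)^2}+\frac{t}{6}e^{-t^2}.
\end{equation*}
Hence $g'\le0$ exactly when $(t+1)^2e^{-t^2/2}\le 6/\sqrt{2\pi}\approx2.394$. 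On most of $[0,\infty)$ this holds, so $g$ decreases to $0$ and is therefore nonnegative there.

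The main obstacle is that the maximum of $(t+1)^2e^{-t^2/2}$ is attained at $t=1$ with value $4e^{-1/2}\approx2.426$. This slightly exceeds $2.394$, so $g'$ is positive on a small interval $I$ around $t=1$. My plan for this region is as follows.
\begin{itemize}
\item Locate $I$ explicitly; it is contained in roughly $[0.8,1.2]$.
\item On $[\sup I,\infty)$, monotonicity gives $g\ge0$.
\item On $I$, bound $g$ from below by $g(\sup I)$ minus $|I|\cdot\sup_I g'$. Both terms are tiny, and $g(\sup I)$ should be comfortably positive because $Q(t)$ exceeds $\phi(t)/(t+1)$ by a margin of order $10^{-2}$ there. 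A direct numerical evaluation of $Q$ at a couple of points, via standard rigorous bounds on $\Phi$, should settle this.
\item On $[0,\inf I]$, $g$ is again decreasing, so $g\ge g(\inf I)\ge0$.
\end{itemize}

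Finally, tightness as $t\to\infty$ follows from the Mills-ratio asymptotics $Q(t)\sim\phi(t)/t\sim\phi(t)/(t+1)$. The $e^{-t^2}$ term is of lower order, so the ratio of the two sides of \eqref{eq_2_l}, reformulated as $1-\mathrm{TVD}$, tends to one.
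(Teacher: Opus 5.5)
Your proof is correct. For Eq.~\eqref{eq_0_l} and Eq.~\eqref{eq_first_upppp} it is essentially the paper's argument. The paper integrates up to the crossing point $\frac{m_1+m_2}{2}$ before standardizing, and it gets $\int_0^x\phi\le x/\sqrt{2\pi}$ from the monotone function $\phi(0)x-\int_0^x\phi$, which amounts to your $\phi\le\phi(0)$.

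The genuine difference is Eq.~\eqref{eq_2_l}. The paper rewrites the distance as $1-2Q(t)$ and invokes the lower tail bound $Q(t)\ge\frac{1}{12}e^{-t^2}+\frac{e^{-t^2/2}}{\sqrt{2\pi}(t+1)}$ of Lemma~\ref{lemma_bounds_Q}, which it cites from the literature without proof. You instead reprove exactly that tail bound by studying $g(t)=Q(t)-\frac{\phi(t)}{t+1}-\frac{e^{-t^2}}{12}$.

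Your computation $g'(t)=t\bigl(\frac{e^{-t^2}}{6}-\frac{\phi(t)}{(t+1)^2}\bigr)$ is right. You also correctly saw that $g'>0$ on a short interval $I\approx(0.87,1.14)$, because $4e^{-1/2}>6/\sqrt{2\pi}$. Since $g$ decreases, then increases on $I$, then decreases to $0$, everything reduces to certifying $g(\inf I)\ge 0$.

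This buys a self-contained argument at the cost of a numerical check, and the margin is thinner than you suggest. Near $I$ one has $Q(t)-\phi(t)/(t+1)\approx 0.03$--$0.05$, but the term $e^{-t^2}/12\approx 0.02$--$0.04$ absorbs most of it. What remains is only $g\approx 7\times 10^{-3}$ (e.g.\ $g(1)\approx 0.0070$ and $g(0.867)\approx 0.0069$). So "comfortably positive because $Q$ exceeds $\phi/(t+1)$ by order $10^{-2}$" is not the right reason.

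You need $Q$ at a point of $I$ to about three certified decimals. This is easy, for instance via $\Phi(x)=\frac12+\phi(x)\sum_{n\ge 0}x^{2n+1}/(2n+1)!!$ with a tail estimate. The correction $|I|\sup_I g'$ is only about $2\times 10^{-4}$, since $\sup_I g'\approx 8\times 10^{-4}$.

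Your tightness arguments, continuity of $\phi$ at $0$ and Mills' ratio $Q(t)\sim\phi(t)/t\sim\phi(t)/(t+1)$, are also sound. They justify claims that the paper only asserts.
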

\begin{proof}
    Without loss of generality we can assume that $m_1\le m_2$. Then, by symmetry, it easily follows that $\mathcal{N}[m_1,V](x)\ge \mathcal{N}[m_2,V](x)$ if and only if $x\le \frac{m_1+m_2}{2}$. Consequently, we obtain that 
    \bb
        \frac12\left\|\mathcal{N}[m_1,V]-\mathcal{N}[m_2,V]\right\|_1&=\frac12\int_{-\infty}^\infty \mathrm{d}y\, \left|\mathcal{N}[m_1,V](y)-\mathcal{N}[m_2,V](y)\right|\\
        &= \int_{-\infty}^{\frac{m_1+m_2}{2}} \mathrm{d}y\,\left( \mathcal{N}[m_1,V](y)-\mathcal{N}[m_2,V](y)\right)\\
        &= \int_{-\infty}^{\frac{m_1+m_2}{2}} \mathrm{d}y\, \mathcal{N}[m_1,V](y)-\int_{-\infty}^{\frac{m_1+m_2}{2}} \mathrm{d}y\, \mathcal{N}[m_2,V](y) \\
        &= \int_{-\infty}^{\frac{m_2-m_1}{2\sqrt{V}}} \mathrm{d}y\, \mathcal{N}[0,1](y)-\int_{-\infty}^{\frac{-(m_2-m_1)}{2\sqrt{V}}} \mathrm{d}y\, \mathcal{N}[0,1](y) \\
        &=2 \int_{0}^{\frac{m_2-m_1}{2\sqrt{V}}} \mathrm{d}y\, \mathcal{N}[0,1](y)\\
        &=2 \int_{0}^{\frac{|m_2-m_1|}{2\sqrt{V}}} \mathrm{d}y\, \mathcal{N}[0,1](y)\,.
    \ee
    To prove the upper bound in \eqref{eq_first_upppp}, note that the function $g(x)\coloneqq \mathcal{N}[0,1](0) x- \int_0^x \mathrm{d}y\,\mathcal{N}[0,1](y)$ satisfies $g(0)=0$ and $g'(x)=\mathcal{N}[0,1](0)-\mathcal{N}[0,1](x)\ge 0$ for all $x\in\mathbb{R}$. In particular, it holds that $g(x)\ge0$ for all $x\ge 0$, which implies that
    \bb
        \int_0^x \mathrm{d}y\,\mathcal{N}[0,1](y)\le \frac{1}{\sqrt{2\pi}}x \quad\forall x\ge0\,.
    \ee
    In particular, we obtain that
    \bb
        \frac12\left\|\mathcal{N}[m_1,V]-\mathcal{N}[m_2,V]\right\|_1&=2 \int_{0}^{\frac{|m_2-m_1|}{2\sqrt{V}}} \mathrm{d}y\, \mathcal{N}[0,1](y)\\
        &=2 \int_{0}^{t} \mathrm{d}y\, \mathcal{N}[0,1](y)\\
        &\le \sqrt{\frac{2}{\pi}}t\,.
    \ee
    Moreover, it holds that
    \bb
        \frac12\left\|\mathcal{N}[m_1,V]-\mathcal{N}[m_2,V]\right\|_1&=2 \int_{0}^{\frac{|m_2-m_1|}{2\sqrt{V}}} \mathrm{d}y\, \mathcal{N}[0,1](y)\\
        &=1-2\int_{\frac{|m_2-m_1|}{2\sqrt{V}}}^{\infty} \mathrm{d}y\, \mathcal{N}[0,1](y)\\
        &=1-2\int_{t}^{\infty} \mathrm{d}y\, \mathcal{N}[0,1](y)\\
        &\le 1-\left(\frac{1}{6}e^{-t^2}+\sqrt{\frac{2}{\pi}}\frac{e^{-t^2/2}}{t+1}\right)\,,
    \ee
    where in the last inequality we exploited the forthcoming Lemma~\ref{lemma_bounds_Q}.
\end{proof}

\begin{lemma}[(Tail bounds on the standard Gaussian probability distribution)]\label{lemma_bounds_Q}
    For any $t\ge0$ it holds that
    \bb
      \frac{1}{12}e^{-t^2}+\sqrt{\frac{1}{2\pi}}\frac{e^{-t^2/2}}{t+1}\le\int_{t}^{\infty} \mathrm{d}y\, \mathcal{N}[0,1](y)\le \frac{1}{2}e^{-t^2/2} \,,
    \ee
\end{lemma}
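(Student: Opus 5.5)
The plan is to treat the two inequalities separately, writing $\phi(y)\coloneqq\mathcal{N}[0,1](y)=\frac{1}{\sqrt{2\pi}}e^{-y^2/2}$ and $Q(t)\coloneqq\int_t^\infty\mathrm{d}y\,\phi(y)$.

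\textbf{Upper bound.} I will shift the variable, $y=t+u$ with $u\ge0$. Since $(t+u)^2=t^2+2tu+u^2$ and $e^{-tu}\le1$ for $t,u\ge0$, this gives
\[
Q(t)=\int_0^\infty\mathrm{d}u\,\phi(t+u)=e^{-t^2/2}\int_0^\infty\mathrm{d}u\,e^{-tu}\phi(u)\le e^{-t^2/2}\int_0^\infty\mathrm{d}u\,\phi(u)=\tfrac12 e^{-t^2/2}.
\]
This step is routine.

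\textbf{Lower bound.} Let $R(t)\coloneqq\frac{1}{12}e^{-t^2}+\frac{\phi(t)}{t+1}$ and $h(t)\coloneqq Q(t)-R(t)$. Both $Q$ and $R$ tend to $0$ as $t\to\infty$, so $h(\infty)=0$, and I will study the sign of $h'$. Using $\phi'(t)=-t\phi(t)$, a short computation gives
\[
h'(t)=-\phi(t)+\frac{t}{6}e^{-t^2}+\phi(t)\Big(\frac{t}{t+1}+\frac{1}{(t+1)^2}\Big)
=t\Big(\frac{e^{-t^2}}{6}-\frac{\phi(t)}{(t+1)^2}\Big).
\]
Hence $h'(t)\le0$ exactly when $g(t)\coloneqq(t+1)^2e^{-t^2/2}\le 6/\sqrt{2\pi}\approx2.394$. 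If this held for all $t\ge0$, then $h$ would be nonincreasing with limit $0$, so $h\ge0$ and we would be done.

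\textbf{Main obstacle.} This holds only marginally, and in fact fails near $t=1$. Indeed $g'(t)=(t+1)\big(2-t(t+1)\big)e^{-t^2/2}$, so $g$ is maximised at $t=1$ with $g(1)=4e^{-1/2}\approx2.43$, slightly above $2.394$.

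I will therefore split $[0,\infty)$ into three ranges. Since $g$ is increasing on $[0,1]$ and decreasing on $[1,\infty)$, the set where $h'>0$ is a single interval $(t_1,t_2)$ with $t_1<1<t_2$. On $[t_2,\infty)$, $h$ is nonincreasing with limit $0$, so $h\ge0$ there. On $[0,t_1]$, $h$ is nonincreasing, so $h(t)\ge h(t_1)$. On $[t_1,t_2]$, $h$ is increasing, so again $h(t)\ge h(t_1)$. It therefore suffices to prove $h(t_1)\ge0$.

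To certify this, I will bound $t_1$ from below by a simple rational value $a<t_1$ with $g(a)\le 2.394$ (for instance, $a=0.8$ gives $g(0.8)\approx2.35$). Then $h$ is nonincreasing on $[a,t_1]$, so $h(t_1)\ge h(1)$ would not follow directly; instead I will use $h(t_1)\ge\min_{[a,1]}h$. I will evaluate this minimum using monotonicity of $Q$ and of each term of $R$ on $[a,1]$:
\[
h(t)\ge Q(1)-\tfrac{1}{12}e^{-a^2}-\frac{\phi(a)}{a+1}\qquad\text{for } t\in[a,1].
\]
This requires a numerical check. If the crude bound is too lossy, I will subdivide $[a,1]$ into a few pieces. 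The margin is comfortable: at $t=1$ one has $Q(1)\approx0.1587$ against $R(1)\approx0.1517$. This finite numerical verification is the only delicate point of the proof.
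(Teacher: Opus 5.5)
Your proof is correct, and it takes a genuinely different route from the paper. The paper gives no argument of its own: it cites one reference for the lower bound, a known simple tight bound on the Gaussian $Q$-function, and another for the Chernoff-type upper bound. Your shift $y=t+u$ settles the upper bound in one line. For the lower bound, your computation $h'(t)=t\bigl(\tfrac16e^{-t^2}-\phi(t)/(t+1)^2\bigr)$, combined with the unimodality of $g(t)=(t+1)^2e^{-t^2/2}$, correctly reduces everything to the single inequality $h(t_1)\ge0$. Here $t_1<1$ is the left endpoint of the interval on which $h'>0$. The citation buys brevity. Your route makes the lemma checkable within the paper, at the price of one rigorous numerical evaluation of $Q$.

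That numerical step is tighter than you suggest, so it is worth pinning down concretely. The true margin is $h(t_1)\approx0.007$, only a few percent of the quantities involved. Your proposed instantiation with $a=0.8$ fails outright: $Q(1)\approx0.1587$ while $R(0.8)\approx0.2049$, so subdivision is necessary rather than a fallback.

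The cleanest certificate brackets $t_1$ directly. Since $g(0.86)\approx2.390<6/\sqrt{2\pi}\approx2.3937<g(0.87)\approx2.395$, you get $t_1\in(0.86,0.87)$. Monotonicity of $Q$ and of each term of $R$ then gives $h(t_1)\ge Q(0.87)-R(0.86)\approx0.1922-0.1880>0$. This needs $Q(0.87)$ to about three correct decimals, for example from the Taylor series of $\operatorname{erf}$ with an explicit remainder bound.

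Finally, state the threshold as $6/\sqrt{2\pi}$ rather than $2.394$. Since $2.394>6/\sqrt{2\pi}$, the condition $g(a)\le2.394$ would not by itself certify $a<t_1$.
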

\begin{proof}
The first inequality is proved in~\cite{tight_bounds_gaus}, while the second one is proved in \cite{new_bound_qfunc}.
\end{proof}

\newpage

\section{Experimental Platforms }\label{sec:exp_platforms}

In this section, we detail different experimental platforms and explain how sensing a signal of interest $f(t)$ motivates our sensing model. Specifically, we show how a physical signal displaces a specific quadrature of a bosonic mode, as in the evolution generated by the Hamiltonian $H(t) = f(t) \, \hat{p}$.

\subsection{Gravitational wave sensing with LIGO}
We now examine how LIGO (Laser Interferometer Gravitational-Wave Observatory)~\cite{abbott2016observation,tse2019quantumenhanced} detects gravitational waves by encoding the signal into the phase of a laser. To build intuition, we first analyze the interferometer's response using classical electrodynamics. We then perform a full quantum optics analysis to show that this physical setup produces a quadrature displacement of the same form as that generated by the Hamiltonian $H(t)=f(t)\,\hat{p}$ on a bosonic mode, followed by a homodyne measurement.
\begin{figure}[h!]
    \centering
    \includegraphics[width=0.6\linewidth]{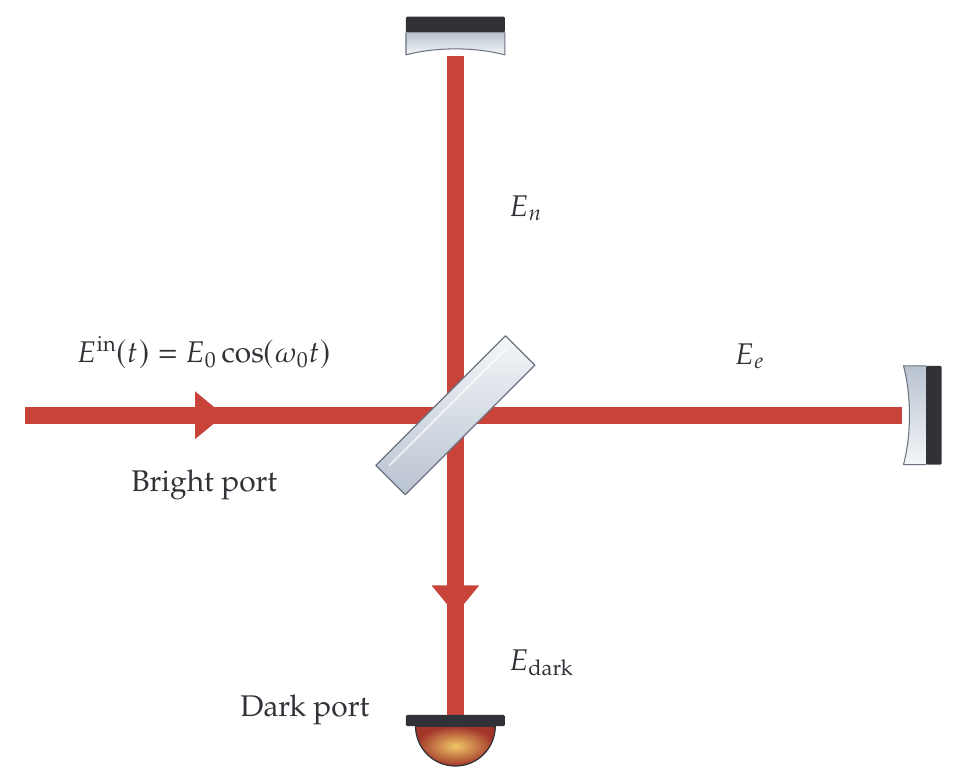}
    \caption{Simplified schematic of LIGO as a Michelson interferometer. The laser enters as $E^{\text{in}}$ a beamsplitter and is split into two orthogonal arms (north $E_n$ and east $E_e$). After traveling to the mirrors at the end and back, the light recombines via the same beamsplitter, and the signal is measured at the dark port. We first show how the gravitational wave signal is detectable in a photodetector in Section~\ref{sec:cl_ligo}, and proceed to homodyne detection in Section~\ref{sec:qu_ligo}.}
    \label{fig:LIGO}
\end{figure}

\subsubsection{Classical description of LIGO as an interferometer}
\label{sec:cl_ligo}
Figure \ref{fig:LIGO} illustrates a simplified Michelson interferometer, such as the one used in LIGO.  A Michelson interferometer splits a laser into two orthogonal arms (north and east) via a beamsplitter, reflects the light off mirrors, and recombines via the same beamsplitter. The signal is read out at the antisymmetric (dark) port via a photodetector. Following the analysis presented in \cite{danilishin2012quantum}, we model the input laser as a classical, monochromatic plane wave with amplitude $E_0$ and laser frequency $\omega_0$ \cite{danilishin2012quantum}, 
\bb
E^{\text{in}}(t) = E_0 \cos(\omega_0 t).
\ee

Propagation of an electric signal over a distance $x$ adds phase $\Delta \phi = k_0 x = \frac{\omega_0 }{c} x.$ After a first pass through the beamsplitter and a round trip of length $2L_{n}$ in the north arm and $2L_e$ in the east arm, the signals returning to the beamsplitter are 
\bb 
E_{n/e}(t) = \frac{E_0}{\sqrt{2}} \cos\left(\omega_0 t + \phi_0 - \frac{2 \omega_0 L_{n/e}}{c}\right).
\ee

When these signals ($E_{n}(t)$ and $E_e(t)$) recombine at the beam splitter, they interfere to form the output at the bright port (back towards the laser) and the dark port (photodiode, labelled in \ref{fig:LIGO}). 
The constructive (bright port) and destructive (dark port) interference follow from the beamsplitter phase relations. We will use the same convention as \cite{danilishin2012quantum, cahillane2022review} where a beam reflected off the back of the mirror suffers a $\pi$ phase flip, but a beam reflected off the front, or just transmitted, suffers no phase flip. 
For the bright port, $E_n$ is reflected off the front of the beamsplitter and $E_e$ is transmitted, accumulating no additional boundary phase.
However for the dark port, $E_n$ is transmitted while $E_e$ reflects off the back, resulting in a factor of $-1$. Thus, the electric signals are
\bb
E_{\text{bright}} = \frac{E_n(t) + E_e(t)}{\sqrt{2}}, \qquad E_{\text{dark}} = \frac{E_n(t) - E_e(t)}{\sqrt{2}}.
\ee
Specifically the electric signal at the dark port is 
\bb
E_{\text{dark}} = E_0 \sin\left( \frac{\omega_0 (L_e - L_n)}{c} \right)\sin\left(\omega_0 t - \frac{\omega_0(L_n+L_e)}{c}\right).
\ee
The photodetector measures intensity $(\mathcal{I})$, which is proportional to the time-average of the signal squared 
\bb
\mathcal{I} \propto \langle E^2 \rangle_t = \frac{E_0^2}{2} \sin^2\left( \frac{\omega_0 (L_e - L_n)}{c} \right) = \frac{E_0^2}{4} \left(1 - \cos \frac{2 \omega_0 (L_e - L_n)}{c}\right) \approx \frac{E_0^2}{2} \left( \frac{\omega_0}{c} \delta L \right)^2.
\ee
A passing gravitational wave induces a differential change in the arms length $\delta L = L_e - L_n$, which generates a nonzero signal that can be detected by the photodiode. In practice, to ensure the measurement readout is linear in $\delta L$ rather than quadratic, LIGO performs a homodyne measurement of a specific phase-space quadrature, as discussed in the next section.

\subsubsection{Quantizing the electric signal} \label{sec:qu_ligo}
To rigorously formalize homodyne detection, the injection of squeezed light, and the connection to the Hamiltonian evolution $H(t)=f(t)\,\hat{p}$, we must transition to a full quantum optics formulation.

We treat the incident laser not as a classical electromagnetic wave, but as a quantized electromagnetic signal. The beam splitter is now modelled as a unitary transformation with two input modes, one at the bright port $(\hat{a})$ and one at the dark port $(\hat{b})$ \cite{caves1980quantum, caves1981quantum} (see Figure~\ref{fig:LIGO_qu}). The incident laser enters at the bright port as a coherent state $\ket{\alpha}_a$, and if nothing is sent through the dark port, its state is a standard vacuum state $\ket{0}_b$. 

\begin{figure}[h!]
    \centering
    \includegraphics[width=0.6\linewidth]{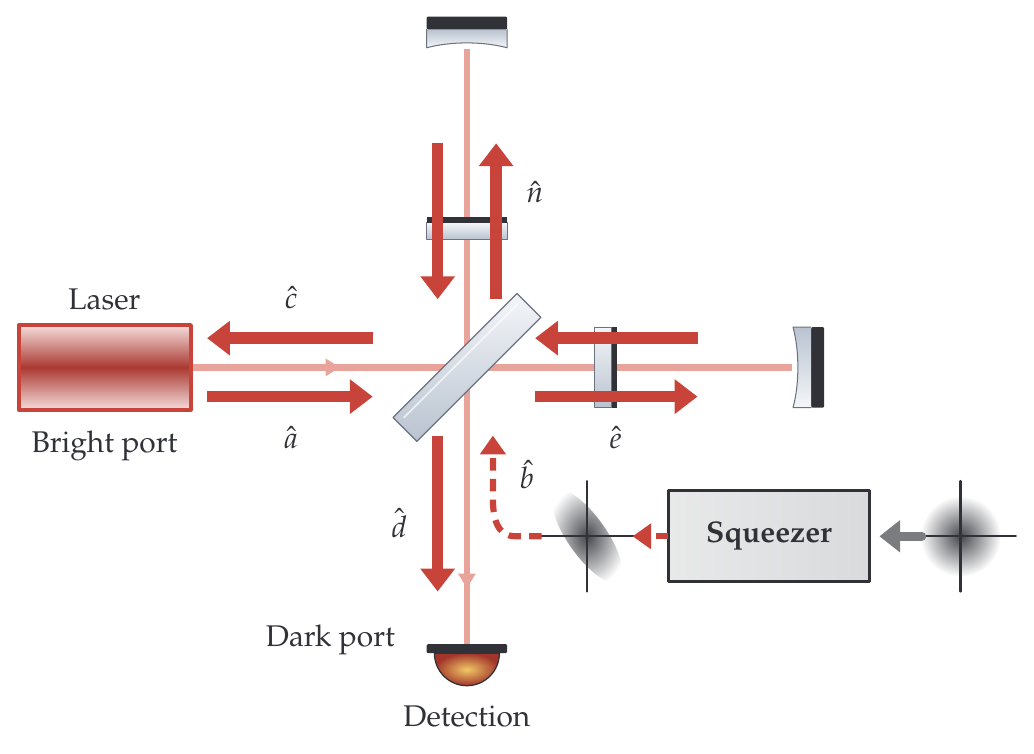}
    \caption{Simplified schematic of LIGO as a Michelson interferometer, with a squeezed light/vacuum state. Figure adapted from \cite{tse2022squeezed}.}
    \label{fig:LIGO_qu}
\end{figure}

There are two conventions for a 50/50 lossless beam splitter \cite{gerry2023introductory},
$
\frac{1}{\sqrt{2}} \begin{pmatrix}
    1 & i \\ i & 1
\end{pmatrix} \text{ or } \frac{1}{\sqrt{2}} \begin{pmatrix}
    -1 & 1 \\ 1 & 1
\end{pmatrix}.
$
To match the classical case, we will use the second.
\bb
\begin{pmatrix} \hat{n} \\ \hat{e} \end{pmatrix} = \frac{1}{\sqrt{2}}
\begin{pmatrix}
-1 & 1 \\
1 & 1
\end{pmatrix}
\begin{pmatrix} \hat{a} \\ \hat{b} \end{pmatrix}.
\ee
After propagation through the arms and reflecting off the mirrors, the operators accumulate their respective phases again and become 
\bb
\hat{n}' = e^{i \phi_n} \hat{n} = e^{i 2k L_n} \hat{n}, \quad \hat{e}' = e^{i \phi_e} \hat{e} = e^{i 2k L_e} \hat{e}.
\ee
Passing back through the beamsplitter again gives output operators at the bright $(\hat{c})$ and dark $(\hat{d})$ detection ports, which can again be calculated as
\bb
\begin{pmatrix} \hat{c} \\ \hat{d} \end{pmatrix} = \frac{1}{\sqrt{2}}
\begin{pmatrix}
-1 & 1\\
1 & 1
\end{pmatrix}
\begin{pmatrix} \hat{n}' \\ \hat{e}' \end{pmatrix}.
\ee
The operator at the dark detection port 
\bb
    \hat{d} &= \frac{1}{\sqrt{2}} \left( e^{i\phi_n} \hat{n} + e^{i\phi_e} \hat{e} \right)\\
    &= \frac{1}{2} \left( e^{i\phi_n} (\hat{b} - \hat{a}) + e^{i\phi_e} (\hat{a} + \hat{b}) \right) \\
    &= e^{ik(L_e + L_n)}\left(i \hat{a} \sin(k \delta L) + \hat{b} \cos(k \delta L)\right) \,.
\ee

Note that the interferometer is usually tuned close to a dark fringe $(L_n \approx L_e$), such that the $\delta L \ll \lambda$. This allows us to use $\cos(k \delta L) \approx 1$ and $\sin(k\delta L) \approx k\delta L$ \cite{danilishin2012quantum}.  
Furthermore, as originally proposed by Caves \cite{caves1981quantum}, we can decompose the strong laser signal operator into its large classical mean amplitude and its quantum fluctuations ($\hat{a} = \alpha + \delta \hat{a})$~\cite{danilishin2012quantum}:

\bb
\hat{d} &\approx e^{ik(L_e + L_n)}\left(i \hat{a} k \delta L + \hat{b}\right) \\
&\approx e^{ik(L_e + L_n)}\left(i (\alpha + \delta \hat{a}) k \delta L + \hat{b}\right) \\
&\approx e^{ik(L_e + L_n)}\left(i \alpha k \delta L + \hat{b}\right) \\
&\approx i \alpha k \delta L + \hat{b}\,,
\ee
where in the second to last line we neglect the signal-dependent fluctuation term $i\delta \hat{a} k \delta L$, since it is a product of the small signal-induced phase shift $k\delta L$ and the laser field fluctuation $\delta\hat a$, and is therefore a second-order term in the linearized approximation, negligible compared to the leading vacuum fluctuations from $\hat{b}$.

For the last line, the global phase $e^{ik(L_e + L_n)}$ gets absorbed in the homodyne detection, so we can ignore it.

The signal at the dark port is the vacuum state from the unused port $\hat{b}$, linearly displaced by $i \alpha k \delta L$. To see how this setup realizes our sensing model $H(t)=f(t)\,\hat{p}$ at the level of the induced quadrature displacement, we convert the annihilation operators $\hat{b}, \hat{d}$ to their position and momentum quadratures,
\bb
\hat{x}_{\text{out}} = \frac{\hat{d} + \hat{d}^\dagger}{\sqrt{2}}, \quad& \hat{p}_{\text{out}} = \frac{\hat{d} - \hat{d}^\dagger}{\sqrt{2} i} \\
\hat{x}_{\text{dark}} = \frac{\hat{b} + \hat{b}^\dagger}{\sqrt{2}}, \quad&\hat{p}_{\text{dark}} = \frac{\hat{b} - \hat{b}^\dagger}{\sqrt{2} i}.
\ee

Because $\alpha$ is the complex amplitude of the driven laser, we can tune its phase to be purely imaginary $\alpha = -i|\alpha|$ so that the displacement is purely real, $|\alpha|k\delta L$. Solving for $\hat{x}_{\text{out}}$ and $\hat{p}_{\text{out}}$,
\bb
\hat{x}_{\text{out}} &= \hat{x}_{\text{dark}} + \sqrt{2}|\alpha| k \delta L, \\
\hat{p}_{\text{out}} &= \hat{p}_{\text{dark}}.
\ee
The physical signal is a displacement in the $\hat{x}$ quadrature, which has the same form as that generated by the unitary time-evolution of the Hamiltonian $H(t)=f(t)\,\hat{p}$, with the integrated signal $\int f(t)\,\mathrm{d}t$ proportional to $\delta L$, which encodes the gravitational-wave signal.

In LIGO, this displaced phase quadrature is extracted via homodyne detection. Crucially, within the linearized model above, we showed that the leading quantum noise on the measurement of $\hat{x}_{\text{out}} = \hat{x}_{\text{dark}} + \sqrt{2} |\alpha| k \delta L$ does not come from the laser. Instead, it arises from $\hat{x}_{\text{dark}}$, the vacuum fluctuations entering from the unused dark port $\hat{b}$.

If we instead input a squeezed state in the dark port, aligning the squeezing with the desired quadrature, we can suppress the measurement noise on the gravitational-wave signal. LIGO has successfully demonstrated up to $2.15$ dB of high-frequency noise reduction, and over $3$ dB of broadband squeezing \cite{SGravi2, tse2019quantumenhanced}.

\subsection{Superconducting bosonic cavities}

Next, we go into how superconducting cavities can measure different signals, and provide a direct realization of our model $H(t) = f(t) \, \hat{p}$.

A superconducting cavity can act as a sensor for signals that induce phase-space displacements of its electromagnetic mode. We will go into how incoming microwave signals drive the stored electromagnetic mode.

For a single mode in the cavity, the undriven Hamiltonian is
\bb 
\hat H_0 = \hbar\omega_r\left(\hat a^\dagger\hat a+\frac12\right). 
\ee
A microwave signal is delivered through a transmission line and couples into the mode. The resulting drive Hamiltonian is
\bb \hat H_{\mathrm d}(t) = \hbar\left[ \varepsilon(t)\hat a^\dagger e^{-i\omega_{\mathrm d}t-i\phi_{\mathrm d}} + \varepsilon^*(t)\hat a e^{i\omega_{\mathrm d}t+i\phi_{\mathrm d}} \right], \ee
with
\bb \varepsilon(t)=i\sqrt{\kappa}\,A(t). \ee
Here $A(t)$ is the coherent input envelope, $\omega_{\mathrm d}$ is its carrier frequency, $\phi_{\mathrm d}$ is its phase, and $\kappa$ is the coupling. Note that $|A(t)|^2$ is the incident photon flux. We point the reader to \cite{blais2021circuit} for a full derivation.

Transforming to a frame rotating at the drive frequency $\omega_{\mathrm d}$ and dropping the constant zero-point energy gives
\bb 
\hat H_{\mathrm{rot}}(t) = \hbar(\omega_r-\omega_{\mathrm d})\hat a^\dagger\hat a + \hbar\left[ \varepsilon(t)e^{-i\phi_{\mathrm d}}\hat a^\dagger + \varepsilon^*(t)e^{i\phi_{\mathrm d}}\hat a \right]. 
\ee

For a real input signal $A(t)$, a resonant signal $(\omega_{\mathrm d}=\omega_r)$, and phase $\phi_{\mathrm d}=0$, the signal interaction becomes
\bb 
\hat H(t) = i\hbar\sqrt{\kappa}\,A(t) (\hat a^\dagger-\hat a). 
\ee
Substituting in $\hat p=\frac{\hat a-\hat a^\dagger}{i\sqrt2},$
we get
\bb
\hat H_{\mathrm{sig}}(t)=f(t)\hat p, \qquad f(t)=\hbar\sqrt{2\kappa}\,A(t). 
\ee

Thus, the microwave envelope controls the strength of a displacement along a fixed direction in cavity phase space.
The envelope $A(t)$, not the microwave carrier oscillation, is the waveform appearing in this rotating-frame model.

Experiments have directly realized this microwave signal sensing we've outlined \cite{deng, penasa}. In both experiments, an applied coherent microwave signal displaces a mode stored in a superconducting cavity, and the displacement amplitude is inferred from measurements of the resulting quantum state. Ref.~\cite{deng} use Fock-state probes and superconducting-qubit-assisted parity measurements, whereas Ref.~\cite{penasa} use an entangled Rydberg atom cavity probe and atomic state detection. These experiments therefore provide direct implementations of the microwave signal sensing explained here, using different quantum probes and readout procedures.

\subsection{Cavity optomechanics}
Related continuous-variable sensing arise in cavity optomechanics, although these need not use superconducting microwave cavities. An external force couples to a mechanical displacement through $\hat H_F(t)=-F(t)\hat q$. An optical cavity can read out the resulting motion through its displacement dependent resonance \cite{aspelmeyer2014cavity}. This is analogous, at the level of optical displacement readout, to the cavity-enhanced interferometric measurement used in LIGO.

Experimental demonstrations include squeezed light enhanced optomechanical magnetometry, in which a magnetic signal induces an effective mechanical force \cite{li}, measurements of the displacement fluctuations of a nanomechanical resonator \cite{amir2012}, and continuous force and displacement measurements below the standard quantum limit \cite{mason}. Optomechanical generation of squeezed light provides a related demonstration of quantum measurement correlations \cite{amir}. Together, these experiments illustrate the ability to sense with cavity optomechanics.

\newpage


\section{Sensing signals via Gaussian protocols with bounded squeezing}
\label{sec:gaussian_sensing_model}

We consider a single-mode continuous-variable system used to sense a
time-dependent signal. The signal is modelled as a coupling to the
momentum quadrature, with Hamiltonian
\bb\label{hamiltonian_classical_signal}
    \hat{H}(t)\coloneqq f(t)\,\hat{p},
\ee
where \(\hat{p}\) is the momentum operator and
\(f:\mathbb{R}^{+}\to\mathbb{R}\) describes the time profile of the signal.
Throughout the paper, we assume that the signal is composed of a sequence
of unit-duration pulses with a fixed shape. Specifically, we write
\bb\label{function_classical_signal}
    f(t)\coloneqq
    \sum_{i\in\mathbb{N}^{+}}
    m_i\,\varphi(t-i+1),
\ee
where each $m_i\in\mathbb{R}$ is a coefficient and
$\varphi:\mathbb{R}\to\mathbb{R}$ is a fixed and known pulse shape
satisfying
\begin{equation}\label{eq:pulse_assumptions}
    \varphi\in C(\mathbb{R}),
    \qquad
    \operatorname{supp}(\varphi)\subseteq[0,1],
    \qquad
    \varphi(t)>0\ \text{for }t\in(0,1),
    \qquad
    \int_0^1\varphi(t)\,\mathrm{d}t=1.
\end{equation}
Thus, on the $i$-th unit-time interval, the signal has the same temporal
profile $\varphi$, scaled by the coefficient $m_i$. A typical realization
is illustrated in Fig.~\ref{fig:pulsed_signal}. In particular, the
normalization of $\varphi$ implies
\begin{equation}\label{eq:pulse_bin_area}
    \int_{i-1}^{i}f(t)\,\mathrm{d}t=m_i.
\end{equation}
Hence $m_i$ represents the total integrated signal over the $i$-th
time interval. The precise shape of $\varphi$ will not play any role in
our results, provided that it satisfies the conditions above.

\begin{figure}[h!]
    \centering
    \includegraphics[width=\linewidth]{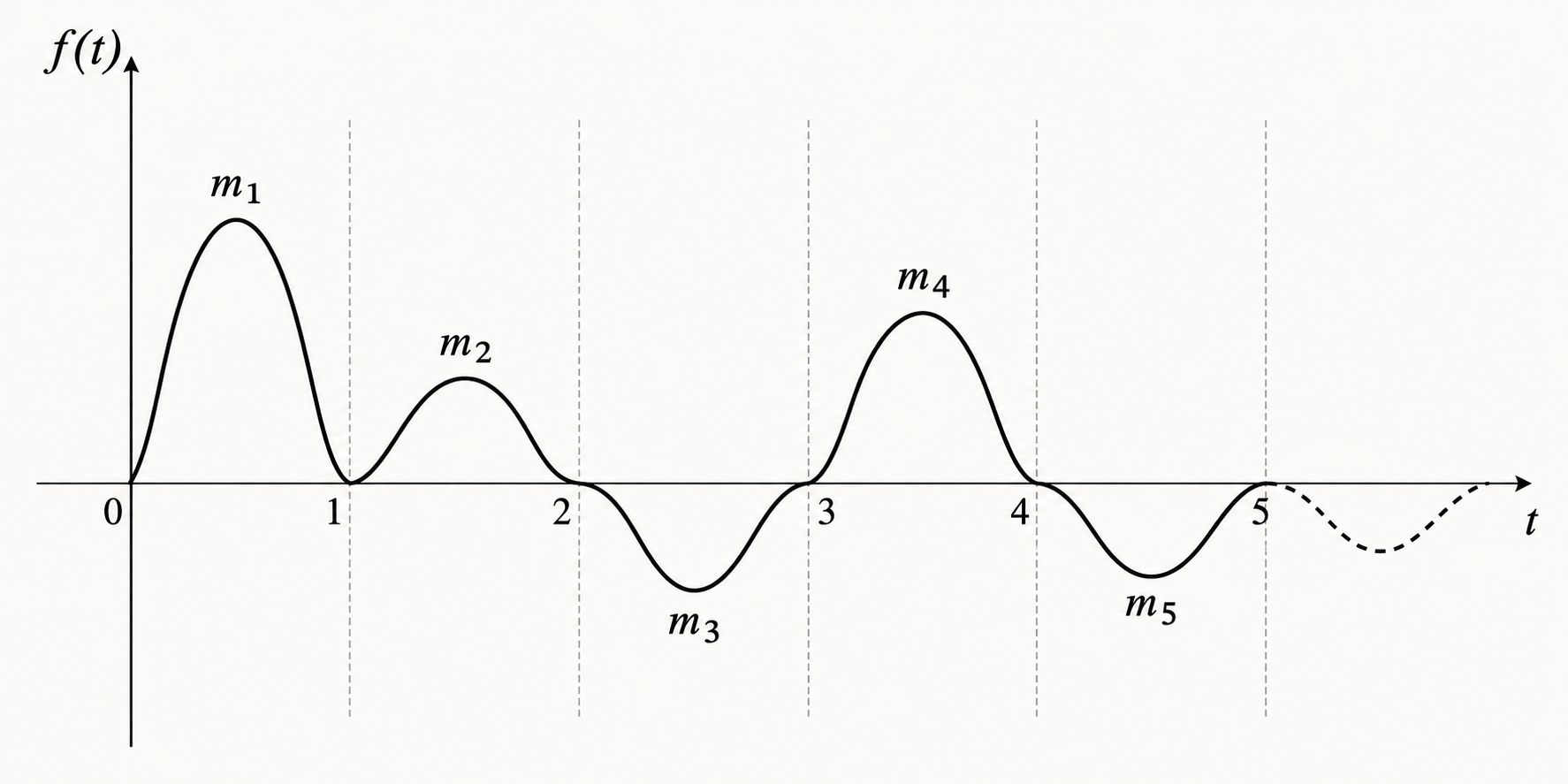}
    \caption{
        \textbf{A continuous pulsed signal.}
        A typical realization of the signal
        $f(t)=\sum_{i\ge1} m_i\,\varphi(t-i+1)$.
        Each coefficient $m_i$ scales the same unit-duration pulse shape
        $\varphi$, so that the signal is continuous and supported on
        consecutive unit-time intervals.
    }
    \label{fig:pulsed_signal}
\end{figure}

The key feature of this model is that the signal is imprinted
directly as a displacement of the position quadrature. By
Eq.~\eqref{eq:pulse_bin_area}, evolution over the complete $i$-th
unit-time interval, without intervening controls, induces the unitary
$e^{-im_i\hat p}$. In the Heisenberg picture, this gives
\[
    e^{im_i\hat p}\hat x e^{-im_i\hat p}=\hat x+m_i.
\]
More generally, in the absence of intervening controls, the Hamiltonians
at different times commute. Thus, over an interval
$[t_{\mathrm{in}},t_{\mathrm{out}}]$, the evolution generated by
\eqref{hamiltonian_classical_signal} is

\[
    \exp\!\left(
        -i
        \left(
            \int_{t_{\mathrm{in}}}^{t_{\mathrm{out}}}
            f(t)\,\mathrm{d}t
        \right)
        \hat{p}
    \right),
\]
and therefore it displaces the position quadrature as
\[
    \hat{x}
    \longmapsto
    \hat{x}
    +
    \int_{t_{\mathrm{in}}}^{t_{\mathrm{out}}}
    f(t)\,\mathrm{d}t .
\]
Thus, the information about the signal is encoded in a phase-space
displacement, which can then be accessed through measurements of the
single-mode quantum sensor.

In this setting, a quantum sensing task consists of inferring a relevant property of the random sequence $(m_i)_{i\in\mathbb{N}^+}$ from measurements performed on the evolved sensor. The problems considered in this work are defined in terms of a parameter $T\in\mathbb{N}^+$, called the \emph{pattern size}. Given $T$, the sequence $(m_i)_{i\in\mathbb{N}^+}$ is partitioned into consecutive blocks of length $T$:
\bb\label{eq:block_decomposition_general}
    (m_i)_{i\in\mathbb{N}^+}
    =
    \bigl(m^{(1)},m^{(2)},\ldots,m^{(n)},\ldots\bigr),
\ee
where each block vector
\[
    m^{(n)}
    =
    \bigl(m^{(n)}_1,\ldots,m^{(n)}_T\bigr)
    \in\mathbb{R}^T
\]
is drawn independently from a probability distribution belonging to a prescribed family. The learner knows this family of possible distributions, but not which member of the family is generating the blocks. A quantum sensing problem, in this framework, consists of identifying the unknown distribution, or a parameter specifying it, with high probability after accessing the Hamiltonian evolution for a finite amount of time.

We study two concrete problems within this setting. In Section~\ref{sec_cv_parity}, the block distribution is determined by an unknown parity constraint; this gives the \emph{CV Learning Parity problem}. In Section~\ref{sec_hyp_test}, the learner is instead promised that the block distribution is one of two known distributions; this gives the \emph{CV Hypothesis Testing problem}.
\subsection{Gaussian protocols with bounded squeezing}

We now specify the physically relevant class of quantum sensing protocols considered in this paper. We focus on Gaussian protocols in which the amount of squeezing is bounded by a parameter $z\ge 1$. Operationally, $z$ is the maximum squeezing that the learner is allowed to use when preparing the Gaussian state of the single-mode sensor. Equivalently, it determines the smallest quadrature variance that can be engineered at the input of each sensing experiment.

The case $z=1$ corresponds to unsqueezed Gaussian states, such as coherent states, and represents the natural baseline regime. Larger values of $z$ allow the learner to prepare squeezed states, and therefore to reduce the noise in the quadrature in which the displacement generated by the signal is read out. From an experimental viewpoint, coherent states are routinely available in optical platforms, whereas strongly squeezed states require nonlinear optical interactions or related advanced quantum technologies~\cite{dowling2002quantumtechnologysecondquantum,BUCCO}. Thus, in our model, the squeezing parameter $z$ plays the role of the quantum resource: small values of $z$ correspond to weakly quantum, essentially classical, Gaussian sensing protocols, while large values of $z$ correspond to protocols that use genuinely nonclassical states of the sensor. 

It is useful to keep in mind the limiting case $z=\infty$, even though
it is not physically attainable. In that idealized limit, the pulse
coefficients $m_i$ could be read exactly. For instance, one could prepare
an infinitely position-squeezed state centered at the origin, let it
evolve over one complete unit-time bin, and then perform an ideal
position homodyne measurement. By Eq.~\eqref{eq:pulse_bin_area}, the
integrated signal over that bin is $m_i$, so the measurement outcome would reveal this coefficient without noise. 

It is therefore natural to ask how the performance of quantum sensing depends on the available squeezing, and in particular at what scale of $z$ the sensing task becomes efficiently solvable. The relevant notions of efficiency will be defined below, but first we define a Gaussian protocol with squeezing bounded by a finite parameter $z$.


\begin{Def}[(Gaussian protocols with squeezing bounded by $z$)]
\label{def:gaussian_protocol_bounded_squeezing}
Let $z\ge1$. A sensing protocol for the single-mode system evolving under the Hamiltonian in \eqref{hamiltonian_classical_signal} over the time interval $[0,T_{\mathrm{sensing}}]$ is called a \emph{Gaussian protocol with squeezing bounded by $z$} if it is restricted to the following operations:
\begin{itemize}
    \item preparation of single-mode Gaussian input states whose squeezing is at most $z$;
    \item application of single-mode Gaussian unitaries during the evolution, provided that these unitaries introduce no additional squeezing, i.e.~only phase-space rotations and displacements are allowed;
    \item application of arbitrary Gaussian measurements, followed by arbitrary classical post-processing of the measurement outcomes.
\end{itemize}
\end{Def}

The protocol may be fully adaptive. In particular, the learner may choose the next input state, waiting time, Gaussian unitary, and Gaussian measurement as an arbitrary function of all previous measurement outcomes. We also emphasize that the model is prepare-and-measure: after a measurement, the system is discarded and a fresh Gaussian state may be prepared. Non-destructive measurements are not included.

Equivalently, the most general protocol in this class can be described as follows. At time $0$, the learner prepares a Gaussian state with squeezing at most $z$, chooses a waiting time $t_1$, applies arbitrary rotations and displacements during the evolution, and performs a Gaussian measurement at time $t_1$. After observing the outcome, the learner chooses a new Gaussian input state with squeezing at most $z$, a new waiting time $t_2$, new rotations and displacements, and a new Gaussian measurement at time $t_1+t_2$. This procedure is repeated adaptively until the final measurement is performed at time
\[
    t_1+\cdots+t_N=T_{\mathrm{sensing}}.
\]
The total number of measurements $N$ may itself depend on the outcomes observed during the protocol.

\subsection{Sensing time and computational time}

We distinguish between two resources. The first is the physical time during which the sensor has access to the signal. The second is the total time needed to complete the protocol, including classical post-processing.

\begin{Def}[(Sensing time and computational time)] 
\label{def:sens}
Fix a quantum sensing problem.
\begin{itemize}
    \item The \emph{sensing time} $T_{\mathrm{sensing}}$ of a protocol is the total duration for which the protocol has access to the Hamiltonian evolution. Equivalently, the protocol interacts with the signal only during the time window $[0,T_{\mathrm{sensing}}]$.
    \item The \emph{computational time} of a protocol is the total running time required to solve the sensing problem, including both the sensing phase and all subsequent classical post-processing.
\end{itemize}
\end{Def}

The computational time is always at least the sensing time. Nevertheless, the two notions are not equivalent. A protocol may collect enough physical data in polynomial sensing time, but require exponential classical post-processing to extract the answer.

\begin{Def}[(Sensing-time efficiency and computational efficiency)] 
\label{def:sens2}
Fix a quantum sensing problem with pattern size $T$. 
\begin{itemize}
    \item A protocol that solves the problem is called \emph{sensing-time efficient} if its sensing time $T_{\mathrm{sensing}}$ scales polynomially with $T$. Otherwise, it is called \emph{sensing-time inefficient}.
    \item  A protocol that solves the problem is called \emph{computationally efficient} if its total computational time scales polynomially with $T$. Otherwise, it is called \emph{computationally inefficient}.
\end{itemize}
\end{Def}
Thus computational efficiency is the stronger requirement. If a protocol is sensing-time inefficient, then it is automatically computationally inefficient. However, a sensing-time efficient protocol may still be computationally inefficient if its classical post-processing is too expensive.

\newpage

\section{CV Learning Parity problem}
\label{sec_cv_parity}

We now introduce the first sensing task studied in this work. The block vectors are bit strings, and the unknown distribution is specified by a hidden parity constraint. For every $s\in\{0,1\}^T$, define
\bb\label{eq:def_Cs}
    C_s
    \coloneqq
    \left\{
    x\in\{0,1\}^T:
    x\cdot s\equiv0 \pmod 2
    \right\},
\ee
where $x\cdot s\coloneqq \sum_{i=1}^T x_i s_i$.  In the CV Learning Parity problem, there is an unknown string $s\in\{0,1\}^T$. The pulse coefficients are generated as follows. The underlying bit sequence is decomposed into blocks
\bb\label{eq:parity_block_decomposition}
    (x_i)_{i\in\mathbb{N}^+}
    =
    \bigl(x^{(1)},x^{(2)},\ldots,x^{(n)},\ldots\bigr),
\ee
where each block $x^{(n)}\in\{0,1\}^T$ is sampled independently and uniformly from $C_s$. The corresponding signal is obtained by setting $m_i=x_i$ for all $i\in\mathbb{N}^+$ in \eqref{function_classical_signal}. The learner has access to the Hamiltonian evolution in \eqref{hamiltonian_classical_signal} generated by this signal and must recover the hidden string $s$.

\begin{problem}[(CV Learning Parity problem)] 
\label{def:PROB}
Let $T\in\mathbb{N}^+$ be the pattern size. Let $s\in\{0,1\}^T$ be an unknown bit string. The learner is given access to the time evolution generated by the Hamiltonian \eqref{hamiltonian_classical_signal}, where the signal is constructed from independent block vectors sampled uniformly from
\[
    C_s
    =
    \left\{
    x\in\{0,1\}^T:
    x\cdot s\equiv0 \pmod 2
    \right\}.
\]
The goal is to output a guess $\tilde{s}\in\{0,1\}^T$ such that $\tilde{s}=s$ with probability at least $2/3$.
\end{problem}
The success probability of $2/3$ is arbitrary and can be amplified straightforwardly. The central question is how the minimum sensing time depends on the squeezing parameter $z$. Our main result is the following.

\begin{thm}[(Complexity of the CV Learning Parity problem)]
\label{thm_main_compl}
Let $T\in\mathbb{N}^+$ be the pattern size and let $z\ge1$ be the squeezing parameter.
Every Gaussian protocol with squeezing bounded by $z$ that solves the CV Learning Parity problem  must use a sensing time of 
\bb\label{eq:main_parity_lower_bound}
    T_{\mathrm{sensing}}
    \ge
    T\left[
    \frac13
    \exp\!\left(\frac{e^{-3z^2}}{90}T\right)
    -1
    \right].
\ee
Conversely, there exists a Gaussian protocol with squeezing bounded by $z$ that solves the CV Learning Parity problem using sensing time
\bb\label{eq_upp_bound_protocol}
    T_{\mathrm{sensing}}
    =
    T\left\lceil
    8\log\!\left(3\cdot2^T\right)
    \exp\!\left(4e^{-z^2/4}T\right)
    \right\rceil .
\ee
Moreover, if $z\ge 2\sqrt{2\log(10T)}$, then there exists a Gaussian protocol with squeezing bounded by $z$ that solves the CV Learning Parity problem with computational time polynomial in $T$.
\end{thm}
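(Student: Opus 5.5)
The statement has three parts, and I will prove them in increasing order of difficulty: the upper bound \eqref{eq_upp_bound_protocol}, then the polynomial-time claim, then the lower bound \eqref{eq:main_parity_lower_bound}.

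\textbf{Upper bound.} I use the pulse-by-pulse readout protocol. At each integer time, prepare $S_z\ket{0}$, let it evolve over one unit interval, and homodyne $\hat x$. The outcome is $y_i\sim\mathcal N[x_i,\tfrac{1}{2z^2}]$. Threshold at $1/2$ to obtain a bit $\tilde x_i$ that equals $x_i$ except with probability $\epsilon\le \tfrac12 e^{-z^2/4}$, by Lemma~\ref{lemma_bounds_Q}. A block of $T$ bits is then read with no error with probability at least $(1-\epsilon)^T\ge \exp(-4e^{-z^2/4}T)$ (for, say, $\epsilon\le1/2$). For each candidate $s$, the observed block satisfies $\tilde x\cdot s\equiv0$ whenever it is error-free. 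I use this to eliminate candidates. An error-free block lies in $C_s$, and for $s'\ne s$ a uniform element of $C_s$ violates $s'$ with probability at least $1/4$. Hence after $N$ blocks the elimination/majority rule (keep the candidate violated by the fewest blocks) fails only if some $s'\neq s$ beats $s$. I bound this by a Chernoff bound plus a union bound over $2^T$ candidates. This gives $N=\lceil 8\log(3\cdot 2^T)\exp(4e^{-z^2/4}T)\rceil$ blocks, and the sensing time is $TN$. The constants come from making the error-free fraction $p_0\ge e^{-4e^{-z^2/4}T}$ dominate the noise.

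\textbf{Polynomial time.} If $z\ge2\sqrt{2\log(10T)}$, then $\epsilon\le\tfrac12(10T)^{-2}$, so a block is error-free with probability at least $1-1/(200T)$. I collect $O(T)$ blocks; with constant probability all are error-free. I then solve for $s$ by Gaussian elimination over $\mathbb F_2$ as the nonzero element of the null space of the received vectors. The uniform samples from $C_s$ span $s^\perp$ after $O(T)$ samples with probability at least $1-2^{-\Omega(1)}$, which I adjust to reach $2/3$. The case $s=0$ is detected by full rank.

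\textbf{Lower bound (main obstacle).} I reduce learning to distinguishing $s=0$ from a uniformly random nonzero $\bar s$ revealed after sensing. If the protocol succeeds with probability $2/3$, the averaged TVD between the transcript distributions is at least $1/3$. Next I replace every experiment by a classical simulation. A state with squeezing at most $z$, evolved under displacements and rotations and then measured with a Gaussian measurement, yields an outcome distributed as a Gaussian whose mean is a linear functional of $\int f$ over subintervals. By Lemma~\ref{reduction_via_convolution} I can degrade the noise to covariance $\tfrac{1}{2z^2}$ along a worst-case direction. I then decompose arbitrary start and end times into at most five canonical pieces: a partial unit interval, whole units within a block, whole blocks, whole units, and a partial unit. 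By Lemma~\ref{reduction_via_marg}, each piece can be sampled separately at the cost of a constant-factor increase of $z^2$, which produces the $3z^2$ in the exponent.

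The hierarchical tree argument then proceeds in three steps.
\begin{enumerate}
\item Within a unit interval, adaptive sub-measurements are no more informative than one unit measurement of $x_i$ with noise of order $1/z$.
\item Within a block, each $x_i$ is observed at best through a binary symmetric channel with flip probability $\epsilon\gtrsim e^{-3z^2}$. A uniform element of $C_{\bar s}$ versus a uniform string differs only in the parity $\bar s\cdot x$. After noise, the bias of that parity is $(1-2\epsilon)^{|\bar s|}$, and averaging over random $\bar s$ (typical weight about $T/2$) gives a per-block TVD $C\le \exp(-e^{-3z^2}T/90)$ plus exponentially small terms. Here I must handle adaptivity within the block via a likelihood-ratio (Fourier over $\mathbb F_2$) argument along each branch.
\item Across blocks, a hybrid argument gives total TVD at most $(\#\text{blocks}+1)\,C$.
\end{enumerate}
Setting $(T_{\mathrm{sensing}}/T+1)C\ge1/3$ yields \eqref{eq:main_parity_lower_bound}.

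I expect step 2, controlling adaptive intra-block measurements whose means may mix several $x_i$, to be the hardest. My first attempt is to show that any linear combination read with noise $1/z$ is a garbling of independent noisy reads of each bit.
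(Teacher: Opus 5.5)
\textbf{Main gap: step~2 of your lower bound, the place you flagged.} Inside a block, a Condition-2 experiment of duration $t=\tau(u)\ge2$ returns $\mathcal{N}\bigl[(a\cdot\xi,\,b\cdot\xi),\,\mathbb{1}_2/(10z^2)\bigr]$. This is a coherent read of a linear combination of $t$ bits, and its noise does not grow with $t$. It is not a garbling of independent per-bit reads with noise $O(1/z)$; combining such reads gives noise of order $\sqrt{t}/z$. Consequently the fixed-$\bar s$ bias $(1-2\epsilon)^{|\bar s|}$ that you intend to average does not hold for it.

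Your reduction cannot be repaired as stated. If it held, the bound would hold for every fixed $\bar s$. Then distinguishing $s=0$ from $s=\bar s$ with $\bar s$ announced \emph{before} sensing would also need exponentially many blocks. But the paper solves that problem at $z=1$ with sensing time $O(T)$ (Theorem~\ref{thm_correct3}). It uses one experiment per block with $a=\bar s$, $b=0$, for which the Walsh coefficient $A^{(u)}_{\bar s}=\mathbb{E}_\xi(-1)^{\xi\cdot\bar s}p^{(u)}_\xi$ already has $\|A^{(u)}_{\bar s}\|_1=\Omega(1)$. The coherent protocol of Section~\ref{sec:coherent_parity_readout}, which beats every unit-time readout at $z=1$, makes the same point. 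Hardness holds only on average over the unknown $\bar s$, so the averaging must be done inside each coherent experiment, not after reducing to per-bit channels.

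\textbf{What the paper does instead.} Along a branch, $P_{\bar s}(l)-P_0(l)=\prod_i A^{(u_i)}_{\bar s_i}(\mathbf{y}_i)$. Since a uniform $\bar s$ splits into independent local pieces, the averaged leaf discrepancy is at most $\frac{1}{1-2^{-T}}\prod_i f^{(u_i)}(\mathbf{y}_i)$ with $f^{(u)}=\mathbb{E}_\sigma|A^{(u)}_\sigma|$ (Lemma~\ref{lemma_adapt_prob}).
\begin{itemize}
\item For $t\ge2$, Parseval over $\sigma$ and Plancherel in $\mathbf{y}$ give $\sum_\sigma|\widehat{A^{(u)}_\sigma}(\boldsymbol{\omega})|^2=\mathbb{E}_\xi|\widehat{p^{(u)}_\xi}(\boldsymbol{\omega})|^2$. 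This is independent of the means because all $p^{(u)}_\xi$ share one covariance. Hence $\mathbb{E}_{\sigma\ne0}\|A^{(u)}_\sigma\|_2\le\sqrt{5/\pi}\,z\,2^{-t/2}$. Cauchy--Schwarz on a box of side $O(t)$ plus Gaussian tails turns this into an $L^1$ bound of order $tz\,2^{-t/2}$.
\item Small $t$ is handled by pairing strings that differ in one bit of $\operatorname{supp}\sigma$, giving $\|A^{(u)}_\sigma\|_1\le1-\frac13e^{-5z^2/2}$.
\end{itemize}
Together with the unit-time lemma this gives $\|f^{(u)}\|_1\le\exp(-e^{-3z^2}\tau(u)/90)$. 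Backward induction over the $T$-time tree then yields $C\le\exp(-e^{-3z^2}T/90)$.

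Two further points in your lower bound need work. Your step~1 is the right statement but is only asserted; the paper proves it by backward induction, carrying a virtual read of the remaining strength $1-t(v)$. Your cross-block hybrid must also absorb experiments spanning several blocks. The paper handles these by telescoping inside a single Gaussian experiment and noting that a one-block Gaussian measurement is itself a $T$-time tree.

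\textbf{A fixable bug in the upper bound.} Your polynomial-time protocol is fine, but the sensing-time protocol's decoding rule fails. ``Keep the candidate violated by the fewest blocks'' always returns $s'=0$, which is never violated. For the same reason, your claim that every $s'\ne s$ is violated with probability at least $1/4$ fails for $s'=0$. Use the paper's rule: maximize $Y_r$ over nonzero $r$, accept the maximizer only if $Y_r\ge\frac{n}{2}+\frac{n}{4}e^{-2e^{-z^2/4}T}$, and otherwise output $0$.

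The relevant advantage is $\Pr[\tilde x\cdot s\equiv0]-\frac12=\frac12(1-2\epsilon)^{|s|}$, not the error-free fraction. Erroneous blocks can violate $s$ with probability above $1/2$, so the error-free heuristic overstates the gap. Squaring the correct gap is what produces the factor $\exp(4e^{-z^2/4}T)$.
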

The proof of the lower bound is given in Theorem~\ref{thm_lower_bound_learning_parity_gaussian}. The sensing-time upper bound is proved in Theorem~\ref{thm_upp1}, while the polynomial-time upper bound in the high-squeezing regime is proved in Theorem~\ref{thm_upp1_comp}.

The above Theorem~\ref{thm_main_compl} shows that \emph{all} Gaussian protocols with constant squeezing are inherently inefficient: they require a sensing time that grows exponentially with the pattern size $T$. More explicitly, if $z$ is constant, then $e^{-3z^2}$ is a positive constant, and the lower bound in \eqref{eq:main_parity_lower_bound} implies that the minimum sensing time grows exponentially in $T$. More generally, if $z=o(\sqrt{\log T})$, then the minimum sensing time is still exponential in $T$, with a lower bound of $\exp(T^{1-o(1)})$. On the other hand, once the squeezing reaches the scale $\sqrt{\log T}$, the above upper bounds show that the minimum sensing time scales polynomially in $T$, and the problem becomes efficiently solvable. Thus, up to constants, the transition occurs at
\[
    z=\Theta(\sqrt{\log T}) .
\]
We summarise this result in the following corollary:

\begin{cor}[(Exponential separations in CV Learning Parity problem)]\label{cor:cv_parity_advantage}
Let $z$ be the squeezing parameter and $T$ the pattern size. Then:
\begin{itemize}
    \item If $z=o(\sqrt{\log T})$, all Gaussian protocols with squeezing bounded by $z$ that solve the CV Learning Parity problem must be inefficient, with sensing time scaling exponentially in $T$. More precisely, the sensing time must scale at least as $\exp(T^{1-o(1)})$.
    \item If $z=\omega(\sqrt{\log T})$, there exists a Gaussian protocol with squeezing bounded by $z$ that efficiently solves the CV Learning Parity problem, with computational time scaling polynomially in $T$.
\end{itemize}
\end{cor}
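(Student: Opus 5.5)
This corollary follows from Theorem~\ref{thm_main_compl} by asymptotic bookkeeping. I will treat the two regimes separately, using the lower bound \eqref{eq:main_parity_lower_bound} for the first and the polynomial-time statement for the second.

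\textbf{Low squeezing.} Assume $z=o(\sqrt{\log T})$, so that $z^2=\epsilon(T)\log T$ with $\epsilon(T)\to0$. Then
\[
e^{-3z^2}=T^{-3\epsilon(T)}=T^{-o(1)},
\]
and the exponent in \eqref{eq:main_parity_lower_bound} becomes
\[
\frac{e^{-3z^2}}{90}\,T=\frac{T^{1-3\epsilon(T)}}{90}=T^{1-o(1)},
\]
where the constant $1/90$ is absorbed into the $o(1)$ in the exponent of $T$. Hence
\[
T_{\mathrm{sensing}}\ge T\left[\tfrac13\exp\!\left(T^{1-o(1)}\right)-1\right]\ge \exp\!\left(T^{1-o(1)}\right)
\]
for all sufficiently large $T$. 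This is exponential in $T$, so every such protocol is sensing-time inefficient.

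\textbf{High squeezing.} Assume $z=\omega(\sqrt{\log T})$. Then $z/\sqrt{\log T}\to\infty$, while $2\sqrt{2\log(10T)}=\Theta(\sqrt{\log T})$. So for all sufficiently large $T$ we have $z\ge 2\sqrt{2\log(10T)}$, and the final clause of Theorem~\ref{thm_main_compl} supplies a Gaussian protocol with squeezing bounded by $z$ and computational time polynomial in $T$. The finitely many small $T$ do not affect an asymptotic statement.

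The only step requiring any care is checking that $\exp(T^{1-o(1)})$ absorbs both the constant $1/90$ in the exponent and the prefactor $T/3-T$ (the subtracted $T$ is negligible). Writing everything with $\log$ in the exponent, as above, makes this direct.
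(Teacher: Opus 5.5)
Your proposal is correct and follows the paper's route: the paper also obtains the corollary directly from Theorem~\ref{thm_main_compl}, substituting $z=o(\sqrt{\log T})$ into the lower bound \eqref{eq:main_parity_lower_bound} for the first bullet. Your bookkeeping there is sound, including that the exponent $T^{1-3\epsilon(T)}/90\to\infty$, so the constants and the subtracted term are absorbed. For the second bullet you invoke the polynomial-time protocol (Theorem~\ref{thm_upp1_comp}), which applies because $z=\omega(\sqrt{\log T})$ implies $z\ge 2\sqrt{2\log(10T)}$ for all large $T$.
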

This corollary reveals a sharp computational phase transition: when the squeezing parameter crosses the threshold $\Theta(\sqrt{\log T})$, the computational time drops from exponential to polynomial. Specifically, there is a Gaussian protocol with high squeezing ($z= \omega(\sqrt{\log T})$) that achieves an \emph{exponential speedup} over all low-squeezing protocols ($z = o(\sqrt{\log T})$). As discussed earlier, protocols with no squeezing ($z = 1$) are effectively \emph{classical}, while those with high squeezing are \emph{quantum}. In this sense, our result can also be seen as a form of \emph{exponential quantum advantage}: quantum protocols improve exponentially over any classical protocol.

\subsection{Reduction to a simpler problem}

We now introduce a variant of the CV Learning Parity problem, which we call the
\emph{CV Learning Parity with Partially Revealed Information} problem. This variant will be
used as an intermediate step in the lower-bound proof of the CV learning parity problem. Instead of asking the learner to recover
the whole hidden string $s$, we ask it to distinguish between two possibilities, $s=0$ and
$s=\bar{s}$, where the non-zero string $\bar{s}$ is revealed only after the sensing phase. Thus the
learner is helped at the post-processing stage, but not while interacting with the signal.

\begin{problem}[(CV Learning Parity with Partially Revealed Information)]
\label{def:PROB2}
Let $T\in\mathbb{N}^+$ be the pattern size. The problem consists of four phases:
\begin{itemize}
    \item \textbf{Referee's choice phase.} A referee samples a non-zero string
    \[
        \bar{s}\in\{0,1\}^T\setminus\{0\}
    \]
    uniformly at random. The referee then chooses the hidden string $s$ to be either
    $s=0$ or $s=\bar{s}$, with equal probability. At this stage, the learner knows neither
    $s$ nor $\bar{s}$.

    \item \textbf{Evolution phase.} The learner has access to the Hamiltonian evolution generated
    by the signal associated with the hidden string $s$. Equivalently, the signal is
    constructed from independent block vectors sampled uniformly from
    \[
        C_s
        \coloneqq
        \left\{
        x\in\{0,1\}^T:
        x\cdot s\equiv 0 \pmod 2
        \right\}.
    \]
    After this phase, the learner no longer has access to the Hamiltonian evolution.

    \item \textbf{Partial information reveal phase.} The referee reveals the string $\bar{s}$, but
    does not reveal whether the true hidden string was $0$ or $\bar{s}$.

    \item \textbf{Classical post-processing phase.} Using the data collected during the evolution
    phase, together with the revealed string $\bar{s}$, the learner must decide whether the
    true hidden string was $s=0$ or $s=\bar{s}$, with success probability at least $2/3$.
\end{itemize}
\end{problem}

The order of the phases is the important point. Since $\bar{s}$ is revealed only after the sensing
phase, the learner cannot choose its sensing strategy as a function of the parity direction that
will later be tested. During the evolution phase, the protocol must collect data without knowing
which non-zero parity direction will become relevant.

Problem~\ref{def:PROB2} is no harder than the original CV Learning Parity problem. Indeed, any
protocol that learns the hidden string $s$ also solves the partially revealed problem: the learner
runs the same sensing protocol, obtains an estimate of $s$, and, after $\bar{s}$ is revealed, decides
whether the estimate is $0$ or $\bar{s}$. Hence any lower bound for Problem~\ref{def:PROB2}
immediately implies the same lower bound for Problem~\ref{def:PROB}. This is why the partially
revealed formulation is useful: it reduces the lower-bound argument to a binary testing problem,
while still retaining the obstruction caused by not knowing the relevant parity direction during
the sensing phase.

The quantitative statement for this variant is the following.

\begin{thm}[(Complexity of CV Learning Parity with Partially Revealed Information)]
\label{thm:partial_revealed_complexity}
Let $T\in\mathbb{N}^+$ be the pattern size and let $z\ge1$ be the squeezing parameter.
Every Gaussian protocol with squeezing bounded by $z$ that solves
Problem~\ref{def:PROB2} must use sensing time
\bb
    T_{\mathrm{sensing}}
    \ge
    T\left[
    \frac13
    \exp\!\left(\frac{e^{-3z^2}}{90}T\right)
    -1
    \right].
\ee
Conversely, there exists a Gaussian protocol with squeezing bounded by $z$ that solves the same problem using sensing time
\bb
    T_{\mathrm{sensing}}
    =
    10\,T
    \exp\!\left(4e^{-z^2/4}T\right)
\ee
and computational time $O(T_{\mathrm{sensing}})$.
\end{thm}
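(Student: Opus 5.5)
The upper bound and the lower bound are proved separately; the lower bound carries almost all of the difficulty.

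\textbf{Upper bound.} I would use the pulse-by-pulse readout from the proof outline. At each integer time $i-1$, prepare the zero-mean pure state $\ket{z}$, squeezed along $x$. Let it evolve over $[i-1,i]$ and perform an $x$-homodyne measurement. By Eq.~\eqref{eq:pulse_bin_area}, the outcome is distributed as $\NN[x_i,1/(2z^2)]$. Thresholding at $1/2$ gives a bit $y_i$ that is flipped independently with probability $q\le\frac12 e^{-z^2/4}$, by Lemma~\ref{lemma_bounds_Q}.

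Once $\bar s$ is revealed, compute $\pi_n\coloneqq y^{(n)}\cdot\bar s \bmod 2$ for every block $n$. The two hypotheses give:
\begin{itemize}
\item If $s=0$, the bits are uniform, so $\pi_n$ is an unbiased coin.
\item If $s=\bar s$, then $\Pr[\pi_n=0]=\frac12\bigl(1+(1-2q)^{|\bar s|}\bigr)\ge\frac12+\epsilon$, where $\epsilon\coloneqq\frac12(1-2q)^T\ge\frac12 e^{-4qT}\ge \frac12\exp(-2e^{-z^2/4}T)$, using $1-2q\ge e^{-4q}$ for $q\le 1/4$.
\end{itemize}
The test counts the fraction of zero parities and compares it with $\frac12+\epsilon/2$. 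By Hoeffding, $O(\epsilon^{-2})\le 10\exp(4e^{-z^2/4}T)$ blocks give success probability $\ge 2/3$. The post-processing is linear in the data, hence $O(T_{\mathrm{sensing}})$.

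\textbf{Lower bound.} Let $p_0$ and $p_{\bar s}$ be the transcript distributions under $s=0$ and $s=\bar s$. Success requires $\mathbb E_{\bar s}\,\mathrm{TV}(p_0,p_{\bar s})\ge 1/3$. I would follow the strategy sketched in the proof outline, in four steps.
\begin{enumerate}
\item Apply the change of clock and simulate each prepare-evolve-measure experiment by a two-dimensional Gaussian sample. Its mean is linear in the integrated signal over the experiment's window, and its covariance is $\succeq$ (const)$\cdot z^{-2}$ in the relevant direction. The simulation uses Lemmas~\ref{reduction_via_convolution} and~\ref{reduction_via_change}, and it only helps the learner.
\item Show that every experiment splits into at most five canonical subexperiments (full-time, $T$-time and unit-time), at the cost of a constant-factor loss in effective squeezing. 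I would do this with Lemma~\ref{reduction_via_marg}, which splits a Gaussian sample into independent pieces with rescaled covariance. This is where the constant $3$ in $e^{-3z^2}$ should come from.
\item For unit-time trees, show that adaptive measurements inside one pulse are dominated by a single noisy readout $\NN[x_i,\Theta(z^{-2})]$. Each bit is then observed at best through a binary channel whose ``Fourier contraction'' is $1-\Omega(e^{-3z^2})$.
\item For a single $T$-time tree, expand the likelihood ratio $dp_{\bar s}/dp_0$ in the parity character $(-1)^{x\cdot\bar s}$. The Fourier coefficient of the transcript at $\bar s$ then picks up a factor at most $(1-ce^{-3z^2})^{|\bar s|}$. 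Averaging over uniform $\bar s$ with Cauchy--Schwarz (Parseval over $\bar s$), and using that low-weight $\bar s$ are exponentially rare, gives the one-block bound $C\le \exp(-e^{-3z^2}T/90)$.
\end{enumerate}

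For the full-time tree, I would apply the chain rule or a hybrid argument block by block, conditioning on the past transcript. Blocks are i.i.d., so each block sensed contributes at most $C$ to the averaged TVD. A protocol with sensing time $T_{\mathrm{sensing}}$ touches at most $T_{\mathrm{sensing}}/T+1$ blocks, so $(T_{\mathrm{sensing}}/T+1)\,C\ge 1/3$, which rearranges to the stated bound.

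\textbf{Main obstacle.} The hardest step is the $T$-time-tree bound in step 4. Experiments spanning several pulses reveal noisy linear combinations $\sum_i \pm x_i$ rather than individual bits, and their timing is chosen adaptively. I would first try to show that any such combination is again a Gaussian-smoothed function of $x$ whose Fourier weight at $\bar s$ is dominated by the per-bit contraction. Concretely, I would bound $\mathbb E_{\bar s}\bigl|\widehat{p}(\bar s)\bigr|$ branch by branch along the tree, before the global Cauchy--Schwarz over $\bar s$.
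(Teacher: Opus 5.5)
Your upper bound is the paper's own protocol (Theorem~\ref{thm_correct2}), with the same bias estimate and Hoeffding count. The skeleton of your lower bound (clock change, Gaussian simulation, five-piece split, three-level trees, $nC\ge 1/3$) is also the paper's.

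The gap is in step~4 and in your ``main obstacle'' plan. There you claim that, for a \emph{fixed} $\bar s$, the Fourier coefficient of the transcript at $\bar s$ picks up a factor $(1-ce^{-3z^2})^{|\bar s|}$, i.e.\ that multi-pulse experiments are dominated by the per-bit channel. This is false. Take a single experiment spanning the block with rotation coefficients $a=(1,\dots,1)$, $b=0$. It returns $\NN[|x|,1/(10z^2)]$, a noisy Hamming weight. For $\bar s=(1,\dots,1)$ the two transcript laws then stay at total variation distance bounded below by a constant independent of $T$ (numerically about $0.39$ at $z=1$).

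More generally, for every fixed $\bar s$, the one-shot coherent experiment of Theorem~\ref{thm_correct3}, which measures $x\cdot\bar s$, separates $s=0$ from $s=\bar s$ with constant advantage already at $z=1$. Hence any bound that holds for each $\bar s$ separately, uniformly over trees, would also ``prove'' that Problem~\ref{def:PROB3} is exponentially hard, contradicting Section~\ref{sec_partial_info_bef}. The rarity of low-weight $\bar s$ cannot rescue this, because the obstruction sits at high weight. Your instinct to use Parseval over $\bar s$ is right, but it must \emph{replace} the per-character estimate, not supplement it.

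In the paper the averaging over $\bar s$ is done leaf by leaf and node by node. Along a fixed branch the experiment intervals partition $[T]$, and $P_{\bar s}(l)-P_0(l)=\prod_i A^{(u_i)}_{s_i}(\mathbf y_i)$ exactly, with $A^{(u)}_\sigma=\mathbb E_\xi(-1)^{\xi\cdot\sigma}p^{(u)}_\xi$. The local pieces of a uniform string are independent and uniform, so $\mathbb E_{\bar s\neq0}|P_{\bar s}(l)-P_0(l)|\le(1-2^{-T})^{-1}\prod_i f^{(u_i)}(\mathbf y_i)$ with $f^{(u)}=\mathbb E_\sigma|A^{(u)}_\sigma|$ (Lemma~\ref{lemma_adapt_prob}). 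Backward induction then reduces everything to the local bound $\|f^{(u)}\|_1\le\exp(-\frac{e^{-3z^2}}{90}\tau(u))$.

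For a node of duration $t\ge2$ this is an \emph{averaged} statement:
\begin{itemize}
\item Parseval over $\sigma$ at each frequency collapses $\sum_\sigma|\widehat{A_\sigma}(\boldsymbol\omega)|^2$ to $\mathbb E_\xi|\widehat{p_\xi}(\boldsymbol\omega)|^2$, whose value is independent of $\xi$. Plancherel then gives $\mathbb E_{\sigma\neq0}\|A_\sigma\|_2^2\le 5z^2/(2\pi(2^t-1))$.
\item Truncating to a box of side $O(t)$ yields $\mathbb E_{\sigma\neq0}\|A_\sigma\|_1\le\frac{4\sqrt5\,tz}{\sqrt\pi\,2^{t/2}}+2e^{-t^2/2}$ (Lemmas~\ref{lemma_l2_A_tau_ge_2} and~\ref{lemma_l1_A_tau_ge_2}).
\item For short durations, a one-bit hybrid gives only the $|\sigma|$-independent bound $\exp(-\frac13e^{-5z^2/2})$ (Lemma~\ref{lemma_small_t_A_tau_ge_2}).
\end{itemize}
The exponential decay therefore comes from the durations $\tau(u)$, which sum to $T$ along every branch, not from $|\bar s|$.

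A smaller point concerns your full-time hybrid argument. The blocks already processed must be distributed as under $s=0$, so that the node reached is independent of $\bar s$ and the averaged one-block bound $C$ applies. The paper's induction enforces this by attaching the local term to the weight $W_0(u)$.
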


The proof of the lower bound is given in Theorem~\ref{thm_lower_bound_learning_parity_gaussian}, while the upper bound is proved in Theorem~\ref{thm_correct2}.

Finally, let us stress that the timing of the reveal is crucial for obtaining an exponential lower bound in $T$ on the minimum sensing time. If the learner already knows $\bar{s}$ before the sensing phase, then the situation changes completely: in Section~\ref{sec_partial_info_bef} we show that, when $\bar{s}$ is known beforehand, the corresponding problem of distinguishing whether the true string is $s=\bar{s}$ or $s=0$ can be solved with sensing time linear in $T$, even with no squeezing, i.e.~with $z=1$. For example, this includes the task of distinguishing whether the block vector is sampled uniformly from $\{0,1\}^T$ or uniformly from the strings satisfying a known global parity constraint such as $\sum_{j=1}^T x_j\equiv0\pmod 2$. Once the parity direction is known in advance, such a binary test can be performed efficiently by an appropriate strategy with no squeezing. We refer to Section~\ref{sec_partial_info_bef} for the explicit protocol and its analysis.

\section{CV Hypothesis Testing problem}
\label{sec_hyp_test}

We now introduce the second quantum sensing task studied in this work: the CV Hypothesis Testing problem. In the previous section, the goal was to recover an unknown parity string. Here the setting is different: the learner is given two known distributions over the \emph{signs} of the signal, and must decide which one is generating the signal.

Specifically, the block vectors $m\in\mathbb{R}^T$ of the signal are i.i.d.~random variables constructed as follows. Let $p$ and $q$ be two probability distributions over $\{0,1\}^T$. In each block of length $T$, a bit string
\[
    x=(x_1,\ldots,x_T)\in\{0,1\}^T
\]
is sampled from one of these two distributions. Independently, amplitudes $b_1,\ldots,b_T$ are sampled uniformly from the interval $[\frac12,\frac32]$. The corresponding block vector of the signal is defined as
\bb\label{eq:hyp_testing_block}
    m
    =
    \bigl(
    (-1)^{x_1}b_1,\ldots,(-1)^{x_T}b_T
    \bigr)
    \in\mathbb{R}^T .
\ee
Thus, $b_i$ is the magnitude of the $i$-th pulse coefficient, while $x_i$ determines its sign. The signal is obtained by inserting these
coefficients into Eq.~\eqref{function_classical_signal}. Since
$\varphi>0$ inside each bin, the signal has the sign of its coefficient
throughout the interior of that bin. The promise is that, throughout the whole evolution, the sign string $x$ is sampled independently in each block either always from $p$ or always from $q$. The distributions $p$ and $q$ are known to the learner, and the task is to decide which of these two alternatives holds. The problem can be formalized as follows.

\begin{problem}[(CV Hypothesis Testing problem)]
\label{def:cv_hypothesis_testing}
Let $T\in\mathbb{N}^+$ be the pattern size, and let $p$ and $q$ be two known probability distributions over $\{0,1\}^T$. The learner has access to the Hamiltonian evolution in \eqref{hamiltonian_classical_signal} generated by a signal whose block vectors are of the form \eqref{eq:hyp_testing_block}. In each block, the bit string $x$ is sampled either always from $p$ or always from $q$, while the amplitudes $b_1,\ldots,b_T$ are sampled independently and uniformly from $[\frac12,\frac32]$. The goal is to decide which of the two hypotheses is true with success probability at least $2/3$.
\end{problem}
As before, the choice of the constant success probability $2/3$ is arbitrary and can be amplified by repetition. The interval $[\frac12,\frac32]$ is chosen for concreteness; a similar analysis applies to any fixed interval $[a,b]$ with $0<a<b$, with corresponding changes in the constants.

We now identify a broad class of pairs of distributions for which this sensing task exhibits an exponential separation. The relevant condition is that the two sign distributions agree on all sufficiently small subsets of coordinates. More precisely, for an integer $c\in[T]$, we say that two distributions $p$ and $q$ over $\{0,1\}^T$ have \emph{matching marginals up to size $T-c$} if, for every subset $S\subseteq[T]$ with $|S|\le T-c$, the marginal distributions of $p$ and $q$ on the coordinates in $S$ coincide. Equivalently, any test that observes at most $T-c$ bits of the sign pattern has exactly the same distribution under $p$ and under $q$.

A simple example is obtained by taking $p$ to be the uniform distribution over $\{0,1\}^T$, and $q$ to be the uniform distribution over
\[
    C_s
    =
    \left\{
    x\in\{0,1\}^T:
    x\cdot s\equiv0 \pmod 2
    \right\},
\]
where the parity string $s$ has Hamming weight $|s|\ge T-c+1$. In this case, the parity constraint cannot be detected unless one observes all coordinates in the support of $s$. Hence the marginals of $p$ and $q$ coincide on every subset of size at most $T-c$. At the same time, the two distributions are separated by a constant in total variation distance; indeed, for every non-zero $s$, $\frac12\|p-q\|_1=\frac12$.

The theorem below shows that, for any pair of distributions satisfying this matching-marginal condition, Gaussian protocols with low squeezing can distinguish the two hypotheses only very slowly. Conversely, once the squeezing is large enough, the signs of the signal can be read accurately enough to solve the corresponding hypothesis test efficiently.

\begin{thm}[(Complexity of the CV Hypothesis Testing problem)]
\label{thm:hypothesis_testing_complexity}
Let $T\in\mathbb{N}^+$ be the pattern size and let $z\ge1$ be the squeezing parameter. Let $c\in[T]$, and let $p,q$ be two distinct probability distributions over $\{0,1\}^T$ whose marginals on every subset of at most $T-c$ coordinates coincide. Every Gaussian protocol with squeezing bounded by $z$ that solves Problem~\ref{def:cv_hypothesis_testing} must use sensing time 
\bb\label{eq:hyp_testing_lower_bound}
    T_{\mathrm{sensing}}
    \ge
    T\left(
    \frac{
    \frac23
    \exp\!\left(
    \frac{e^{-25z^2}}{30}
    \bigl(T-(c-1)\bigr)
    \right)
    }{
    \|p-q\|_1
    \left(\sum_{r=0}^{c-1}\binom{T}{r}\right)
    }
    -1
    \right).
\ee
Conversely, there exists a Gaussian protocol with squeezing bounded by $z$ that solves the same problem with sensing time bounded by
\bb\label{eq:hyp_testing_achievability_clean}
    T_{\mathrm{sensing}}
    \le
    \frac{12\,T\exp\!\left(
        4T e^{-z^2/4}
    \right)}{\|p-q\|_1^2}
    .
\ee
\end{thm}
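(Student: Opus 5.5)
The proof splits into the achievability bound \eqref{eq:hyp_testing_achievability_clean} and the lower bound \eqref{eq:hyp_testing_lower_bound}. The lower bound reuses the machinery developed for the parity problem, adapted to the matching-marginal structure.

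\emph{Achievability.} I will use the pulse-by-pulse readout protocol. At each integer time, prepare a zero-mean pure state squeezed along $x$ with squeezing $z$, evolve for one unit, and homodyne $\hat x$. Each outcome is $m_i+\mathcal{N}[0,z^{-2}/2]$, and we estimate the sign $x_i$ by the sign of the outcome. Since $|m_i|\ge 1/2$, the tail bound of Lemma~\ref{lemma_bounds_Q} gives a flip probability $\epsilon\le \frac12 e^{-z^2/4}$, independent across pulses. The observed block string is therefore $x\oplus e$ with $e\sim\mathrm{Bern}(\epsilon)^{\otimes T}$, so we sample from $p\ast\mathrm{Bern}_\epsilon$ or $q\ast\mathrm{Bern}_\epsilon$. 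I then lower bound the total variation distance of the noisy distributions. Conditioning on the event of no flips (probability $(1-\epsilon)^T\ge e^{-2\epsilon T}$), and using that the distributions are sums over patterns $e$, I expect to show
\[
\tfrac12\|(p-q)\ast\mathrm{Bern}_\epsilon\|_1\gtrsim \tfrac12\|p-q\|_1\, e^{-2T\epsilon}.
\]
The cleanest route is to write the noise as a Markov operator with a nonnegative inverse-like expansion, or simply to use the Fourier expansion. A standard hypothesis test with $O(1/\mathrm{TV}^2)$ samples then yields $12T\exp(4Te^{-z^2/4})/\|p-q\|_1^2$. Getting exactly this constant is routine bookkeeping.

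\emph{Lower bound.} I follow the outline in the proof sketch. First, perform the change of clock to a piecewise-constant signal. Next, show that each prepare–evolve–measure experiment is simulable by a two-dimensional Gaussian sample with mean linear in the $m_i$ and covariance bounded below by a multiple of $z^{-2}$; this uses Lemmas~\ref{reduction_via_convolution}–\ref{reduction_via_marg}. Then decompose into at most five canonical subexperiments and build the hierarchical tree. The full-time-tree step gives $\mathrm{TV}_{\text{total}}\le N_{\text{blocks}}\cdot C$, where $C$ is the one-block TV of the $T$-time tree transcripts under $p$ versus $q$, and $N_{\text{blocks}}\le T_{\mathrm{sensing}}/T+1$. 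Requiring total TV $\ge 1/3$ rearranges to the stated formula, provided
\[
C\le \tfrac12\|p-q\|_1\sum_{r<c}\binom{T}{r}\,e^{-e^{-25z^2}(T-c+1)/30}.
\]

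\emph{Bounding $C$: the main obstacle.} Within one block, every transcript outcome is a Gaussian whose mean is a linear functional of $((-1)^{x_i}b_i)_i$. The uniform magnitudes $b_i$ together with Gaussian noise of variance $\gtrsim z^{-2}$ make each pulse's sign only partially visible. My first attempt is to show that every leaf likelihood, viewed as a function of $x$, is a product-like kernel, via a likelihood-ratio expansion over subsets $S\subseteq[T]$ (Fourier over signs). Because $p-q$ has vanishing Fourier coefficients on all $|S|\le T-c$, only subsets of size $\ge T-c+1$ contribute. Each sign in $S$ must be resolved through noise, costing a factor $(1-\delta)$ with $\delta\gtrsim e^{-O(z^2)}$; the $\frac12$ width of the uniform amplitudes is what prevents cancellations when pulses are combined in one measurement. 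Summing over leaves and subsets, with the count of large subsets equal to $\sum_{r<c}\binom Tr$ complements, should give the bound above. The hard part is making this factorization survive adaptivity and experiments spanning several pulses. I would try a hybrid argument over the unit-time trees, using the unit-time-tree result that adaptive sub-unit measurements are no stronger than one unit experiment, and then treat multi-pulse experiments by noting that their mean depends on a signed sum. Conditioning on all but one coordinate of that sum leaves Gaussian noise, which keeps the per-coordinate contraction factor.
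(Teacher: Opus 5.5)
Your architecture matches the paper's: clock change, Gaussian simulation, five canonical subexperiments, the full-time tree bound $nC$, the Walsh expansion of $p-q$ with low-weight coefficients killed by matching marginals, and factorization along branches. But the one estimate that produces the exponential is missing.

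For a node $u$ of the $T$-time tree whose single experiment spans $\tau(u)\ge2$ pulses, you need $\|A^{(u)}_\sigma\|_1\le e^{-\gamma_z|\sigma|}$. That is, you need decay in the number $r=|\sigma|$ of Fourier-mode coordinates read by \emph{one} measurement. Your mechanism, conditioning on all but one coordinate of the signed sum and using the leftover Gaussian noise, gives only a single factor $1-\delta$ per experiment, independent of $r$; this is exactly the paper's Lemma~\ref{lemma_small_t_A_tau_ge_2_ht}. The Gaussian noise is shared by all $r$ summands, so it can be spent only once. Splitting it into $r$ pieces of variance $1/(10rz^2)$ makes each factor $1-e^{-O(rz^2)}$, whose $r$-th power does not decay. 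A $T$-time tree may consist of one experiment covering the whole block, so you would only get $C\lesssim\|p-q\|_1\sum_{r<c}\binom{T}{r}$, with no decay in $T$.

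Gaussian noise alone genuinely cannot do better. Take fixed amplitudes $b_j\equiv1$ and $\mathbf v_j=(1,0)^\intercal$. The all-ones coefficient then satisfies $2\pi|\widehat{B_r}(\boldsymbol\omega)|=e^{-\|\boldsymbol\omega\|^2/(20z^2)}|\sin\omega_1|^r$, so $\|B_r\|_1\ge e^{-\pi^2/(80z^2)}$ for every $r$. The reason is that the mean $r-2|\eta|$ encodes $|\eta|\bmod 2$ in its residue mod $4$. Your remark that the amplitude width ``prevents cancellations'' points the right way, but it is never turned into an estimate.

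The missing ingredient is to use the amplitude randomness coordinate by coordinate in Fourier space.
\begin{itemize}
\item After averaging out the coordinates outside $\mathrm{supp}\,\sigma$ (by convexity and translation invariance of $\|\cdot\|_1$), $\widehat{A^{(u)}_\sigma}$ is a Gaussian factor times $\prod_{j\in\mathrm{supp}\,\sigma}\mathbb{E}_b\sin(b\,\boldsymbol\omega\cdot\mathbf v_j)$.
\item For $b\sim\mathrm{Unif}([\frac12,\frac32])$ one has $\sup_x|\mathbb{E}_b\sin(bx)|=\sup_x|\sin x\,\tfrac{\sin(x/2)}{x/2}|<0.91$.
\item Plancherel then gives $\|A^{(u)}_\sigma\|_2\lesssim z(0.91)^r$.
\item Restrict to a disk of radius $(\frac32+\sigma_z)r$, with $\sigma_z=1/(\sqrt{10}z)$; outside it every mixture component has mass at most $e^{-r^2/2}$. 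This converts the $L^2$ bound into $\|A^{(u)}_\sigma\|_1\le 3zr(0.91)^r+e^{-r^2/2}$ (Lemma~\ref{lemma_large_t_A_tau_ge_2_ht}).
\item Taking the minimum with your single-factor bound, which is still needed for small $r$, yields $e^{-\gamma_z r}$ with $\gamma_z=e^{-25z^2}/30$ (Lemma~\ref{lemma_min_bound_ht}).
\end{itemize}
By contrast, adaptivity is not the hard part and needs no hybrid argument. For fixed $x$ the leaf likelihood is exactly a product of node likelihoods, each depending only on its own substring. Hence $\mathbb{E}_x(-1)^{x\cdot s}\prod_i p^{(u_i)}_{x_i}(\mathbf y_i)=\prod_i A^{(u_i)}_{s_i}(\mathbf y_i)$, and backward induction with $\sum_{\text{children}}|A_\sigma(\mathbf y)|=\|A_\sigma\|_1$ handles the sum over leaves.

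On achievability your route is the paper's, but the no-flip conditioning heuristic fails, because other error patterns can cancel the no-flip term. For $p,q$ uniform on even- and odd-weight strings, $(p-q)\ast\mathrm{Bern}_\epsilon=(1-2\epsilon)^T(p-q)$ exactly. At fixed $\epsilon$ this is exponentially smaller than both $(1-\epsilon)^T\|p-q\|_1$ and $e^{-2T\epsilon}\|p-q\|_1$. Use the inverse channel instead: for the binary symmetric channel $P_\epsilon$, $\|(P_\epsilon^{\otimes T})^{-1}\|_{1\to1}=(1-2\epsilon)^{-T}$. This gives $\|Q_p-Q_q\|_1\ge(1-e^{-z^2/4})^T\|p-q\|_1\ge e^{-2Te^{-z^2/4}}\|p-q\|_1$, which is exactly what the stated bound requires.
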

The proof of the lower bound is given in Theorem~\ref{thm_lower_bound_ht_gaussian}, while the upper bound is proved in Theorem~\ref{thm_upper_bound_hyp_testing}.

 Let us spell out the consequence of
Theorem~\ref{thm:hypothesis_testing_complexity}. Suppose that
$c=O(T^{0.99})$,\footnote{\label{fn:c-growth}The exponent $0.99$ is
arbitrary: the same conclusion holds for $c=O(T^{1-\delta})$, for any
fixed constant $\delta\in(0,1)$. For bounded squeezing, the
exponential lower bound holds more generally whenever $c=o(T)$.} so
that $\sum_{r=0}^{c-1}\binom{T}{r}$ is at most
$\exp(O(T^{0.99}\log T))$, and that we are in the low-squeezing regime
$z=o(\sqrt{\log T})$. Then, as a consequence of the lower bound in
\eqref{eq:hyp_testing_lower_bound}, the minimum sensing time required
to solve the CV Hypothesis Testing problem must be at least
$\exp(T^{1-o(1)})$, and is therefore exponential in the pattern
size $T$. In this regime, the problem is fundamentally inefficient.

On the other hand, in the high-squeezing regime
$z=\omega(\sqrt{\log T})$, the term $e^{-z^2/4}$ is $o(1/T)$.
Hence, the exponential factor in
\eqref{eq:hyp_testing_achievability_clean} remains bounded by a
constant. In this regime, the sensing time is polynomial in $T$
whenever $\|p-q\|_1$ is at least inverse-polynomially large. 

This is the natural regime in which to discuss efficiency. Indeed, in the
opposite regime, where \(\|p-q\|_1\) is exponentially small, the minimum
sensing time required to solve the CV Hypothesis Testing problem is already
exponential in \(T\) for any value of the squeezing, even in the idealized
limit of infinite squeezing. In that limit, the learner could read the sign
patterns exactly, and the problem would reduce to ordinary classical
hypothesis testing between \(p\) and \(q\). As shown in
Theorem~\ref{thm_upper_bound_hyp_testing}, any test with success probability
at least \(2/3\) requires a number of samples \(N\) satisfying
\[
    N
    \ge
    \frac{\log(9/8)}
    {
        2\log\!\left(
            \frac{1}{
                1-\frac12\|p-q\|_1
            }
        \right)
    }.
\]
In particular, as \(\|p-q\|_1\to0\), this lower bound scales as
\(N=\Omega(1/\|p-q\|_1)\), and is therefore exponential in \(T\) whenever
\(\|p-q\|_1\) is exponentially small. Thus, in the regime where
\(\|p-q\|_1\) is at least inverse-polynomially large, we conclude that, for
\(z=\omega(\sqrt{\log T})\), the CV Hypothesis Testing problem can be solved
efficiently.

In summary, under this natural assumption on \(\|p-q\|_1\), as in the CV
Learning Parity problem, the transition between efficient and inefficient
sensing occurs at the scale \(z=\Theta(\sqrt{\log T})\).
We summarize this conclusion in the following corollary.

\begin{cor}[(Exponential separation in the CV Hypothesis Testing problem)]\label{cor:cv_hypothesis_testing_advantage}
Let $z\ge 1$ be the squeezing parameter and let $T$ be the pattern size. Consider any pair of distributions $p,q$ over $\{0,1\}^T$ whose marginals agree on every subset of at most $T-c$ coordinates, where $c=O(T^{0.99})$.\footref{fn:c-growth} Then:
\begin{itemize}
    \item If $z=o(\sqrt{\log T})$, all Gaussian protocols with squeezing bounded by $z$ that solve the CV Hypothesis Testing problem must be inefficient, with sensing time scaling exponentially in $T$. More precisely, the sensing time must scale at least as $\exp(T^{1-o(1)})$.
    \item If $z=\omega(\sqrt{\log T})$, there exists a Gaussian protocol with squeezing bounded by $z$ that solves the CV Hypothesis Testing problem with sensing time polynomial in $T$, provided that $\|p-q\|_1$ is bounded below by the inverse of a polynomial in $T$.\footnote{This assumption excludes the case in which the underlying classical hypothesis test is already exponentially hard, even with exact access to the sign patterns (which corresponds to the case of infinite squeezing).}
\end{itemize}
\end{cor}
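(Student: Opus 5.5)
The corollary follows from Theorem~\ref{thm:hypothesis_testing_complexity} by asymptotic bookkeeping. I will treat the lower and upper bounds separately, each time plugging the parameter regimes into the explicit bounds \eqref{eq:hyp_testing_lower_bound} and \eqref{eq:hyp_testing_achievability_clean}.

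\emph{Low-squeezing regime.} Assume $z=o(\sqrt{\log T})$, so $z^2\le \epsilon_T\log T$ with $\epsilon_T\to0$. Then $e^{-25z^2}\ge T^{-25\epsilon_T}=T^{-o(1)}$. Since $c=O(T^{0.99})$, we have $T-(c-1)\ge T/2$ for large $T$, so the numerator of \eqref{eq:hyp_testing_lower_bound} is at least $\exp(T^{1-o(1)}/60)$.

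For the denominator, I will use $\|p-q\|_1\le 2$ and the standard bound
\[
\sum_{r=0}^{c-1}\binom{T}{r}\le \left(\frac{eT}{c-1}\right)^{c-1}\le \exp\!\bigl(O(T^{0.99}\log T)\bigr),
\]
where the case $c=1$ is trivial. The main point is to compare exponents: $T^{1-o(1)}$ beats $T^{0.99}\log T$ because $25\epsilon_T\to0$. So for large $T$ the ratio is $\exp(T^{1-o(1)})$, and the same holds after subtracting $1$ and multiplying by $T$. Hence $T_{\mathrm{sensing}}\ge\exp(T^{1-o(1)})$.

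\emph{High-squeezing regime.} Assume $z=\omega(\sqrt{\log T})$. Then $z^2/4\ge \log T+\omega(1)$ eventually, so $Te^{-z^2/4}=o(1)$ and $\exp(4Te^{-z^2/4})\le e$ for large $T$. If $\|p-q\|_1\ge 1/\mathrm{poly}(T)$, then \eqref{eq:hyp_testing_achievability_clean} gives $T_{\mathrm{sensing}}\le 12eT\,\mathrm{poly}(T)^2=\mathrm{poly}(T)$.

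Two boundary cases need separate treatment. First, the statement is asymptotic, so for small $T$ nothing needs to be shown. Second, if the squeezing available is larger than needed, the protocol can simply use a smaller squeezing $z'\le z$; that protocol still has squeezing bounded by $z$, so the bound remains valid.

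The only delicate point is the exponent comparison in the low-squeezing regime, i.e.\ that $e^{-25z^2}T$ dominates $c\log T$. This is exactly where the hypotheses $z=o(\sqrt{\log T})$ and $c=O(T^{0.99})$ are used. The remark in the footnote that $c=O(T^{1-\delta})$ also suffices follows from the same comparison.
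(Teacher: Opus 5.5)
Your proposal is correct and follows the same route as the paper. In both, the corollary comes from substituting the two squeezing regimes into the explicit bounds of Theorem~\ref{thm:hypothesis_testing_complexity}. For the lower bound this uses $\sum_{r=0}^{c-1}\binom{T}{r}\le\exp(O(T^{0.99}\log T))$ against $e^{-25z^2}=T^{-o(1)}$, and for the upper bound it uses $Te^{-z^2/4}=o(1)$. Your version spells out the asymptotic bookkeeping that the paper only sketches in the paragraph before the corollary.
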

This gives an exponential separation in sensing time for a broad family of signals. The matching-marginal condition captures the case in which the difference between $p$ and $q$ is hidden in global correlations of the sign pattern: locally, the two hypotheses look identical. Low-squeezing Gaussian protocols, including the coherent-state case $z=1$, cannot access these correlations efficiently. By contrast, squeezing of order $\sqrt{\log T}$ makes the signs readable with sufficiently high accuracy; after that, the problem reduces to an ordinary hypothesis test between the two known sign distributions, which can be performed in a sensing-time efficient manner.


\newpage

\section{Lower bound on the minimum sensing time}
\label{sec:lower_bound_minimum_sensing_time}

In this section we prove lower bounds on the minimum sensing time required by Gaussian protocols with squeezing bounded by $z$. The results apply to both the CV Learning Parity problem and the CV Hypothesis Testing problem introduced above.

We recall the common structure of the two problems. In both cases,
the sensor is a single-mode continuous-variable system evolving under
$\hat H(t)=f(t)\,\hat p$, where $f$ is the continuous pulsed signal in
Eq.~\eqref{function_classical_signal}. The random coefficients
$m_i\in\mathbb{R}$ are grouped into consecutive blocks of length $T$,
where $T$ is the pattern size:
\bb\label{eq:block_decomposition_lower_bound}
    (m_i)_{i\in\mathbb{N}^+}
    =
    \bigl(m^{(1)},m^{(2)},\ldots,m^{(n)},\ldots\bigr),
\ee
with each block $m^{(n)}\in\mathbb{R}^T$ sampled independently from an unknown distribution belonging to a prescribed family. The task of the sensing protocol is to identify this underlying distribution, or the parameter specifying it, with high probability.

We first reduce protocols for the pulsed signal to protocols for a
piecewise-constant signal, preserving the squeezing bound and the number
of blocks reached. For this auxiliary model, the lower-bound argument
has two main steps. First, we reduce each Gaussian measurement step to a canonical Gaussian observation: the unknown signal appears only through two linear functionals in the mean, while the covariance is fixed by the squeezing bound. This reduction is sufficient for lower bounds, since the actual Gaussian measurement outcome can be simulated from the canonical observation by a classical post-processing independent of the unknown signal. Second, we account for the adaptivity of the protocol through a tree representation. Each root-to-leaf path records one possible transcript of the adaptive sensing procedure, and the tree structure lets us propagate local bounds on the distinguishability generated by a single experiment into a bound on the distinguishability of the full protocol.

\subsection{From continuous pulses to piecewise-constant signals}
\label{sec:pulse_to_piecewise}

To prove our lower bounds, we first reduce to the simpler case of a
signal that is constant on each unit-time interval. Any protocol for the pulsed signal can be simulated in this model by changing the times of its operations within its unit time interval. It therefore suffices to lower bound
the number of blocks of T required to sense in the piecewise-constant model.

For the same coefficient sequence $(m_i)_{i\ge1}$, define the piecewise function
\begin{equation}\label{eq:step_reference_signal}
    f_{\mathrm{step}}(t)\coloneqq
    \sum_{i\in\mathbb{N}^{+}}m_i\,\mathbf{1}_{[i-1,i)}(t),
\end{equation}
and the corresponding Hamiltonian
\begin{equation}\label{eq:step_reference_hamiltonian}
    \hat H_{\mathrm{step}}(t)\coloneqq f_{\mathrm{step}}(t)\hat p.
\end{equation}
This signal is illustrated in Fig.~\ref{fig:piecewise_constant_signal}.
The sensing problems are otherwise unchanged: the coefficients have
the same distributions, the learner has the same allowed operations,
and the goal and the order of information revelation remain the same.

\begin{figure}[h!]
    \centering
    \includegraphics[width=0.68\linewidth]{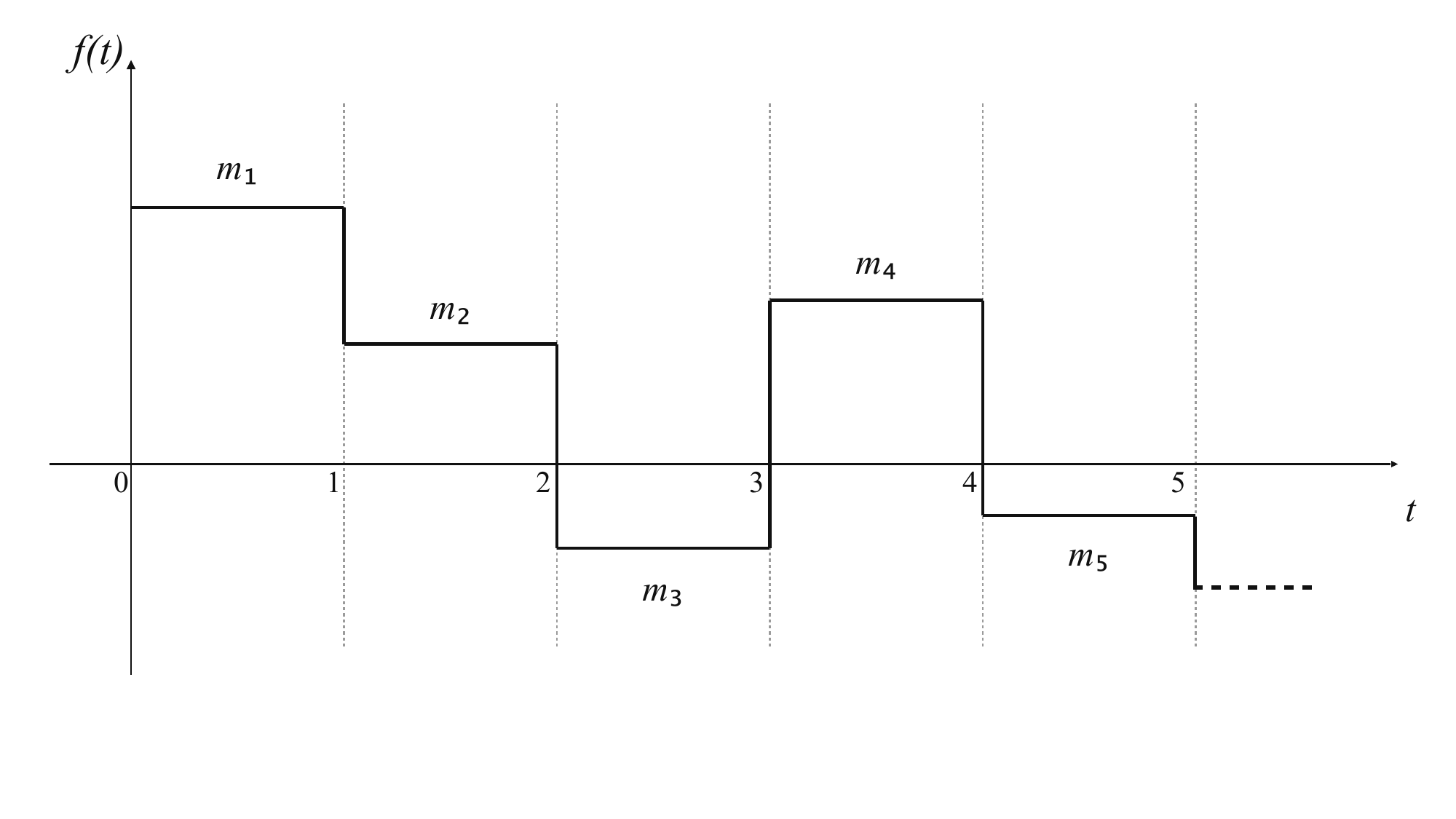}
    \caption{
        \textbf{Auxiliary piecewise-constant signal.}
        A realization of $f_{\mathrm{step}}$ in
        Eq.~\eqref{eq:step_reference_signal}, with value $m_i$ on
        $[i-1,i)$. It uses the same coefficients as the pulsed signal in
        Fig.~\ref{fig:pulsed_signal}; each rectangle has the same signed
        area as the corresponding pulse.
    }
    \label{fig:piecewise_constant_signal}
\end{figure}
\begin{lemma}[(Reduction to piecewise-constant signals)]
\label{lem:pulse_to_piecewise}
Fix $T\in\mathbb{N}^{+}$ and $z\ge1$, and let $\varphi$ satisfy
Eq.~\eqref{eq:pulse_assumptions}. Suppose that a Gaussian protocol with squeezing bounded by $z$ solves
one of our sensing problems for the pulsed signal
\[
    f(t)=\sum_{i\ge1}m_i\varphi(t-i+1)
\]
using sensing time $S$.

Then there exists a Gaussian protocol with the same squeezing bound
and the same success probability for the piecewise-constant signal
\[
    f_{\mathrm{step}}(t)
    =
    \sum_{i\ge1}m_i\,\mathbf{1}_{[i-1,i)}(t),
\]
whose sensing time $S_{\mathrm{step}}$ satisfies
\[
    \left\lceil\frac{S_{\mathrm{step}}}{T}\right\rceil
    =
    \left\lceil\frac{S}{T}\right\rceil .
\]
\end{lemma}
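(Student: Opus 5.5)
The idea is a change of clock within each unit interval. Define the cumulative profile $\Phi(s)\coloneqq\int_0^s\varphi(t)\,\mathrm{d}t$ for $s\in[0,1]$. By Eq.~\eqref{eq:pulse_assumptions}, $\Phi$ is continuous and strictly increasing, because $\varphi>0$ on $(0,1)$. It also satisfies $\Phi(0)=0$ and $\Phi(1)=1$, so it is a homeomorphism of $[0,1]$. Extend it to a time reparametrisation $\tau:[0,\infty)\to[0,\infty)$ by setting $\tau(i-1+s)\coloneqq i-1+\Phi(s)$ for $s\in[0,1]$ and $i\in\mathbb{N}^+$.

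The key identity is that for any $0\le a\le b$,
\[
\int_a^b f(t)\,\mathrm{d}t=\int_{\tau(a)}^{\tau(b)}f_{\mathrm{step}}(t)\,\mathrm{d}t .
\]
It suffices to check this on a subinterval of a single bin $[i-1,i]$. There both sides equal $m_i\bigl(\Phi(b-i+1)-\Phi(a-i+1)\bigr)$, and additivity over bins gives the general case. Since the Hamiltonians $f(t)\hat p$ at different times commute, the free evolution between two control times $a<b$ is exactly $\exp\bigl(-i(\int_a^b f)\hat p\bigr)$. By the identity, this equals the free step-signal evolution between $\tau(a)$ and $\tau(b)$.

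Now take the given protocol and build the new one by running it unchanged, except that every operation scheduled at time $t$ is performed at time $\tau(t)$. This covers state preparations, rotations, displacements, and measurements, all chosen adaptively. Because $\tau$ is increasing, the order of operations is preserved. The new protocol uses the same states and operations, so the squeezing bound and the Gaussian, squeezing-free character of the controls are unchanged. Adaptivity is also preserved: the learner's rule "at time $t$ do $X$" becomes "at time $\tau(t)$ do $X$", a deterministic function of past outcomes, since $\tau$ is known.

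By the key identity and induction over the sequence of operations, for every realization of the coefficients $(m_i)$ the joint law of all measurement outcomes is identical in the two protocols. One subtlety is a rotation applied mid-experiment: it acts between two free evolutions, and each free evolution is individually matched, so the unitary products coincide. The identical outcome laws give the same post-processing, the same output distribution, and hence the same success probability. The order of information revelation (e.g.\ $\bar s$ revealed after sensing) is also unaffected.

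It remains to handle the sensing time, where $S_{\mathrm{step}}=\tau(S)$. Since $\tau$ fixes every integer and is increasing, $\tau$ maps $[kT,(k+1)T]$ onto itself for each $k$. Hence $\tau(S)\in((k-1)T,kT]$ if and only if $S\in((k-1)T,kT]$, which gives $\lceil S_{\mathrm{step}}/T\rceil=\lceil S/T\rceil$. The one point requiring care is the endpoint convention: the protocol ends at time $S$, possibly mid-bin, and the map respects this exactly.

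I expect the main obstacle to be formalising adaptive timing, in particular timing choices that may depend on outcomes in measurable ways. Composing those choices with the fixed continuous bijection $\tau$ keeps everything well defined, and I would state this explicitly.
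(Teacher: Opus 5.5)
Your proposal is correct and follows essentially the same route as the paper. The paper uses the same change of clock $\Lambda_\varphi(t)=\lfloor t\rfloor+G_\varphi(t-\lfloor t\rfloor)$, checks the same integral identity bin by bin, and reschedules every operation from time $t$ to $\Lambda_\varphi(t)$. It then gets the block count from the fact that multiples of $T$ are fixed points of a strictly increasing bijection, exactly as you do.
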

\begin{proof}
Let $S$ be the deterministic sensing-time budget of the original
protocol. We construct the piecewise-constant protocol by changing
only the times at which its operations are performed.

Define the cumulative pulse area
\[
    G_{\varphi}(v)\coloneqq\int_0^v\varphi(w)\,\mathrm{d}w,
    \qquad 0\le v\le1,
\]
and the change of clock
\begin{equation}\label{eq:pulse_clock}
    \Lambda_{\varphi}(t)\coloneqq
    \lfloor t\rfloor+G_{\varphi}(t-\lfloor t\rfloor),
    \qquad t\ge0.
\end{equation}
Within each unit interval, this replaces elapsed time by accumulated
pulse area. Since $\varphi$ is positive on $(0,1)$ and has unit integral,
$G_{\varphi}$ is continuous and strictly increasing, with
$G_{\varphi}(0)=0$ and $G_{\varphi}(1)=1$. Thus $\Lambda_{\varphi}$
is a continuous strictly increasing bijection of $[0,\infty)$ onto
itself, and
\[
    \Lambda_{\varphi}(k)=k
    \qquad\text{for every nonnegative integer }k.
\]

We first check that this change of clock reproduces the signal-induced
evolution. Fix any coefficient sequence $(m_i)_{i\ge1}$.
For $0\le a\le b\le1$, integration within the $i$-th bin gives
\[
\begin{aligned}
    \int_{i-1+a}^{i-1+b}f(t)\,\mathrm{d}t
    &=m_i\bigl(G_{\varphi}(b)-G_{\varphi}(a)\bigr) =\int_{\Lambda_{\varphi}(i-1+a)}^{\Lambda_{\varphi}(i-1+b)}
        f_{\mathrm{step}}(u)\,\mathrm{d}u.
\end{aligned}
\]
Splitting an arbitrary interval at integer times therefore gives
\begin{equation}\label{eq:pulse_integral_identity}
    \int_{t_{\mathrm{in}}}^{t_{\mathrm{out}}}f(t)\,\mathrm{d}t
    =
    \int_{\Lambda_{\varphi}(t_{\mathrm{in}})}
         ^{\Lambda_{\varphi}(t_{\mathrm{out}})}
        f_{\mathrm{step}}(u)\,\mathrm{d}u.
\end{equation}
This identity implies that their evolutions between corresponding times are identical
whenever no controls are applied in between.

The simulated protocol uses the original protocol's decision rules.
Whenever these rules prescribe a preparation, control, or measurement
at time $t$, it performs the same operation at time
$\Lambda_{\varphi}(t)$ instead. For instance, a waiting interval $[a,b]$ becomes
an interval of duration $\Lambda_{\varphi}(b)-\Lambda_{\varphi}(a)$.

It remains to compare the number of blocks reached. If $S=0$, then
$S_{\mathrm{step}}=0$ and the claim is immediate. Otherwise, set
$n\coloneqq\lceil S/T\rceil$, so that
\[
    (n-1)T<S\le nT.
\]
Since $T$ is an integer, both endpoints are fixed by
$\Lambda_{\varphi}$. Its strict monotonicity therefore gives
\[
    (n-1)T
    =\Lambda_{\varphi}((n-1)T)
    <\Lambda_{\varphi}(S)
    \le\Lambda_{\varphi}(nT)
    =nT.
\]
Consequently,
\[
    \left\lceil\frac{S_{\mathrm{step}}}{T}\right\rceil
    =n
    =\left\lceil\frac{S}{T}\right\rceil,
\]
as claimed.
\end{proof}

We can therefore carry out the remaining lower-bound analysis for
$f_{\mathrm{step}}$. In the single-experiment reductions and tree
analysis below, all times refer to this piecewise-constant model.
We return to the pulsed signal when proving the final sensing-time
bounds, using Lemma~\ref{lem:pulse_to_piecewise}.

\subsection{Simplified outcome probability distribution}
\label{sec_simpl_outcome_prob}

Let us start by analysing protocols obeying
Definition~\ref{def:gaussian_protocol_bounded_squeezing}, with the signal
Hamiltonian replaced by $\hat H_{\mathrm{step}}(t)$. We recall that such protocols consist of preparing single-mode Gaussian states with squeezing at most $z$, applying Gaussian unitaries that introduce no additional squeezing, namely phase-space rotations and displacements, and performing Gaussian measurements. All these operations may be carried out at arbitrary times during the Hamiltonian evolution, and may be chosen adaptively as a function of the measurement outcomes obtained so far.

Each Gaussian measurement (as in Eq.~\eqref{dist_gauss_meas}) produces an outcome $\mathbf{r}\in\mathbb{R}^2$. Conditioned on the underlying signal, both outcome distributions are Gaussian. The purpose of this subsection is to show that, for lower-bound purposes, every such outcome distribution can be reduced to a simple canonical form: a Gaussian distribution whose covariance is fixed by the squeezing bound, and whose mean contains the dependence on the unknown signal. We first consider measurements with a Gaussian seed and treat the homodyne case directly at the end of the proof.

First, we focus on a single measurement step. Let $t_{\mathrm{in}}$ be the time at which the learner prepares a Gaussian state with squeezing at most $z$, and let $t_{\mathrm{out}}$ be the time at which the corresponding measurement is performed. We begin with the case in which both $t_{\mathrm{in}}$ and $t_{\mathrm{out}}$ are integers. Let
\bb
    m \coloneqq (m_1,\dots,m_{t_{\mathrm{out}}-t_{\mathrm{in}}})
    \in \mathbb{R}^{t_{\mathrm{out}}-t_{\mathrm{in}}}
\ee
denote the values of the auxiliary signal in Eq.~\eqref{eq:step_reference_signal} during the interval $[t_{\mathrm{in}},t_{\mathrm{out}})$. These are precisely the random variables that determine the Hamiltonian evolution during this measurement step.

The following lemma shows that a sample from the actual outcome distribution can be generated, by a post-processing procedure independent of the unknown vector $m$, from a sample drawn from the Gaussian distribution
\bb\label{simp_form}
    \NN\!\left[
    \begin{pmatrix}
        a\cdot m\\
        b\cdot m
    \end{pmatrix},
    \frac{\mathbb{1}_2}{2z^2}
    \right],
\ee
where $a,b\in[-1,1]^{t_{\mathrm{out}}-t_{\mathrm{in}}}$ are vectors independent of $m$ and satisfy
\[
    a_j^2+b_j^2\le 1
    \qquad
    \forall\,j\in[t_{\mathrm{out}}-t_{\mathrm{in}}].
\]
Here, $\NN[\mathbf{u},V]$ denotes the two-dimensional Gaussian distribution with mean $\mathbf{u}\in\mathbb{R}^2$ and covariance matrix $V\in\mathbb{R}^{2\times2}$. We refer to the vectors $a$ and $b$ as the \emph{rotation coefficients}.

Thus, when proving lower bounds on the minimum sensing time, it is enough to work with the simplified distribution in \eqref{simp_form}. In this reduced description, the learner may choose the rotation coefficients $a$ and $b$, subject only to the constraint above, while the unknown signal enters the observation only through the two linear functionals $a\cdot m$ and $b\cdot m$.
\begin{lemma}[(Simulating the outcome distribution with a simplified distribution)]\label{sec_simplifying_prob}
Consider the above setting, where $z$ is the maximum allowed squeezing in the Gaussian protocol, $t_{\mathrm{in}}\in\mathbb{N}$ is the time at which the Gaussian state is prepared, and $t_{\mathrm{out}}\in\mathbb{N}$ is the time at which the Gaussian measurement is performed. Let $P_m$ denote the outcome distribution of the final measurement, which may be either a Gaussian measurement with a Gaussian seed or an ideal homodyne measurement, conditioned on the unknown vector $m\in\mathbb{R}^{t_{\mathrm{out}}-t_{\mathrm{in}}}$ describing the signal during the interval $[t_{\mathrm{in}},t_{\mathrm{out}})$. Then there exist vectors $a,b\in[-1,1]^{t_{\mathrm{out}}-t_{\mathrm{in}}}$, satisfying $a_j^2+b_j^2\le 1$ for all $j\in[t_{\mathrm{out}}-t_{\mathrm{in}}]$, such that the following holds: there exists an algorithm, independent of $m$, which takes as input a sample
\bb\label{eq_nn_prob}
    \mathbf{r}\sim\NN\!\left[
    \begin{pmatrix}
        a\cdot m\\
        b\cdot m
    \end{pmatrix},
    \frac{\mathbb{1}_2}{2z^2}
    \right]
\ee
and outputs a sample distributed according to $P_m$.

Equivalently, a sample from $P_m$ can be generated by a procedure independent of the unknown vector $m$, starting from a sample drawn from the Gaussian distribution in \eqref{eq_nn_prob}. In particular, the dependence on $m$ is entirely encoded in the mean, while the covariance is fixed by the squeezing parameter $z$.
\end{lemma}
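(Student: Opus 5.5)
We will track the experiment in the Heisenberg picture at the level of first moments and covariance matrices. The input is a Gaussian state with mean $\mathbf{m}_0$ and covariance $V_0=(2\nu+1)O_\theta Z^2 O_\theta^\intercal$ with squeezing at most $z$. Between preparation and measurement the learner interleaves rotations and displacements with the signal evolution. Over each unit interval $[t_{\mathrm{in}}+j-1,t_{\mathrm{in}}+j)$ the signal is constant, equal to $m_j$. Any rotations applied inside that interval commute with nothing in general, but the net effect is still computable: a displacement $e^{-i\,\delta\hat p}$ sandwiched between rotations $O$ acts as a displacement along the rotated direction $O(1,0)^\intercal\delta$.

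Hence the final state is Gaussian with covariance $O_{\mathrm{tot}}V_0O_{\mathrm{tot}}^\intercal$, where $O_{\mathrm{tot}}$ is the product of all rotations. Its mean is
\[
O_{\mathrm{tot}}\mathbf{m}_0+\mathbf{d}+\sum_j m_j\,\mathbf{u}_j ,
\]
where $\mathbf{d}$ collects the learner's displacements and each $\mathbf{u}_j\in\mathbb{R}^2$ is an average over the interval of unit vectors. Concretely, $\mathbf{u}_j=\int_0^1 O(s)(1,0)^\intercal\,\mathrm{d}s$, with $O(s)$ the rotation accumulated from time $s$ until the measurement. This gives $\|\mathbf{u}_j\|\le 1$, which will be the constraint $a_j^2+b_j^2\le1$.

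Next, the measurement. For a seeded Gaussian measurement, Eq.~\eqref{dist_gauss_meas} gives outcome law $\NN[\mathbf{c}+\sum_j m_j\mathbf{u}_j,\,W]$, where $\mathbf{c}$ is independent of $m$ and
\[
W=\tfrac12\big(O_{\mathrm{tot}}V_0O_{\mathrm{tot}}^\intercal+V(\sigma)\big)\ge \tfrac12 O_{\mathrm{tot}}V_0O_{\mathrm{tot}}^\intercal \ge \frac{\mathbb{1}_2}{2z^2}.
\]
Here we use $V(\sigma)\ge0$ and the fact that the smallest eigenvalue of $V_0$ is at least $z^{-2}$, since $2\nu+1\ge1$. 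Set $a_j=(\mathbf{u}_j)_1$ and $b_j=(\mathbf{u}_j)_2$. Starting from $\mathbf{r}\sim\NN[(a\cdot m,b\cdot m),\mathbb{1}_2/(2z^2)]$, Lemma~\ref{reduction_via_convolution} lets us add independent noise to reach covariance $W$, and Lemma~\ref{reduction_via_change} lets us add the known offset $\mathbf{c}$. All of these steps are independent of $m$.

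For ideal homodyne detection along a quadrature direction $\mathbf{e}$, the outcome is one-dimensional with mean $\mathbf{e}\cdot(\text{mean})$ and variance $\tfrac12\,\mathbf{e}^\intercal O_{\mathrm{tot}}V_0O_{\mathrm{tot}}^\intercal\mathbf{e}\ge 1/(2z^2)$. We simulate it by first rotating the sample $\mathbf{r}$ with the orthogonal matrix that sends $\mathbf{e}$ to the first axis, which preserves the covariance $\mathbb{1}_2/(2z^2)$ by Lemma~\ref{reduction_via_change}. We then take the first coordinate, add the offset, and add independent one-dimensional noise.

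The main obstacle is the bookkeeping when rotations are applied at arbitrary times inside a unit interval. We must show that the signal contribution is exactly linear in $m$ with weight vectors of norm at most one. This follows from the time-ordered evolution, since the Hamiltonians for different sub-intervals are all displacements whose directions are rotated by the subsequent controls. The learner's displacements and the rotation of the initial mean contribute only $m$-independent offsets. A final subtlety is that adaptive choices made within a single experiment are impossible, because no outcomes arrive before the measurement. Therefore all controls in the experiment are fixed functions of the earlier transcript, and $a$, $b$, $\mathbf{c}$ and $W$ are legitimately independent of $m$.
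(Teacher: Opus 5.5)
Your proposal is correct and follows the same route as the paper. You propagate the first moment and covariance through the interleaved rotations, displacements and signal evolution, and read off the seeded-measurement law from Eq.~\eqref{dist_gauss_meas}. You then reduce the covariance to $\mathbb{1}_2/(2z^2)$ with Lemmas~\ref{reduction_via_convolution} and~\ref{reduction_via_change}, and treat ideal homodyne detection separately by projecting onto the measured quadrature and adding independent noise.

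The differences are only streamlinings. You bound $W\ge \mathbb{1}_2/(2z^2)$ in one step via $\lambda_{\min}(V_0)\ge z^{-2}$, so you can take $(a_j,b_j)=\mathbf{u}_j$ directly instead of the paper's rotated pairs $O_\phi^\intercal O^\intercal(c_j,d_j)^\intercal$. Your formula $\mathbf{u}_j=\int_0^1 O(s)(1,0)^\intercal\,\mathrm{d}s$ also makes explicit the norm bound $a_j^2+b_j^2\le 1$, which the paper states more briefly.
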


\begin{proof}
We first consider the case in which the final measurement has a Gaussian seed $\sigma$, and then treat the case of an ideal homodyne measurement. The Hamiltonian evolution is Gaussian and may be interleaved with arbitrary displacements and phase-space rotations chosen by the learner. Since Gaussian states remain Gaussian under Gaussian unitaries, the state $\rho'$ immediately before the measurement is again Gaussian. Since the measurement itself is Gaussian, Eq.~\eqref{dist_gauss_meas} implies that $P_m$ is a Gaussian probability distribution of the form
\bb
    P_m=\NN\!\left[\mathbf{m}(\rho'),\frac{V(\rho')+V(\sigma)}{2}\right],
\ee
where the mean is given by the first moment $\mathbf{m}(\rho')$, and the covariance matrix is the arithmetic mean of the covariance matrices of the pre-measurement state $\rho'$ and of the Gaussian seed $\sigma$ associated with the measurement.

We now determine the most general form of the first moment $\mathbf{m}(\rho')$ and of the covariance matrix $V(\rho')$. To this end, observe that for any choice of the underlying vector
\bb
    m=(m_1,m_2,\ldots,m_{t_{\mathrm{out}}-t_{\mathrm{in}}})\in\mathbb{R}^{t_{\mathrm{out}}-t_{\mathrm{in}}},
\ee
the Hamiltonian evolution induces a displacement transformation. Indeed, for the evolution over a time $\tau\in(0,1)$ generated by the Hamiltonian $m_1\hat{p}$, with $m_1\in\mathbb{R}$, the corresponding unitary is
\bb
    \exp\left(-i\tau m_1 \hat{p}\right)
    = \exp\left(-i 
    \begin{pmatrix}
        \tau m_1,& \!\!0
    \end{pmatrix}
    \begin{pmatrix}
        0 & 1 \\
        -1 & 0
    \end{pmatrix}
    \begin{pmatrix}
        \hat{x} \\
        \hat{p}
    \end{pmatrix}
    \right)
    = D_{(\tau m_1,\,0)}\,.
\ee
This evolution may be interleaved with arbitrary displacements and phase-space rotations chosen by the learner. As recalled in the preliminary section, the first moment transforms under phase-space rotations and displacements according to
\bb
    \mathbf{m}\!\left(R_\theta\rho R_\theta^\dagger\right)&= O_\theta\,\mathbf{m}(\rho)\qquad\forall\,\theta\in[0,2\pi),\\
    \mathbf{m}\!\left(D_{\mathbf{r}}\rho D_{\mathbf{r}}^\dagger\right)&= \mathbf{m}(\rho)+\mathbf{r}\qquad\forall\,\mathbf{r}\in\mathbb{R}^2.
\ee

\begin{figure}[h!]
    \centering
    \includegraphics[width=0.8\linewidth]{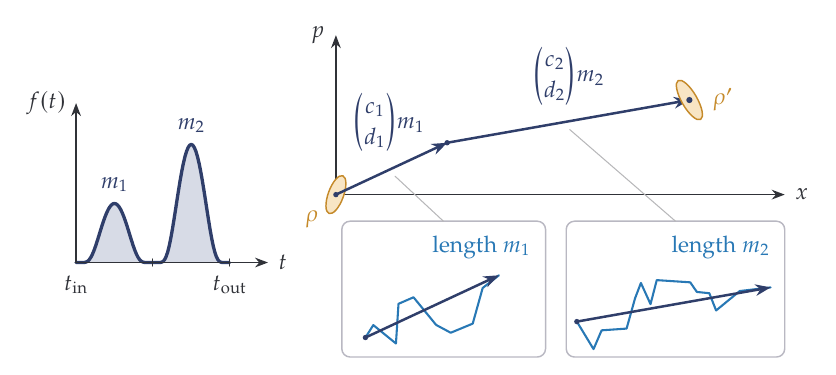}
    \caption{Phase space trajectory of a probe $\rho$, if a state was prepared at $t=0$ and evolved until $t=2$ under the field with $m_1$ and $m_2$ during unit time intervals 1 and 2. Over time, the probe's direction of travel is determined by the learner's phase space rotation it applies, as seen in the light blue paths. In each time unit $j$, the signal forces the probe to travel a length of $m_j$. The resulting dark blue vector $\begin{psmallmatrix} c_j \\ d_j \end{psmallmatrix} m_j$ is the overall displacement of the probe's first moment.}
    \label{fig:probe_displace_cd}
\end{figure}

As pictorially depicted in Fig.~\ref{fig:probe_displace_cd}, it follows that the final first moment $\mathbf{m}(\rho')$ can be written in the form
\bb
    \mathbf{m}(\rho')=
    \begin{pmatrix}
        c\cdot m\\
        d\cdot m
    \end{pmatrix}
    +\mathbf{r}_0,
\ee
for some vectors $c,d\in[-1,1]^{t_{\mathrm{out}}-t_{\mathrm{in}}}$ satisfying $c_j^2+d_j^2\le 1$ for all $j\in[t_{\mathrm{out}}-t_{\mathrm{in}}]$, and for some vector $\mathbf{r}_0\in\mathbb{R}^2$. All these quantities are independent of $m$. The vectors $c$ and $d$, as well as $\mathbf{r}_0$, depend on the phase-space rotations and displacements applied by the learner during the evolution, together with the first moment of the initial Gaussian state $\rho$. The bound $c_j^2+d_j^2\le 1$ follows because the contribution of the $j$th time bin is obtained from a displacement of Euclidean norm $|m_j|$ by composing rotations; hence the coefficient pair associated with each individual variable $m_j$ has Euclidean norm at most $1$. Intuition for these constraints is depicted in an example of the probe's movement in \cref{fig:probe_displace_cd}.

Similarly, the covariance matrix is invariant under displacements, namely
\bb
   V\!\left(D_{\mathbf{r}}\rho D_{\mathbf{r}}^\dagger\right)= V(\rho)\qquad\forall\,\mathbf{r}\in\mathbb{R}^2,
\ee
and transforms under rotations as
\bb
    V\!\left(R_\theta\rho R_\theta^\dagger\right)= O_\theta V(\rho) O_\theta^\intercal\qquad\forall\,\theta\in[0,2\pi).
\ee
Therefore the final covariance matrix has the form
\bb
    V(\rho')=OV(\rho)O^\intercal,
\ee
where $V(\rho)$ is the covariance matrix of the initial Gaussian state $\rho$, and $O$ is a suitable orthogonal rotation matrix. Importantly, $O$ depends on the phase-space rotations chosen by the learner, but is independent of $m$.

We conclude that the outcome distribution of the Gaussian measurement is of the form
\bb\label{eqxz}
    P_m=
    \NN\!\left[
    \begin{pmatrix}
        c\cdot m\\
        d\cdot m
    \end{pmatrix}
    +\mathbf{r}_0,\,
    \frac{OV(\rho)O^\intercal+V(\sigma)}{2}
    \right].
\ee
Thus, $P_m$ is Gaussian, with mean $\begin{pmatrix}
        c\cdot m\\
        d\cdot m
    \end{pmatrix}
    +\mathbf{r}_0$ and covariance matrix $\frac{OV(\rho)O^\intercal+V(\sigma)}{2}$. Moreover, all the quantities $c,d,\mathbf{r}_0,O,V(\rho)$, and $V(\sigma)$ are independent of the unknown vector $m$.

Since
\bb
    \frac{OV(\rho)O^\intercal+V(\sigma)}{2}\ge \frac{OV(\rho)O^\intercal}{2},
\ee
Lemma~\ref{reduction_via_convolution} implies that a sample from $P_m$ can be obtained by applying a post-processing procedure, independent of $m$, to a sample drawn from the Gaussian distribution
\bb
    \NN\!\left[
    \begin{pmatrix}
        c\cdot m\\
        d\cdot m
    \end{pmatrix}
    +\mathbf{r}_0,\,
    \frac{OV(\rho)O^\intercal}{2}
    \right].
\ee
Next, we apply Lemma~\ref{reduction_via_change} with the change of variable $\mathbf{r}\mapsto O^\intercal(\mathbf{r}-\mathbf{r}_0)$. This shows that the above distribution can be simulated from
\bb
    \NN\!\left[
    O^\intercal
    \begin{pmatrix}
        c\cdot m\\
        d\cdot m
    \end{pmatrix},
    \frac{V(\rho)}{2}
    \right].
\ee
We now use Eq.~\eqref{eq_covsss}, which gives the general parametrization of the covariance matrix of a single-mode Gaussian state. Namely,
\bb\label{most_general_v}
    V(\rho)
    =
    (2\nu+1)\,O_\phi
    \begin{pmatrix}
        \bar z^{-2} & 0\\
        0 & \bar z^2
    \end{pmatrix}
    O_\phi^\intercal,
\ee
where $\nu\ge 0$ is the mean photon number, $\phi\in[0,2\pi)$ is a rotation angle, and $\bar z\in[1,z]$ is the squeezing parameter chosen by the learner. The restriction $\bar z\in[1,z]$ follows from the assumption that the protocol has squeezing bounded by $z$.

Substituting \eqref{most_general_v} into the previous Gaussian distribution, we find that $P_m$ can be simulated from
\bb
    \NN\!\left[
    O^\intercal
    \begin{pmatrix}
        c\cdot m\\
        d\cdot m
    \end{pmatrix},
    \left(\nu+\frac12\right)
    O_\phi
    \begin{pmatrix}
        \bar z^{-2} & 0\\
        0 & \bar z^2
    \end{pmatrix}
    O_\phi^\intercal
    \right].
\ee
Applying Lemma~\ref{reduction_via_change} once again, now with the change of variable $\mathbf{r}\mapsto O_\phi^\intercal\mathbf{r}$, we obtain that this distribution can in turn be simulated from
\bb
    \NN\!\left[
    O_\phi^\intercal O^\intercal
    \begin{pmatrix}
        c\cdot m\\
        d\cdot m
    \end{pmatrix},
    \left(\nu+\frac12\right)
    \begin{pmatrix}
        \bar z^{-2} & 0\\
        0 & \bar z^2
    \end{pmatrix}
    \right].
\ee
Now observe that
\bb
    \left(\nu+\frac12\right)
    \begin{pmatrix}
        \bar z^{-2} & 0\\
        0 & \bar z^2
    \end{pmatrix}
    \ge
    \frac{\mathbb{1}_2}{2z^2},
\ee
because $\nu+\frac12\ge \frac12$, $\bar z\le z$, and $\bar z\ge 1$. Therefore, by another application of Lemma~\ref{reduction_via_convolution}, we conclude that the distribution $P_m$ can be simulated from
\bb
    \NN\!\left[
    O_\phi^\intercal O^\intercal
    \begin{pmatrix}
        c\cdot m\\
        d\cdot m
    \end{pmatrix},
    \frac{\mathbb{1}_2}{2z^2}
    \right].
\ee
The corresponding simulation procedure is independent of the unknown vector $m$.

Finally, since $c,d\in[-1,1]^{t_{\mathrm{out}}-t_{\mathrm{in}}}$ satisfy $c_j^2+d_j^2\le 1$ for all $j$, and since $O_\phi^\intercal O^\intercal$ is a rotation matrix, the mean vector can be rewritten as
\bb
    O_\phi^\intercal O^\intercal
    \begin{pmatrix}
        c\cdot m\\
        d\cdot m
    \end{pmatrix}
    =
    \begin{pmatrix}
        a\cdot m\\
        b\cdot m
    \end{pmatrix}
\ee
for suitable vectors $a,b\in[-1,1]^{t_{\mathrm{out}}-t_{\mathrm{in}}}$ satisfying $a_j^2+b_j^2\le 1$ for all $j\in[t_{\mathrm{out}}-t_{\mathrm{in}}]$. This proves the claim when the final measurement has a Gaussian seed.

It remains to consider an ideal homodyne measurement. Up to an invertible
rescaling of its outcome, independent of the unknown signal, such a
measurement measures a quadrature $u^\intercal\hat{\mathbf{R}}$ for some
$u\in\mathbb{R}^2$ with $\|u\|_2=1$. Using the expressions derived above for
the first moment and covariance matrix of $\rho'$, the corresponding
normalized homodyne outcome $Y_m$ is distributed as
\bb
    Y_m
    \sim
    \NN\!\left[
        a\cdot m+u^\intercal\mathbf{r}_0,\,
        \sigma_u^2
    \right],
\ee
where
$a_j\coloneqq u^\intercal(c_j,d_j)^\intercal$ and
$\sigma_u^2\coloneqq \frac12u^\intercal OV(\rho)O^\intercal u$.
By the Cauchy--Schwarz inequality,
$a_j^2\le c_j^2+d_j^2\le1$ for every $j$. Moreover,
Eq.~\eqref{most_general_v} implies $V(\rho)\ge z^{-2}\mathbb{1}_2$, and
therefore $\sigma_u^2\ge1/(2z^2)$. Set $b_j\coloneqq0$ for every $j$. The
first coordinate of a sample from the Gaussian distribution in
Eq.~\eqref{eq_nn_prob} is then distributed as
$\NN[a\cdot m,1/(2z^2)]$. By adding the known shift
$u^\intercal\mathbf{r}_0$ and an independent centered Gaussian random
variable of variance $\sigma_u^2-1/(2z^2)\ge0$, and by undoing the initial
rescaling if necessary, one obtains exactly the homodyne outcome
distribution $P_m$. This post-processing is independent of $m$. Hence the
homodyne outcome distribution can also be simulated from the Gaussian
distribution in Eq.~\eqref{eq_nn_prob}, which proves the claim.

\end{proof}

The same reasoning extends to the case in which $t_{\mathrm{in}}$ and
$t_{\mathrm{out}}$ are not necessarily integers, for both types of
measurements considered above. If both $t_{\mathrm{in}}$ and
$t_{\mathrm{out}}$ are integers, the result follows directly from
Lemma~\ref{sec_simplifying_prob}. We therefore only need to consider the
case in which at least one of them is not an integer.

The only modification is that the rotation coefficients $(a_1,b_1)$ and
$(a_k,b_k)$ corresponding to the first and last time bins, respectively,
must reflect the fact that these bins may be only partially traversed during
the evolution. Indeed, the proof is identical, except that the first and last
displacement contributions are weighted by the corresponding traversal times
rather than necessarily by a full unit time.

Specifically, let
$k\coloneqq \lceil t_{\mathrm{out}}\rceil-\lfloor t_{\mathrm{in}}\rfloor$
and let $m\coloneqq(m_1,\ldots,m_k)$ be the values of the signal
over the interval $[t_{\mathrm{in}},t_{\mathrm{out}})$.

If $k\ge2$, so that the evolution intersects at least two distinct time bins,
then a sample from the outcome distribution $P_m$ can be simulated from the
Gaussian distribution
\bb\label{eq_nonint1}
    \NN\!\left[
    \begin{pmatrix}
        a\cdot m\\
        b\cdot m
    \end{pmatrix},
    \frac{\mathbb{1}_2}{2z^2}
    \right],
\ee
where $a,b\in[-1,1]^k$ are independent of $m$ and satisfy
$\sqrt{a_1^2+b_1^2}\le
\lfloor t_{\mathrm{in}}\rfloor+1-t_{\mathrm{in}}$,
$a_j^2+b_j^2\le1$ for all $j=2,\ldots,k-1$, and
$\sqrt{a_k^2+b_k^2}\le
t_{\mathrm{out}}-\lceil t_{\mathrm{out}}\rceil+1$.

If instead $k=1$, then the evolution is entirely contained in a single time
bin. In that case, the signal is determined by a single value
$m_1$, and a sample from $P_m$ can be simulated from the Gaussian distribution
\bb\label{eq_nonint2}
    \NN\!\left[
    \begin{pmatrix}
        a_1m_1\\
        b_1m_1
    \end{pmatrix},
    \frac{\mathbb{1}_2}{2z^2}
    \right],
\ee
where $a_1,b_1\in[-1,1]$ satisfy
$\sqrt{a_1^2+b_1^2}\le t_{\mathrm{out}}-t_{\mathrm{in}}$.

\subsection{Simplified access model}

In the previous subsection, we showed that the outcome distribution associated with a single measurement step---namely, the preparation of a Gaussian input state with squeezing bounded by $z$ at time $t=t_{\mathrm{in}}$, followed by Hamiltonian evolution interleaved with phase-space rotations and displacements, and ending with a Gaussian measurement at time $t=t_{\mathrm{out}}$---can be reduced to a Gaussian distribution with covariance matrix $\frac{\mathbb{1}_2}{2z^2}$ and mean encoding the relevant information about the unknown signal.

We now show that, at the cost of increasing the squeezing parameter by a factor of $\sqrt{5}$, it is enough to restrict the analysis to input times $t_{\mathrm{in}}$ and measurement times $t_{\mathrm{out}}$ satisfying one of the following conditions:
\begin{itemize}
    \item \emph{Condition 1}: there exists $n\in\mathbb{N}$ such that
    \bb\label{cond_1}
        n \le t_{\mathrm{in}} \le t_{\mathrm{out}} \le n+1.
    \ee
    In other words, the evolution is entirely contained within a single unit time interval. In this case, the rotation coefficients satisfy $a,b\in\mathbb{R}$ and
    \bb
        \sqrt{a^2+b^2}\le t_{\mathrm{out}}-t_{\mathrm{in}}.
    \ee
    
    \item \emph{Condition 2}: $t_{\mathrm{in}},t_{\mathrm{out}}\in\mathbb{N}$ and there exists $n\in\mathbb{N}$ such that
    \bb\label{cond_2}
        nT \le t_{\mathrm{in}} \le t_{\mathrm{out}} \le (n+1)T.
    \ee
    In other words, the Gaussian state is prepared at an integer time $t_{\mathrm{in}}$, the Gaussian measurement is performed at an integer time $t_{\mathrm{out}}$, and both times belong to the same time block of length $T$. In this case, the rotation coefficients satisfy $a,b\in\mathbb{R}^{\,t_{\mathrm{out}}-t_{\mathrm{in}}}$ and
    \bb
        \sqrt{a_j^2+b_j^2}\le 1
        \qquad\forall\, j\in[t_{\mathrm{out}}-t_{\mathrm{in}}].
    \ee
    
    \item \emph{Condition 3}: there exist $n,k\in\mathbb{N}$ with $k>n$ such that
    \bb\label{cond_3}
        t_{\mathrm{in}}=nT
        \qquad\text{and}\qquad
        t_{\mathrm{out}}=kT.
    \ee
    In other words, the Gaussian state is prepared at the beginning of a time block of length $T$, and the Gaussian measurement is performed at the end of a later time block of length $T$. In this case, the rotation coefficients satisfy $a,b\in\mathbb{R}^{\,t_{\mathrm{out}}-t_{\mathrm{in}}}$ and
    \bb
        \sqrt{a_j^2+b_j^2}\le 1
        \qquad\forall\, j\in[t_{\mathrm{out}}-t_{\mathrm{in}}].
    \ee
\end{itemize}

See \cref{fig:3_cont_sqrt5}(a) for a visual of these three conditions. Next, we show that we may work with the following access model.

\begin{figure}
    \centering
    \includegraphics[width=\linewidth]{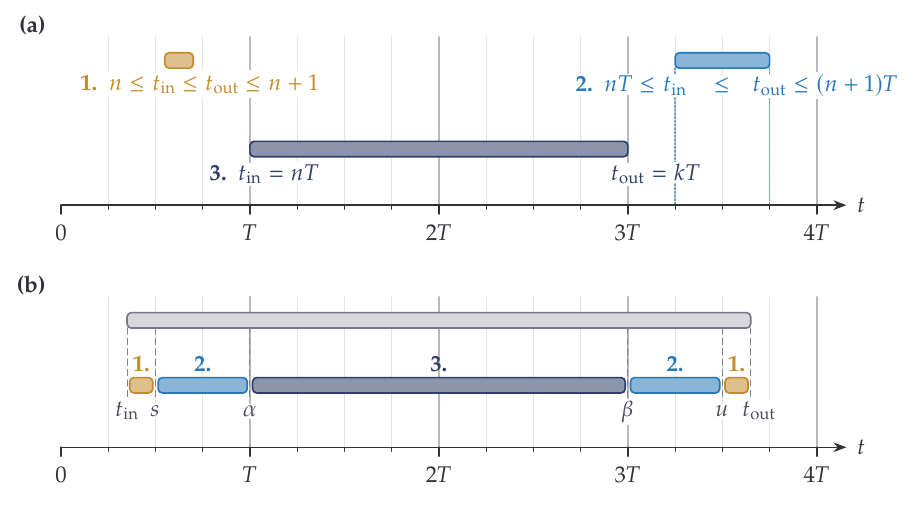}
    \caption{(a) The three conditions of Eqs.~\eqref{cond_1}, \eqref{cond_2} and \eqref{cond_3}. In Condition 1, the evolution lies within a single unit time bin. In Condition 2, the start and end times are integers in the same block of $T$. For Condition 3, the segment starts and ends at block boundaries. (b) In the proof of Lemma~\ref{lemma_access_model}, an arbitrary step $[t_{\mathrm{in}},t_{\mathrm{out}})$ is split into at most $5$ segments: $[t_{\mathrm{in}},s)\cup[s,\alpha)\cup[\alpha, \beta)\cup[\beta,u)\cup[u,t_{\mathrm{out}})$, each satisfying one of the three conditions. To simulate the final measurement of $[t_{\mathrm{in}},t_{\mathrm{out}})$ with covariance $\frac{\mathbb{1}_2}{2z^2}$, each of these 5 segments must be measured with covariance $\frac{\mathbb{1}_2}{10z^2}$. The cost of using this access model is a factor of $\sqrt{5}$ in squeezing.}
    \label{fig:3_cont_sqrt5}
\end{figure}

\begin{lemma}[(Simplified access model)]\label{lemma_access_model}
Any Gaussian protocol with squeezing bounded by $z$, performing sensing of the auxiliary Hamiltonian $\hat H_{\mathrm{step}}(t)$ in
Eq.~\eqref{eq:step_reference_hamiltonian} over a total sensing time $T_{\mathrm{sensing}}$, can be simulated by an access model of the following form:
\begin{itemize}
    \item At time $t=0$, the learner chooses an evolution time $t_1$ such that the pair $(0,t_1)$ satisfies one of Conditions 1, 2, or 3, and chooses the corresponding rotation coefficients $a^{(1)},b^{(1)}$. The learner then receives an outcome sampled from the Gaussian distribution
    \bb
        \NN\!\left[
        \begin{pmatrix}
            a^{(1)}\cdot m^{(1)}\\
            b^{(1)}\cdot m^{(1)}
        \end{pmatrix},
        \frac{\mathbb{1}_2}{10z^2}
        \right],
    \ee
    where $m^{(1)}$ denotes the unknown vector describing the signal during the interval $[0,t_1)$.
    
    \item At time $t=t_1$, based on the outcome obtained so far, the learner adaptively chooses a new evolution time $t_2$ such that the pair $(t_1,t_1+t_2)$ satisfies one of Conditions 1, 2, or 3, and chooses the corresponding rotation coefficients $a^{(2)},b^{(2)}$. The learner then receives an outcome sampled from the Gaussian distribution
    \bb
        \NN\!\left[
        \begin{pmatrix}
            a^{(2)}\cdot m^{(2)}\\
            b^{(2)}\cdot m^{(2)}
        \end{pmatrix},
        \frac{\mathbb{1}_2}{10z^2}
        \right],
    \ee
    where $m^{(2)}$ denotes the unknown vector describing the signal during the interval $[t_1,t_1+t_2)$.
    
    \item This process is then repeated adaptively.
    
    \item Finally, at time $t=t_1+\cdots+t_{N-1}$, the learner chooses the last evolution time $t_N$ so that the pair
    \bb
        \bigl(t_1+\cdots+t_{N-1},\,t_1+\cdots+t_N\bigr)
    \ee
    satisfies one of Conditions 1, 2, or 3, and chooses the corresponding rotation coefficients $a^{(N)},b^{(N)}$. The learner then receives an outcome sampled from the Gaussian distribution
    \bb
        \NN\!\left[
        \begin{pmatrix}
            a^{(N)}\cdot m^{(N)}\\
            b^{(N)}\cdot m^{(N)}
        \end{pmatrix},
        \frac{\mathbb{1}_2}{10z^2}
        \right],
    \ee
    where $m^{(N)}$ denotes the unknown vector describing the signal during the interval $\bigl[t_1+\cdots+t_{N-1},\,t_1+\cdots+t_N\bigr)$.
\end{itemize}
\end{lemma}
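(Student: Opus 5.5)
The plan is to simulate each single measurement step of the original protocol, taken over an arbitrary interval $[t_{\mathrm{in}},t_{\mathrm{out}})$, by at most five consecutive canonical steps in the access model, and then to chain these simulations adaptively. Lemma~\ref{sec_simplifying_prob} and its non-integer extension in Eqs.~\eqref{eq_nonint1}--\eqref{eq_nonint2} tell us that the outcome of such a step can be produced from a single sample
\[
\mathbf{r}\sim\NN\!\left[\begin{pmatrix}a\cdot m\\ b\cdot m\end{pmatrix},\frac{\mathbb{1}_2}{2z^2}\right],
\]
using post-processing that does not depend on $m$. Here the coefficients $(a,b)$ are fixed by the learner's choices, the first and last entries are weighted by the fractions of the boundary bins that are actually traversed, and all other pairs satisfy $a_j^2+b_j^2\le1$. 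It therefore suffices to generate such an $\mathbf r$ from canonical outcomes.

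\textbf{Splitting the interval.} Define $s\coloneqq\lceil t_{\mathrm{in}}\rceil$, $\alpha\coloneqq T\lceil s/T\rceil$, $u\coloneqq\lfloor t_{\mathrm{out}}\rfloor$ and $\beta\coloneqq T\lfloor u/T\rfloor$. Whenever these points are out of order, for instance when the interval sits inside a single unit bin or a single block, we merge or drop the degenerate pieces. This partitions the interval as
\[
[t_{\mathrm{in}},t_{\mathrm{out}})=[t_{\mathrm{in}},s)\cup[s,\alpha)\cup[\alpha,\beta)\cup[\beta,u)\cup[u,t_{\mathrm{out}}),
\]
with at most five nonempty pieces. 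The two fractional end pieces satisfy Condition~1. The pieces $[s,\alpha)$ and $[\beta,u)$ have integer endpoints inside one block, so they satisfy Condition~2. The middle piece runs from one block boundary to another, so it satisfies Condition~3. A short case analysis confirms that each degenerate configuration still yields at most five pieces, each meeting one of the three conditions.

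\textbf{Recombining the pieces.} The signal vector $m$ splits as a concatenation $m=(m^{[1]},\ldots,m^{[5]})$, and likewise $a=(a^{[1]},\ldots,a^{[5]})$ and $b=(b^{[1]},\ldots,b^{[5]})$. On piece $\ell$ we request the canonical outcome with coefficients $(a^{[\ell]},b^{[\ell]})$, which gives independent samples $\mathbf r_\ell\sim\NN[(a^{[\ell]}\cdot m^{[\ell]},\,b^{[\ell]}\cdot m^{[\ell]}),\mathbb{1}_2/(10z^2)]$. These coefficients obey the constraints required by the corresponding condition: for Condition~1 the weight is at most the piece length, and for Conditions~2 and~3 each pair has norm at most one. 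By Lemma~\ref{conv_gauss}, the sum $\sum_\ell\mathbf r_\ell$ has mean $(a\cdot m,b\cdot m)$ and covariance $k\,\mathbb{1}_2/(10z^2)$, where $k\le5$ is the number of pieces. Since this covariance is at most $\mathbb{1}_2/(2z^2)$, Lemma~\ref{reduction_via_convolution} lets us add independent Gaussian noise to reach covariance exactly $\mathbb{1}_2/(2z^2)$. This reproduces the required $\mathbf r$ without using any knowledge of $m$, and we then apply the $m$-independent post-processing from Lemma~\ref{sec_simplifying_prob}.

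\textbf{Adaptivity and the expected obstacle.} The coefficients $(a,b)$ and the times $t_{\mathrm{in}},t_{\mathrm{out}}$ are functions of the earlier transcript, so the simulator can select the pieces and their coefficients adaptively as well. The pieces of different original steps occupy disjoint, consecutive time intervals, which keeps the simulated sensing time equal to the original one; idle gaps can be filled with dummy Condition~1 or~2 steps whose outcomes are discarded. I expect the main obstacle to be the bookkeeping of the boundary configurations. One must check that the fractional weights of the first and last bins in Eqs.~\eqref{eq_nonint1}--\eqref{eq_nonint2} fit within the Condition~1 length bound, and that when $t_{\mathrm{in}}$ and $t_{\mathrm{out}}$ lie in the same unit bin, the case $k=1$, the whole step collapses to a single Condition~1 piece. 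I would handle this by enumerating the cases according to whether $t_{\mathrm{in}}$ and $t_{\mathrm{out}}$ share a unit bin, share a block, or neither.
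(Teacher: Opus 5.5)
Your proposal is correct and is essentially the paper's proof. It uses the same cut points $s=\lceil t_{\mathrm{in}}\rceil$, $\alpha$, $\beta$, $u=\lfloor t_{\mathrm{out}}\rfloor$, which give at most five pieces satisfying Conditions~1--3; the paper enumerates the degenerate configurations as four explicit cases, with $L\le 3$ split at integer times, rather than merging pieces as you do. Your recombination is equivalent to the paper's: you sum the $k\le5$ canonical samples and then add noise up to $\mathbb{1}_2/(2z^2)$, while the paper first splits $\mathbb{1}_2/(2z^2)$ into $r$ equal shares via Lemma~\ref{reduction_via_marg} and then raises each share from $\mathbb{1}_2/(10z^2)$ via Lemma~\ref{reduction_via_convolution}.
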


\begin{proof}
Consider a single measurement step, with preparation time $t_{\mathrm{in}}$ and measurement time $t_{\mathrm{out}}$. By Lemma~\ref{sec_simplifying_prob}, together with the extension to non-integer endpoints given in \eqref{eq_nonint1} and \eqref{eq_nonint2}, its outcome distribution can be simulated by
\bb\label{eq_single_step_marg_start}
    \NN\!\left[
    \begin{pmatrix}
        a\cdot m\\
        b\cdot m
    \end{pmatrix},
    \frac{\mathbb{1}_2}{2z^2}
    \right],
\ee
where $a,b\in\mathbb{R}^L$ are the corresponding rotation coefficients,
\bb
    L\coloneqq \lceil t_{\mathrm{out}}\rceil-\lfloor t_{\mathrm{in}}\rfloor,
\ee
and $m\in\mathbb{R}^L$ is the unknown vector describing the signal during the interval $[t_{\mathrm{in}},t_{\mathrm{out}})$.

We claim that the interval $[t_{\mathrm{in}},t_{\mathrm{out}})$ can always be partitioned into at most five subintervals, each satisfying one of Conditions 1, 2, or 3. Once such a partition is fixed, let
\bb
    [t_{\mathrm{in}},t_{\mathrm{out}})=I_1\cup\cdots\cup I_r
\ee
be the corresponding decomposition into consecutive subintervals, with $r\le 5$. For each $\ell\in[r]$, let $m^{(\ell)}$ denote the vector of values of the signal on $I_\ell$, and let $a^{(\ell)}$ and $b^{(\ell)}$ be the corresponding restrictions of the rotation coefficients. By construction, these restricted coefficients satisfy the admissibility constraints prescribed by the corresponding condition. Moreover,
\bb
    a\cdot m=\sum_{\ell=1}^r a^{(\ell)}\cdot m^{(\ell)},
    \qquad
    b\cdot m=\sum_{\ell=1}^r b^{(\ell)}\cdot m^{(\ell)}.
\ee
Therefore, by Lemma~\ref{reduction_via_marg}, the Gaussian distribution in \eqref{eq_single_step_marg_start} can be simulated by the product distribution
\bb\label{eq_tensor_cov_r}
    \bigotimes_{\ell=1}^r
    \NN\!\left[
    \begin{pmatrix}
        a^{(\ell)}\cdot m^{(\ell)}\\
        b^{(\ell)}\cdot m^{(\ell)}
    \end{pmatrix},
    \frac{\mathbb{1}_2}{2rz^2}
    \right].
\ee
Since $r\le 5$, we have $\frac{\mathbb{1}_2}{2rz^2}\ge \frac{\mathbb{1}_2}{10z^2}$. Hence, by Lemma~\ref{reduction_via_convolution}, each factor in \eqref{eq_tensor_cov_r} can in turn be simulated from the corresponding Gaussian distribution with covariance matrix $\frac{\mathbb{1}_2}{10z^2}$. Thus the original measurement step can be simulated by at most five measurement steps of the form described in the statement. In the simulated protocol, one first collects these at most five sub-outcomes, then classically post-processes them---via a procedure independent of the unknown signal---to reconstruct a sample from the original outcome distribution, and only after that proceeds to the next adaptive choice. Repeating this replacement independently at each step of the protocol therefore preserves the original adaptive behavior and yields the claimed access model.

It remains to prove the existence of the above partition. Set
\bb
    s\coloneqq \lceil t_{\mathrm{in}}\rceil,
    \qquad
    u\coloneqq \lfloor t_{\mathrm{out}}\rfloor.
\ee
We distinguish the following cases.

\smallskip
\noindent
\emph{Case 1:} $L\le 3$. In this case, the interval $[t_{\mathrm{in}},t_{\mathrm{out}})$ intersects at most three unit time bins. We split it at the intermediate integer times between $t_{\mathrm{in}}$ and $t_{\mathrm{out}}$. This yields at most three consecutive subintervals, each entirely contained in a single unit time interval. Hence every subinterval satisfies Condition 1.

\smallskip
\noindent
\emph{Case 2:} $L\ge 4$ and $s$ and $u$ belong to the same block of length $T$. Equivalently, there exists $n\in\mathbb{N}$ such that
\bb
    nT\le s\le u\le (n+1)T.
\ee
In this case we decompose
\bb
    [t_{\mathrm{in}},t_{\mathrm{out}})
    =
    [t_{\mathrm{in}},s)\cup[s,u)\cup[u,t_{\mathrm{out}}),
\ee
omitting any empty interval. The first and last intervals satisfy Condition 1, while the middle interval satisfies Condition 2.

\smallskip
\noindent
\emph{Case 3:} $L\ge 4$, the integers $s$ and $u$ do not belong to the same block of length $T$, and there is no full block strictly contained between them. Define
\bb
    \alpha \coloneqq T\Bigl\lceil \frac{s}{T}\Bigr\rceil,
    \qquad
    \beta \coloneqq T\Bigl\lfloor \frac{u}{T}\Bigr\rfloor.
\ee
Since $s$ and $u$ lie in different blocks and there is no full block strictly between them, we have $\alpha=\beta$. We then decompose
\bb
    [t_{\mathrm{in}},t_{\mathrm{out}})
    =
    [t_{\mathrm{in}},s)\cup[s,\alpha)\cup[\alpha,u)\cup[u,t_{\mathrm{out}}),
\ee
again omitting any empty interval. The first and last intervals satisfy Condition 1, while the two middle intervals have integer endpoints and lie inside single blocks of length $T$, so they satisfy Condition 2.

\smallskip
\noindent
\emph{Case 4:} $L\ge 4$ and there is at least one full block strictly contained between $s$ and $u$. With the same definitions
\bb
    \alpha \coloneqq T\Bigl\lceil \frac{s}{T}\Bigr\rceil,
    \qquad
    \beta \coloneqq T\Bigl\lfloor \frac{u}{T}\Bigr\rfloor,
\ee
this case is characterized by $\alpha<\beta$. We decompose
\bb
    [t_{\mathrm{in}},t_{\mathrm{out}})
    =
    [t_{\mathrm{in}},s)
    \cup
    [s,\alpha)
    \cup
    [\alpha,\beta)
    \cup
    [\beta,u)
    \cup
    [u,t_{\mathrm{out}}),
\ee
omitting any empty interval. The first and last intervals satisfy Condition 1. The second and fourth intervals have integer endpoints and lie within single blocks of length $T$, so they satisfy Condition 2. Finally, the middle interval has endpoints that are multiples of $T$, hence it satisfies Condition 3. See \cref{fig:3_cont_sqrt5}(b) for an example decomposition.

In all cases we obtain a decomposition into at most five subintervals satisfying Conditions 1, 2, or 3, as claimed. This concludes the proof.

\end{proof}

\subsection{Hierarchical tree representation of the simplified access model}

The simplified access model of Lemma~\ref{lemma_access_model} is inherently adaptive. After each measurement, the learner may choose the next waiting time and the corresponding rotation coefficients as a function of the full transcript observed so far. It is therefore natural to represent such a protocol by means of a rooted tree: each node records the transcript accumulated up to that point together with the next experiment prescribed by the learner, while each child corresponds to one possible outcome of that experiment. Accordingly, every root-to-leaf path represents one possible execution of the adaptive protocol. Similar representation trees have recently been used in quantum learning theory to model adaptive quantum algorithms~\cite{chen2021exponential,huang2022quantum,chen2022complexitynisq,Chen_2024,chen2025efficientpaulichannelestimation}.

Lemma~\ref{lemma_access_model} naturally leads to a nested tree representation because it reduces every measurement step to one of three canonical types, namely Conditions~1, 2, and~3 of the previous subsection. These three conditions correspond to three temporal resolutions, and the three trees introduced below simply encode the protocol at those three scales. We call this nested tree representation the \emph{hierarchical tree representation}.

At the coarsest level, Condition~3 in \eqref{cond_3} describes experiments whose input and output times are multiples of $T$, namely intervals of the form $[nT,kT)$. This is the natural scale at which to represent the protocol over the whole sensing time $T_{\mathrm{sensing}}$, and it gives rise to the \emph{full-time tree}. If such an experiment spans several blocks of length $T$, then Lemma~\ref{lemma_access_model} already describes its outcome directly by a Gaussian distribution. If instead it spans exactly one block, then the protocol inside that block may still be adaptive and must be resolved at a finer level. This is the role of Condition~2 in \eqref{cond_2}, which describes experiments with integer input and output times lying inside the same block of length $T$; this gives rise to the \emph{$T$-time tree}. Finally, even inside a $T$-time tree, an experiment of duration exactly one time unit may still contain adaptive structure. This is precisely the situation covered by Condition~1 in \eqref{cond_1}, which describes experiments entirely contained in a single unit-time interval, and it gives rise to the \emph{unit-time tree}. Thus, the full-time tree, the $T$-time tree, and the unit-time tree arise directly from the three classes of intervals identified by Lemma~\ref{lemma_access_model}.

In summary, we introduce three nested trees:
\begin{itemize}
    \item the \emph{full-time tree}, which describes the protocol over the entire sensing time $\left \lceil \frac{T_{\mathrm{sensing}}}{T} \right\rceil T$;
    \item the \emph{$T$-time tree}, which describes the protocol inside a single block of length $T$;
    \item the \emph{unit-time tree}, which describes the protocol inside a single unit-time interval.
\end{itemize}
The full-time tree is the top-level object. Whenever a node of the full-time tree corresponds to an experiment of duration exactly one block of length $T$, its internal adaptive structure is refined by a $T$-time tree. Likewise, whenever a node of a $T$-time tree corresponds to an experiment of duration exactly one time unit, its internal adaptive structure is refined by a unit-time tree.

Figure~\ref{fig:tree} summarizes this organization. The same adaptive protocol is represented at three temporal scales: the full-time tree at the scale of block-to-block evolution, the $T$-time tree at the scale of integer times within a single block of length $T$, and the unit-time tree at the scale of continuous-time evolution within a single unit interval. These three levels correspond exactly to Conditions~3, 2, and~1 of Lemma~\ref{lemma_access_model}.

In all three cases, a node represents the transcript accumulated so far together with the next experiment prescribed by the learner, while a child of that node corresponds to one possible outcome of such an experiment.

\begin{figure}[t]
    \includegraphics[width=\linewidth]{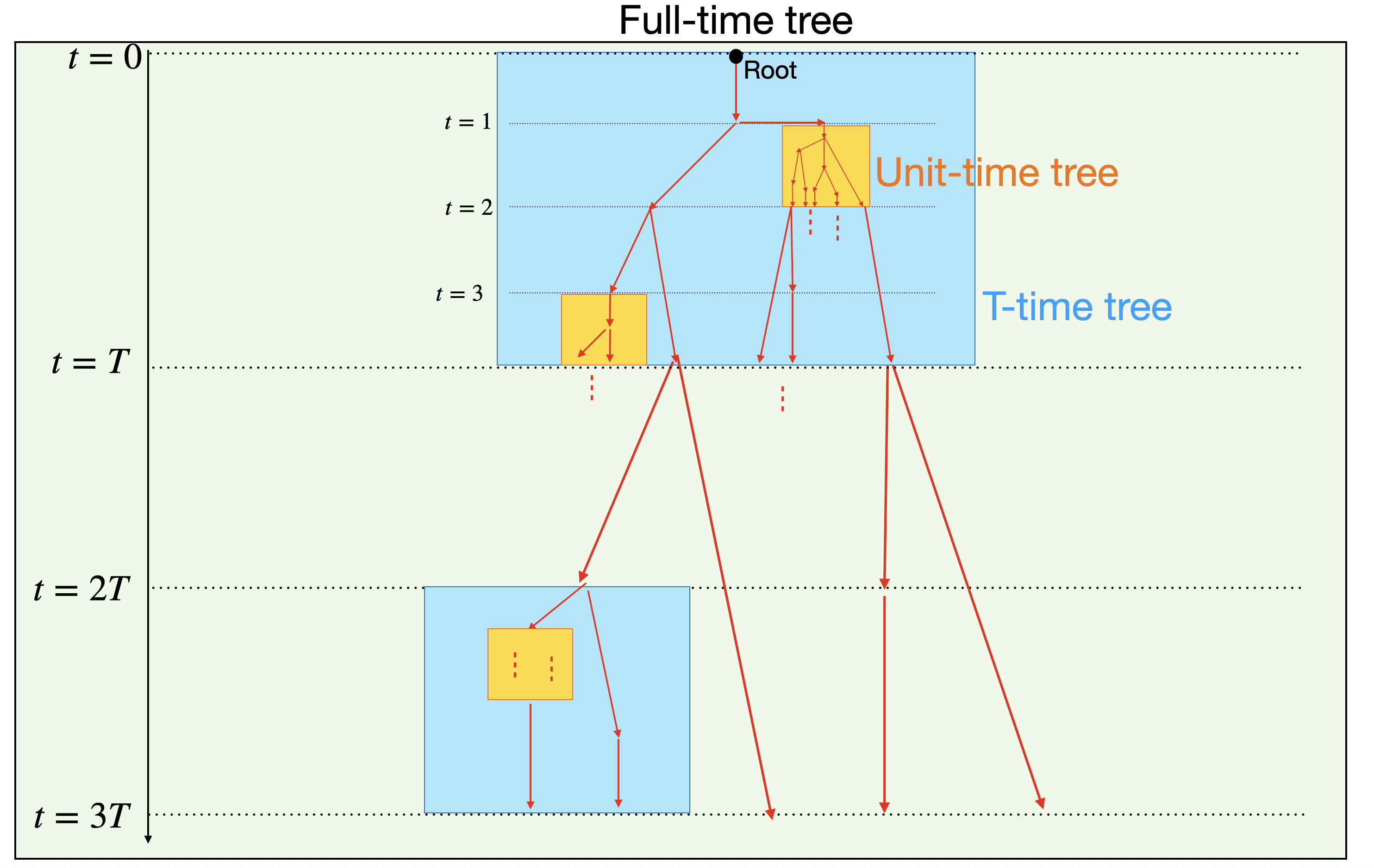}
    \caption{\textbf{Hierarchical tree representation induced by Lemma~\ref{lemma_access_model}.} Any adaptive protocol compatible with Lemma~\ref{lemma_access_model} can be represented by a nested tree structure. At the coarsest level, the \emph{full-time tree} describes the protocol at the scale of blocks of length $T$, namely experiments whose input and output times are multiples of $T$; this is precisely the setting of Condition~3. Whenever such an experiment is confined to a single block of length $T$, its internal adaptive structure is resolved at a finer scale by a \emph{$T$-time tree}, corresponding to Condition~2, where the input and output times are integers within the same block. Finally, whenever an experiment inside a $T$-time tree is confined to a single unit-time interval, its remaining continuous-time adaptive structure is resolved by a \emph{unit-time tree}, corresponding to Condition~1. In this way, the full-time tree, the $T$-time tree, and the unit-time tree provide a hierarchical representation of the same adaptive protocol.}
    \label{fig:tree}
\end{figure}

We now introduce notation that will be used uniformly for all three trees. Let $r$ denote the root of the tree under consideration. Every node $v\neq r$ has a unique parent, which we denote by $u$. Since the tree is rooted, for every node $v$ there is a unique path from the root to $v$. We write this path as
\[
r=u_0^{(v)} \to u_1^{(v)} \to \cdots \to u_{N(v)-1}^{(v)} \to u_{N(v)}^{(v)}=v.
\]
The integer $N(v)$ is called the \emph{tree depth} of $v$. Equivalently, $N(v)$ is the number of experiments that have been performed along the branch from the root to $v$. In particular, $N(r)\coloneqq 0$.

For each $i\in[N(v)]$, the edge $u_{i-1}^{(v)} \to u_i^{(v)}$ corresponds to the $i$-th experiment performed along this branch, together with its observed outcome. We denote that outcome by $\mathbf{y}_i^{(v)}$. Hence,
\[
\mathbf{y}_1^{(v)},\ldots,\mathbf{y}_{N(v)}^{(v)}
\]
is exactly the sequence of outcomes observed along the branch from the root to $v$. This is the transcript available to the learner when the protocol reaches the node $v$.

Besides the tree depth, each node also carries a \emph{time depth}, denoted by $t(v)$. This quantity records how much sensing time has elapsed along the branch from the root of the corresponding tree to the node $v$. Its precise meaning depends on which tree is being considered:
\begin{itemize}
    \item in the full-time tree, $t(v)$ is measured in blocks of length $T$, so that the elapsed reference time is $t(v)T$;
    \item in the $T$-time tree, $t(v)$ is measured in integer time units inside a fixed block of length $T$;
    \item in the unit-time tree, $t(v)$ is measured in ordinary continuous time inside a fixed unit interval.
\end{itemize}
In all three cases, $t(r)\coloneqq 0$, and if $v$ is a child of $u$, then $t(v)$ is defined as the sum of $t(u)$ and the duration of the experiment prescribed at $u$. This notion is local to the tree under consideration. In particular, when a node of the full-time tree is refined by a $T$-time tree, the time depth in the attached $T$-time tree is reset and measured from the beginning of the corresponding block; likewise, when a node of a $T$-time tree is refined by a unit-time tree, the time depth in the attached unit-time tree is reset and measured from the beginning of the corresponding unit interval.

Thus, for any node $v$, the quantity $N(v)$ tells us how many experiments have been performed before reaching $v$, while $t(v)$ tells us how much sensing time has elapsed. The ordered lists
\[
u_0^{(v)},u_1^{(v)},\ldots,u_{N(v)}^{(v)}
\qquad\text{and}\qquad
\mathbf{y}_1^{(v)},\ldots,\mathbf{y}_{N(v)}^{(v)}
\]
record, respectively, the sequence of nodes and the sequence of observed outcomes along the unique branch from the root to $v$.

\subsubsection{Full-time tree}

We begin with the coarsest time scale. A \emph{full-time tree} represents the protocol over the entire sensing time $nT$, where $n\coloneqq \lceil \frac{T_{\mathrm{sensing}}}{T}\rceil $. Its time depth is measured in units of $T$: if a node $v$ has time depth $t(v)\in\{0,1,\ldots,n\}$, then the protocol has already consumed reference time $t(v)T$ by the time it reaches $v$.

This is the natural representation associated with Condition~3 of Lemma~\ref{lemma_access_model}, which describes experiments whose input and output times are both multiples of $T$. Accordingly, at this level each experiment has duration $kT$ for some $k\in\mathbb{N}^+$. When $k\ge 2$, the experiment is described directly by the Gaussian law of Condition~3. When $k=1$, the experiment is confined to a single block of length $T$; in that case, its internal adaptive structure is more naturally described by a $T$-time tree, corresponding to the finer resolution given by Condition~2 inside that block.

We consider a rooted tree, possibly with infinitely many children, since measurement outcomes may take values in continuous spaces. The root $r$ represents the beginning of the protocol and has time depth $t(r)\coloneqq 0$. It specifies the first experiment to be performed. Every non-root node $v$ is determined by the following data:
\begin{itemize}
    \item its parent $u$;
    \item the outcome of the experiment prescribed at $u$, denoted by $\mathbf{y}_{N(v)}^{(v)}$; this is the outcome that determines the child $v$ among the children of $u$;
    \item the next waiting time $\tau(v)\in\mathbb{N}^+$, meaning that the experiment prescribed at $v$ lasts $\tau(v)T$; since the protocol cannot exceed the total sensing time, this quantity must satisfy
    \[
        \tau(v)\le n-t(v);
    \]
    \item the time depth $t(v)$, defined recursively by $t(v)\coloneqq t(u)+\tau(u)$;
    \item the details of the experiment prescribed at $v$.
\end{itemize}

These details depend on the value of $\tau(v)$. If $\tau(v)\ge 2$, then the experiment spans $\tau(v)$ consecutive blocks of length $T$, and its description is given by blockwise rotation coefficients $a^{(v,1)},b^{(v,1)},\ldots,a^{(v,\tau(v))},b^{(v,\tau(v))}\in[-1,1]^T$ satisfying $\sqrt{(a^{(v,i)}_j)^2+(b^{(v,i)}_j)^2}\le 1$ for every $i\in[\tau(v)]$ and every $j\in[T]$. If instead $\tau(v)=1$, then the experiment is confined to a single block of length $T$, and the node stores the description of a \emph{$T$-time tree}, which represents the adaptive protocol executed inside that block. The root of such a $T$-time tree is, by definition, attached to the node $v$ of the full-time tree.

The recursive construction is therefore as follows. The root of the full-time tree specifies the first experiment. Once the outcome of that experiment is observed, one obtains a child of the root. More generally, if the protocol is at a node $u$, then performing the experiment prescribed at $u$ and observing one of its possible outcomes determines a child $v$ of $u$. Repeating this procedure generates a branch of the tree, and the construction continues until the time depth reaches $n$.

The probability law of the outcome associated with an edge $u\to v$ depends on the waiting time of the parent node $u$:
\begin{itemize}
\item If $\tau(u)\ge 2$, let $m^{(u,1)},\ldots,m^{(u,\tau(u))}\in\mathbb{R}^T$ denote the unknown vectors describing, block by block, the signal on the interval
\[
[t(u)T,(t(u)+\tau(u))T)
=
[t(u)T,(t(u)+1)T)\cup\cdots\cup[(t(u)+\tau(u)-1)T,(t(u)+\tau(u))T).
\]
Then, by Condition~3 of Lemma~\ref{lemma_access_model}, the outcome attached to the edge $u\to v$ is a two-dimensional Gaussian random variable distributed as
\begin{equation}\label{tau3}
\mathbf{y}_{N(v)}^{(v)} \sim
\NN\!\left[
\begin{pmatrix}
\sum_{i=1}^{\tau(u)} a^{(u,i)}\!\cdot m^{(u,i)}\\[1mm]
\sum_{i=1}^{\tau(u)} b^{(u,i)}\!\cdot m^{(u,i)}
\end{pmatrix},
\frac{\mathbb{1}_2}{10z^2}
\right].
\end{equation}

\item If instead $\tau(u)=1$, then the node $u$ stores a $T$-time tree. In this case, the outcome associated with the edge $u\to v$ is not a vector in $\mathbb{R}^2$, but rather a \emph{leaf} of the $T$-time tree encoded in $u$, namely the complete transcript produced by the adaptive protocol executed inside the corresponding block of length $T$.
\end{itemize}

Finally, a node $l$ of the full-time tree is called a \emph{leaf} if its time depth is equal to $n$, that is, if $t(l)=n$. Thus, a leaf represents a complete execution of the protocol over the full sensing time $nT$.

\subsubsection{$T$-time tree}

We now move to the intermediate time scale, namely a single block of length $T$. A \emph{$T$-time tree} represents the protocol inside one such block. Its time depth is measured in integer time units: if a node $v$ has time depth $t(v)\in\{0,1,\ldots,T\}$, then the protocol has already consumed time $t(v)$ within the current block by the time it reaches $v$.

This is the natural representation associated with Condition~2 of Lemma~\ref{lemma_access_model}, which describes experiments whose input and output times are integers lying inside the same block of length $T$. Accordingly, each experiment at this level has an integer duration $\tau(v)\in\mathbb{N}^+$. When $\tau(v)\ge 2$, the experiment is described directly by the Gaussian law of Condition~2. When $\tau(v)=1$, the experiment is confined to a single unit-time interval; in that case, its internal continuous-time adaptive structure is described by a unit-time tree, corresponding to the finer resolution given by Condition~1.

We again consider a rooted tree, possibly with infinitely many children. The root $r$ represents the beginning of the protocol inside the fixed block and has time depth $t(r)\coloneqq 0$. It specifies the first experiment to be performed. Every non-root node $v$ is determined by the following data:
\begin{itemize}
    \item its parent $u$;
    \item the outcome of the experiment prescribed at $u$, denoted by $\mathbf{y}_{N(v)}^{(v)}$; this is the outcome that determines the child $v$ among the children of $u$;
    \item the next waiting time $\tau(v)\in\mathbb{N}^+$, meaning that the experiment prescribed at $v$ lasts $\tau(v)$ time units; since the protocol must remain inside the fixed block of length $T$, this quantity must satisfy $\tau(v)\le T-t(v)$;
    \item the time depth $t(v)$, defined recursively by $t(v)\coloneqq t(u)+\tau(u)$;
    \item the details of the experiment prescribed at $v$.
\end{itemize}

These details depend on the value of $\tau(v)$. If $\tau(v)\ge 2$, then the experiment lasts $\tau(v)$ integer time steps and is described by rotation coefficients $a^{(v)},b^{(v)}\in[-1,1]^{\tau(v)}$ satisfying $\sqrt{(a^{(v)}_i)^2+(b^{(v)}_i)^2}\le 1$ for every $i\in[\tau(v)]$. If instead $\tau(v)=1$, then the experiment is confined to a single unit-time interval, and the node stores the description of a \emph{unit-time tree}, which represents the adaptive protocol executed inside that interval. The root of such a unit-time tree is, by definition, attached to the node $v$ of the $T$-time tree.

The recursive construction is the same as before. The root of the $T$-time tree specifies the first experiment. Once the outcome of that experiment is observed, one obtains a child of the root. More generally, if the protocol is at a node $u$, then performing the experiment prescribed at $u$ and observing one of its possible outcomes determines a child $v$ of $u$. Repeating this procedure generates a branch of the tree, and the construction continues until the time depth reaches $T$.

The probability law of the outcome associated with an edge $u\to v$ depends on the waiting time of the parent node $u$:
\begin{itemize}
    \item If $\tau(u)\ge 2$, let $m_u\in\mathbb{R}^{\tau(u)}$ denote the vector of values of the signal on the interval $[t(u),t(u)+\tau(u))$, after reindexing this interval locally. Then, by Condition~2 of Lemma~\ref{lemma_access_model}, the outcome attached to the edge $u\to v$ is a two-dimensional Gaussian random variable distributed as
\begin{equation}\label{tau2}
\mathbf{y}_{N(v)}^{(v)} \sim
\NN\!\left[
\begin{pmatrix}
a^{(u)}\!\cdot m_u\\
b^{(u)}\!\cdot m_u
\end{pmatrix},
\frac{\mathbb{1}_2}{10z^2}
\right].
\end{equation}

\item If instead $\tau(u)=1$, then the node $u$ stores a unit-time tree. In this case, the outcome associated with the edge $u\to v$ is not a vector in $\mathbb{R}^2$, but rather a \emph{leaf} of the unit-time tree encoded in $u$, namely the complete transcript produced by the adaptive protocol executed inside the corresponding unit-time interval.
\end{itemize}

Finally, a node $l$ of the $T$-time tree is called a \emph{leaf} if its time depth is equal to $T$, that is, if $t(l)=T$. Thus, a leaf represents a complete execution of the protocol inside the given block of length $T$.

\subsubsection{Unit-time tree}

We now arrive at the finest time scale, namely a single unit-time interval. A \emph{unit-time tree} represents the protocol inside such an interval. If a node $v$ has time depth $t(v)\in[0,1]$, then the protocol has already consumed time $t(v)$ within the current unit interval by the time it reaches $v$.

This is the natural representation associated with Condition~1 of Lemma~\ref{lemma_access_model}, which describes experiments entirely contained in a single unit-time interval, with arbitrary input and output times inside that interval. At this scale, the learner may still behave adaptively in continuous time.

We again consider a rooted tree, possibly with infinitely many children. The root $r$ represents the beginning of the protocol inside the unit interval and has time depth $t(r)\coloneqq 0$. It specifies the first experiment to be performed. Every non-root node $v$ is determined by the following data:
\begin{itemize}
    \item its parent $u$;
    \item the outcome of the experiment prescribed at $u$, denoted by $\mathbf{y}_{N(v)}^{(v)}$; this is the outcome that determines the child $v$ among the children of $u$;
    \item the next waiting time $\tau(v)\in(0,1]$; since the protocol must remain inside the current unit-time interval, this quantity must satisfy
    \[
        \tau(v)\le 1-t(v);
    \]
    \item the time depth $t(v)$, defined recursively by $t(v)\coloneqq t(u)+\tau(u)$;
    \item the rotation coefficients $a^{(v)},b^{(v)}\in[-1,1]$ of the experiment prescribed at $v$.
\end{itemize}

By Condition~1 of Lemma~\ref{lemma_access_model}, these coefficients satisfy the constraint $\sqrt{(a^{(v)})^2+(b^{(v)})^2}\le \tau(v)$. Thus, unlike the previous two trees, at this scale the description of a node is fully specified by its waiting time and its two rotation coefficients.

The recursive construction is exactly as above. The root of the unit-time tree specifies the first experiment. Once the outcome of that experiment is observed, one obtains a child of the root. More generally, if the protocol is at a node $u$, then performing the experiment prescribed at $u$ and observing one of its possible outcomes determines a child $v$ of $u$. Repeating this procedure generates a branch of the tree, and the construction continues until the time depth reaches $1$.

Let $m\in\mathbb{R}$ be the value of the signal on the unit interval under consideration. Then, by Condition~1 of Lemma~\ref{lemma_access_model}, the outcome associated with an edge $u\to v$ is a two-dimensional Gaussian random variable distributed as
\begin{equation}\label{eq_6z}
\mathbf{y}_{N(v)}^{(v)} \sim
\NN\!\left[
m\begin{pmatrix}
a^{(u)}\\
b^{(u)}
\end{pmatrix},
\frac{\mathbb{1}_2}{10z^2}
\right].
\end{equation}
Finally, a node $l$ of the unit-time tree is called a \emph{leaf} if its time depth is equal to $1$, that is, if $t(l)=1$. Thus, a leaf represents a complete execution of the protocol inside the given unit-time interval.

\subsection{Lower bound in the simplified access model}

We now apply the tree framework introduced above to derive lower bounds on the minimum sensing time required by Gaussian protocols with bounded squeezing to solve the CV Learning Parity problem and the CV Hypothesis Testing problem.

As explained at the beginning of this section, both problems can be viewed as distribution-identification tasks. At each block of length $T$, the signal is described by a random vector $m\in\mathbb{R}^T$, sampled independently from an underlying distribution. We denote this distribution by $q_s$, where $s\in\{0,1\}^T$ is a fixed but unknown string. The family of distributions $\{q_s\}_{s\in\{0,1\}^T}$ is known to the learner, but the actual string $s$ is not.

This formulation includes the CV Learning Parity problem by taking $q_s$ to be the uniform distribution over all strings $x\in\{0,1\}^T$ satisfying $x\cdot s\equiv 0 \pmod 2$. It also includes the CV Hypothesis Testing problem by choosing two distributions, say $q_0$ and $q_1$, and setting $q_s=q_1$ for every non-zero string $s\in\{0,1\}^T\setminus\{0\}$. In the parity-learning case, the goal is to identify the unknown string $s$. In the hypothesis-testing case, the goal is instead to decide whether the underlying distribution is $q_0$ or $q_1$, which is captured in this notation by distinguishing the case $s=0$ from the case $s\ne 0$.

By Lemma~\ref{lem:pulse_to_piecewise}, followed by
Lemma~\ref{lemma_access_model}, any Gaussian protocol with bounded
squeezing can be simulated by the simplified access model introduced
in the latter lemma. Therefore, for the purpose of proving lower bounds,
it is enough to study algorithms in the simplified access model.
More precisely, if a Gaussian protocol solves the problem with sensing
time $T_{\mathrm{sensing}}$, then there is a protocol in the simplified
access model that solves the same problem with sensing time $nT$, where
\[
    n\coloneqq
    \left\lceil\frac{T_{\mathrm{sensing}}}{T}\right\rceil.
\]

We first formulate the corresponding learning task in the simplified access model.

\begin{problem}[(Learning $s$ in the simplified access model)] 
\label{def:PROB_simplified}
    Let $T\in\mathbb{N}^+$ be the pattern size. Let $s\in\{0,1\}^T$ be an unknown string. The learner has access to the simplified access model of Lemma~\ref{lemma_access_model} for a total sensing time of $nT$, where, at each block of length $T$, the vector $m\in\mathbb{R}^T$ is sampled independently according to $q_s$. The goal is to output a guess $\tilde{s}\in\{0,1\}^T$ such that
    \bb
        \Pr\!\left(\tilde{s}=s\right)\ge \frac23\,.
    \ee
\end{problem}

To prove a lower bound on the minimum sensing time, we shall use the following simpler task. This task gives the learner additional classical information after the sensing phase, and is therefore no harder than Problem~\ref{def:PROB_simplified}.

\begin{problem}[(Learning $s$ with partial revealed information in the simplified access model)]
\label{def:PROB_revealed}
    Let $T\in\mathbb{N}^+$ be the pattern size. The problem consists of four phases:
    \begin{itemize}
        \item \textbf{Referee's choice phase:} A referee selects a non-zero string $\bar{s}\in\{0,1\}^T\setminus\{0\}$ uniformly at random. Then the referee chooses the hidden string $s$ to be either $s=0$ or $s=\bar{s}$, with equal probability. Both $s$ and $\bar{s}$ are initially unknown to the learner.

        \item \textbf{Evolution phase:} The learner has access to the simplified access model of Lemma~\ref{lemma_access_model} for a total sensing time of $nT$, where, at each block of length $T$, the vector $m\in\mathbb{R}^T$ is sampled independently according to $q_s$. After this phase, the learner no longer has access to the access model.

        \item \textbf{Partial information reveal phase:} The referee reveals the string $\bar{s}$, but does not reveal whether the true string was $s=0$ or $s=\bar{s}$.

        \item \textbf{Classical post-processing phase:} Using the transcript gathered during the evolution phase, together with the revealed string $\bar{s}$, the learner must decide whether the true string was $0$ or $\bar{s}$, with success probability at least \(2/3\).
    \end{itemize}
\end{problem}

Any algorithm that solves Problem~\ref{def:PROB_simplified} also solves Problem~\ref{def:PROB_revealed} with the same sensing time: it first runs the learning algorithm during the evolution phase, and after $\bar{s}$ is revealed, it simply checks whether the estimated string is $0$ or $\bar{s}$. Thus, a lower bound for Problem~\ref{def:PROB_revealed} immediately implies the same lower bound for Problem~\ref{def:PROB_simplified}.

Let us now fix the pattern size $T$, and consider an algorithm $\mathcal{A}$ that solves Problem~\ref{def:PROB_revealed} using total sensing time $nT$. Our goal is to lower bound the number of blocks $n$. By the previous subsection, the adaptive algorithm $\mathcal{A}$ can be represented by a full-time tree. For each string $s\in\{0,1\}^T$, let $P^{(\mathcal{A})}_{s}$ denote the probability distribution over the leaves of the full-time tree, conditioned on the fact that, in each block of length $T$, the vector $m$ is sampled independently according to $q_s$.

Now fix a non-zero string $s\in\{0,1\}^T\setminus\{0\}$. Conditioned on the event that the referee revealed $\bar{s}=s$, the learner's task in Problem~\ref{def:PROB_revealed} is exactly a symmetric binary hypothesis testing problem between the two possible leaf distributions
\[
    P^{(\mathcal{A})}_{0}
    \qquad\text{and}\qquad
    P^{(\mathcal{A})}_{s},
\]
with equal prior probabilities. By the operational interpretation of the total variation distance between two probability distributions~\cite{KHATRI}, the optimal success probability in this symmetric binary testing problem is
\bb
    P^{(s)}_{\mathrm{best}}
    =
    \frac12\left(
        1+\frac12\left\|P^{(\mathcal{A})}_{0}
        -
        P^{(\mathcal{A})}_{s}\right\|_1
    \right),
\ee
where
\bb
    \left\|P^{(\mathcal{A})}_{0}-P^{(\mathcal{A})}_{s}\right\|_1
    =
    \sum_{l\in\mathrm{leaves}(r)}
    \left|P^{(\mathcal{A})}_{0}(l)-P^{(\mathcal{A})}_{s}(l)\right| .
\ee
Here $\mathrm{leaves}(r)$ denotes the set of leaves of the tree rooted at $r$; when the leaf space has continuous components, the sum is understood as the corresponding integral.

Since $\mathcal{A}$ solves Problem~\ref{def:PROB_revealed} with success probability at least $2/3$, and the revealed non-zero string $s$ is chosen uniformly at random, we must have
\[
    \frac23
    \le
    \mathop{\mathbb{E}}_{\substack{s\in\{0,1\}^T\\ s\ne 0}}
    P^{(s)}_{\mathrm{best}} .
\]
Using the expression for $P^{(s)}_{\mathrm{best}}$ above, this gives the key inequality
\bb\label{eq_crucial}
    \frac13
    \le
    \mathop{\mathbb{E}}_{\substack{s\in\{0,1\}^T\\ s\ne 0}}
    \frac12\left\|P^{(\mathcal{A})}_{0}
    -
    P^{(\mathcal{A})}_{s}\right\|_1 .
\ee
This inequality is the starting point for the lower-bound argument: to prove that the sensing time must be large, it is enough to upper bound the average total variation distance on the right-hand side as a function of the number of sensed blocks $n$.

We now express the leaf distribution $P^{(\mathcal{A})}_{s}$ in terms of the local transition probabilities along the full-time tree. We use the notation introduced in the previous subsection. Let $r$ be the root of the full-time tree, and let
\[
    r=u_0^{(l)}\to u_1^{(l)}\to\cdots\to u_{N(l)-1}^{(l)}\to u_{N(l)}^{(l)}=l
\]
be the unique path from the root to a leaf $l$. For each $i\in[N(l)]$, the edge $u_{i-1}^{(l)}\to u_i^{(l)}$ corresponds to the experiment prescribed at the node $u_{i-1}^{(l)}$, and its observed outcome is denoted by $\mathbf{y}^{(l)}_i$.

Conditioned on the hidden string $s$, the probability of reaching the leaf $l$ is
\bb\label{eq_leaf_distribution_product}
    P^{(\mathcal{A})}_s(l)
    =
    \prod_{i=1}^{N(l)}
    p_s^{(u_{i-1}^{(l)})}\!\left(\mathbf{y}^{(l)}_i\right),
\ee
where $p_s^{(u)}$ denotes the conditional outcome distribution of the experiment prescribed at the node $u$, conditioned on the hidden string being $s$. Here the dependence on the previous outcomes is already encoded in the node $u$, since the node represents the full transcript accumulated up to that point.

It remains to specify the local distributions $p_s^{(u)}$. Let $u$ be a node of the full-time tree.

\begin{enumerate}
    \item If $\tau(u)\ge 2$, then the experiment prescribed at $u$ spans $\tau(u)$ consecutive blocks of length $T$. In this case, the outcome takes values in $\mathbb{R}^2$, and its distribution is
    \bb\label{eq_full_time_local_distribution}
        p_s^{(u)}(\mathbf{y})
        \coloneqq
        \mathop{\mathbb{E}}_{m_1,\ldots,m_{\tau(u)}\sim q_s}
        \NN\!\left[
        \begin{pmatrix}
            \sum_{i=1}^{\tau(u)} a^{(u,i)}\!\cdot m_i\\[1mm]
            \sum_{i=1}^{\tau(u)} b^{(u,i)}\!\cdot m_i
        \end{pmatrix},
        \frac{\mathbb{1}_2}{10z^2}
        \right](\mathbf{y}),
        \qquad \mathbf{y}\in\mathbb{R}^2.
    \ee
    Here $m_1,\ldots,m_{\tau(u)}$ are sampled independently from $q_s$, and, for each $i\in[\tau(u)]$, the vectors $a^{(u,i)},b^{(u,i)}\in[-1,1]^T$ satisfy
    \[
        \sqrt{(a^{(u,i)}_j)^2+(b^{(u,i)}_j)^2}\le 1
        \qquad
        \forall\,j\in[T].
    \]

    \item If $\tau(u)=1$, then the experiment prescribed at $u$ is refined by the $T$-time tree encoded in $u$. In this case, the outcome $\mathbf{y}$ is a leaf of that attached $T$-time tree, and $p_s^{(u)}(\mathbf{y})$ denotes the corresponding leaf probability distribution, conditioned on the block variable being sampled according to $q_s$. This leaf probability is computed recursively from the transition probabilities of the attached $T$-time tree.
\end{enumerate}

We now turn to the estimate of the right-hand side of \eqref{eq_crucial}. The argument proceeds in two steps. First, we show that any experiment spanning several blocks can be reduced, in total variation distance, to the sum of its one-block contributions. Second, we use this local estimate in Lemma~\ref{lemma_full_time_tree_global_bound} to bound the right-hand side of \eqref{eq_crucial} in terms of the total number of sensed blocks $n$.

\begin{lemma}\label{lemma_full_time_reduction_to_one_block}
Let $T\in\mathbb{N}^+$ and $z\ge 1$. Let $\{q_s\}_{s\in\{0,1\}^T}$ be a family of probability distributions over $\mathbb{R}^T$. Define
\bb\label{eq:def_C_one_block}
    D
    \coloneqq
    \sup_{\substack{a,b\in[-1,1]^T\\
    \sqrt{a_j^2+b_j^2}\le 1\,\,\forall j\in[T]}}
    \mathop{\mathbb{E}}_{\substack{s\in\{0,1\}^{T}\\ s\ne 0}}
    \frac12
    \Bigg\|
    \left(
    \mathop{\mathbb{E}}_{m\sim q_s}
    -
    \mathop{\mathbb{E}}_{m\sim q_0}
    \right)
    \NN\!\left[
    \begin{pmatrix}
        a\cdot m\\
        b\cdot m
    \end{pmatrix},
    \frac{\mathbb{1}_2}{10z^2}
    \right]
    \Bigg\|_1 .
\ee
Then, for every $\tau\in\mathbb{N}^+$ and every collection of vectors
$a^{(1)},b^{(1)},\ldots,a^{(\tau)},b^{(\tau)}\in[-1,1]^T$
satisfying
\[
    \sqrt{(a^{(i)}_j)^2+(b^{(i)}_j)^2}\le 1
    \qquad
    \forall\, i\in[\tau],\ \forall\, j\in[T],
\]
it holds that
\bb\label{eq:full_time_reduction_to_one_block}
    &\mathop{\mathbb{E}}_{\substack{s\in\{0,1\}^{T}\\ s\ne 0}}
    \frac12
    \Bigg\|
    \left(
    \mathop{\mathbb{E}}_{m_1,\ldots,m_{\tau}\sim q_s}
    -
    \mathop{\mathbb{E}}_{m_1,\ldots,m_{\tau}\sim q_0}
    \right)
    \NN\!\left[
    \begin{pmatrix}
        \sum_{i=1}^{\tau} a^{(i)}\!\cdot m_i\\[1mm]
        \sum_{i=1}^{\tau} b^{(i)}\!\cdot m_i
    \end{pmatrix},
    \frac{\mathbb{1}_2}{10z^2}
    \right]
    \Bigg\|_1
    \le
    \tau D .
\ee
Here all expectations are over independent samples.
\end{lemma}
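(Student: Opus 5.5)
We prove the bound by a hybrid (telescoping) argument over the $\tau$ blocks, treating the total variation distance for each fixed $s$ and averaging only at the end. Fix $s\ne0$ and write, for $k=0,1,\ldots,\tau$, the hybrid distribution
\[
    H_k\coloneqq \mathop{\mathbb{E}}_{\substack{m_1,\ldots,m_k\sim q_s\\ m_{k+1},\ldots,m_\tau\sim q_0}}
    \NN\!\left[
    \begin{pmatrix}
        \sum_{i=1}^{\tau} a^{(i)}\!\cdot m_i\\
        \sum_{i=1}^{\tau} b^{(i)}\!\cdot m_i
    \end{pmatrix},
    \frac{\mathbb{1}_2}{10z^2}
    \right],
\]
so that $H_\tau$ and $H_0$ are the two distributions appearing on the left-hand side of \eqref{eq:full_time_reduction_to_one_block}. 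The triangle inequality then gives $\frac12\|H_\tau-H_0\|_1\le\sum_{k=1}^{\tau}\frac12\|H_k-H_{k-1}\|_1$.

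Next we bound each term. The hybrids $H_k$ and $H_{k-1}$ differ only in the law of the $k$-th block. Condition on all the other blocks $m_i$, $i\ne k$; these have the same joint law in both hybrids. Given them, the shift $\mathbf{c}\coloneqq(\sum_{i\ne k}a^{(i)}\cdot m_i,\sum_{i\ne k}b^{(i)}\cdot m_i)^\intercal$ is a fixed vector. By Lemma~\ref{conv_gauss} (or simply linearity of the mean), each hybrid is then the common mixture over $\mathbf{c}$ of the translate by $\mathbf{c}$ of
\[
    \mathop{\mathbb{E}}_{m\sim q_s \text{ (resp. } q_0)}\NN\!\left[\begin{pmatrix} a^{(k)}\cdot m\\ b^{(k)}\cdot m\end{pmatrix},\frac{\mathbb{1}_2}{10z^2}\right].
\]
Two facts now suffice: convexity of the $\ell_1$ norm under mixing with the same weights, and invariance of the $\ell_1$ norm under translation. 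Together they give
\[
    \tfrac12\|H_k-H_{k-1}\|_1\le \tfrac12\Big\|\big(\mathop{\mathbb{E}}_{m\sim q_s}-\mathop{\mathbb{E}}_{m\sim q_0}\big)\NN\!\left[\begin{pmatrix} a^{(k)}\cdot m\\ b^{(k)}\cdot m\end{pmatrix},\frac{\mathbb{1}_2}{10z^2}\right]\Big\|_1 .
\]
Equivalently, this is data processing: adding an independent random vector $\mathbf{c}$ is a channel that does not depend on the block under comparison.

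Finally, we average the resulting inequality over uniform $s\ne0$ and exchange the finite sum over $k$ with the expectation. Each pair $(a^{(k)},b^{(k)})$ is admissible in the supremum defining $D$ in \eqref{eq:def_C_one_block}, so every term is at most $D$, and summing the $\tau$ terms yields $\tau D$. We must take the supremum only after averaging over $s$, which is legitimate because $a^{(k)},b^{(k)}$ are fixed and do not depend on $s$.

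The main point needing care is exactly this order: the per-$s$ bound comes first, then the average, and only then the supremum. A secondary point is the rigor of the conditioning step for continuous distributions. There we would write both hybrids explicitly as integrals against the product law of the other blocks and apply Fubini's theorem together with $\|\int\mu(\mathrm{d}\mathbf{c})(f-g)(\cdot-\mathbf{c})\|_1\le\int\mu(\mathrm{d}\mathbf{c})\|f-g\|_1$.
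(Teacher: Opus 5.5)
Your proposal is correct and follows essentially the same route as the paper: a hybrid/telescoping argument over the $\tau$ blocks, with each adjacent pair bounded by the one-block quantity via convexity of total variation under common mixtures and translation invariance, then averaging over $s\ne 0$ and bounding each term by $D$. The only difference is cosmetic: the paper orders its hybrids from $q_s$ to $q_0$ rather than the reverse.
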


\begin{proof}
Fix admissible coefficients
$a^{(1)},b^{(1)},\ldots,a^{(\tau)},b^{(\tau)}$.
For each non-zero string $s\in\{0,1\}^T$ and each $\ell\in\{0,\ldots,\tau\}$, define
\bb
    H_\ell^{(s)}
    \coloneqq
    \mathop{\mathbb{E}}_{\substack{
        m_1,\ldots,m_\ell\sim q_0\\
        m_{\ell+1},\ldots,m_\tau\sim q_s}}
    \NN\!\left[
    \begin{pmatrix}
        \sum_{i=1}^{\tau} a^{(i)}\!\cdot m_i\\[1mm]
        \sum_{i=1}^{\tau} b^{(i)}\!\cdot m_i
    \end{pmatrix},
    \frac{\mathbb{1}_2}{10z^2}
    \right].
\ee
When $\ell=0$, all variables are sampled from $q_s$, while when $\ell=\tau$, all variables are sampled from $q_0$. Thus $H_0^{(s)}$ and $H_\tau^{(s)}$ are the two distributions appearing on the left-hand side of \eqref{eq:full_time_reduction_to_one_block}. By the telescoping identity and the triangle inequality,
\bb
    \frac12\|H_0^{(s)}-H_\tau^{(s)}\|_1
    &=
    \frac12\left\|
        \sum_{\ell=0}^{\tau-1}
        \bigl(H_\ell^{(s)}-H_{\ell+1}^{(s)}\bigr)
    \right\|_1 \le
    \sum_{\ell=0}^{\tau-1}
    \frac12\|H_\ell^{(s)}-H_{\ell+1}^{(s)}\|_1 .
\ee
The distributions $H_\ell^{(s)}$ and $H_{\ell+1}^{(s)}$ differ only in the law of $m_{\ell+1}$. All remaining variables enter the Gaussian mean through the same additive shift. Therefore, by convexity of total variation under mixtures and by translation invariance of total variation,
\bb
    \frac12\|H_\ell^{(s)}-H_{\ell+1}^{(s)}\|_1
    \le
    \frac12
    \Bigg\|
    \left(
    \mathop{\mathbb{E}}_{m\sim q_s}
    -
    \mathop{\mathbb{E}}_{m\sim q_0}
    \right)
    \NN\!\left[
    \begin{pmatrix}
        a^{(\ell+1)}\!\cdot m\\[1mm]
        b^{(\ell+1)}\!\cdot m
    \end{pmatrix},
    \frac{\mathbb{1}_2}{10z^2}
    \right]
    \Bigg\|_1 .
\ee
Averaging over the uniformly chosen non-zero string $s$ and summing over $\ell$ gives
\begin{equation*}
    \mathop{\mathbb{E}}_{\substack{s\in\{0,1\}^{T}\\ s\ne 0}}
    \frac12\|H_0^{(s)}-H_\tau^{(s)}\|_1
    \le
    \sum_{\ell=0}^{\tau-1}
    \mathop{\mathbb{E}}_{\substack{s\in\{0,1\}^{T}\\ s\ne 0}}
    \frac12
    \Bigg\|
    \left(
    \mathop{\mathbb{E}}_{m\sim q_s}
    -
    \mathop{\mathbb{E}}_{m\sim q_0}
    \right)
    \NN\!\left[
    \begin{pmatrix}
        a^{(\ell+1)}\!\cdot m\\[1mm]
        b^{(\ell+1)}\!\cdot m
    \end{pmatrix},
    \frac{\mathbb{1}_2}{10z^2}
    \right]
    \Bigg\|_1 \le
    \sum_{\ell=0}^{\tau-1} D
    =
    \tau D,
\end{equation*}
where each pair $a^{(\ell+1)},b^{(\ell+1)}$ is admissible in the supremum defining $D$. This proves \eqref{eq:full_time_reduction_to_one_block}.
\end{proof}

\begin{lemma}\label{lemma_full_time_tree_global_bound}
Consider the simplified access model, with $T\in\mathbb{N}^+$ and $z\ge 1$. Assume that there exists a constant $C$ such that, for every $T$-time tree $\mathcal{T}$,
\bb\label{eq:one_block_attached_tree_bound}
    \mathop{\mathbb{E}}_{\substack{s\in\{0,1\}^T\\ s\ne 0}}
    \frac12
    \left\|P_s^{(\mathcal{T})}-P_0^{(\mathcal{T})}\right\|_1
    \le
    C,
\ee
where $P_s^{(\mathcal{T})}$ denotes the leaf distribution of $\mathcal{T}$ when the block vector $m\in\mathbb{R}^T$ is sampled according to $q_s$.

Let $\mathcal{A}$ be a full-time tree with total time depth $n$. Then
\bb\label{eq:global_tree_bound}
    \mathop{\mathbb{E}}_{\substack{s\in\{0,1\}^T\\ s\ne 0}}
    \frac12
    \left\|P_s^{(\mathcal{A})}-P_0^{(\mathcal{A})}\right\|_1
    \le
    nC .
\ee
\end{lemma}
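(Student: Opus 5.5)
We prove the bound by induction on the tree, using the product form \eqref{eq_leaf_distribution_product} of the leaf distribution. For each node $u$ of the full-time tree we consider the subtree rooted at $u$, with remaining time depth $n-t(u)$, and its conditional leaf distributions $P_s^{(u)}$ and $P_0^{(u)}$. We claim that
\[
\mathop{\mathbb{E}}_{s\ne0}\tfrac12\bigl\|P_s^{(u)}-P_0^{(u)}\bigr\|_1\le (n-t(u))\,C .
\]
Evaluating this at the root gives \eqref{eq:global_tree_bound}. The induction runs backwards in time depth. Leaves have time depth $n$, so their distributions are trivial and the bound is $0$.

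For the inductive step, fix an internal node $u$. Its subtree distribution factorizes as $P_s^{(u)}(\mathbf{y},l)=p_s^{(u)}(\mathbf{y})\,P_s^{(v(\mathbf{y}))}(l)$, where $v(\mathbf{y})$ is the child selected by the outcome $\mathbf{y}$. We use the standard hybrid (triangle inequality) split
\[
\bigl\|P_s^{(u)}-P_0^{(u)}\bigr\|_1\le \bigl\|p_s^{(u)}-p_0^{(u)}\bigr\|_1+\int \mathrm{d}\mathbf{y}\,p_0^{(u)}(\mathbf{y})\,\bigl\|P_s^{(v(\mathbf{y}))}-P_0^{(v(\mathbf{y}))}\bigr\|_1 .
\]
This comes from inserting the hybrid $p_0^{(u)}(\mathbf{y})P_s^{(v(\mathbf{y}))}(l)$ and using that each $P_s^{(v)}$ is normalized.

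The key point is that the weight in the second term is $p_0^{(u)}$, which does not depend on $s$. We can therefore exchange $\mathbb{E}_{s\ne0}$ with the integral over $\mathbf{y}$ and apply the induction hypothesis to each child. Each child has time depth $t(u)+\tau(u)$, so the second term is at most $(n-t(u)-\tau(u))C$.

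For the first term there are two cases.
\begin{itemize}
\item If $\tau(u)=1$, then $p_s^{(u)}$ is exactly the leaf distribution of the $T$-time tree attached to $u$. The block sampled in that window is drawn fresh from $q_s$, independently of earlier blocks, so the averaged first term is at most $C$ by the hypothesis \eqref{eq:one_block_attached_tree_bound}.
\item If $\tau(u)\ge2$, then $p_s^{(u)}$ is given by \eqref{eq_full_time_local_distribution}, and Lemma~\ref{lemma_full_time_reduction_to_one_block} bounds the averaged first term by $\tau(u)D$.
\end{itemize}

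It remains to check $D\le C$ in the second case. Take the particular $T$-time tree $\mathcal{T}$ whose root prescribes a single experiment of duration $T$ with coefficients $a,b$. Its leaf distribution is exactly $\mathbb{E}_{m\sim q_s}\NN[(a\cdot m,b\cdot m),\mathbb{1}_2/(10z^2)]$. This is legitimate because Condition~2 with $\tau=T\ge2$ allows it; for $T=1$ we use a trivial unit-time tree with the same law, which is the same object. Applying the hypothesis to this $\mathcal{T}$ and taking the supremum over admissible $a,b$ gives $D\le C$. In both cases the first term is therefore at most $\tau(u)C$, and summing with the second term gives $(n-t(u))C$, completing the induction.

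The main obstacle is making the exchange of $\mathbb{E}_{s}$ with the sum over children rigorous. Two issues arise.
\begin{itemize}
\item The average over $s$ must be taken \emph{after} the hybrid split, so that the $s$-dependence sits only in the local factor and the child subtrees. This is exactly why the hybrid uses $p_0^{(u)}$ and not $p_s^{(u)}$ as the weight.
\item Outcome spaces are continuous or tree-valued. We handle this by treating sums as integrals with respect to the natural reference measure, and by noting that the learner's choices at each node depend only on the transcript, not on $s$, so the children and their attached trees are the same objects under every hypothesis.
\end{itemize}
Independence of blocks across disjoint time windows is what lets each child subtree see fresh samples from $q_s$. Without it, the conditional law of later blocks given earlier outcomes would differ from the unconditional $q_s$.
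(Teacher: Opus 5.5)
Your proposal is correct and follows the paper's argument. You use the same local bound $\tau(u)C$: the hypothesis handles $\tau(u)=1$, and Lemma~\ref{lemma_full_time_reduction_to_one_block} together with $D\le C$ handles $\tau(u)\ge2$. You then propagate it by a backward-induction hybrid argument in which the per-node error is weighted by the $s$-independent hypothesis-$0$ reaching probability. The only difference is bookkeeping: you induct on normalized subtree distributions with hybrid $p_0^{(u)}P_s^{(v)}$, while the paper inducts on unnormalized root-to-node path weights $W_s$ with hybrid $W_0(u)\,p_s^{(u)}$, and both telescope to the same bound.
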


\begin{proof}
We use the notation introduced above for the full-time tree. Sums over children and leaves are understood as integrals whenever the corresponding outcome space is continuous.

We first establish a local bound for every internal node $u$ of the full-time tree:
\bb\label{eq:local_full_time_bound}
    \mathop{\mathbb{E}}_{\substack{s\in\{0,1\}^T\\ s\ne 0}}
    \frac12
    \left\|p_s^{(u)}-p_0^{(u)}\right\|_1
    \le
    \tau(u)C .
\ee
If $\tau(u)=1$, then the experiment prescribed at $u$ is confined to a single block of length $T$. By construction of the full-time tree, this experiment is represented by the $T$-time tree attached to $u$. Hence $p_s^{(u)}$ is exactly the leaf distribution of that attached $T$-time tree, conditioned on the block variable being sampled according to $q_s$. Therefore \eqref{eq:local_full_time_bound} follows directly from \eqref{eq:one_block_attached_tree_bound}.

Suppose instead that $\tau(u)\ge 2$. Then the experiment prescribed at $u$ spans $\tau(u)$ consecutive blocks of length $T$, and its outcome distribution is the Gaussian mixture described in \eqref{eq_full_time_local_distribution}. By Lemma~\ref{lemma_full_time_reduction_to_one_block},
\[
    \mathop{\mathbb{E}}_{\substack{s\in\{0,1\}^T\\ s\ne 0}}
    \frac12
    \left\|p_s^{(u)}-p_0^{(u)}\right\|_1
    \le
    \tau(u)D.
\]
Moreover, $D\le C$. Indeed, every one-block Gaussian experiment appearing in the definition of $D$ is a particular $T$-time tree: the learner chooses the coefficients $a,b$, waits for the whole block of length $T$, and performs the corresponding Gaussian measurement at the end. Since \eqref{eq:one_block_attached_tree_bound} holds for every $T$-time tree, it also holds for this subclass. Hence $D\le C$, and \eqref{eq:local_full_time_bound} follows also when $\tau(u)\ge 2$.

For every node $v$ of the full-time tree, define the path weight
\[
    W_s(v)
    \coloneqq
    \prod_{i=1}^{N(v)}
    p_s^{(u_{i-1}^{(v)})}\!\left(\mathbf{y}^{(v)}_i\right),
\]
with the convention $W_s(r)=1$ at the root. Thus, if $l$ is a leaf of the full-time tree, then $W_s(l)=P_s^{(\mathcal{A})}(l)$.

We now propagate the local estimate through the full-time tree. We prove, by backward induction on the remaining time depth $n-t(v)$, that for every node $v$,
\bb\label{eq:tree_induction_claim}
    \sum_{l\in\mathrm{leaves}(v)}
    \mathop{\mathbb{E}}_{\substack{s\in\{0,1\}^T\\ s\ne 0}}
    \left|W_s(l)-W_0(l)\right|
    \le
    \mathop{\mathbb{E}}_{\substack{s\in\{0,1\}^T\\ s\ne 0}}
    \left|W_s(v)-W_0(v)\right|
    +
    2C\bigl(n-t(v)\bigr)W_0(v).
\ee
Here $\mathrm{leaves}(v)$ denotes the set of leaves of the subtree rooted at $v$.

If $v$ is a leaf, then $t(v)=n$ and $\mathrm{leaves}(v)=\{v\}$. Thus \eqref{eq:tree_induction_claim} holds with equality.

Now let $u$ be an internal node, and assume that \eqref{eq:tree_induction_claim} holds for every child $v$ of $u$. Since the experiment prescribed at $u$ has duration $\tau(u)$, each child $v$ of $u$ has time depth $t(v)=t(u)+\tau(u)$. Applying the induction hypothesis to the subtrees rooted at the children of $u$, we obtain
\bb\label{eq:after_induction_children}
    &\sum_{l\in\mathrm{leaves}(u)}
    \mathop{\mathbb{E}}_{\substack{s\in\{0,1\}^T\\ s\ne 0}}
    \left|W_s(l)-W_0(l)\right|\\
    &\le
    \sum_{v\in\mathrm{children}(u)}
    \mathop{\mathbb{E}}_{\substack{s\in\{0,1\}^T\\ s\ne 0}}
    \left|W_s(v)-W_0(v)\right|
    +
    2C\bigl(n-t(u)-\tau(u)\bigr)
    \sum_{v\in\mathrm{children}(u)} W_0(v)\\
    &=
    \sum_{v\in\mathrm{children}(u)}
    \mathop{\mathbb{E}}_{\substack{s\in\{0,1\}^T\\ s\ne 0}}
    \left|W_s(v)-W_0(v)\right|
    +
    2C\bigl(n-t(u)-\tau(u)\bigr)W_0(u),
\ee
where we used
\[
    W_0(v)=W_0(u)p_0^{(u)}(\mathbf{y}_v),
    \qquad
    \sum_{v\in\mathrm{children}(u)}p_0^{(u)}(\mathbf{y}_v)=1.
\]

It remains to bound the first term on the right-hand side. For every child $v$ of $u$,
\[
    W_s(v)=W_s(u)\,p_s^{(u)}(\mathbf{y}_v),
    \qquad
    W_0(v)=W_0(u)\,p_0^{(u)}(\mathbf{y}_v).
\]
Therefore,
\bb\label{eq:first_child_sum_bound}
    &\sum_{v\in\mathrm{children}(u)}
    \mathop{\mathbb{E}}_{\substack{s\in\{0,1\}^T\\ s\ne 0}}
    \left|W_s(v)-W_0(v)\right|\\
    &=
    \sum_{v\in\mathrm{children}(u)}
    \mathop{\mathbb{E}}_{\substack{s\in\{0,1\}^T\\ s\ne 0}}
    \left|
    W_s(u)p_s^{(u)}(\mathbf{y}_v)
    -
    W_0(u)p_0^{(u)}(\mathbf{y}_v)
    \right|\\
    &=
    \sum_{v\in\mathrm{children}(u)}
    \mathop{\mathbb{E}}_{\substack{s\in\{0,1\}^T\\ s\ne 0}}
    \left|
    \bigl(W_s(u)-W_0(u)\bigr)p_s^{(u)}(\mathbf{y}_v)
    +
    W_0(u)\bigl(p_s^{(u)}(\mathbf{y}_v)-p_0^{(u)}(\mathbf{y}_v)\bigr)
    \right|\\
    &\le
    \sum_{v\in\mathrm{children}(u)}
    \mathop{\mathbb{E}}_{\substack{s\in\{0,1\}^T\\ s\ne 0}}
    \left|
    W_s(u)-W_0(u)
    \right|
    p_s^{(u)}(\mathbf{y}_v)\\
    &\qquad+
    W_0(u)
    \sum_{v\in\mathrm{children}(u)}
    \mathop{\mathbb{E}}_{\substack{s\in\{0,1\}^T\\ s\ne 0}}
    \left|
    p_s^{(u)}(\mathbf{y}_v)-p_0^{(u)}(\mathbf{y}_v)
    \right|\\
    &=
    \mathop{\mathbb{E}}_{\substack{s\in\{0,1\}^T\\ s\ne 0}}
    \left|W_s(u)-W_0(u)\right|
    +
    W_0(u)
    \mathop{\mathbb{E}}_{\substack{s\in\{0,1\}^T\\ s\ne 0}}
    \left\|p_s^{(u)}-p_0^{(u)}\right\|_1\\
    &\le
    \mathop{\mathbb{E}}_{\substack{s\in\{0,1\}^T\\ s\ne 0}}
    \left|W_s(u)-W_0(u)\right|
    +
    2C\,\tau(u)\,W_0(u),
\ee
where the last step follows from the local estimate \eqref{eq:local_full_time_bound}. Combining \eqref{eq:after_induction_children} and \eqref{eq:first_child_sum_bound}, we obtain
\[
    \sum_{l\in\mathrm{leaves}(u)}
    \mathop{\mathbb{E}}_{\substack{s\in\{0,1\}^T\\ s\ne 0}}
    \left|W_s(l)-W_0(l)\right|
    \le
    \mathop{\mathbb{E}}_{\substack{s\in\{0,1\}^T\\ s\ne 0}}
    \left|W_s(u)-W_0(u)\right|
    +
    2C\bigl(n-t(u)\bigr)W_0(u).
\]
This proves the induction step, and hence the induction claim.

Finally, apply \eqref{eq:tree_induction_claim} to the root $r$. Since $t(r)=0$ and $W_s(r)=W_0(r)=1$, we obtain
\[
    \sum_{l\in\mathrm{leaves}(r)}
    \mathop{\mathbb{E}}_{\substack{s\in\{0,1\}^T\\ s\ne 0}}
    \left|P_s^{(\mathcal{A})}(l)-P_0^{(\mathcal{A})}(l)\right|
    \le
    2nC,
\]
or equivalently,
\[
    \mathop{\mathbb{E}}_{\substack{s\in\{0,1\}^T\\ s\ne 0}}
    \frac12
    \left\|P_s^{(\mathcal{A})}-P_0^{(\mathcal{A})}\right\|_1
    \le
    nC.
\]
This proves the claim.
\end{proof}

Combining \eqref{eq_crucial} and Lemma~\ref{lemma_full_time_tree_global_bound}, we obtain the following.

\begin{lemma}\label{lemma_sensing_time_from_T_tree_bound}
Consider the simplified access model with pattern size $T\in\mathbb{N}^+$ and $z\ge 1$. Assume that there exists a constant $C>0$ such that, for every $T$-time tree $\mathcal{T}$,
\bb\label{eq:one_block_attached_tree_bound_final}
    \mathop{\mathbb{E}}_{\substack{s\in\{0,1\}^T\\ s\ne 0}}
    \frac12
    \left\|P_s^{(\mathcal{T})}-P_0^{(\mathcal{T})}\right\|_1
    \le
    C,
\ee
where $P_s^{(\mathcal{T})}$ denotes the leaf distribution of $\mathcal{T}$ when the block vector $m\in\mathbb{R}^T$ is sampled according to $q_s$.

If Problem~\ref{def:PROB_revealed} can be solved with success probability at least \(2/3\) using total sensing time $nT$, then
\bb\label{eq:sensing_time_lower_bound_from_C}
    n
    \ge
    \frac{1}{3C}.
\ee
\end{lemma}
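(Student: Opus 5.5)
This statement follows by chaining the key inequality \eqref{eq_crucial} with Lemma~\ref{lemma_full_time_tree_global_bound}; no new estimate is needed.

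First we set up the tree. Let $\mathcal{A}$ be an algorithm solving Problem~\ref{def:PROB_revealed} with total sensing time $nT$ and success probability at least $2/3$. As in the discussion preceding Lemma~\ref{lemma_full_time_reduction_to_one_block}, the adaptive behaviour of $\mathcal{A}$ during the evolution phase is encoded by a full-time tree of total time depth $n$, with leaf distributions $P^{(\mathcal{A})}_s$. This representation is valid because the sensing strategy cannot depend on $\bar s$, which is revealed only afterwards.

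Next we invoke the success condition. Conditioned on the revealed string $\bar s=s$, the optimal success probability of the residual symmetric binary test is $\tfrac12\bigl(1+\tfrac12\|P^{(\mathcal{A})}_0-P^{(\mathcal{A})}_s\|_1\bigr)$. Averaging over uniformly random nonzero $s$ and using the $2/3$ success guarantee gives \eqref{eq_crucial}:
\[
\tfrac13\le \mathop{\mathbb{E}}_{s\neq0}\tfrac12\bigl\|P^{(\mathcal{A})}_0-P^{(\mathcal{A})}_s\bigr\|_1 .
\]

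We then bound the right-hand side. The hypothesis \eqref{eq:one_block_attached_tree_bound_final} is exactly assumption \eqref{eq:one_block_attached_tree_bound}. Lemma~\ref{lemma_full_time_tree_global_bound} therefore applies to $\mathcal{A}$ and bounds the averaged total variation distance by $nC$. Combining the two inequalities yields $1/3\le nC$. Since $C>0$, this gives $n\ge 1/(3C)$, which is \eqref{eq:sensing_time_lower_bound_from_C}.

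The only point requiring care is that the leaf distribution relevant to the post-processing is precisely the one appearing in Lemma~\ref{lemma_full_time_tree_global_bound}. This holds because the transcript is the full leaf of the tree and the post-processing may use only that transcript together with $\bar s$. All the substantive work lies in the lemma itself and in the later bound on $C$.
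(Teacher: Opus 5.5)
Your proposal is correct and follows the paper's proof exactly: you combine the key inequality \eqref{eq_crucial} (lower bound $1/3$ on the averaged total variation distance of the full-time tree) with Lemma~\ref{lemma_full_time_tree_global_bound} (upper bound $nC$), then divide by $C>0$. Your remark that the sensing strategy, and hence the full-time tree, cannot depend on $\bar s$ is the right justification for applying both bounds to the same leaf distributions.
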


\begin{proof}
Let $\mathcal{A}$ be a full-time tree representing an algorithm that solves Problem~\ref{def:PROB_revealed} with total sensing time $nT$ and success probability at least \(2/3\). By \eqref{eq_crucial},
\[
    \frac13
    \le
    \mathop{\mathbb{E}}_{\substack{s\in\{0,1\}^T\\ s\ne 0}}
    \frac12
    \left\|P^{(\mathcal{A})}_{0}
    -
    P^{(\mathcal{A})}_{s}\right\|_1 .
\]
On the other hand, Lemma~\ref{lemma_full_time_tree_global_bound} gives
\[
    \mathop{\mathbb{E}}_{\substack{s\in\{0,1\}^T\\ s\ne 0}}
    \frac12
    \left\|P^{(\mathcal{A})}_{0}
    -
    P^{(\mathcal{A})}_{s}\right\|_1
    \le
    nC .
\]
Combining the two inequalities yields $1/3\le nC$, and therefore $n\ge 1/(3C)$.
\end{proof}

\subsection{Lower bound for the CV Learning Parity problem}

We now specialize Lemma~\ref{lemma_sensing_time_from_T_tree_bound} to the CV Learning Parity problem. By that lemma, it is enough to upper bound, uniformly over all $T$-time trees $\mathcal{T}$, the average total variation distance
\[
    \mathop{\mathbb{E}}_{\substack{s\in\{0,1\}^T\\ s\ne 0}}
    \frac12
    \left\|P_s^{(\mathcal{T})}-P_0^{(\mathcal{T})}\right\|_1 .
\]
Here $P_s^{(\mathcal{T})}$ denotes the leaf distribution of the $T$-time tree $\mathcal{T}$ when the block string is sampled from the parity distribution associated with $s$.

Recall that, in the CV Learning Parity problem, the block variable is a bit string $x\in\{0,1\}^T$, sampled uniformly from
\[
    C_s
    =
    \{x\in\{0,1\}^T:x\cdot s\equiv 0 \pmod 2\}.
\]
In particular, for $s=0$ we have $C_0=\{0,1\}^T$, so $P_0^{(\mathcal{T})}$ corresponds to the case in which the block string is sampled uniformly from the whole hypercube.

Let $\mathcal{T}$ be a fixed $T$-time tree. To simplify notation, throughout this subsection we write
\[
    P_s(l)\equiv P_s^{(\mathcal{T})}(l)
\]
for the probability of reaching the leaf $l$ of $\mathcal{T}$, conditioned on the hidden parity string being $s$.

Let $r$ be the root of $\mathcal{T}$, and let $l\in\mathrm{leaves}(\mathcal{T})$. We denote the unique path from $r$ to $l$ by
\[
    r=u_0^{(l)}
    \to u_1^{(l)}
    \to \cdots
    \to u_{N(l)-1}^{(l)}
    \to u_{N(l)}^{(l)}=l.
\]
For each $i\in[N(l)]$, the edge
\[
    u_{i-1}^{(l)}\to u_i^{(l)}
\]
corresponds to the experiment prescribed at the node $u_{i-1}^{(l)}$, and its observed outcome is denoted by $\mathbf{y}_i^{(l)}$.

Along this fixed branch, any block string $x\in\{0,1\}^T$ can be decomposed according to the successive time intervals explored by the experiments:
\[
    x=
    \left(
    x_1^{(l)},x_2^{(l)},\ldots,x_{N(l)}^{(l)}
    \right),
\]
where $x_i^{(l)}
    \in
    \{0,1\}^{\tau(u_{i-1}^{(l)})}$. Thus $x_i^{(l)}$ is the substring of $x$ corresponding to the experiment prescribed at the node $u_{i-1}^{(l)}$. Since $l$ is a leaf of the $T$-time tree, the durations along the branch add up to the whole block:
\[
    \sum_{i=1}^{N(l)}
    \tau\!\left(u_{i-1}^{(l)}\right)
    =
    T.
\]
We also introduce the following local notation. Let $u$ be a node of the $T$-time tree, and let $\xi\in\{0,1\}^{\tau(u)}$ be the substring of the block variable associated with the experiment prescribed at $u$. We denote by $p_{\xi}^{(u)}(\mathbf{y})$ the conditional distribution of the outcome of that experiment, conditioned on this local substring being equal to $\xi$.

More explicitly, if $\tau(u)\ge 2$, then the experiment at $u$ is an ordinary $T$-time-tree experiment. In this case, the outcome belongs to $\mathbb{R}^2$, and
\bb\label{eq_gausscond2}
    p_{\xi}^{(u)}(\mathbf{y})
    =
    \NN\!\left[
    \begin{pmatrix}
        a^{(u)}\!\cdot \xi\\
        b^{(u)}\!\cdot \xi
    \end{pmatrix},
    \frac{\mathbb{1}_2}{10z^2}
    \right](\mathbf{y}),
    \qquad
    \mathbf{y}\in\mathbb{R}^2.
\ee
If instead $\tau(u)=1$, then the experiment at $u$ is refined by the unit-time tree attached to $u$. In this case, $\xi\in\{0,1\}$, the outcome $\mathbf{y}$ is a leaf, equivalently the full transcript, of that attached unit-time tree, and $p_{\xi}^{(u)}(\mathbf{y})$ denotes its leaf probability distribution conditioned on the bit value $\xi$.

With this notation, the probability of reaching a leaf $l$ under the hidden string $s$ is
\bb\label{eq:T_time_leaf_distribution_parity}
    P_s(l)
    =
    \mathop{\mathbb{E}}_{\substack{x\in\{0,1\}^T\\ x\cdot s\equiv 0}}
    \prod_{i=1}^{N(l)}
    p_{x_i^{(l)}}^{(u_{i-1}^{(l)})}
    \!\left(\mathbf{y}_i^{(l)}\right),
\ee
where the expectation is uniform over the strings satisfying the parity constraint. In particular,
\[
    P_0(l)
    =
    \mathop{\mathbb{E}}_{x\in\{0,1\}^T}
    \prod_{i=1}^{N(l)}
    p_{x_i^{(l)}}^{(u_{i-1}^{(l)})}
    \!\left(\mathbf{y}_i^{(l)}\right).
\]
The next lemma shows that, after conditioning on a leaf, the dependence on the hidden parity string factorizes along the branch.

\begin{lemma}\label{lemma_adapt_prob}
Let $\mathcal{T}$ be a $T$-time tree. For every node $u$ of $\mathcal{T}$, define
\bb\label{eq:def_f_u_parity}
    f^{(u)}(\mathbf{y})
    \coloneqq
    \frac{1}{2^{\tau(u)}}
    \mathop{\mathbb{E}}_{\xi\in\{0,1\}^{\tau(u)}}
    p_{\xi}^{(u)}(\mathbf{y})
    +
    \left(1-\frac{1}{2^{\tau(u)}}\right)
    \mathop{\mathbb{E}}_{\substack{\sigma\in\{0,1\}^{\tau(u)}\\ \sigma\ne 0}}
    \frac12
    \left|
    \mathop{\mathbb{E}}_{\substack{\xi\in\{0,1\}^{\tau(u)}\\ \xi\cdot\sigma\equiv 1}}
    p_{\xi}^{(u)}(\mathbf{y})
    -
    \mathop{\mathbb{E}}_{\substack{\xi\in\{0,1\}^{\tau(u)}\\ \xi\cdot\sigma\equiv 0}}
    p_{\xi}^{(u)}(\mathbf{y})
    \right|.
\ee
Then, for every leaf $l\in\mathrm{leaves}(\mathcal{T})$,
\bb\label{eq:adapt_prob_product_bound}
    \mathop{\mathbb{E}}_{\substack{s\in\{0,1\}^{T}\\ s\ne 0}}
    \left|P_s(l)-P_0(l)\right|
    \le
    \frac{1}{1-2^{-T}}
    \prod_{i=1}^{N(l)}
    f^{(u_{i-1}^{(l)})}
    \!\left(\mathbf{y}_i^{(l)}\right).
\ee
\end{lemma}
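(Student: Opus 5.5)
The plan is a Fourier (character) expansion of the parity-constrained uniform distribution, followed by the observation that the character factorizes along the branch of the fixed leaf $l$.

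First, fix $l$ and write $\Pi(x)\coloneqq\prod_{i=1}^{N(l)}p_{x_i^{(l)}}^{(u_{i-1}^{(l)})}(\mathbf{y}_i^{(l)})$. This is a legitimate function of $x$ alone, because along a fixed branch the nodes, durations, rotation coefficients and attached unit-time trees are all determined by the transcript $\mathbf{y}_1^{(l)},\ldots,\mathbf{y}_{N(l)}^{(l)}$, not by $x$.

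For $s\neq0$, the set $C_s$ has exactly $2^{T-1}$ elements. Hence
\[
\mathop{\mathbb{E}}_{x\in C_s}\Pi(x)=\mathop{\mathbb{E}}_{x\in\{0,1\}^T}\bigl(1+(-1)^{x\cdot s}\bigr)\Pi(x).
\]
Using \eqref{eq:T_time_leaf_distribution_parity}, this gives
\[
P_s(l)-P_0(l)=\mathop{\mathbb{E}}_{x\in\{0,1\}^T}(-1)^{x\cdot s}\,\Pi(x).
\]

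Second, decompose $s=(s_1,\ldots,s_{N(l)})$ along the same intervals as $x$. Then $(-1)^{x\cdot s}=\prod_i(-1)^{x_i\cdot s_i}$. Since the uniform distribution on $\{0,1\}^T$ is a product over the substrings, the expectation factorizes:
\[
P_s(l)-P_0(l)=\prod_{i=1}^{N(l)}g_i(s_i),\qquad g_i(\sigma)\coloneqq\mathop{\mathbb{E}}_{\xi\in\{0,1\}^{\tau(u_{i-1}^{(l)})}}(-1)^{\xi\cdot\sigma}\,p_{\xi}^{(u_{i-1}^{(l)})}(\mathbf{y}_i^{(l)}).
\]
This holds uniformly for both kinds of nodes. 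For unit-time nodes, $\xi$ is a single bit and $p_\xi$ is the leaf law of the attached unit-time tree, so no special treatment is needed.

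Third, pass from the average over $s\neq0$ to the average over all $s$. Define $X(s)\coloneqq\prod_i|g_i(s_i)|\ge0$ for every $s$, including $s=0$. For $s\ne0$ we have $|P_s(l)-P_0(l)|=X(s)$. Therefore
\[
\mathop{\mathbb{E}}_{s\ne0}|P_s(l)-P_0(l)|=\frac{\mathop{\mathbb{E}}_{s}X(s)-2^{-T}X(0)}{1-2^{-T}}\le\frac{1}{1-2^{-T}}\mathop{\mathbb{E}}_{s\in\{0,1\}^T}X(s).
\]
The uniform average over $s$ factorizes again into $\prod_i\mathop{\mathbb{E}}_{\sigma}|g_i(\sigma)|$.

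Finally, evaluate each factor by splitting off $\sigma=0$, which has weight $2^{-\tau}$.
\begin{itemize}
\item For $\sigma=0$: $g_i(0)=\mathop{\mathbb{E}}_\xi p_\xi$.
\item For $\sigma\ne0$: exactly half of the $\xi$ satisfy $\xi\cdot\sigma\equiv0$ and half satisfy $\xi\cdot\sigma\equiv1$. Hence
\[
g_i(\sigma)=\tfrac12\Bigl(\mathop{\mathbb{E}}_{\xi\cdot\sigma\equiv0}p_\xi-\mathop{\mathbb{E}}_{\xi\cdot\sigma\equiv1}p_\xi\Bigr),
\]
and the remaining weight is $1-2^{-\tau}$, distributed uniformly over the nonzero $\sigma$.
\end{itemize}
Adding the two contributions gives exactly $f^{(u_{i-1}^{(l)})}(\mathbf{y}_i^{(l)})$ as defined in \eqref{eq:def_f_u_parity}, which proves \eqref{eq:adapt_prob_product_bound}.

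The only delicate point is the first step: one must justify that $x\mapsto\Pi(x)$ is a fixed function of $x$, so that the adaptively chosen partition of $[T]$ along the branch does not interact with the random string. This is where conditioning on the leaf is used. All remaining steps are bookkeeping.
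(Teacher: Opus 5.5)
Your proposal is correct and follows the paper's proof essentially step for step. Both use the identity $\mathop{\mathbb{E}}_{x\in C_s}\Pi(x)=\mathop{\mathbb{E}}_{x}(1+(-1)^{x\cdot s})\Pi(x)$, factorize $P_s(l)-P_0(l)$ into local Walsh coefficients (your $g_i(s_i)$, the paper's $A^{(u_i)}_{s_i}(\mathbf{y}_i)$), pass from the average over $s\neq0$ to the full average at the cost of the factor $1/(1-2^{-T})$, and identify $f^{(u_i)}(\mathbf{y}_i)=\mathop{\mathbb{E}}_{\sigma}|A^{(u_i)}_{\sigma}(\mathbf{y}_i)|$; your explicit remark that $x\mapsto\Pi(x)$ is a fixed function once the leaf is fixed is left implicit in the paper's Eq.~\eqref{eq:T_time_leaf_distribution_parity}.
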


\begin{proof}
Fix a leaf $l\in\mathrm{leaves}(\mathcal{T})$. To simplify notation inside the proof, write
\[
    u_i\coloneqq u_{i-1}^{(l)},
    \qquad
    \tau_i\coloneqq \tau(u_i),
    \qquad
    \mathbf{y}_i\coloneqq \mathbf{y}_i^{(l)}
\]
for $i\in[N(l)]$. Along the branch leading to $l$, any string $x\in\{0,1\}^T$ decomposes as
\[
    x=(x_1,\ldots,x_{N(l)}),
    \qquad
    x_i\in\{0,1\}^{\tau_i}.
\]
We decompose $s\in\{0,1\}^T$ in the same way:
\[
    s=(s_1,\ldots,s_{N(l)}),
    \qquad
    s_i\in\{0,1\}^{\tau_i}.
\]
For each node $u_i$ and each local string $\sigma\in\{0,1\}^{\tau_i}$, define
\[
    A_{\sigma}^{(u_i)}(\mathbf{y}_i)
    \coloneqq
    \mathop{\mathbb{E}}_{\xi\in\{0,1\}^{\tau_i}}
    (-1)^{\xi\cdot\sigma}
    p_{\xi}^{(u_i)}(\mathbf{y}_i).
\]
If $\sigma=0$, then
\[
    A_{0}^{(u_i)}(\mathbf{y}_i)
    =
    \mathop{\mathbb{E}}_{\xi\in\{0,1\}^{\tau_i}}
    p_{\xi}^{(u_i)}(\mathbf{y}_i).
\]
If $\sigma\ne0$, then exactly half of the strings $\xi\in\{0,1\}^{\tau_i}$ satisfy $\xi\cdot\sigma\equiv0$ and half satisfy $\xi\cdot\sigma\equiv1$. Therefore,
\[
    A_{\sigma}^{(u_i)}(\mathbf{y}_i)
    =
    \frac12
    \left(
    \mathop{\mathbb{E}}_{\substack{\xi\in\{0,1\}^{\tau_i}\\ \xi\cdot\sigma\equiv 0}}
    p_{\xi}^{(u_i)}(\mathbf{y}_i)
    -
    \mathop{\mathbb{E}}_{\substack{\xi\in\{0,1\}^{\tau_i}\\ \xi\cdot\sigma\equiv 1}}
    p_{\xi}^{(u_i)}(\mathbf{y}_i)
    \right).
\]
Consequently,
\bb\label{eq:f_as_average_fourier}
    f^{(u_i)}(\mathbf{y}_i)
    =
    \mathop{\mathbb{E}}_{\sigma\in\{0,1\}^{\tau_i}}
    \left|A_{\sigma}^{(u_i)}(\mathbf{y}_i)\right|.
\ee
Now define
\[
    G_l(x)
    \coloneqq
    \prod_{i=1}^{N(l)}
    p_{x_i}^{(u_i)}(\mathbf{y}_i).
\]
For every $s\ne0$, using that
\[
    \mathop{\mathbb{E}}_{\substack{x\in\{0,1\}^T\\ x\cdot s\equiv0}}
    G_l(x)
    =
    \mathop{\mathbb{E}}_{x\in\{0,1\}^T}
    G_l(x)\bigl(1+(-1)^{x\cdot s}\bigr),
\]
we obtain
\bb
    P_s(l)-P_0(l)
    =
    \mathop{\mathbb{E}}_{x\in\{0,1\}^T}
    G_l(x)(-1)^{x\cdot s}
    =
    \mathop{\mathbb{E}}_{x_1,\ldots,x_{N(l)}}
    \prod_{i=1}^{N(l)}
    p_{x_i}^{(u_i)}(\mathbf{y}_i)(-1)^{x_i\cdot s_i}
    =
    \prod_{i=1}^{N(l)}
    A_{s_i}^{(u_i)}(\mathbf{y}_i).
\ee
Taking absolute values gives
\bb\label{eq:leaf_diff_product_fourier}
    \left|P_s(l)-P_0(l)\right|
    =
    \prod_{i=1}^{N(l)}
    \left|A_{s_i}^{(u_i)}(\mathbf{y}_i)\right|
    \qquad
    \forall\,s\ne0.
\ee
We now average over the uniformly chosen non-zero string $s$. Since the decomposition $s=(s_1,\ldots,s_{N(l)})$ is compatible with the decomposition of the block, averaging uniformly over $s\in\{0,1\}^T$ is the same as averaging independently over each local string $s_i\in\{0,1\}^{\tau_i}$. Hence, using \eqref{eq:leaf_diff_product_fourier},
\bb
    \mathop{\mathbb{E}}_{\substack{s\in\{0,1\}^T\\ s\ne0}}
    \left|P_s(l)-P_0(l)\right|
    &=
    \frac{1}{1-2^{-T}}
    \left[
    \mathop{\mathbb{E}}_{s\in\{0,1\}^T}
    \prod_{i=1}^{N(l)}
    \left|A_{s_i}^{(u_i)}(\mathbf{y}_i)\right|
    -
    2^{-T}
    \prod_{i=1}^{N(l)}
    \left|A_{0}^{(u_i)}(\mathbf{y}_i)\right|
    \right]\\
    &\le
    \frac{1}{1-2^{-T}}
    \prod_{i=1}^{N(l)}
    \mathop{\mathbb{E}}_{\sigma\in\{0,1\}^{\tau_i}}
    \left|A_{\sigma}^{(u_i)}(\mathbf{y}_i)\right|\\
    &=
    \frac{1}{1-2^{-T}}
    \prod_{i=1}^{N(l)}
    f^{(u_i)}(\mathbf{y}_i),
\ee
where in the last step we used \eqref{eq:f_as_average_fourier}. This concludes the proof.
\end{proof}

We now need to control the local factors appearing in Lemma~\ref{lemma_adapt_prob}. The bound in \eqref{eq:adapt_prob_product_bound} reduces the leafwise distinguishability to a product of terms $f^{(u)}$, one for each node visited by the branch. Thus, the next step is to bound $\|f^{(u)}\|_1$ uniformly over the nodes of the $T$-time tree. There are two cases: if $\tau(u)=1$, the node is resolved by an attached unit-time tree; if $\tau(u)\ge2$, the transition is directly given by the Gaussian law in \eqref{eq_gausscond2}. We begin with the first case.

\subsubsection{Unit-time tree analysis}

Let $u$ be a node of the $T$-time tree with $\tau(u)=1$. Then the local substring is a single bit $\xi\in\{0,1\}$, and $p_\xi^{(u)}$ is the leaf distribution of the unit-time tree attached to $u$, conditioned on the signal value $\xi$. The next lemma shows that no amount of adaptivity inside this unit interval can make the two cases $\xi=0$ and $\xi=1$ more distinguishable than a single Gaussian observation with total signal strength equal to one. This gives the required estimate on the local factor $f^{(u)}$.

\begin{lemma}\label{lemma_unit_time_tree_contribution}
Let $u$ be a node of a $T$-time tree with $\tau(u)=1$, and let $\mathcal{U}_u$ be the unit-time tree attached to $u$. Then
\bb\label{eq:unit_time_fu_bound}
    \left\|f^{(u)}\right\|_1
    \le
    \exp\!\left(-\frac{1}{12}e^{-5z^2/2}\right).
\ee
\end{lemma}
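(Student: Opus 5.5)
Since $\tau(u)=1$, the local string is a single bit $\xi\in\{0,1\}$. The only non-zero $\sigma$ is $\sigma=1$, so definition \eqref{eq:def_f_u_parity} collapses to
\[
f^{(u)}(\mathbf{y})=\tfrac14\bigl(p^{(u)}_0(\mathbf{y})+p^{(u)}_1(\mathbf{y})\bigr)+\tfrac14\bigl|p^{(u)}_1(\mathbf{y})-p^{(u)}_0(\mathbf{y})\bigr|.
\]
Integrating over the leaves of $\mathcal{U}_u$ gives $\|f^{(u)}\|_1=\tfrac12+\tfrac12\,\delta$, where $\delta\coloneqq\tfrac12\|p_1^{(u)}-p_0^{(u)}\|_1$ is the total variation distance between the leaf distributions of the unit-time tree under $m=1$ and $m=0$. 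Since $1-x\le e^{-x}$, it suffices to show $\delta\le 1-\tfrac16 e^{-5z^2/2}$. That inequality gives $\|f^{(u)}\|_1\le 1-\tfrac1{12}e^{-5z^2/2}\le\exp(-\tfrac1{12}e^{-5z^2/2})$.

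The main step is to show that no adaptive strategy inside the unit interval beats a single Gaussian observation with full signal strength. I would prove this by a simulation argument based on a two-dimensional Brownian motion. Set $\sigma^2\coloneqq 1/(10z^2)$ and define $B(t)\coloneqq m\,t\,e_1+\sigma W(t)$ for $t\in[0,1]$, where $W$ is a standard 2D Brownian motion. Consider an experiment in $\mathcal{U}_u$ running from time depth $t(v)$ for duration $\tau$, with coefficient vector $\mathbf{w}=(a,b)$ satisfying $\|\mathbf{w}\|\le\tau$. The increment $B(t(v)+\tau)-B(t(v))$ is distributed as $\NN[m\tau e_1,\sigma^2\tau\mathbb{1}_2]$. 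Applying the rotation taking $e_1$ to $\mathbf{w}/\|\mathbf{w}\|$ and the scaling $\|\mathbf{w}\|/\tau\le1$ (Lemma~\ref{reduction_via_change}) yields mean $m\mathbf{w}$ and covariance $\sigma^2(\|\mathbf{w}\|^2/\tau)\mathbb{1}_2\le\sigma^2\tau\,\mathbb{1}_2\le\sigma^2\mathbb{1}_2$. Adding independent noise (Lemma~\ref{reduction_via_convolution}) then gives exactly the law \eqref{eq_6z}.

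Increments over disjoint consecutive intervals are independent, and each adaptive choice depends only on previously simulated outcomes. Hence walking down the tree while revealing the path of $B$ interval by interval reproduces the leaf distribution of $\mathcal{U}_u$ exactly, for both $m=0$ and $m=1$. The post-processing is independent of $m$, so the data processing inequality bounds $\delta$ by the total variation distance between the path laws of $B$ under $m=0$ and $m=1$.

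Next, $B(1)$ is a sufficient statistic for $m$: conditioned on $B(1)$, the path is a Brownian bridge whose law does not depend on $m$. So the path laws are at the same TV distance as the laws of $B(1)$, namely $\NN[0,\sigma^2\mathbb{1}_2]$ versus $\NN[e_1,\sigma^2\mathbb{1}_2]$. The second coordinate has the same law under both hypotheses and is independent of the first, so this reduces to the one-dimensional distance $\tfrac12\|\NN[0,\sigma^2]-\NN[1,\sigma^2]\|_1$. Lemma~\ref{tvd_one_g}, Eq.~\eqref{eq_2_l}, applies with $t=1/(2\sigma)$, so $t^2=5z^2/2$, and dropping the nonnegative second term gives $\delta\le 1-\tfrac16e^{-5z^2/2}$, as required.

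The main obstacle I expect is making the simulation rigorous when the waiting times are chosen adaptively and take continuous values. This requires checking that the sequence of revealed increments defines the correct joint law, which follows from independent increments together with the fact that each interval is chosen before its increment is revealed. It also requires that the branch durations sum to exactly $1$ at the leaves. If any time remains unused, I would simply reveal the leftover increment and discard it, which does not affect the argument.
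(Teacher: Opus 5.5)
Your proof is correct. It shares the outer shell with the paper's proof: both reduce to $\|f^{(u)}\|_1=\tfrac12+\tfrac12\delta$ and finish with Eq.~\eqref{eq_2_l} at $t^2=5z^2/2$. The difference is in the key step, showing that $\delta$ is at most $\tfrac12\|\NN[1,\tfrac{1}{10z^2}]-\NN[0,\tfrac{1}{10z^2}]\|_1$.

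The paper proves this by backward induction on $\mathcal{U}_u$, using the potential $\int|W_1(v)\NN[1-t(v),\tfrac{1}{10z^2}]-W_0(v)\NN[0,\tfrac{1}{10z^2}]|$. At each node it merges the two-dimensional outcome with this one-dimensional ``remaining-time'' Gaussian into a three-dimensional Gaussian and rotates onto a single axis. It then bounds the separation by $1-t(w)$ via $\sqrt{a^2+b^2}\le\tau$ and the triangle inequality, and finally rescales and uses that convolution with a probability density contracts $L^1$.

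You instead make one global simulation: a single drifted Brownian path reproduces every adaptive strategy in the unit interval, and data processing plus sufficiency of $B(1)$ finish the argument.

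What each route buys:
\begin{itemize}
\item The paper's argument is elementary and self-contained. It needs only the Gaussian convolution and change-of-variable lemmas from the preliminaries, and it mirrors the other tree inductions in the paper.
\item Your argument is more conceptual: it shows the whole unit interval is statistically equivalent to one Gaussian observation of full strength, and it avoids per-node algebra.
\item Your argument does not rely on backward induction over a well-founded tree, so it would also cover strategies with unboundedly many adaptive steps inside the interval.
\item Your argument only uses $a^2+b^2\le\tau$, which is weaker than $\sqrt{a^2+b^2}\le\tau$. This is because your covariance condition is $\sigma^2\|\mathbf{w}\|^2/\tau\le\sigma^2$.
\end{itemize}

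The price is stochastic-process machinery, and you should state the following points explicitly:
\begin{itemize}
\item The increment over each adaptively chosen interval is independent of the past with the stated law. This holds by the strong Markov property, since each start time is determined by earlier increments and independent auxiliary noise, so it is a stopping time.
\item Sufficiency of $B(1)$ can be made elementary via $B(t)=tB(1)+\sigma\bigl(W(t)-tW(1)\bigr)$. Here the bridge term is uncorrelated with $W(1)$, hence independent of $B(1)$, and its law does not depend on $m$.
\item The degenerate case $(a,b)=0$, where your rescaling is not invertible, is handled by discarding the increment and outputting pure noise $\NN[0,\sigma^2\mathbb{1}_2]$.
\end{itemize}
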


\begin{proof}
Let $P_0^{(\mathcal{U}_u)}$ and $P_1^{(\mathcal{U}_u)}$ denote the leaf distributions of the attached unit-time tree when the value of the signal inside the unit interval is respectively $0$ and $1$. We first show that
\bb\label{eq:unit_time_tree_tv_bound}
    \left\|P_1^{(\mathcal{U}_u)}-P_0^{(\mathcal{U}_u)}\right\|_1
    \le
    \left\|
    \NN\!\left[1,\frac{1}{10z^2}\right]
    -
    \NN\!\left[0,\frac{1}{10z^2}\right]
    \right\|_1 .
\ee
Let $r$ be the root of $\mathcal{U}_u$. For every node $v$ of $\mathcal{U}_u$ and every $x\in\{0,1\}$, define the path weight
\[
    W_x(v)
    \coloneqq
    \prod_{i=1}^{N(v)}
    p_x^{(u_{i-1}^{(v)})}\!\left(\mathbf{y}_i^{(v)}\right),
\]
with the convention $W_x(r)=1$. Thus, if $l$ is a leaf of $\mathcal{U}_u$, then $W_x(l)=P_x^{(\mathcal{U}_u)}(l)$.

We prove, by backward induction on the unit-time tree, that for every node $v$,
\bb\label{eq:unit_time_induction_claim}
    \sum_{l\in\mathrm{leaves}(v)}
    \left|W_1(l)-W_0(l)\right|
    \le
    \int_{\mathbb{R}}
    \left|
    W_1(v)
    \NN\!\left[1-t(v),\frac{1}{10z^2}\right](r')
    -
    W_0(v)
    \NN\!\left[0,\frac{1}{10z^2}\right](r')
    \right|
    \mathrm{d}r' .
\ee
If $v$ is a leaf, then $t(v)=1$ and $\mathrm{leaves}(v)=\{v\}$. Hence the right-hand side of \eqref{eq:unit_time_induction_claim} is
\[
    \int_{\mathbb{R}}
    \left|
    W_1(v)
    -
    W_0(v)
    \right|
    \NN\!\left[0,\frac{1}{10z^2}\right](r')\mathrm{d}r'
    =
    |W_1(v)-W_0(v)|,
\]
so the claim holds with equality.

Now let $w$ be an internal node, and assume that \eqref{eq:unit_time_induction_claim} holds for every child $v$ of $w$. Since each child satisfies $t(v)=t(w)+\tau(w)$, the induction hypothesis gives
\bb\label{eq:aaaa}
    &\sum_{l\in\mathrm{leaves}(w)}
    \left|W_1(l)-W_0(l)\right|\\
    &\le
    \sum_{v\in\mathrm{children}(w)}
    \int_{\mathbb{R}}
    \left|
    W_1(v)\NN\!\left[1-t(w)-\tau(w),\frac{1}{10z^2}\right](r')
    -
    W_0(v)\NN\!\left[0,\frac{1}{10z^2}\right](r')
    \right|
    \mathrm{d}r' .
\ee
For every child $v$ of $w$, we have $W_x(v)=W_x(w)p_x^{(w)}(\mathbf{y}_v)$. Moreover, by the unit-time transition law \eqref{eq_6z},
\[
    p_1^{(w)}(\mathbf{y})
    =
    \NN\!\left[
    \begin{pmatrix}
        a^{(w)}\\
        b^{(w)}
    \end{pmatrix},
    \frac{\mathbb{1}_2}{10z^2}
    \right](\mathbf{y}),
    \qquad
    p_0^{(w)}(\mathbf{y})
    =
    \NN\!\left[
    \begin{pmatrix}
        0\\
        0
    \end{pmatrix},
    \frac{\mathbb{1}_2}{10z^2}
    \right](\mathbf{y}).
\]
Thus the right-hand side of \eqref{eq:aaaa} becomes
\bb\label{eq:unit_time_step_integral}
    &\int_{\mathbb{R}^2}\mathrm{d}^2\mathbf{y}
    \int_{\mathbb{R}}\mathrm{d}r'\,
    \Bigg|
    W_1(w)
    \NN\!\left[
    \begin{pmatrix}
        a^{(w)}\\
        b^{(w)}
    \end{pmatrix},
    \frac{\mathbb{1}_2}{10z^2}
    \right](\mathbf{y})
    \NN\!\left[1-t(w)-\tau(w),\frac{1}{10z^2}\right](r')\\
    &\hspace{38mm}
    -
    W_0(w)
    \NN\!\left[
    \begin{pmatrix}
        0\\
        0
    \end{pmatrix},
    \frac{\mathbb{1}_2}{10z^2}
    \right](\mathbf{y})
    \NN\!\left[0,\frac{1}{10z^2}\right](r')
    \Bigg|.
\ee
Equivalently, this is the $L^1$ distance between two Gaussian densities in $\mathbb{R}^3$, with the same covariance matrix $\mathbb{1}_3/(10z^2)$ and means
\[
    \begin{pmatrix}
        a^{(w)}\\
        b^{(w)}\\
        1-t(w)-\tau(w)
    \end{pmatrix}
    \qquad\text{and}\qquad
    \begin{pmatrix}
        0\\
        0\\
        0
    \end{pmatrix},
\]
weighted respectively by $W_1(w)$ and $W_0(w)$. Since the covariance is proportional to the identity, we may rotate the variables without changing the integral. After rotating the first mean onto a single coordinate and integrating out the two orthogonal coordinates, \eqref{eq:unit_time_step_integral} is equal to
\bb\label{eq:unit_time_delta_bound}
    \int_{\mathbb{R}}
    \left|
    W_1(w)
    \NN\!\left[\Delta_w,\frac{1}{10z^2}\right](r')
    -
    W_0(w)
    \NN\!\left[0,\frac{1}{10z^2}\right](r')
    \right|
    \mathrm{d}r',
\ee
where
\[
    \Delta_w
    \coloneqq
    \sqrt{
        (a^{(w)})^2
        +
        (b^{(w)})^2
        +
        \bigl(1-t(w)-\tau(w)\bigr)^2
    }.
\]
If $\Delta_w=0$, then \eqref{eq:unit_time_delta_bound} reduces to $|W_1(w)-W_0(w)|$, and the desired induction step follows immediately. We may therefore assume that $\Delta_w>0$. By Condition~1 of Lemma~\ref{lemma_access_model}, the rotation coefficients satisfy $\sqrt{(a^{(w)})^2+(b^{(w)})^2}\le \tau(w)$. Hence
\[
    \Delta_w
    \le
    \sqrt{(a^{(w)})^2+(b^{(w)})^2}+1-t(w)-\tau(w)
    \le
    1-t(w).
\]
Thus the quantity $\gamma_w
    \coloneqq
    \frac{1-t(w)}{\Delta_w}$ is well defined and satisfies $\gamma_w\ge 1$.
By the change of variables $r'\mapsto \gamma_w r'$, the integral in \eqref{eq:unit_time_delta_bound} is equal to
\[
    \int_{\mathbb{R}}
    \left|
    W_1(w)
    \NN\!\left[1-t(w),\frac{\gamma_w^2}{10z^2}\right](r')
    -
    W_0(w)
    \NN\!\left[0,\frac{\gamma_w^2}{10z^2}\right](r')
    \right|
    \mathrm{d}r' .
\]
Since $\gamma_w\ge 1$, Lemma~\ref{conv_gauss} gives
\[
    \NN\!\left[1-t(w),\frac{\gamma_w^2}{10z^2}\right]
    =
    \NN\!\left[0,\frac{\gamma_w^2-1}{10z^2}\right]
    *
    \NN\!\left[1-t(w),\frac{1}{10z^2}\right],
\]
and
\[
    \NN\!\left[0,\frac{\gamma_w^2}{10z^2}\right]
    =
    \NN\!\left[0,\frac{\gamma_w^2-1}{10z^2}\right]
    *
    \NN\!\left[0,\frac{1}{10z^2}\right].
\]
Convolution with a fixed probability density cannot increase the $L^1$ norm: indeed, for every probability density $K$ and every integrable signed density $h$,
\[
    \|K*h\|_1
    \le
    \int\!\!\int K(r-r')|h(r')|\,\mathrm{d}r'\mathrm{d}r
    =
    \|h\|_1 .
\]
Applying this inequality with $K=\NN\!\left[0,\frac{\gamma_w^2-1}{10z^2}\right]$ and
\[
    h
    =
    W_1(w)\NN\!\left[1-t(w),\frac{1}{10z^2}\right]
    -
    W_0(w)\NN\!\left[0,\frac{1}{10z^2}\right],
\]
we obtain
\[
    \sum_{l\in\mathrm{leaves}(w)}
    |W_1(l)-W_0(l)|
    \le
    \int_{\mathbb{R}}
    \left|
    W_1(w)
    \NN\!\left[1-t(w),\frac{1}{10z^2}\right](r')
    -
    W_0(w)
    \NN\!\left[0,\frac{1}{10z^2}\right](r')
    \right|
    \mathrm{d}r'.
\]
This proves the induction step, and therefore proves \eqref{eq:unit_time_induction_claim}.

Applying \eqref{eq:unit_time_induction_claim} to the root $r$ of $\mathcal{U}_u$, and using $t(r)=0$ and $W_1(r)=W_0(r)=1$, we get
\[
    \left\|P_1^{(\mathcal{U}_u)}-P_0^{(\mathcal{U}_u)}\right\|_1
    =
    \sum_{l\in\mathrm{leaves}(\mathcal{U}_u)}
    |W_1(l)-W_0(l)|
    \le
    \left\|
    \NN\!\left[1,\frac{1}{10z^2}\right]
    -
    \NN\!\left[0,\frac{1}{10z^2}\right]
    \right\|_1,
\]
which proves \eqref{eq:unit_time_tree_tv_bound}.

It remains to translate this into a bound on the local factor $f^{(u)}$. Since $\tau(u)=1$, the definition \eqref{eq:def_f_u_parity} gives
\[
    f^{(u)}(\mathbf{y})
    =
    \frac14\left(p_0^{(u)}(\mathbf{y})+p_1^{(u)}(\mathbf{y})\right)
    +
    \frac14\left|p_1^{(u)}(\mathbf{y})-p_0^{(u)}(\mathbf{y})\right|.
\]
Here $p_0^{(u)}=P_0^{(\mathcal{U}_u)}$ and $p_1^{(u)}=P_1^{(\mathcal{U}_u)}$ are probability densities over the leaves of the attached unit-time tree. Hence
\bb
    \left\|f^{(u)}\right\|_1
    &=
    \frac14\left\|p_0^{(u)}+p_1^{(u)}\right\|_1
    +
    \frac14\left\|p_1^{(u)}-p_0^{(u)}\right\|_1\\
    &=
    \frac12
    +
    \frac14
    \left\|P_1^{(\mathcal{U}_u)}-P_0^{(\mathcal{U}_u)}\right\|_1\\
    &\le
    \frac12
    +
    \frac14
    \left\|
    \NN\!\left[1,\frac{1}{10z^2}\right]
    -
    \NN\!\left[0,\frac{1}{10z^2}\right]
    \right\|_1 .
\ee
We now bound the last expression using Lemma~\ref{tvd_one_g}. Apply that lemma with $m_1=1$, $m_2=0$, and $V=1/(10z^2)$. Then
\[
    t=\frac{|m_1-m_2|}{2\sqrt{V}}=\frac{\sqrt{10}\,z}{2},
    \qquad
    t^2=\frac{5z^2}{2}.
\]
By \eqref{eq_2_l},
\bb
    \frac12
    \left\|
    \NN\!\left[1,\frac{1}{10z^2}\right]
    -
    \NN\!\left[0,\frac{1}{10z^2}\right]
    \right\|_1
    &\le
    1-
    \left(
    \frac16 e^{-5z^2/2}
    +
    \sqrt{\frac{2}{\pi}}\,
    \frac{e^{-5z^2/4}}{\frac{\sqrt{10}}{2}z+1}
    \right) \le
    1-\frac16 e^{-5z^2/2}.
\ee
Therefore,
\bb
    \frac12
    +
    \frac14
    \left\|
    \NN\!\left[1,\frac{1}{10z^2}\right]
    -
    \NN\!\left[0,\frac{1}{10z^2}\right]
    \right\|_1
    &=
    \frac12
    +
    \frac12\cdot
    \frac12
    \left\|
    \NN\!\left[1,\frac{1}{10z^2}\right]
    -
    \NN\!\left[0,\frac{1}{10z^2}\right]
    \right\|_1\\
    &\le
    1-\frac{1}{12}e^{-5z^2/2}\\
    &\le
    \exp\!\left(-\frac{1}{12}e^{-5z^2/2}\right),
\ee
where the last inequality uses $1-x\le e^{-x}$ for $x\ge0$. Combining this estimate with the previous bound on $\|f^{(u)}\|_1$ proves \eqref{eq:unit_time_fu_bound}.
\end{proof}

\subsubsection{Analysis for $\tau(u)\ge2$}

We now consider the second case, namely a node $u$ of the $T$-time tree with $\tau(u)\ge2$. In this case the transition is directly described by the Gaussian law in \eqref{eq_gausscond2}. Set $t\coloneqq \tau(u)$ and, for every local string $\xi\in\{0,1\}^t$, write
\[
    p_{\xi}^{(u)}(\mathbf{y})
    =
    \NN\!\left[
    \begin{pmatrix}
        a^{(u)}\!\cdot \xi\\
        b^{(u)}\!\cdot \xi
    \end{pmatrix},
    \frac{\mathbb{1}_2}{10z^2}
    \right](\mathbf{y}),
    \qquad \mathbf{y}\in\mathbb{R}^2 .
\]
We use the same Boolean-Fourier notation introduced in the proof of Lemma~\ref{lemma_adapt_prob}: for every $\sigma\in\{0,1\}^t$, define
\[
    A_{\sigma}^{(u)}(\mathbf{y})
    \coloneqq
    \mathop{\mathbb{E}}_{\xi\in\{0,1\}^t}
    (-1)^{\xi\cdot\sigma}
    p_{\xi}^{(u)}(\mathbf{y}).
\]
By \eqref{eq:f_as_average_fourier}, the local factor appearing in Lemma~\ref{lemma_adapt_prob} satisfies
\[
    f^{(u)}(\mathbf{y})
    =
    \mathop{\mathbb{E}}_{\sigma\in\{0,1\}^t}
    \left|A_{\sigma}^{(u)}(\mathbf{y})\right|.
\]
Consequently,
\bb\label{eq_to_use_1}
    \left\|f^{(u)}\right\|_1
    =
    \frac{1}{2^t}
    +
    \left(1-\frac{1}{2^t}\right)
    \mathop{\mathbb{E}}_{\substack{\sigma\in\{0,1\}^t\\ \sigma\ne0}}
    \left\|A_{\sigma}^{(u)}\right\|_1,
\ee
because $A_0^{(u)}=\mathop{\mathbb{E}}_{\xi}p_\xi^{(u)}$ is a probability density. Hence controlling $\|f^{(u)}\|_1$ reduces to controlling the average $L^1$ norm of the non-zero coefficients $A_\sigma^{(u)}$. We begin with an $L^2$ estimate.

\begin{lemma}\label{lemma_l2_A_tau_ge_2}
Let $u$ be a node of the $T$-time tree with $t=\tau(u)\ge2$. Then
\[
    \mathop{\mathbb{E}}_{\substack{\sigma\in\{0,1\}^t\\ \sigma\ne0}}
    \left\|A_{\sigma}^{(u)}\right\|_2
    \le
    \sqrt{\frac{5}{\pi}}\,
    \frac{z}{2^{t/2}} .
\]
\end{lemma}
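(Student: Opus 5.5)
The plan is to go from the average of $L^2$ norms to an average of squared $L^2$ norms by Cauchy--Schwarz, and then use Boolean Parseval pointwise in $\mathbf{y}$. For fixed $\mathbf{y}\in\mathbb{R}^2$, the numbers $A^{(u)}_\sigma(\mathbf{y})$ are exactly the Walsh--Fourier coefficients of the function $\xi\mapsto p^{(u)}_\xi(\mathbf{y})$ on $\{0,1\}^t$. Parseval's identity therefore gives
\[
\sum_{\sigma\in\{0,1\}^t}\bigl(A^{(u)}_\sigma(\mathbf{y})\bigr)^2=\mathop{\mathbb{E}}_{\xi\in\{0,1\}^t}\bigl(p^{(u)}_\xi(\mathbf{y})\bigr)^2 .
\]
Dropping the $\sigma=0$ term and integrating over $\mathbf{y}$ (Fubini applies since everything is nonnegative), we get
\[
\sum_{\sigma\ne0}\|A^{(u)}_\sigma\|_2^2\le\mathop{\mathbb{E}}_{\xi}\|p^{(u)}_\xi\|_2^2 .
\]

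Next we compute $\|p^{(u)}_\xi\|_2^2$. Each $p^{(u)}_\xi$ is a two-dimensional Gaussian density with covariance $V\mathbb{1}_2$, where $V=1/(10z^2)$. Its squared $L^2$ norm does not depend on the mean and equals $1/(4\pi V)=10z^2/(4\pi)=5z^2/(2\pi)$. This comes from the standard identity $\int\mathcal{N}[0,V\mathbb{1}_2]^2=\mathcal{N}[0,2V\mathbb{1}_2](0)$, which is the convolution rule of Lemma~\ref{conv_gauss} evaluated at the origin. Hence
\[
\sum_{\sigma\ne0}\|A^{(u)}_\sigma\|_2^2\le \frac{5z^2}{2\pi}.
\]

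Finally, Jensen/Cauchy--Schwarz over the uniform average on the $2^t-1$ nonzero strings gives
\[
\mathop{\mathbb{E}}_{\sigma\ne0}\|A^{(u)}_\sigma\|_2\le\Bigl(\frac{1}{2^t-1}\sum_{\sigma\ne0}\|A^{(u)}_\sigma\|_2^2\Bigr)^{1/2}\le\sqrt{\frac{5z^2}{2\pi(2^t-1)}} .
\]
It remains to check $\frac{1}{2(2^t-1)}\le 2^{-t}$, which is equivalent to $2^t\ge2$ and holds for every $t\ge1$. This gives the claimed bound $\sqrt{5/\pi}\,z\,2^{-t/2}$.

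No step looks like a real obstacle. The only points needing care are identifying $A_\sigma$ with Fourier coefficients under the normalization $\mathbb{E}_\xi(-1)^{\xi\cdot\sigma}$, which is the normalization under which Parseval holds in the form written, and keeping the constant in the Gaussian $L^2$ norm right. The hypothesis $t\ge2$ is not actually needed for this estimate; it only matters later, when the $L^2$ bound is converted into an $L^1$ bound on $f^{(u)}$.
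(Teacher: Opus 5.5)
Your proof is correct and is essentially the paper's argument: Jensen over the $2^t-1$ nonzero strings, Boolean Parseval with the $\sigma=0$ term dropped, the mean-independent Gaussian value $\|p^{(u)}_\xi\|_2^2=5z^2/(2\pi)$, and $2^t-1\ge 2^{t-1}$. The only difference is that the paper first passes to the Fourier domain via Plancherel and applies Parseval pointwise in $\boldsymbol{\omega}$, where $|\widehat{p^{(u)}_\xi}|$ is visibly independent of $\xi$; you instead apply it pointwise in $\mathbf{y}$ and use translation invariance, which makes the Plancherel step unnecessary.
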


\begin{proof}
We use the Fourier transform convention
\[
    \widehat h(\boldsymbol{\omega})
    \coloneqq
    \frac{1}{2\pi}
    \int_{\mathbb{R}^2}
    h(\mathbf{y})e^{-i\boldsymbol{\omega}\cdot\mathbf{y}}\,
    \mathrm{d}^2\mathbf{y},
    \qquad
    \boldsymbol{\omega}=(\omega_1,\omega_2),
\]
so that Plancherel's identity reads $\|h\|_2=\|\widehat h\|_2$. Set $v_z\coloneqq 1/(10z^2)$. For every $\xi\in\{0,1\}^t$, the density $p_\xi^{(u)}$ is the two-dimensional Gaussian
\[
    p_\xi^{(u)}(\mathbf{y})
    =
    \NN\!\left[
    \begin{pmatrix}
        a^{(u)}\!\cdot \xi\\
        b^{(u)}\!\cdot \xi
    \end{pmatrix},
    v_z\mathbb{1}_2
    \right](\mathbf{y}),
\]
and therefore
\[
    \widehat{p_\xi^{(u)}}(\boldsymbol{\omega})
    =
    \frac{1}{2\pi}
    \exp\!\left(
        -i\omega_1 a^{(u)}\!\cdot\xi
        -i\omega_2 b^{(u)}\!\cdot\xi
    \right)
    \exp\!\left(
        -\frac{v_z}{2}\|\boldsymbol{\omega}\|_2^2
    \right).
\]
By Jensen's inequality and Plancherel's identity,
\begin{equation}\label{eq_proof_cr}
\begin{aligned}
    \left(
    \mathop{\mathbb{E}}_{\substack{\sigma\in\{0,1\}^t\\ \sigma\ne0}}
    \left\|A_\sigma^{(u)}\right\|_2
    \right)^2
    &\le
    \mathop{\mathbb{E}}_{\substack{\sigma\in\{0,1\}^t\\ \sigma\ne0}}
    \left\|A_\sigma^{(u)}\right\|_2^2 =
    \int_{\mathbb{R}^2}
    \mathop{\mathbb{E}}_{\substack{\sigma\in\{0,1\}^t\\ \sigma\ne0}}
    \left|
    \widehat{A_\sigma^{(u)}}(\boldsymbol{\omega})
    \right|^2
    \mathrm{d}^2\boldsymbol{\omega}.
\end{aligned}
\end{equation}
By linearity of the Fourier transform,
\[
    \widehat{A_\sigma^{(u)}}(\boldsymbol{\omega})
    =
    \mathop{\mathbb{E}}_{\xi\in\{0,1\}^t}
    (-1)^{\xi\cdot\sigma}
    \widehat{p_\xi^{(u)}}(\boldsymbol{\omega}).
\]
Thus, for fixed $\boldsymbol{\omega}$, adding the missing term $\sigma=0$ gives
\[
\begin{aligned}
    \mathop{\mathbb{E}}_{\substack{\sigma\in\{0,1\}^t\\ \sigma\ne0}}
    \left|
    \widehat{A_\sigma^{(u)}}(\boldsymbol{\omega})
    \right|^2
    &\le
    \frac{1}{2^t-1}
    \sum_{\sigma\in\{0,1\}^t}
    \left|
    \mathop{\mathbb{E}}_{\xi\in\{0,1\}^t}
    (-1)^{\xi\cdot\sigma}
    \widehat{p_\xi^{(u)}}(\boldsymbol{\omega})
    \right|^2 .
\end{aligned}
\]
Expanding the square,
\[
\begin{aligned}
    &\sum_{\sigma\in\{0,1\}^t}
    \left|
    \mathop{\mathbb{E}}_{\xi\in\{0,1\}^t}
    (-1)^{\xi\cdot\sigma}
    \widehat{p_\xi^{(u)}}(\boldsymbol{\omega})
    \right|^2 =
    \frac{1}{2^{2t}}
    \sum_{\xi,\xi'\in\{0,1\}^t}
    \widehat{p_\xi^{(u)}}(\boldsymbol{\omega})
    \overline{\widehat{p_{\xi'}^{(u)}}(\boldsymbol{\omega})}
    \sum_{\sigma\in\{0,1\}^t}
    (-1)^{(\xi+\xi')\cdot\sigma}.
\end{aligned}
\] 
The inner sum is $2^t$ if $\xi=\xi'$ and $0$ otherwise. Hence only the diagonal terms survive, and
\[
\begin{aligned}
    \sum_{\sigma\in\{0,1\}^t}
    \left|
    \mathop{\mathbb{E}}_{\xi\in\{0,1\}^t}
    (-1)^{\xi\cdot\sigma}
    \widehat{p_\xi^{(u)}}(\boldsymbol{\omega})
    \right|^2
    &=
    \frac{1}{2^t}
    \sum_{\xi\in\{0,1\}^t}
    \left|
    \widehat{p_\xi^{(u)}}(\boldsymbol{\omega})
    \right|^2  =
    \mathop{\mathbb{E}}_{\xi\in\{0,1\}^t}
    \left|
    \widehat{p_\xi^{(u)}}(\boldsymbol{\omega})
    \right|^2 .
\end{aligned}
\]
Using the explicit Fourier transform of $p_\xi^{(u)}$, whose modulus is independent of $\xi$, we obtain
\[
\begin{aligned}
    \mathop{\mathbb{E}}_{\substack{\sigma\in\{0,1\}^t\\ \sigma\ne0}}
    \left|
    \widehat{A_\sigma^{(u)}}(\boldsymbol{\omega})
    \right|^2
    &\le
    \frac{1}{2^t-1}
    \mathop{\mathbb{E}}_{\xi\in\{0,1\}^t}
    \left|
    \widehat{p_\xi^{(u)}}(\boldsymbol{\omega})
    \right|^2  =
    \frac{1}{2^t-1}\,
    \frac{1}{4\pi^2}
    e^{-v_z\|\boldsymbol{\omega}\|_2^2}.
\end{aligned}
\]
Substituting this into \eqref{eq_proof_cr} yields
\[
\begin{aligned}
    \left(
    \mathop{\mathbb{E}}_{\substack{\sigma\in\{0,1\}^t\\ \sigma\ne0}}
    \left\|A_\sigma^{(u)}\right\|_2
    \right)^2
    &\le
    \frac{1}{2^t-1}
    \frac{1}{4\pi^2}
    \int_{\mathbb{R}^2}
    e^{-v_z\|\boldsymbol{\omega}\|_2^2}
    \mathrm{d}^2\boldsymbol{\omega}  =
    \frac{1}{2^t-1}
    \frac{1}{4\pi^2}
    \frac{\pi}{v_z}
    =
    \frac{5z^2}{2\pi(2^t-1)}.
\end{aligned}
\]
Since $2^t-1\ge 2^{t-1}$, this gives
\[
    \left(
    \mathop{\mathbb{E}}_{\substack{\sigma\in\{0,1\}^t\\ \sigma\ne0}}
    \left\|A_\sigma^{(u)}\right\|_2
    \right)^2
    \le
    \frac{5z^2}{\pi\,2^t}.
\]
Taking the square root concludes the proof.
\end{proof}

We now convert the $L^2$ estimate into an $L^1$ estimate. The only additional ingredient is a Gaussian tail bound, which allows us to restrict the integral to a square of controlled area.

\begin{lemma}\label{lemma_l1_A_tau_ge_2}
Let $u$ be a node of the $T$-time tree with $t=\tau(u)\ge2$. Then
\[
    \mathop{\mathbb{E}}_{\substack{\sigma\in\{0,1\}^t\\ \sigma\ne0}}
    \left\|A_{\sigma}^{(u)}\right\|_1
    \le
    \frac{4\sqrt{5}\,tz}{\sqrt{\pi}\,2^{t/2}}
    +
    2e^{-t^2/2}.
\]
\end{lemma}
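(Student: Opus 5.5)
The plan is to split $\mathbb{R}^2$ into a square $Q$ and its complement. On $Q$ we pass from the $L^2$ bound of Lemma~\ref{lemma_l2_A_tau_ge_2} to an $L^1$ bound via Cauchy--Schwarz. On $Q^c$ we use Gaussian tails.

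\textbf{Choice of the square.} The means $a^{(u)}\cdot\xi$ and $b^{(u)}\cdot\xi$ satisfy $|a^{(u)}\cdot\xi|\le\sum_j|a^{(u)}_j|\le t$, and likewise for $b$, since $a_j,b_j\in[-1,1]$ and $\xi\in\{0,1\}^t$. So every Gaussian $p_\xi^{(u)}$ has its mean inside $[-t,t]^2$. Take $Q\coloneqq[-2t,2t]^2$, which has area $16t^2$.

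\textbf{Inside the square.} By Cauchy--Schwarz, $\int_Q|A_\sigma^{(u)}|\le\sqrt{16t^2}\,\|A_\sigma^{(u)}\|_2=4t\|A_\sigma^{(u)}\|_2$. Averaging over $\sigma\ne0$ and applying Lemma~\ref{lemma_l2_A_tau_ge_2} gives at most $4t\sqrt{5/\pi}\,z\,2^{-t/2}$. This is exactly the first term of the claim.

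\textbf{Outside the square.} Since $|A_\sigma^{(u)}|\le\mathbb{E}_\xi p_\xi^{(u)}$ pointwise, we get $\int_{Q^c}|A_\sigma^{(u)}|\le\max_\xi\Pr_{\mathbf{y}\sim p_\xi^{(u)}}[\mathbf{y}\notin Q]$. Every mean lies in $[-t,t]^2$, so leaving $Q$ requires some coordinate to deviate from its mean by at least $t$. Each coordinate has variance $v_z=1/(10z^2)\le 1/10\le1$. By a union bound over the two coordinates and two signs, together with Lemma~\ref{lemma_bounds_Q}, each one-sided tail is at most $\frac12 e^{-t^2/(2v_z)}\le\frac12 e^{-t^2/2}$. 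The total is therefore at most $4\cdot\frac12e^{-t^2/2}=2e^{-t^2/2}$, which matches the second term.

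The only delicate point is the bookkeeping of constants. We must check that the chosen square side matches the factor $4t$, and that the tail bound with variance at most $1$, counted over four half-lines, gives exactly $2e^{-t^2/2}$. Both work out as described, so no step is a serious obstacle.
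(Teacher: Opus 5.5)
Your proof is correct and follows the same route as the paper: Cauchy--Schwarz on a square together with Lemma~\ref{lemma_l2_A_tau_ge_2} for the interior, and a union-bound Gaussian tail estimate for the exterior. The only difference is cosmetic. The paper uses half-side $R=t\bigl(1+\tfrac{1}{\sqrt{10}z}\bigr)$, measures the excursion in standard-deviation units, and then bounds $2R\le 4t$. You take half-side $2t$ directly and use $v_z\le 1$. Both give exactly the stated constants.
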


\begin{proof}
Let
\[
    \sigma_z\coloneqq \frac{1}{\sqrt{10}z},
    \qquad
    R\coloneqq t+\sigma_z t,
    \qquad
    B_R\coloneqq\{\mathbf{y}\in\mathbb{R}^2:\|\mathbf{y}\|_\infty\le R\}.
\]
For every $\xi\in\{0,1\}^t$, the density $p_\xi^{(u)}$ has covariance matrix $\sigma_z^2\mathbb{1}_2$ and mean $\mu_\xi
    =
    \begin{pmatrix}
        a^{(u)}\!\cdot\xi\\
        b^{(u)}\!\cdot\xi
    \end{pmatrix}$. Since $a^{(u)},b^{(u)}\in[-1,1]^t$ and $\xi\in\{0,1\}^t$, both coordinates of $\mu_\xi$ have absolute value at most $t$. Thus, if $\mathbf{Y}=(Y_1,Y_2)\sim p_\xi^{(u)}$, then each coordinate can be written as $Y_j=\mu_j+\sigma_z Z_j$, with $Z_j\sim\NN[0,1]$ and $|\mu_j|\le t$.

The choice $R=t+\sigma_z t$ ensures that $[\mu_j-\sigma_z t,\mu_j+\sigma_z t]\subseteq[-R,R]$ for every $|\mu_j|\le t$. Hence the event $\{|Y_j|>R\}$ implies $|Y_j-\mu_j|>\sigma_z t$, and therefore
\[
    \Pr(|Y_j|>R)
    \le
    \Pr(|Y_j-\mu_j|>\sigma_z t)
    =
    2\int_t^\infty \NN[0,1](r)\,\mathrm{d}r .
\]
By the union bound and the Gaussian tail bound in Lemma~\ref{lemma_bounds_Q},
\begin{equation}\label{eq_useful}
\begin{aligned}
    \Pr(\mathbf{Y}\notin B_R)
    &\le
    \Pr(|Y_1|>R)+\Pr(|Y_2|>R)
    \le
    4\int_t^\infty \NN[0,1](r)\,\mathrm{d}r
    \le
    2e^{-t^2/2}.
\end{aligned}
\end{equation}

Now fix $\sigma\ne0$. Since $A_\sigma^{(u)}=\mathop{\mathbb{E}}_{\xi}(-1)^{\xi\cdot\sigma}p_\xi^{(u)}$, we have
\[
\begin{aligned}
    \int_{B_R^c}
    \left|A_\sigma^{(u)}(\mathbf{y})\right|
    \mathrm{d}^2\mathbf{y}
    &\le
    \mathop{\mathbb{E}}_{\xi\in\{0,1\}^t}
    \int_{B_R^c}
    p_\xi^{(u)}(\mathbf{y})
    \mathrm{d}^2\mathbf{y}
    \le
    2e^{-t^2/2},
\end{aligned}
\]
where the last inequality uses \eqref{eq_useful}. On the square $B_R$, Cauchy--Schwarz gives
\[
    \int_{B_R}
    \left|A_\sigma^{(u)}(\mathbf{y})\right|
    \mathrm{d}^2\mathbf{y}
    \le
    \sqrt{\operatorname{Vol}(B_R)}
    \left\|A_\sigma^{(u)}\right\|_2
    =
    2R\left\|A_\sigma^{(u)}\right\|_2.
\]
Combining the two estimates and averaging over $\sigma\ne0$ yields
\[
\begin{aligned}
    \mathop{\mathbb{E}}_{\substack{\sigma\in\{0,1\}^t\\ \sigma\ne0}}
    \left\|A_\sigma^{(u)}\right\|_1
    &\le
    2R
    \mathop{\mathbb{E}}_{\substack{\sigma\in\{0,1\}^t\\ \sigma\ne0}}
    \left\|A_\sigma^{(u)}\right\|_2
    +
    2e^{-t^2/2}  =
    2t\left(1+\frac{1}{\sqrt{10}z}\right)
    \mathop{\mathbb{E}}_{\substack{\sigma\in\{0,1\}^t\\ \sigma\ne0}}
    \left\|A_\sigma^{(u)}\right\|_2
    +
    2e^{-t^2/2}.
\end{aligned}
\]
Using Lemma~\ref{lemma_l2_A_tau_ge_2} and $1+\frac{1}{\sqrt{10}z}\le2$, which follows from $z\ge1$, we obtain
\[
\begin{aligned}
    \mathop{\mathbb{E}}_{\substack{\sigma\in\{0,1\}^t\\ \sigma\ne0}}
    \left\|A_\sigma^{(u)}\right\|_1
    &\le
    2t\left(1+\frac{1}{\sqrt{10}z}\right)
    \sqrt{\frac{5}{\pi}}\,
    \frac{z}{2^{t/2}}
    +
    2e^{-t^2/2} \le
    \frac{4\sqrt{5}\,tz}{\sqrt{\pi}\,2^{t/2}}
    +
    2e^{-t^2/2}.
\end{aligned}
\]
This concludes the proof.
\end{proof}

The bound obtained above is useful when $t$ is moderately large. When $t$ is small, however, the term $\frac{4\sqrt{5}\,tz}{\sqrt{\pi}\,2^{t/2}}$ may exceed one, making the estimate essentially trivial. In this regime we use a more direct argument, given in the following lemma.

\begin{lemma}\label{lemma_small_t_A_tau_ge_2}
Let $u$ be a node of the $T$-time tree with $t=\tau(u)\ge2$. Then
\[
    \mathop{\mathbb{E}}_{\substack{\sigma\in\{0,1\}^t\\ \sigma\ne0}}
    \left\|A_\sigma^{(u)}\right\|_1
    \le
    \exp\!\left(-\frac13 e^{-5z^2/2}\right).
\]
\end{lemma}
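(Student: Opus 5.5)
For each fixed nonzero $\sigma\in\{0,1\}^t$ I will show $\|A_\sigma^{(u)}\|_1\le 1-\tfrac13 e^{-5z^2/2}$; averaging over $\sigma$ and then using $1-x\le e^{-x}$ finishes the proof. The idea is to write $A_\sigma^{(u)}$ as half the difference of two probability densities, so that $\|A_\sigma^{(u)}\|_1$ equals a total variation distance, and then show that these two densities overlap substantially.

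Fix $\sigma\ne0$. As in the proof of Lemma~\ref{lemma_adapt_prob}, we have $A_\sigma^{(u)}=\frac12(Q_0-Q_1)$, where $Q_c\coloneqq\mathop{\mathbb{E}}_{\xi\cdot\sigma\equiv c}p_\xi^{(u)}$ for $c\in\{0,1\}$. Each $Q_c$ is a probability density. Hence $\|A_\sigma^{(u)}\|_1=\frac12\|Q_0-Q_1\|_1$.

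Since $\sigma\ne0$, pick a coordinate $j$ with $\sigma_j=1$. Flipping bit $j$ gives a bijection $\xi\mapsto\xi\oplus e_j$ from $\{\xi\cdot\sigma\equiv0\}$ onto $\{\xi\cdot\sigma\equiv1\}$. Therefore
\[
Q_0-Q_1=\mathop{\mathbb{E}}_{\xi\cdot\sigma\equiv0}\bigl(p_\xi^{(u)}-p_{\xi\oplus e_j}^{(u)}\bigr).
\]
By convexity of the $L^1$ norm,
\[
\tfrac12\|Q_0-Q_1\|_1\le\max_{\xi}\tfrac12\bigl\|p_\xi^{(u)}-p_{\xi\oplus e_j}^{(u)}\bigr\|_1 .
\]

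The two Gaussians on the right share the covariance $\mathbb{1}_2/(10z^2)$. Their means differ by $\pm(a_j^{(u)},b_j^{(u)})$, whose Euclidean norm is at most $1$. Rotating coordinates and integrating out the orthogonal direction reduces this to a one-dimensional total variation distance between $\NN[0,\frac1{10z^2}]$ and $\NN[\delta,\frac1{10z^2}]$ with $\delta\le1$. This distance is monotone in $\delta$, so it is at most the $\delta=1$ value. Lemma~\ref{tvd_one_g}, via \eqref{eq_2_l} with $t^2=5z^2/2$, bounds that value by $1-\frac16e^{-5z^2/2}$. The claimed bound with constant $\frac13$ is weaker, so it follows, and averaging over $\sigma$ together with $1-x\le e^{-x}$ gives the lemma.

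The main point requiring care is the reduction to a single-bit flip. A naive mixture bound fails because $Q_0$ and $Q_1$ average over different index sets, and the pairing through coordinate $j$ is what makes the comparison work. The rotation and integration step is the same as the one already carried out in the proof of Lemma~\ref{lemma_unit_time_tree_contribution}.
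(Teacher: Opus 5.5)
Your reduction is correct and is essentially the paper's own argument. You pair each $\xi$ of even $\sigma$-parity with $\xi\oplus e_j$ for a coordinate with $\sigma_j=1$, use convexity, and rotate to one dimension. This yields $\|A_\sigma^{(u)}\|_1\le\frac12\bigl\|\NN[1,\frac{1}{10z^2}]-\NN[0,\frac{1}{10z^2}]\bigr\|_1$, which is where the paper also arrives (it pairs $\eta$ with $\eta_j=0$ against $\eta+e_j$ and applies the triangle inequality).

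The gap is in the final numerical step. You bound the Gaussian distance by $1-\frac16e^{-5z^2/2}$ and then assert that the target with constant $\frac13$ is weaker, but it is stronger. Write $x=e^{-5z^2/2}\in(0,e^{-5/2}]$. Then $e^{-x/3}\le 1-\frac x3+\frac{x^2}{18}<1-\frac x6$, so your bound $\|A_\sigma^{(u)}\|_1\le 1-\frac x6$ does not imply $\|A_\sigma^{(u)}\|_1\le e^{-x/3}$. What your argument actually proves is the lemma with $\frac16$ in place of $\frac13$.

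This matters downstream. In Lemma~\ref{lemma_tau_ge_2_fu_bound}, the small-$t$ regime reduces to $(12\log z+28)e^{-z^2/2}\le 18$, which holds at $z=1$ only with modest room. With $\frac16$ instead of $\frac13$, the same route would need the left-hand side below $90/8$, which fails at $z=1$.

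The missing step is to keep the second term of \eqref{eq_2_l} instead of discarding it. With $t^2=5z^2/2$ that term equals $\sqrt{2/\pi}\,e^{-5z^2/4}/(\frac{\sqrt{10}}{2}z+1)$, and it is at least $\frac16e^{-5z^2/2}$ for every $z\ge1$. This is equivalent to $6\sqrt{2/\pi}\,e^{5z^2/4}\ge\frac{\sqrt{10}}{2}z+1$, which holds at $z=1$ (about $16.7$ versus $2.6$) and persists because the left side grows much faster. Adding the two terms gives $\|A_\sigma^{(u)}\|_1\le 1-\frac13e^{-5z^2/2}\le\exp\bigl(-\tfrac13e^{-5z^2/2}\bigr)$ for every nonzero $\sigma$. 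Averaging over $\sigma$ then gives the lemma, and this is exactly how the paper closes the argument.
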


\begin{proof}
Fix a non-zero string $\sigma\in\{0,1\}^t$. Since $\sigma\ne0$, there is at least one coordinate at which $\sigma$ is equal to $1$. Let
\[
    j(\sigma)\coloneqq \min\{j\in[t]:\sigma_j=1\},
\]
and write simply $j=j(\sigma)$. Let $e_j\in\{0,1\}^t$ be the $j$th standard basis vector. We split the sum defining $A_\sigma^{(u)}$ according to the value of the $j$th bit. Every string with $j$th bit equal to $1$ can be written uniquely as $\eta+e_j$, where $\eta_j=0$ and the addition is modulo $2$. Hence
\[
\begin{aligned}
    A_\sigma^{(u)}
    &=
    \frac{1}{2^t}
    \sum_{\xi\in\{0,1\}^t}
    (-1)^{\xi\cdot\sigma}p_\xi^{(u)}  =
    \frac{1}{2^t}
    \sum_{\substack{\eta\in\{0,1\}^t\\ \eta_j=0}}
    \left(
    (-1)^{\eta\cdot\sigma}p_\eta^{(u)}
    +
    (-1)^{(\eta+e_j)\cdot\sigma}p_{\eta+e_j}^{(u)}
    \right).
\end{aligned}
\]
Since $\sigma_j=1$, we obtain
\[
    A_\sigma^{(u)}
    =
    \frac{1}{2^t}
    \sum_{\substack{\eta\in\{0,1\}^t\\ \eta_j=0}}
    (-1)^{\eta\cdot\sigma}
    \left(
        p_\eta^{(u)}-p_{\eta+e_j}^{(u)}
    \right).
\]
Taking the $L^1$ norm and using the triangle inequality gives
\bb\label{eq_12345}
    \left\|A_\sigma^{(u)}\right\|_1
    \le
    \frac{1}{2^t}
    \sum_{\substack{\eta\in\{0,1\}^t\\ \eta_j=0}}
    \left\|p_\eta^{(u)}-p_{\eta+e_j}^{(u)}\right\|_1.
\ee
For each such $\eta$, the two densities $p_\eta^{(u)}$ and $p_{\eta+e_j}^{(u)}$ are Gaussian densities with the same covariance matrix $\mathbb{1}_2/(10z^2)$. Their means differ by $\begin{pmatrix}
        a_j^{(u)}\\
        b_j^{(u)}
    \end{pmatrix}$. Thus, by translation invariance of the $L^1$ distance,
\[
    \left\|p_\eta^{(u)}-p_{\eta+e_j}^{(u)}\right\|_1
    =
    \left\|
    \NN\!\left[
    \begin{pmatrix}
        a_j^{(u)}\\
        b_j^{(u)}
    \end{pmatrix},
    \frac{\mathbb{1}_2}{10z^2}
    \right]
    -
    \NN\!\left[
    \begin{pmatrix}
        0\\
        0
    \end{pmatrix},
    \frac{\mathbb{1}_2}{10z^2}
    \right]
    \right\|_1 .
\]
Since the covariance matrix is proportional to the identity, we may rotate the plane and integrate out the orthogonal coordinate. Setting $d_j\coloneqq \sqrt{(a_j^{(u)})^2+(b_j^{(u)})^2}$, we obtain
\[
    \left\|p_\eta^{(u)}-p_{\eta+e_j}^{(u)}\right\|_1
    =
    \left\|
    \NN\!\left[d_j,\frac{1}{10z^2}\right]
    -
    \NN\!\left[0,\frac{1}{10z^2}\right]
    \right\|_1.
\]
By the admissibility condition on the rotation coefficients, $d_j\le1$. Moreover, Lemma~\ref{tvd_one_g} shows that the total variation distance between one-dimensional Gaussians with the same variance is increasing in the distance between their means. Hence
\[
    \left\|p_\eta^{(u)}-p_{\eta+e_j}^{(u)}\right\|_1
    \le
    \left\|
    \NN\!\left[1,\frac{1}{10z^2}\right]
    -
    \NN\!\left[0,\frac{1}{10z^2}\right]
    \right\|_1.
\]
There are exactly $2^{t-1}$ strings $\eta$ with $\eta_j=0$, so \eqref{eq_12345} implies
\[
    \left\|A_\sigma^{(u)}\right\|_1
    \le
    \frac12
    \left\|
    \NN\!\left[1,\frac{1}{10z^2}\right]
    -
    \NN\!\left[0,\frac{1}{10z^2}\right]
    \right\|_1.
\]
We now apply \eqref{eq_2_l} of Lemma~\ref{tvd_one_g} with $m_1=1$, $m_2=0$, and $V=1/(10z^2)$. In this case
\[
    \alpha
    \coloneqq
    \frac{|m_1-m_2|}{2\sqrt{V}}
    =
    \frac{\sqrt{10}\,z}{2},
    \qquad
    \alpha^2=\frac{5z^2}{2}.
\]
Therefore
\[
\begin{aligned}
    \left\|A_\sigma^{(u)}\right\|_1
    &\le
    1-
    \left(
    \frac16 e^{-5z^2/2}
    +
    \sqrt{\frac{2}{\pi}}\,
    \frac{e^{-5z^2/4}}{\frac{\sqrt{10}}{2}z+1}
    \right) \le
    1-\frac13 e^{-5z^2/2}
    \le
    \exp\!\left(-\frac13 e^{-5z^2/2}\right).
\end{aligned}
\]
In the second inequality we used that, for every $z\ge1$,
\[
    \sqrt{\frac{2}{\pi}}\,
    \frac{e^{-5z^2/4}}{\frac{\sqrt{10}}{2}z+1}
    \ge
    \frac16 e^{-5z^2/2},
\]
and in the last inequality we used $1-x\le e^{-x}$. The bound is uniform over all non-zero strings $\sigma$, and averaging over $\sigma\ne0$ gives the claim.
\end{proof}

Combining Lemma~\ref{lemma_l1_A_tau_ge_2} and Lemma~\ref{lemma_small_t_A_tau_ge_2}, we obtain a non-trivial bound on the local factor $\|f^{(u)}\|_1$ for all values of the waiting time \(\tau(u)\ge2\).

\begin{lemma}\label{lemma_tau_ge_2_fu_bound}
Let \(u\) be a node of the \(T\)-time tree with \(\tau(u)\ge2\). Then
\[
    \left\|f^{(u)}\right\|_1
    \le
    \exp\!\left(-\frac{e^{-3z^2}}{90}\,\tau(u)\right).
\]
\end{lemma}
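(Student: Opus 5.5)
The plan is to start from identity \eqref{eq_to_use_1}, with $t=\tau(u)$, and rewrite it as
\[
    \left\|f^{(u)}\right\|_1
    =
    1-\left(1-2^{-t}\right)\Bigl(1-\mathop{\mathbb{E}}_{\sigma\ne0}\left\|A_\sigma^{(u)}\right\|_1\Bigr).
\]
Set $\varepsilon\coloneqq e^{-3z^2}/90$; the target is $\exp(-\varepsilon t)$. I will split on $t$ around a threshold $t_0$ of order $e^{z^2/2}$. Below $t_0$ I use the uniform bound of Lemma~\ref{lemma_small_t_A_tau_ge_2}. Above $t_0$ I use the decaying bound of Lemma~\ref{lemma_l1_A_tau_ge_2}.

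\emph{Small $t$ (from $2$ up to $t_0$).} Lemma~\ref{lemma_small_t_A_tau_ge_2} gives $\mathbb{E}_{\sigma\ne0}\|A_\sigma^{(u)}\|_1\le 1-\frac13e^{-5z^2/2}$; this is the intermediate step of its proof, before the final exponential relaxation. Since $t\ge2$ we have $1-2^{-t}\ge 3/4$, so
\[
    \|f^{(u)}\|_1\le 1-\tfrac14 e^{-5z^2/2}.
\]
It then suffices that $\frac14e^{-5z^2/2}\ge\varepsilon t$, because $1-x\le e^{-x}$. Writing $e^{-5z^2/2}=e^{z^2/2}e^{-3z^2}$, this condition is $t\le t_0\coloneqq \frac{90}{4}e^{z^2/2}$. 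So the small-$t$ case covers all $t\le t_0$ with a very simple argument.

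\emph{Large $t$ ($t> t_0$).} Here I bound crudely, $\|f^{(u)}\|_1\le 2^{-t}+\mathbb{E}_{\sigma\ne0}\|A_\sigma^{(u)}\|_1$. Lemma~\ref{lemma_l1_A_tau_ge_2} then gives
\[
    \|f^{(u)}\|_1\le 2^{-t}+\tfrac{4\sqrt5}{\sqrt\pi}\,tz\,2^{-t/2}+2e^{-t^2/2}.
\]
Each of the three terms decays at rate at least $(\ln 2/2)\,t$, while the target decays only at rate $\varepsilon t$ with $\varepsilon\le e^{-3}/90$. It is therefore enough to check that $C\,tz\,2^{-t/2}\le \exp(-\varepsilon t)$ for an absolute constant $C$. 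This holds once $(\tfrac{\ln2}{2}-\varepsilon)\,t\ge \ln(Ctz)$, which is true whenever $t\gtrsim \ln z + O(1)$. Since $t>t_0\ge 22\,e^{z^2/2}$, which dominates $\ln z$ and any fixed constant for all $z\ge1$, the condition is satisfied.

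I expect the main obstacle to be only constant bookkeeping: showing that the two regimes overlap for every $z\ge1$, including $z=1$ where $t_0\approx 37$. In the large-$t$ case I will verify that $t\mapsto \ln(Ctz)-(\frac{\ln2}{2}-\varepsilon)t$ is decreasing and nonpositive at $t=t_0$, using $e^{z^2/2}\ge 1+z^2/2$ to absorb $\ln z$. If that comparison turns out to be too tight at small $z$, I would lower $t_0$, for example to $\frac{90}{8}e^{z^2/2}$; this only loosens the small-$t$ case, since that case works for every $t\le \frac{90}{4}e^{z^2/2}$.
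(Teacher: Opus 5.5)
Your proposal is correct and follows the same route as the paper. Both proofs split on $t=\tau(u)$, use Lemma~\ref{lemma_small_t_A_tau_ge_2} below a threshold and Lemma~\ref{lemma_l1_A_tau_ge_2} above it, and compare each regime with $\exp(-e^{-3z^2}t/90)$.

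The only difference is where you cut. The paper splits at $t=12\log z+28$ and then has to verify $(12\log z+28)e^{-z^2/2}\le 18$ in the small-$t$ regime. You split at $t_0=\frac{90}{4}e^{z^2/2}$, using the pre-exponential bound $1-\frac{1}{3}e^{-5z^2/2}$ from that lemma's proof. This makes the small-$t$ case immediate, and the remaining check $\ln(Ctz)\le(\frac{\ln 2}{2}-\varepsilon)t$ for $t\ge t_0$ holds comfortably for all $z\ge 1$.
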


\begin{proof}
Set \(t\coloneqq\tau(u)\). From \eqref{eq_to_use_1}, we have
\[
    \left\|f^{(u)}\right\|_1
    =
    \frac{1}{2^t}
    +
    \left(1-\frac{1}{2^t}\right)
    \mathop{\mathbb{E}}_{\substack{\sigma\in\{0,1\}^t\\ \sigma\ne0}}
    \left\|A_\sigma^{(u)}\right\|_1 .
\]
Combining Lemmas~\ref{lemma_l1_A_tau_ge_2} and~\ref{lemma_small_t_A_tau_ge_2}, we obtain
\[
\begin{aligned}
    \left\|f^{(u)}\right\|_1
    &\le
    \frac{1}{2^t}
    +
    \left(1-\frac{1}{2^t}\right)
    \min\left\{
    \frac{4\sqrt{5}\,tz}{\sqrt{\pi}\,2^{t/2}}
    +
    2e^{-t^2/2},
    \exp\!\left(-\frac13 e^{-5z^2/2}\right)
    \right\}.
\end{aligned}
\]
We now show that the right-hand side is at most $\exp\!\left(-\frac{e^{-3z^2}}{90}t\right)$.

First assume that \(t\ge 12\log z+28\). Using the first term in the minimum, together with
\[
    2e^{-t^2/2}\le 2^{-t},
    \qquad
    \frac{4\sqrt{5}\,t}{\sqrt{\pi}\,2^{t/2}}\le e^{-t/6}
    \qquad\forall\,t\ge28,
\]
we get
\[
\begin{aligned}
    \left\|f^{(u)}\right\|_1
    &\le
    \frac{1}{2^t}
    +
    \frac{4\sqrt{5}\,tz}{\sqrt{\pi}\,2^{t/2}}
    +
    2e^{-t^2/2}
    \le
    \frac{2}{2^t}+ze^{-t/6}.
\end{aligned}
\]
Since \(t\ge12\log z\), we have \(ze^{-t/6}\le e^{-t/12}\). Hence
\[
    \left\|f^{(u)}\right\|_1
    \le
    \frac{2}{2^t}+e^{-t/12}
    \le
    e^{-t/90}.
\]
Since \(e^{-3z^2}\le1\), it follows that
\[
    \left\|f^{(u)}\right\|_1
    \le
    \exp\!\left(-\frac{e^{-3z^2}}{90}t\right).
\]
This proves the claim in the first regime.

Now assume that \(t\le 12\log z+28\). Using the second term in the minimum and the fact that \(t\ge2\), so that \(2^{-t}\le1/4\), we obtain
\[
\begin{aligned}
    \left\|f^{(u)}\right\|_1
    &\le
    \frac{1}{2^t}
    +
    \left(1-\frac{1}{2^t}\right)
    \exp\!\left(-\frac13 e^{-5z^2/2}\right) \le
    \frac14
    +
    \frac34
    \exp\!\left(-\frac13 e^{-5z^2/2}\right).
\end{aligned}
\]
Set \(x\coloneqq e^{-5z^2/2}\). Since \(0\le x\le1\), the inequality $\frac14+\frac34e^{-x/3}\le e^{-x/5}$ implies
\[
    \left\|f^{(u)}\right\|_1
    \le
    \exp\!\left(-\frac15 e^{-5z^2/2}\right).
\]
It remains to compare the exponent with the desired one. We claim that, for every \(z\ge1\),
\[
    \frac15 e^{-5z^2/2}
    \ge
    \frac{e^{-3z^2}}{90}(12\log z+28).
\]
Indeed, this is equivalent to $(12\log z+28)e^{-z^2/2}\le18$. The function \(g(z)\coloneqq(12\log z+28)e^{-z^2/2}\) is decreasing on \([1,\infty)\), since
\[
    g'(z)
    =
    e^{-z^2/2}
    \left(
        \frac{12}{z}
        -
        z(12\log z+28)
    \right)
    \le0
    \qquad \forall\,z\ge1.
\]
Therefore \(g(z)\le g(1)=28e^{-1/2}<18\), proving the claim. Since \(t\le12\log z+28\), we conclude that
\[
    \frac15 e^{-5z^2/2}
    \ge
    \frac{e^{-3z^2}}{90}(12\log z+28)
    \ge
    \frac{e^{-3z^2}}{90}t.
\]
Hence
\[
    \left\|f^{(u)}\right\|_1
    \le
    \exp\!\left(-\frac{e^{-3z^2}}{90}t\right).
\]
This proves the second regime and concludes the proof.
\end{proof}

\subsubsection{Analysis of the $T$-time tree}

We are now ready to propagate the local estimates through the whole $T$-time tree. The key point is that both local regimes, $\tau(u)=1$ and $\tau(u)\ge2$, can be written in a unified form.

\begin{lemma}\label{lemma_uniform_fu_bound_T_tree}
Let $u$ be an internal node of the $T$-time tree. Then
\[
    \left\|f^{(u)}\right\|_1
    \le
    \exp\!\left(-\frac{e^{-3z^2}}{90}\,\tau(u)\right).
\]
\end{lemma}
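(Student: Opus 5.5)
The plan is a two-case argument on the waiting time $\tau(u)$ of the internal node $u$, since the earlier lemmas already cover each regime separately.

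\emph{Case $\tau(u)\ge2$.} The statement is exactly Lemma~\ref{lemma_tau_ge_2_fu_bound}, so nothing new is needed.

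\emph{Case $\tau(u)=1$.} Here the node carries an attached unit-time tree. Lemma~\ref{lemma_unit_time_tree_contribution} gives
\[
\|f^{(u)}\|_1\le\exp\!\left(-\tfrac{1}{12}e^{-5z^2/2}\right).
\]
It therefore suffices to show that
\[
\tfrac{1}{12}e^{-5z^2/2}\ge\tfrac{1}{90}e^{-3z^2},
\]
because $\tau(u)=1$ makes the right-hand side the exponent in the claim. Multiplying through by $e^{3z^2}$, this is equivalent to $\tfrac{1}{12}e^{z^2/2}\ge\tfrac{1}{90}$. That holds for every $z\ge1$, since $e^{z^2/2}\ge1>12/90$. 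Monotonicity of $x\mapsto e^{-x}$ then yields the claimed bound.

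The only potential obstacle is checking that the two regimes share a common constant $\tfrac{1}{90}e^{-3z^2}$ per unit of $\tau(u)$. The $\tau(u)\ge2$ case was already tuned to this constant in Lemma~\ref{lemma_tau_ge_2_fu_bound}, and the $\tau(u)=1$ case has ample slack. I therefore expect no real difficulty: the lemma is a bookkeeping step that unifies the local estimates, so that the subsequent product bound over a branch (via Lemma~\ref{lemma_adapt_prob}, using $\sum_i\tau(u_{i-1}^{(l)})=T$) telescopes to $\exp(-e^{-3z^2}T/90)$.
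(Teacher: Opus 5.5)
Your proposal is correct and follows the paper's proof exactly. The case $\tau(u)\ge2$ is Lemma~\ref{lemma_tau_ge_2_fu_bound}. For $\tau(u)=1$, you apply Lemma~\ref{lemma_unit_time_tree_contribution} and use $\tfrac{1}{12}e^{-5z^2/2}\ge\tfrac{1}{90}e^{-3z^2}$ for $z\ge1$, which is the same comparison the paper makes.
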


\begin{proof}
If $\tau(u)\ge2$, the claim is exactly Lemma~\ref{lemma_tau_ge_2_fu_bound}. It remains to consider the case $\tau(u)=1$. By Lemma~\ref{lemma_unit_time_tree_contribution},
\[
    \left\|f^{(u)}\right\|_1
    \le
    \exp\!\left(-\frac{1}{12}e^{-5z^2/2}\right).
\]
Since $\frac{1}{12}e^{-5z^2/2}
    \ge
    \frac{1}{90}e^{-3z^2}$ for all $z\ge1$, we obtain
\[
    \left\|f^{(u)}\right\|_1
    \le
    \exp\!\left(-\frac{e^{-3z^2}}{90}\right)
    =
    \exp\!\left(-\frac{e^{-3z^2}}{90}\tau(u)\right),
\]
because here $\tau(u)=1$. This proves the claim.
\end{proof}

We can now bound the average total variation distance between the two leaf distributions of an arbitrary $T$-time tree.

\begin{lemma}\label{lemma_T_time_tree_parity_bound}
Let $\mathcal{T}$ be any $T$-time tree. Then
\[
    \mathop{\mathbb{E}}_{\substack{s\in\{0,1\}^{T}\\ s\ne 0}}
    \frac12
    \left\|P_s^{(\mathcal{T})}-P_0^{(\mathcal{T})}\right\|_1
    \le
    \exp\!\left(-\frac{e^{-3z^2}}{90}T\right).
\]
\end{lemma}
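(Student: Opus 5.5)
The plan is to combine the leafwise factorization of Lemma~\ref{lemma_adapt_prob} with the uniform local bound of Lemma~\ref{lemma_uniform_fu_bound_T_tree}, and to sum over leaves by a backward induction on the $T$-time tree analogous to the one in Lemma~\ref{lemma_full_time_tree_global_bound}. First, exchanging the average over $s$ with the sum (or integral) over leaves, we write
\[
    \mathop{\mathbb{E}}_{s\ne0}\frac12\bigl\|P_s^{(\mathcal{T})}-P_0^{(\mathcal{T})}\bigr\|_1
    =\frac12\sum_{l\in\mathrm{leaves}(\mathcal{T})}\mathop{\mathbb{E}}_{s\ne0}\bigl|P_s(l)-P_0(l)\bigr|
    \le\frac{1}{2(1-2^{-T})}\sum_{l}\prod_{i=1}^{N(l)}f^{(u_{i-1}^{(l)})}\bigl(\mathbf{y}_i^{(l)}\bigr),
\]
using Lemma~\ref{lemma_adapt_prob}. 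Since $T\ge1$, the prefactor is at most $1$, so it suffices to show that the sum of the path products is at most $\exp(-\frac{e^{-3z^2}}{90}T)$.

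Second, define for each node $v$ the partial product $F(v)\coloneqq\prod_{i=1}^{N(v)}f^{(u_{i-1}^{(v)})}(\mathbf{y}_i^{(v)})$, with $F(r)=1$, and set $\kappa\coloneqq e^{-3z^2}/90$. By backward induction on $T-t(v)$, we prove that
\[
    \sum_{l\in\mathrm{leaves}(v)}F(l)\le F(v)\,e^{-\kappa(T-t(v))}.
\]
The base case at leaves is trivial. For the induction step at an internal node $u$, every child $v$ satisfies $t(v)=t(u)+\tau(u)$ and $F(v)=F(u)f^{(u)}(\mathbf{y}_v)$. Summing over the children therefore yields
\[
    F(u)\,\|f^{(u)}\|_1\,e^{-\kappa(T-t(u)-\tau(u))},
\]
and Lemma~\ref{lemma_uniform_fu_bound_T_tree} gives $\|f^{(u)}\|_1\le e^{-\kappa\tau(u)}$, which closes the induction. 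Here the children sum is the $L^1$ norm of $f^{(u)}$ over outcomes, namely $\mathbb{R}^2$ when $\tau(u)\ge2$ and the leaves of the attached unit-time tree when $\tau(u)=1$. The key point is that $f^{(u)}\ge0$, so no cancellation issues arise. Evaluating at the root, where $t(r)=0$, gives the bound.

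The main subtlety I expect is the factor $\frac12\cdot\frac{1}{1-2^{-T}}$. It is at most $1$ for every $T\ge1$ (it equals $1$ at $T=1$), so it can be discarded and the stated bound follows without loss. A secondary point is measure-theoretic: the sums over continuous outcome spaces are integrals, and the interchange of the expectation over $s$ with these integrals is justified by Tonelli, since all integrands are nonnegative.
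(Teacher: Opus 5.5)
Your proposal is correct and matches the paper's proof essentially step for step. The paper likewise applies Lemma~\ref{lemma_adapt_prob} leafwise and discards the prefactor $\frac{1}{2(1-2^{-T})}\le 1$. It then proves $\sum_{l\in\mathrm{leaves}(v)}\Phi(l)\le \Phi(v)\exp\!\bigl(-\frac{e^{-3z^2}}{90}(T-t(v))\bigr)$ by backward induction, using $\|f^{(u)}\|_1$ from Lemma~\ref{lemma_uniform_fu_bound_T_tree} at each node. Your explicit remarks on the nonnegativity of $f^{(u)}$ and on Tonelli are harmless additions that the paper leaves implicit.
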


\begin{proof}
Let $r$ be the root of $\mathcal{T}$. For a node $v$, define the prefix weight
\[
    \Phi(v)
    \coloneqq
    \prod_{i=1}^{N(v)}
    f^{(u_{i-1}^{(v)})}\!\left(\mathbf{y}_i^{(v)}\right),
\]
with the convention $\Phi(r)=1$. Thus, for a leaf $l$, the quantity $\Phi(l)$ is exactly the product of the local factors along the branch from the root to $l$.

By Lemma~\ref{lemma_adapt_prob}, for every leaf $l$,
\[
    \mathop{\mathbb{E}}_{\substack{s\in\{0,1\}^{T}\\ s\ne 0}}
    \left|P_s(l)-P_0(l)\right|
    \le
    \frac{1}{1-2^{-T}}\Phi(l).
\]
Therefore,
\[
\begin{aligned}
    \mathop{\mathbb{E}}_{\substack{s\in\{0,1\}^{T}\\ s\ne 0}}
    \frac12
    \left\|P_s^{(\mathcal{T})}-P_0^{(\mathcal{T})}\right\|_1
    &=
    \frac12
    \sum_{l\in\mathrm{leaves}(r)}
    \mathop{\mathbb{E}}_{\substack{s\in\{0,1\}^{T}\\ s\ne 0}}
    \left|P_s(l)-P_0(l)\right| \le
    \frac{1}{2(1-2^{-T})}
    \sum_{l\in\mathrm{leaves}(r)}\Phi(l).
\end{aligned}
\]
Since $T\ge1$, we have $\frac{1}{2(1-2^{-T})}\le1$. Hence it remains to prove that
\[
    \sum_{l\in\mathrm{leaves}(r)}\Phi(l)
    \le
    \exp\!\left(-\frac{e^{-3z^2}}{90}T\right).
\]
We prove the following stronger statement: for every node $v$ of the $T$-time tree,
\[
    \sum_{l\in\mathrm{leaves}(v)}\Phi(l)
    \le
    \Phi(v)\,
    \exp\!\left(-\frac{e^{-3z^2}}{90}(T-t(v))\right).
\]
The proof is by backward induction on the remaining time $T-t(v)$. If $v$ is a leaf, then $t(v)=T$ and $\mathrm{leaves}(v)=\{v\}$, so the claim holds with equality.

Now let $v$ be an internal node, and assume the claim holds for all children $w$ of $v$. Since every such child satisfies $t(w)=t(v)+\tau(v)$ and
\[
    \Phi(w)=\Phi(v)\,f^{(v)}(\mathbf{y}_w),
\]
we obtain
\[
\begin{aligned}
    \sum_{l\in\mathrm{leaves}(v)}\Phi(l)
    &=
    \sum_{w\in\mathrm{children}(v)}
    \sum_{l\in\mathrm{leaves}(w)}\Phi(l) \\
    &\le
    \sum_{w\in\mathrm{children}(v)}
    \Phi(w)
    \exp\!\left(-\frac{e^{-3z^2}}{90}(T-t(w))\right) \\
    &=
    \Phi(v)
    \exp\!\left(-\frac{e^{-3z^2}}{90}(T-t(v)-\tau(v))\right)
    \sum_{w\in\mathrm{children}(v)}
    f^{(v)}(\mathbf{y}_w) \\
    &=
    \Phi(v)
    \exp\!\left(-\frac{e^{-3z^2}}{90}(T-t(v)-\tau(v))\right)
    \left\|f^{(v)}\right\|_1 \\
    &\le
    \Phi(v)
    \exp\!\left(-\frac{e^{-3z^2}}{90}(T-t(v)-\tau(v))\right)
    \exp\!\left(-\frac{e^{-3z^2}}{90}\tau(v)\right) \\
    &=
    \Phi(v)
    \exp\!\left(-\frac{e^{-3z^2}}{90}(T-t(v))\right),
\end{aligned}
\]
where the last inequality uses Lemma~\ref{lemma_uniform_fu_bound_T_tree}. This proves the induction claim.

Applying it to the root $r$, and using $t(r)=0$ and $\Phi(r)=1$, gives
\[
    \sum_{l\in\mathrm{leaves}(r)}\Phi(l)
    \le
    \exp\!\left(-\frac{e^{-3z^2}}{90}T\right).
\]
Substituting this into the previous total-variation estimate concludes the proof.
\end{proof}

Combining Lemma~\ref{lemma_T_time_tree_parity_bound} with Lemma~\ref{lemma_sensing_time_from_T_tree_bound} gives the desired lower bound in the simplified access model.

\begin{lemma}\label{lemma_lower_bound_learning_parity_simplified}
Let $T\in\mathbb{N}^+$ and let $z\ge1$. Suppose that a protocol in the simplified access model solves Problem~\ref{def:PROB_revealed} with success probability at least \(2/3\) using total sensing time $nT$. Then
\[
    n
    \ge
    \frac13
    \exp\!\left(\frac{e^{-3z^2}}{90}T\right).
\]
\end{lemma}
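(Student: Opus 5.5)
This lemma follows by combining two earlier results: Lemma~\ref{lemma_sensing_time_from_T_tree_bound}, which turns a uniform one-block bound into a sensing-time bound, and Lemma~\ref{lemma_T_time_tree_parity_bound}, which supplies that one-block bound for the CV Learning Parity problem.

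First, we instantiate the family $\{q_s\}$ of the simplified access model with the parity distributions. Here $q_s$ is the uniform distribution on $C_s=\{x\in\{0,1\}^T: x\cdot s\equiv0 \pmod 2\}$, embedded in $\mathbb{R}^T$ via $m_i=x_i$, so that $q_0$ is uniform on the whole hypercube. The leaf distributions $P_s^{(\mathcal{T})}$ of an arbitrary $T$-time tree $\mathcal{T}$ are then exactly those analysed in Lemma~\ref{lemma_T_time_tree_parity_bound}. That lemma gives, uniformly over all $T$-time trees $\mathcal{T}$,
\[
\mathop{\mathbb{E}}_{s\ne0}\tfrac12\bigl\|P_s^{(\mathcal{T})}-P_0^{(\mathcal{T})}\bigr\|_1\le \exp\!\left(-\tfrac{e^{-3z^2}}{90}T\right).
\]
So hypothesis \eqref{eq:one_block_attached_tree_bound_final} holds with $C\coloneqq\exp(-e^{-3z^2}T/90)$, and this $C$ is positive, as that lemma requires.

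Second, we apply Lemma~\ref{lemma_sensing_time_from_T_tree_bound}. Any protocol solving Problem~\ref{def:PROB_revealed} with success probability at least $2/3$ and total sensing time $nT$ is represented by a full-time tree of time depth $n$. The lemma then yields $n\ge 1/(3C)=\frac13\exp\!\left(\frac{e^{-3z^2}}{90}T\right)$, which is the claimed bound.

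All the difficult work is already contained in the cited lemmas: the Fourier factorisation along branches, the unit-time and $\tau(u)\ge2$ local bounds, and the backward induction over the full-time tree. The only point to check is that the full-time tree argument of Lemma~\ref{lemma_full_time_tree_global_bound} really uses the same parity family $q_s$, averaged over uniform nonzero $s$. It does, since $\bar s$ in Problem~\ref{def:PROB_revealed} is uniform over nonzero strings, matching the averaging in \eqref{eq_crucial}. I therefore expect no real obstacle here; the step needing the most care is simply confirming that the constants and the averaging convention match between the two cited lemmas.
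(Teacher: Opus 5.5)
Your proposal is correct and is essentially identical to the paper's proof: it takes $C=\exp\!\left(-\frac{e^{-3z^2}}{90}T\right)$ from Lemma~\ref{lemma_T_time_tree_parity_bound} as the uniform one-block bound and plugs it into Lemma~\ref{lemma_sensing_time_from_T_tree_bound} to obtain $n\ge 1/(3C)$. The two consistency points you flag, positivity of $C$ and the uniform averaging over nonzero $s$, are what make the two lemmas compose, and both hold as you say.
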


\begin{proof}
By Lemma~\ref{lemma_T_time_tree_parity_bound}, for every $T$-time tree $\mathcal{T}$ we have
\[
    \mathop{\mathbb{E}}_{\substack{s\in\{0,1\}^T\\ s\ne0}}
    \frac12
    \left\|P_s^{(\mathcal{T})}-P_0^{(\mathcal{T})}\right\|_1
    \le
    \exp\!\left(-\frac{e^{-3z^2}}{90}T\right).
\]
Thus the hypothesis of Lemma~\ref{lemma_sensing_time_from_T_tree_bound} holds with $C = \exp\!\left(-\frac{e^{-3z^2}}{90}T\right)$. Applying Lemma~\ref{lemma_sensing_time_from_T_tree_bound}, we obtain
\[
    n
    \ge
    \frac{1}{3C}
    =
    \frac13
    \exp\!\left(\frac{e^{-3z^2}}{90}T\right),
\]
as claimed.
\end{proof}

We now return to the pulsed-signal problems. Combining
Lemmas~\ref{lem:pulse_to_piecewise} and~\ref{lemma_access_model}, a
Gaussian protocol with squeezing bounded by $z$ and physical sensing-time
budget $T_{\mathrm{sensing}}$ can be simulated in the simplified access
model over $n=\lceil T_{\mathrm{sensing}}/T\rceil$ blocks. Since learning
the full parity string also solves the partially revealed problem,
the preceding bound on $n$ gives the following theorem.

\begin{thm}\label{thm_lower_bound_learning_parity_gaussian}
Let $T\in\mathbb{N}^+$ and let $z\ge1$. Suppose that a Gaussian protocol with squeezing bounded by $z$ solves either the CV Learning Parity problem, Problem~\ref{def:PROB}, or the CV Learning Parity with Partially Revealed Information problem, Problem~\ref{def:PROB2}, with success probability at least \(2/3\) using total sensing time $T_{\mathrm{sensing}}$. Then
\[
    T_{\mathrm{sensing}}
    \ge
    T\left[
    \frac13
    \exp\!\left(\frac{e^{-3z^2}}{90}T\right)
    -1
    \right].
\]
\end{thm}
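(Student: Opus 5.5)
The proof assembles results already established in this section, so the plan is a chain of reductions followed by a short calculation.

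\textbf{Step 1 (reduce Problem~\ref{def:PROB} to Problem~\ref{def:PROB2}).} Any protocol that solves the CV Learning Parity problem also solves the partially revealed variant with the same sensing time. The learner runs the protocol, and once $\bar s$ is revealed it checks whether its estimate equals $0$ or $\bar s$. It therefore suffices to treat a Gaussian protocol solving Problem~\ref{def:PROB2} with success probability at least $2/3$ and sensing time $T_{\mathrm{sensing}}$.

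\textbf{Step 2 (reduce to the simplified access model).} By Lemma~\ref{lem:pulse_to_piecewise}, this protocol yields one for the piecewise-constant signal $f_{\mathrm{step}}$. It has the same squeezing bound and success probability, and its sensing time $S_{\mathrm{step}}$ satisfies $\lceil S_{\mathrm{step}}/T\rceil=\lceil T_{\mathrm{sensing}}/T\rceil$. By Lemma~\ref{lemma_access_model}, the piecewise-constant protocol is in turn simulated by the simplified access model, with classical post-processing independent of the signal. This simulation reproduces the transcript distribution exactly, so the success probability is preserved. Padding the run up to the next block boundary is harmless, since an adaptive learner can always ignore extra time. We thus obtain a protocol for Problem~\ref{def:PROB_revealed} with total sensing time $nT$, where $n=\lceil T_{\mathrm{sensing}}/T\rceil$. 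Because $q_s$ here is exactly the uniform distribution on $C_s$, Problem~\ref{def:PROB_revealed} is the parity instance of the simplified model.

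\textbf{Step 3 (apply the block bound and convert).} Lemma~\ref{lemma_lower_bound_learning_parity_simplified} gives $n\ge \frac13\exp\!\left(\frac{e^{-3z^2}}{90}T\right)$. Since $n=\lceil T_{\mathrm{sensing}}/T\rceil< T_{\mathrm{sensing}}/T+1$, we get $T_{\mathrm{sensing}}> T(n-1)\ge T\left[\frac13\exp\!\left(\frac{e^{-3z^2}}{90}T\right)-1\right]$, which is the claimed bound.

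There is no real obstacle, since all the analytic work sits in Lemmas~\ref{lemma_T_time_tree_parity_bound} and~\ref{lemma_sensing_time_from_T_tree_bound}. The only point needing care is Step 2. There I would check that the composed simulation is well defined when the measurement times are chosen adaptively: each original measurement step is replaced by at most five sub-steps, and their outcomes are post-processed into the original outcome before the next adaptive choice is made. That chaining is exactly what the proof of Lemma~\ref{lemma_access_model} already provides.
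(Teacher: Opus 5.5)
Your proposal is correct and matches the paper's proof step for step. The chain is the same: reduce Problem~\ref{def:PROB} to Problem~\ref{def:PROB2}, then pass through Lemmas~\ref{lem:pulse_to_piecewise} and~\ref{lemma_access_model} to Problem~\ref{def:PROB_revealed} over $n=\lceil T_{\mathrm{sensing}}/T\rceil$ blocks, apply Lemma~\ref{lemma_lower_bound_learning_parity_simplified}, and convert using $n\le T_{\mathrm{sensing}}/T+1$.
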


\begin{proof}
Set $n\coloneqq\lceil T_{\mathrm{sensing}}/T\rceil$.
A protocol for Problem~\ref{def:PROB} also solves
Problem~\ref{def:PROB2}, so it is enough to treat the latter.
Lemma~\ref{lem:pulse_to_piecewise} gives a protocol for its
piecewise-constant version with the same squeezing bound and budget
$nT$. Applying Lemma~\ref{lemma_access_model} gives a protocol for
Problem~\ref{def:PROB_revealed} in the simplified access model over
those $n$ blocks. Hence Lemma~\ref{lemma_lower_bound_learning_parity_simplified} gives
\[
    n
    \ge
    \frac13
    \exp\!\left(\frac{e^{-3z^2}}{90}T\right).
\]
Since \(n=\lceil T_{\mathrm{sensing}}/T\rceil\le T_{\mathrm{sensing}}/T+1\), the claim follows.
\end{proof}

\newpage
\subsection{Lower bound for the CV Hypothesis Testing problem}

We now specialize Lemma~\ref{lemma_sensing_time_from_T_tree_bound} to the CV Hypothesis Testing problem. 
As discussed above, the same abstract distribution-identification framework also covers binary hypothesis testing: one may introduce an auxiliary label \(s\in\{0,1\}^T\), set \(q_0\) to be the one-block distribution induced by \(p\), and set \(q_s\) to be the one-block distribution induced by \(q\) for every non-zero \(s\). With this convention, distinguishing the two hypotheses \(p\) and \(q\) is equivalent to distinguishing the case \(s=0\) from the case \(s\ne0\). The revealed string in Problem~\ref{def:PROB_revealed} then carries no additional information, since all non-zero labels correspond to the same distribution, and Lemma~\ref{lemma_sensing_time_from_T_tree_bound} applies directly.

Therefore, it is enough to upper bound, uniformly over all \(T\)-time trees \(\mathcal{T}\), the total variation distance
\[
    \frac12
    \left\|P_p^{(\mathcal{T})}-P_q^{(\mathcal{T})}\right\|_1 .
\]
Here \(P_p^{(\mathcal{T})}\) and \(P_q^{(\mathcal{T})}\) denote the leaf distributions of the \(T\)-time tree \(\mathcal{T}\) when the block of the signal is generated respectively from \(p\) and from \(q\).

Recall the definition of the signal in the CV Hypothesis Testing problem. Specifically, the block vectors \(m\in\mathbb{R}^T\) of the signal are i.i.d.~random variables constructed as follows. Let \(p\) and \(q\) be two probability distributions over \(\{0,1\}^T\). In each block of length \(T\), a bit string $x=(x_1,\ldots,x_T)\in\{0,1\}^T$ is sampled from one of these two distributions. Independently, amplitudes \(b_1,\ldots,b_T\) are sampled uniformly from the interval \([1/2,3/2]\). The corresponding block vector of the signal is defined as
\bb\label{eq:hyp_testing_block_recall}
    m
    =
    \bigl(
    (-1)^{x_1}b_1,\ldots,(-1)^{x_T}b_T
    \bigr)
    \in\mathbb{R}^T .
\ee
The goal is to decide whether the hidden bit string \(x\) was sampled
from \(p\) or from \(q\). Fix \(c\in[T]\), and throughout this
subsection assume that \(p\) and \(q\) have matching marginals on
every subset of at most \(T-c\) coordinates.

Let \(\mathcal{T}\) be a fixed \(T\)-time tree. To simplify notation, throughout this subsection we write
\[
    P_p(l)\equiv P_p^{(\mathcal{T})}(l),
    \qquad
    P_q(l)\equiv P_q^{(\mathcal{T})}(l)
\]
for the probability of reaching the leaf \(l\) of \(\mathcal{T}\) under the two hypotheses \(p\) and \(q\), respectively.

Let \(r\) be the root of \(\mathcal{T}\), and let \(l\in\mathrm{leaves}(\mathcal{T})\). We denote the unique path from \(r\) to \(l\) by
\[
    r=u_0^{(l)}
    \to u_1^{(l)}
    \to \cdots
    \to u_{N(l)-1}^{(l)}
    \to u_{N(l)}^{(l)}=l.
\]
For each \(i\in[N(l)]\), the edge $u_{i-1}^{(l)}\to u_i^{(l)}$ corresponds to the experiment prescribed at the node \(u_{i-1}^{(l)}\), and its observed outcome is denoted by \(\mathbf{y}_i^{(l)}\).

Along this fixed branch, any block string \(x\in\{0,1\}^T\) can be decomposed according to the successive time intervals explored by the experiments:
\[
    x=
    \left(
    x_1^{(l)},x_2^{(l)},\ldots,x_{N(l)}^{(l)}
    \right),
\]
where $x_i^{(l)}
    \in
    \{0,1\}^{\tau(u_{i-1}^{(l)})}$. Thus \(x_i^{(l)}\) is the substring of \(x\) corresponding to the experiment prescribed at the node \(u_{i-1}^{(l)}\). Since \(l\) is a leaf of the \(T\)-time tree, the durations along the branch add up to the whole block:
\[
    \sum_{i=1}^{N(l)}
    \tau\!\left(u_{i-1}^{(l)}\right)
    =
    T.
\]

We also introduce the following local notation. Let \(u\) be a node of the \(T\)-time tree, and let \(\xi\in\{0,1\}^{\tau(u)}\) be the substring of the hidden bit string associated with the experiment prescribed at \(u\). We denote by \(p_{\xi}^{(u)}(\mathbf{y})\) the conditional distribution of the outcome of that experiment, conditioned on this local substring being equal to \(\xi\), after averaging over the random amplitudes of the signal.

More explicitly, if \(\tau(u)\ge2\), then the experiment at \(u\) is an ordinary \(T\)-time-tree experiment. In this case, the outcome belongs to \(\mathbb{R}^2\), and
\bb\label{eq_gausscond2_ht}
    p_{\xi}^{(u)}(\mathbf{y})
    =
    \mathop{\mathbb{E}}_{b_1,\ldots,b_{\tau(u)}}
    \NN\!\left[
    \begin{pmatrix}
        \sum_{j=1}^{\tau(u)}(-1)^{\xi_j}b_j a_j^{(u)}\\[1mm]
        \sum_{j=1}^{\tau(u)}(-1)^{\xi_j}b_j d_j^{(u)}
    \end{pmatrix},
    \frac{\mathbb{1}_2}{10z^2}
    \right](\mathbf{y}),
    \qquad
    \mathbf{y}\in\mathbb{R}^2,
\ee
where \(b_1,\ldots,b_{\tau(u)}\) are independent and uniformly distributed on \([1/2,3/2]\), and
\[
    \sqrt{(a_j^{(u)})^2+(d_j^{(u)})^2}\le1
    \qquad
    \forall\,j\in[\tau(u)].
\]
If instead \(\tau(u)=1\), then the experiment at \(u\) is refined by the unit-time tree attached to \(u\). In this case, \(\xi\in\{0,1\}\), the outcome \(\mathbf{y}\) is a leaf, equivalently the full transcript, of that attached unit-time tree, and \(p_{\xi}^{(u)}(\mathbf{y})\) denotes its leaf probability distribution conditioned on the bit value \(\xi\), again after averaging over the random amplitude \(b\sim\mathrm{Unif}([1/2,3/2])\).

With this notation, the probability of reaching a leaf \(l\) under the hypothesis \(p\) is
\bb\label{eq:T_time_leaf_distribution_ht_p}
    P_p(l)
    =
    \mathop{\mathbb{E}}_{x\sim p}
    \prod_{i=1}^{N(l)}
    p_{x_i^{(l)}}^{(u_{i-1}^{(l)})}
    \!\left(\mathbf{y}_i^{(l)}\right),
\ee
and similarly
\bb\label{eq:T_time_leaf_distribution_ht_q}
    P_q(l)
    =
    \mathop{\mathbb{E}}_{x\sim q}
    \prod_{i=1}^{N(l)}
    p_{x_i^{(l)}}^{(u_{i-1}^{(l)})}
    \!\left(\mathbf{y}_i^{(l)}\right).
\ee
The next lemma is the analogue of Lemma~\ref{lemma_adapt_prob} for the CV Hypothesis Testing problem. Here the role of the hidden parity string is replaced by the Walsh--Fourier expansion of the signed measure \(p-q\). The key point is that the assumption of matching marginals removes all low-weight Fourier coefficients, while the remaining high-weight coefficients factorize along each branch of the tree.

\begin{lemma}\label{lemma_adapt_prob_hyp_testing}
Let \(\mathcal{T}\) be a \(T\)-time tree. For every node \(u\) of \(\mathcal{T}\), and for every local string
\(\sigma\in\{0,1\}^{\tau(u)}\), define
\bb\label{eq:def_A_u_hyp_testing}
    A_{\sigma}^{(u)}(\mathbf{y})
    \coloneqq
    \mathop{\mathbb{E}}_{\xi\in\{0,1\}^{\tau(u)}}
    (-1)^{\xi\cdot\sigma}
    p_{\xi}^{(u)}(\mathbf{y}).
\ee
Then, for every leaf \(l\in\mathrm{leaves}(\mathcal{T})\),
\bb\label{eq:adapt_prob_product_bound_hyp_testing}
    \left|P_p(l)-P_q(l)\right|
    \le
    \|p-q\|_1
    \sum_{s:\,|s|>T-c}
    \prod_{i=1}^{N(l)}
    \left|A_{s_i^{(l)}}^{(u_{i-1}^{(l)})}
    \!\left(\mathbf{y}_i^{(l)}\right)\right|,
\ee
where, along the branch leading to \(l\), the string \(s\in\{0,1\}^T\) is decomposed as
\[
    s=
    \left(
    s_1^{(l)},s_2^{(l)},\ldots,s_{N(l)}^{(l)}
    \right),
    \qquad
    s_i^{(l)}
    \in
    \{0,1\}^{\tau(u_{i-1}^{(l)})}.
\]
\end{lemma}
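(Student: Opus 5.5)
We follow the same Walsh--Fourier strategy as in the proof of Lemma~\ref{lemma_adapt_prob}, with the parity string replaced by the Fourier expansion of the signed measure $p-q$. Fix a leaf $l$ and abbreviate $u_i\coloneqq u_{i-1}^{(l)}$, $\mathbf{y}_i\coloneqq\mathbf{y}_i^{(l)}$, and $G_l(x)\coloneqq\prod_{i=1}^{N(l)}p_{x_i}^{(u_i)}(\mathbf{y}_i)$. The strings $x$ and $s$ are split into local substrings along the branch, as before. By \eqref{eq:T_time_leaf_distribution_ht_p} and \eqref{eq:T_time_leaf_distribution_ht_q},
\[
    P_p(l)-P_q(l)=\sum_{x\in\{0,1\}^T}\bigl(p(x)-q(x)\bigr)G_l(x).
\]

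\textbf{Step 1: expand $p-q$.} Write $p(x)-q(x)=\sum_{s}\widehat{(p-q)}(s)(-1)^{x\cdot s}$ with coefficients $\widehat{(p-q)}(s)\coloneqq 2^{-T}\sum_x (p(x)-q(x))(-1)^{x\cdot s}$.

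\textbf{Step 2: use the matching marginals.} For $|s|\le T-c$, the character $(-1)^{x\cdot s}$ depends only on the coordinates in $\mathrm{supp}(s)$, which has size at most $T-c$. Since $p$ and $q$ have the same marginal on that set, the coefficient vanishes. For the remaining $s$ we have the trivial bound $|\widehat{(p-q)}(s)|\le 2^{-T}\|p-q\|_1$.

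\textbf{Step 3: factorize each term along the branch.} For each surviving $s$,
\[
    \sum_x(-1)^{x\cdot s}G_l(x)=2^T\,\mathop{\mathbb{E}}_x\prod_i(-1)^{x_i\cdot s_i}p_{x_i}^{(u_i)}(\mathbf{y}_i)=2^T\prod_i A_{s_i}^{(u_i)}(\mathbf{y}_i).
\]
This holds because, once the branch is fixed, the uniform average over $x$ factorizes into independent uniform averages over the blocks $x_i$, and the character factorizes as $\prod_i(-1)^{x_i\cdot s_i}$.

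\textbf{Step 4: combine.} Summing over $s$ gives
\[
    P_p(l)-P_q(l)=\sum_{|s|>T-c}2^T\,\widehat{(p-q)}(s)\prod_iA^{(u_i)}_{s_i}(\mathbf{y}_i).
\]
Applying the triangle inequality and the coefficient bound from Step 2 yields \eqref{eq:adapt_prob_product_bound_hyp_testing}.

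The only delicate point is justifying the factorization in Step 3 in the adaptive setting. It relies on the same observation as in Lemma~\ref{lemma_adapt_prob}: along a fixed root-to-leaf path, the nodes $u_i$ and their time windows are determined, so $G_l$ is a product of functions of disjoint substrings of $x$. Averaging over the amplitudes $b_j$ is already absorbed into the definition of each $p_\xi^{(u)}$ and causes no further issue, because the amplitudes are independent across coordinates.
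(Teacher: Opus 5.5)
Your proposal is correct and follows essentially the same route as the paper. Both arguments expand $p-q$ in the Walsh--Fourier basis, kill the coefficients with $|s|\le T-c$ via the matching-marginals assumption, factorize $\mathbb{E}_x(-1)^{x\cdot s}\prod_i p^{(u_i)}_{x_i}(\mathbf{y}_i)$ into $\prod_i A^{(u_i)}_{s_i}(\mathbf{y}_i)$ along the fixed branch, and finish with the triangle inequality and $\bigl|\sum_x(-1)^{x\cdot s}(p(x)-q(x))\bigr|\le\|p-q\|_1$. The only cosmetic difference is that you discard the low-weight modes before factorizing rather than after.
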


\begin{proof}
Fix a leaf \(l\in\mathrm{leaves}(\mathcal{T})\). To simplify notation inside the proof, write
\[
    u_i\coloneqq u_{i-1}^{(l)},
    \qquad
    \tau_i\coloneqq \tau(u_i),
    \qquad
    \mathbf{y}_i\coloneqq \mathbf{y}_i^{(l)}
\]
for \(i\in[N(l)]\). Along the branch leading to \(l\), any string \(x\in\{0,1\}^T\) decomposes as
\[
    x=(x_1,\ldots,x_{N(l)}),
    \qquad
    x_i\in\{0,1\}^{\tau_i}.
\]
We decompose \(s\in\{0,1\}^T\) in the same way:
\[
    s=(s_1,\ldots,s_{N(l)}),
    \qquad
    s_i\in\{0,1\}^{\tau_i}.
\]
For each node \(u_i\) and each local string \(\sigma\in\{0,1\}^{\tau_i}\), define
\[
    A_{\sigma}^{(u_i)}(\mathbf{y}_i)
    \coloneqq
    \mathop{\mathbb{E}}_{\xi\in\{0,1\}^{\tau_i}}
    (-1)^{\xi\cdot\sigma}
    p_{\xi}^{(u_i)}(\mathbf{y}_i).
\]
By \eqref{eq:T_time_leaf_distribution_ht_p} and \eqref{eq:T_time_leaf_distribution_ht_q}, we have
\[
    P_p(l)-P_q(l)
    =
    \sum_{x\in\{0,1\}^T}
    \bigl(p(x)-q(x)\bigr)\prod_{i=1}^{N(l)}
    p_{x_i}^{(u_i)}(\mathbf{y}_i).
\]
We expand the signed measure \(p-q\) in the Walsh--Fourier basis:
\[
    p(x)-q(x)
    =
    \sum_{s\in\{0,1\}^T}
    \frac{1}{2^T}
    \left[
        \sum_{x'\in\{0,1\}^T}
        (-1)^{s\cdot x'}
        \bigl(p(x')-q(x')\bigr)
    \right]
    (-1)^{x\cdot s}.
\]
Substituting this expansion gives
\[
\begin{aligned}
    P_p(l)-P_q(l)
    &=
    \sum_{s\in\{0,1\}^T}
    \left[
        \sum_{x'\in\{0,1\}^T}
        (-1)^{s\cdot x'}
        \bigl(p(x')-q(x')\bigr)
    \right]
    \mathop{\mathbb{E}}_{x\in\{0,1\}^T}
    \prod_{i=1}^{N(l)}
    p_{x_i}^{(u_i)}(\mathbf{y}_i)(-1)^{x\cdot s}.
\end{aligned}
\]
Since both \(x\) and \(s\) decompose along the branch, the last expectation factorizes:
\[
\begin{aligned}
    \mathop{\mathbb{E}}_{x\in\{0,1\}^T}
    \prod_{i=1}^{N(l)}
    p_{x_i}^{(u_i)}(\mathbf{y}_i)(-1)^{x\cdot s}
    &=
    \mathop{\mathbb{E}}_{x_1,\ldots,x_{N(l)}}
    \prod_{i=1}^{N(l)}
    p_{x_i}^{(u_i)}(\mathbf{y}_i)(-1)^{x_i\cdot s_i}  =
    \prod_{i=1}^{N(l)}
    A_{s_i}^{(u_i)}(\mathbf{y}_i).
\end{aligned}
\]
Hence
\bb\label{eq:leaf_difference_fourier_hyp_testing}
    P_p(l)-P_q(l)
    =
    \sum_{s\in\{0,1\}^T}
    \left[
        \sum_{x'\in\{0,1\}^T}
        (-1)^{s\cdot x'}
        \bigl(p(x')-q(x')\bigr)
    \right]
    \prod_{i=1}^{N(l)}
    A_{s_i}^{(u_i)}(\mathbf{y}_i).
\ee
We now use the matching-marginals assumption. If \(|s|\le T-c\), then the function $x'\mapsto (-1)^{s\cdot x'}$ depends only on the coordinates in the support of \(s\), whose size is at most \(T-c\). Since \(p\) and \(q\) have identical marginals on every subset of at most \(T-c\) coordinates, it follows that
\[
    \sum_{x'\in\{0,1\}^T}
    (-1)^{s\cdot x'}
    \bigl(p(x')-q(x')\bigr)
    =
    0
    \qquad
    \text{whenever } |s|\le T-c.
\]
Therefore only the Fourier modes with \(|s|>T-c\) remain in \eqref{eq:leaf_difference_fourier_hyp_testing}. Taking absolute values and using the triangle inequality, we obtain
\[
\begin{aligned}
    \left|P_p(l)-P_q(l)\right|
    &\le
    \sum_{s:\,|s|>T-c}
    \left|
        \sum_{x'\in\{0,1\}^T}
        (-1)^{s\cdot x'}
        \bigl(p(x')-q(x')\bigr)
    \right|
    \prod_{i=1}^{N(l)}
    \left|
        A_{s_i}^{(u_i)}(\mathbf{y}_i)
    \right|  \\
    &\le
    \|p-q\|_1
    \sum_{s:\,|s|>T-c}
    \prod_{i=1}^{N(l)}
    \left|A_{s_i}^{(u_i)}(\mathbf{y}_i)\right|,
\end{aligned}
\]
where in the last step we used
\[
    \left|
        \sum_{x'\in\{0,1\}^T}
        (-1)^{s\cdot x'}
        \bigl(p(x')-q(x')\bigr)
    \right|
    \le
    \sum_{x'\in\{0,1\}^T}|p(x')-q(x')|
    =
    \|p-q\|_1 .
\]
This proves \eqref{eq:adapt_prob_product_bound_hyp_testing}.
\end{proof}

We now need to control the local factors appearing in Lemma~\ref{lemma_adapt_prob_hyp_testing}. The bound in \eqref{eq:adapt_prob_product_bound_hyp_testing} reduces the leafwise distinguishability to a sum over high-weight Fourier modes, and, for each such mode, to a product of local terms \(A_{\sigma}^{(u)}\) along the branch. Thus, the next step is to bound \(\|A_{\sigma}^{(u)}\|_1\) uniformly over the nodes of the \(T\)-time tree, with a decay proportional to the local Hamming weight \(|\sigma|\). There are two cases: if \(\tau(u)=1\), the node is resolved by an attached unit-time tree; if \(\tau(u)\ge2\), the transition is directly given by the Gaussian mixture in \eqref{eq_gausscond2_ht}. We begin with the first case.

\subsubsection{Unit-time tree analysis}

Let $u$ be a node of the $T$-time tree with $\tau(u)=1$. Then the local substring is a single bit $\xi\in\{0,1\}$. In the present hypothesis-testing problem, this bit does not determine the signal value by itself: rather, conditioned on $\xi$, the signal inside the unit interval is equal to $(-1)^\xi b$, where the amplitude $b$ is sampled uniformly from $[\frac12,\frac32]$. Thus $p_\xi^{(u)}$ is the leaf distribution of the unit-time tree attached to $u$, averaged over this random amplitude.

The next lemma gives the corresponding unit-time estimate for the local Walsh coefficients. The point is that, after conditioning on the amplitude $b$, the two cases $+b$ and $-b$ are statistically equivalent, by a hypothesis-independent shift of the transcript, to the two signals $2b$ and $0$. Since $2b\le3$, we can reduce the analysis to the unit-time tree estimate already proved for the CV Learning Parity problem, with effective signal strength at most $3$.

\begin{lemma}\label{lemma_unit_time_tree_contribution_ht}
Let $u$ be a node of a $T$-time tree with $\tau(u)=1$, and let $\mathcal{U}_u$ be the unit-time tree attached to $u$. For $\sigma\in\{0,1\}$, define
\[
    A_{\sigma}^{(u)}(\mathbf{y})
    \coloneqq
    \mathop{\mathbb{E}}_{\xi\in\{0,1\}}
    (-1)^{\xi\sigma}
    p_{\xi}^{(u)}(\mathbf{y}).
\]
Then, for every $z\ge1$,
\bb\label{eq:unit_time_A_sigma_ht_bound}
    \left\|A_{\sigma}^{(u)}\right\|_1
    \le
    \exp\!\left(
        -\frac16 e^{-45z^2/2}\,|\sigma|
    \right).
\ee 
\end{lemma}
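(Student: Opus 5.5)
The case $\sigma=0$ is immediate: $A_0^{(u)}=\frac12(p_0^{(u)}+p_1^{(u)})$ is a probability density, so $\|A_0^{(u)}\|_1=1$, which matches the right-hand side of \eqref{eq:unit_time_A_sigma_ht_bound} with $|\sigma|=0$. For $\sigma=1$ we have $\|A_1^{(u)}\|_1=\frac12\|p_0^{(u)}-p_1^{(u)}\|_1$. Using the convexity of the $L^1$ norm over the shared amplitude average, I would bound this by $\mathbb{E}_{b}\frac12\|P^{(\mathcal{U}_u)}_{+b}-P^{(\mathcal{U}_u)}_{-b}\|_1$. Here $P^{(\mathcal{U}_u)}_{m}$ denotes the leaf distribution of the unit-time tree when the signal value in the interval is $m$. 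The coupling over $b$ is valid because the same $b$ law appears under both $\xi$.

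Next, fix $b$ and reduce to signals $2b$ versus $0$. Each experiment at a node $w$ of $\mathcal{U}_u$ outputs $\mathbf{y}\sim\NN[m(a^{(w)},b^{(w)})^\intercal,\mathbb{1}_2/(10z^2)]$. The map $\mathbf{y}\mapsto\mathbf{y}+b(a^{(w)},b^{(w)})^\intercal$ is known to the learner, since $b$ is fixed in this conditional argument and the coefficients are stored at $w$. This map sends the $m=-b$ law to the $m=0$ law, and the $m=+b$ law to the $m=2b$ law. Composing these shifts along branches gives a bijection of leaves that preserves the adaptive structure. We can define a modified tree $\mathcal{U}'$ whose decision rules first un-shift the outcomes and then consult $\mathcal{U}_u$. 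Then
\[
\big\|P^{(\mathcal{U}_u)}_{+b}-P^{(\mathcal{U}_u)}_{-b}\big\|_1=\big\|P^{(\mathcal{U}')}_{2b}-P^{(\mathcal{U}')}_{0}\big\|_1 .
\]
Now rescale. Signal $2b$ with noise variance $1/(10z^2)$ is equivalent, after dividing outcomes by $2b$, to signal $1$ with variance $1/(40b^2z^2)$, that is, to squeezing $z'=2bz\le 3z$. I would then apply the backward induction in the proof of Lemma~\ref{lemma_unit_time_tree_contribution}, namely \eqref{eq:unit_time_tree_tv_bound}, with $z$ replaced by $z'$. This bounds the distance by the one-shot Gaussian distance $\|\NN[1,1/(10z'^2)]-\NN[0,1/(10z'^2)]\|_1$. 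Since this distance is monotone in $z'$, it is at most the same quantity with $z'=3z$. That induction used only the constraint $\sqrt{a^2+b^2}\le\tau$ and isotropic covariance, both of which survive the rescaling.

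Finally, I would apply \eqref{eq_2_l} with $t^2=5(3z)^2/2=45z^2/2$. This gives $\frac12\|\cdot\|_1\le 1-\frac16e^{-45z^2/2}$, so $\|A_1^{(u)}\|_1\le 1-\frac16 e^{-45z^2/2}\le\exp(-\frac16e^{-45z^2/2})$. The main step to check carefully is the shift reduction. The shift must depend on $b$, which is legitimate only because we condition on $b$ before invoking convexity. We also need the shifted tree to remain a valid unit-time tree with the same coefficients, so that the earlier induction applies verbatim.
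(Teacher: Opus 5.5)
Your proposal is correct and follows the paper's proof essentially step for step: the trivial $\sigma=0$ case, convexity over the shared amplitude $b$, the hypothesis-independent shift reducing $\pm b$ to $2b$ versus $0$, the unit-time tree bound \eqref{eq:unit_time_tree_tv_bound}, monotonicity up to separation $3$, and \eqref{eq_2_l} with $t^2=45z^2/2$. The only difference is your rescaling to effective squeezing $z'=2bz$. This is simply an explicit justification of the paper's phrase ``applied with signal strength $2b$'', and it is equivalent to it by the scale invariance of the total variation distance.
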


\begin{proof}
The case $\sigma=0$ is immediate. Indeed,
\[
    A_0^{(u)}
    =
    \frac12\left(p_0^{(u)}+p_1^{(u)}\right)
\]
is a probability density, and therefore $\|A_0^{(u)}\|_1=1$.

It remains to consider $\sigma=1$. Let $P_{\xi,b}^{(\mathcal{U}_u)}$ denote the leaf distribution of the attached unit-time tree when the local bit is $\xi$ and the amplitude is fixed to be $b$. Then $p_\xi^{(u)}
    =
    \mathop{\mathbb{E}}_{b}
    P_{\xi,b}^{(\mathcal{U}_u)}$. Hence, by convexity of the $L^1$ norm,
\bb\label{eq:condition_on_b_unit_ht}
    \left\|A_1^{(u)}\right\|_1
    =
    \frac12
    \left\|
    p_0^{(u)}-p_1^{(u)}
    \right\|_1
    \le
    \mathop{\mathbb{E}}_b
    \frac12
    \left\|
    P_{0,b}^{(\mathcal{U}_u)}
    -
    P_{1,b}^{(\mathcal{U}_u)}
    \right\|_1 .
\ee
Thus it is enough to bound the distinguishability at fixed $b$.

Fix $b\in[\frac12,\frac32]$. Under the two hypotheses $\xi=0$ and $\xi=1$, the signal in the unit interval is respectively $+b$ and $-b$. Consider any node $v$ of the unit-time tree. If the experiment prescribed at $v$ has rotation coefficients $(a^{(v)},d^{(v)})$, then the corresponding Gaussian outcome has mean
\[
    (-1)^\xi b
    \begin{pmatrix}
        a^{(v)}\\
        d^{(v)}
    \end{pmatrix}.
\]
Since $b$ is now fixed and known, we may apply to each observed outcome the deterministic shift
\[
    \mathbf{y}
    \longmapsto
    \mathbf{y}
    +
    b
    \begin{pmatrix}
        a^{(v)}\\
        d^{(v)}
    \end{pmatrix}.
\]
This transformation is invertible and independent of the hypothesis $\xi$. Therefore it preserves the total variation distance between the two induced transcript distributions. After this shift, the case $\xi=1$ has zero mean, while the case $\xi=0$ has mean
\[
    2b
    \begin{pmatrix}
        a^{(v)}\\
        d^{(v)}
    \end{pmatrix}.
\]
Thus, conditioned on $b$, the problem of distinguishing the two signals $+b$ and $-b$ is equivalent to distinguishing the two signals $2b$ and $0$.

We can now invoke the unit-time tree bound proved in the CV Learning Parity analysis, applied with signal strength $2b$. That bound, stated in \eqref{eq:unit_time_tree_tv_bound}, gives
\bb\label{eq:fixed_b_unit_ht}
    \frac12
    \left\|
    P_{0,b}^{(\mathcal{U}_u)}
    -
    P_{1,b}^{(\mathcal{U}_u)}
    \right\|_1
    \le
    \frac12
    \left\|
    \NN\!\left[2b,\frac{1}{10z^2}\right]
    -
    \NN\!\left[0,\frac{1}{10z^2}\right]
    \right\|_1 .
\ee
Since $2b\le3$ and the total variation distance between one-dimensional Gaussians with the same variance is monotone in the distance between their means, we obtain
\bb\label{eq:fixed_b_unit_ht_max}
    \frac12
    \left\|
    P_{0,b}^{(\mathcal{U}_u)}
    -
    P_{1,b}^{(\mathcal{U}_u)}
    \right\|_1
    \le
    \frac12
    \left\|
    \NN\!\left[3,\frac{1}{10z^2}\right]
    -
    \NN\!\left[0,\frac{1}{10z^2}\right]
    \right\|_1 .
\ee
Combining \eqref{eq:condition_on_b_unit_ht} and \eqref{eq:fixed_b_unit_ht_max}, we get
\bb\label{eq:A_one_gaussian_bound_ht}
    \left\|A_1^{(u)}\right\|_1
    \le
    \frac12
    \left\|
    \NN\!\left[3,\frac{1}{10z^2}\right]
    -
    \NN\!\left[0,\frac{1}{10z^2}\right]
    \right\|_1 .
\ee

It remains only to estimate the last Gaussian total variation distance. We apply Lemma~\ref{tvd_one_g} with $m_1=3$, $m_2=0$, and $V=1/(10z^2)$. In this case
\[
    \alpha
    \coloneqq
    \frac{|m_1-m_2|}{2\sqrt V}
    =
    \frac{3\sqrt{10}\,z}{2},
    \qquad
    \alpha^2
    =
    \frac{45z^2}{2}.
\]
Therefore,
\[
    \frac12
    \left\|
    \NN\!\left[3,\frac{1}{10z^2}\right]
    -
    \NN\!\left[0,\frac{1}{10z^2}\right]
    \right\|_1
    \le
    1-\frac16 e^{-45z^2/2}.
\]
Using $1-x\le e^{-x}$ for $x\ge0$, we conclude that
\[
    \left\|A_1^{(u)}\right\|_1
    \le
    \exp\!\left(
        -\frac16 e^{-45z^2/2}
    \right).
\]
Together with the trivial identity $\|A_0^{(u)}\|_1=1$, this proves \eqref{eq:unit_time_A_sigma_ht_bound}.
\end{proof}

\subsubsection{Analysis for $\tau(u)\ge2$}

We now consider the second case, namely a node $u$ of the $T$-time tree with $\tau(u)\ge2$. Set \(t\coloneqq\tau(u)\). In the CV hypothesis-testing problem, the local string \(\xi\in\{0,1\}^t\) determines only the signs of the signal, while the amplitudes are sampled independently. Thus the transition at \(u\) is described by the Gaussian mixture
\bb\label{eq_gausscond_ht_tau_ge_2}
    p_{\xi}^{(u)}(\mathbf{y})
    =
    \mathop{\mathbb{E}}_{b_1,\ldots,b_t}
    \NN\!\left[
    \begin{pmatrix}
        \sum_{j=1}^{t}(-1)^{\xi_j}b_j a_j^{(u)}\\[0.5ex]
        \sum_{j=1}^{t}(-1)^{\xi_j}b_j d_j^{(u)}
    \end{pmatrix},
    \frac{\mathbb{1}_2}{10z^2}
    \right](\mathbf{y}),
    \qquad
    \mathbf{y}\in\mathbb{R}^2,
\ee
where \(b_1,\ldots,b_t\) are independent and uniformly distributed in \([\frac12,\frac32]\), and the coefficients satisfy
\[
    \bigl(a_j^{(u)}\bigr)^2+\bigl(d_j^{(u)}\bigr)^2\le1
    \qquad
    \forall\,j\in[t].
\]
For every \(\sigma\in\{0,1\}^t\), define the local Walsh coefficient
\[
    A_{\sigma}^{(u)}(\mathbf{y})
    \coloneqq
    \mathop{\mathbb{E}}_{\xi\in\{0,1\}^t}
    (-1)^{\xi\cdot\sigma}
    p_{\xi}^{(u)}(\mathbf{y}).
\]
Unlike in the CV Learning Parity problem, here we do not average
these coefficients over all local strings \(\sigma\). Instead, we
bound each coefficient \(A_\sigma^{(u)}\) separately in terms of
\(|\sigma|\).

We first consider the all-one Walsh coefficient. For \(z\ge1\),
an integer \(r\ge1\), and vectors
\(\mathbf{v}_1,\ldots,\mathbf{v}_r\in\mathbb{R}^2\) satisfying
\(\|\mathbf{v}_j\|_2\le1\), define
\begin{equation}\label{eq:all_one_coefficient_ht}
    B_r(\mathbf{y})
    \coloneqq
    \mathop{\mathbb{E}}_{\eta,b}
    (-1)^{|\eta|}
    \NN\!\left[
        \sum_{j=1}^r(-1)^{\eta_j}b_j\mathbf{v}_j,
        \frac{\mathbb{1}_2}{10z^2}
    \right](\mathbf{y}),
\end{equation}
where \(\eta\) is uniform on \(\{0,1\}^r\),
\(b_1,\ldots,b_r\) are independent and uniform on
\([\frac12,\frac32]\), and \(\eta\) is independent of \(b\).

\begin{lemma}\label{lemma_large_t_A_tau_ge_2_ht}
Let \(B_r\) be defined by Eq.~\eqref{eq:all_one_coefficient_ht},
with \(r\ge2\). Then
\(\|B_r\|_1\le3zr(0.91)^r+e^{-r^2/2}\).
\end{lemma}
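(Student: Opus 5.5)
The bound has the same shape as Lemma~\ref{lemma_l1_A_tau_ge_2}: an $L^2$ estimate obtained in Fourier space, followed by Cauchy--Schwarz on a box and a Gaussian tail bound. The only new feature is that the amplitudes $b_j$ are random. The key observation is that the all-ones Walsh coefficient factorizes in Fourier space. Using the Fourier convention from Lemma~\ref{lemma_l2_A_tau_ge_2}, independence of the pairs $(\eta_j,b_j)$ gives
\[
    \widehat{B_r}(\boldsymbol{\omega})
    =
    \frac{1}{2\pi}e^{-\frac{v_z}{2}\|\boldsymbol{\omega}\|_2^2}
    \prod_{j=1}^r
    \mathop{\mathbb{E}}_{\eta_j,b_j}(-1)^{\eta_j}e^{-i(-1)^{\eta_j}b_j\boldsymbol{\omega}\cdot\mathbf{v}_j}
    =
    \frac{1}{2\pi}e^{-\frac{v_z}{2}\|\boldsymbol{\omega}\|_2^2}
    \prod_{j=1}^r
    \bigl(-i\,g(\boldsymbol{\omega}\cdot\mathbf{v}_j)\bigr),
\]
where $v_z=1/(10z^2)$ and $g(\theta)\coloneqq\mathbb{E}_b\sin(b\theta)=\frac{\cos(\theta/2)-\cos(3\theta/2)}{\theta}$. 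I first aim to show $\sup_{\theta}|g(\theta)|\le 0.91$. For $|\theta|$ large this is immediate from $|g|\le 2/|\theta|$. For moderate $\theta$ I will write $g(\theta)=2\sin(\theta)\sin(\theta/2)/\theta$ and check numerically that the maximum is about $0.9$, attained near $\theta\approx 2$. Pinning the constant below $0.91$ rigorously is the fiddly step. I would use elementary bounds such as $|\sin\theta/\theta|\le1$ together with a case split on $\theta$, and fall back on an interval check if needed.

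With the sup bound in hand, Plancherel gives
\[
    \|B_r\|_2^2\le (0.91)^{2r}\frac{1}{4\pi^2}\int_{\mathbb{R}^2}e^{-v_z\|\boldsymbol{\omega}\|^2}\,\mathrm{d}^2\boldsymbol{\omega}=(0.91)^{2r}\frac{5z^2}{2\pi},
\]
so that $\|B_r\|_2\le (0.91)^r z\sqrt{5/(2\pi)}$.

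Next comes localization. Every mean $\sum_j(-1)^{\eta_j}b_j\mathbf{v}_j$ has coordinates of absolute value at most $\tfrac32 r$. I take the box $B_R$ with $R=\tfrac32 r+\sigma_z r$, where $\sigma_z=1/(\sqrt{10}z)$. Exactly as in \eqref{eq_useful}, the union bound together with Lemma~\ref{lemma_bounds_Q} shows that the mass of $|B_r|$ outside $B_R$ is at most $2e^{-r^2/2}$. To reach the stated constant $e^{-r^2/2}$, I would tighten this with a $\chi^2$ tail: take a Euclidean disk of radius $\tfrac32 r+\sigma_z r$, whose outside mass is at most $e^{-r^2/2}$ exactly for a 2D standard Gaussian.

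Inside the region, Cauchy--Schwarz gives a contribution of at most $\sqrt{\mathrm{Vol}}\,\|B_r\|_2$. For the disk, $\sqrt{\mathrm{Vol}}=\sqrt{\pi}\,(1.5+\sigma_z)r\le\sqrt\pi(1.5+0.32)r$. Multiplying by $\sqrt{5/(2\pi)}\,z(0.91)^r$ gives roughly $1.82\sqrt{5/2}\,zr(0.91)^r\approx 2.9\,zr(0.91)^r\le 3zr(0.91)^r$. Combined with the tail term, this yields the claim.

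The main obstacle is the rigorous constant $\sup|g|\le0.91$; everything else mirrors the earlier proofs.
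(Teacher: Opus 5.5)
Your proposal is correct and follows the paper's proof essentially step by step. Both arguments factorize $\widehat{B_r}$ into $\prod_j \mathbb{E}_b\sin(b\,\boldsymbol{\omega}\cdot\mathbf{v}_j)$, use the bound $\sup|g|<0.91$ and Plancherel, and apply Cauchy--Schwarz on the Euclidean disk of radius $(\tfrac32+\sigma_z)r$ with the exact two-dimensional Gaussian tail $e^{-r^2/2}$; the resulting constant is $\tfrac{3\sqrt{10}}{4}z+\tfrac12\le 3z$. One small correction: the maximum of $|g|$ is about $0.908$ and is attained near $\theta\approx1.44$, not near $\theta\approx2$ (where $g\approx0.77$), so the margin below $0.91$ is thin but the claim holds.
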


\begin{proof}
Set \(\sigma_z\coloneqq1/(\sqrt{10}z)\),
\(R\coloneqq(\frac32+\sigma_z)r\), and
\(B_R\coloneqq\{\mathbf{y}\in\mathbb{R}^2:
\|\mathbf{y}\|_2\le R\}\).
Every Gaussian component in Eq.~\eqref{eq:all_one_coefficient_ht}
has mean
\(\boldsymbol{\mu}_{\eta,b}
=\sum_{j=1}^r(-1)^{\eta_j}b_j\mathbf{v}_j\),
with \(\|\boldsymbol{\mu}_{\eta,b}\|_2\le3r/2\).
Thus, if \(\mathbf{Y}\) is sampled from such a component and
\(\mathbf{G}\sim\NN[0,\mathbb{1}_2]\), then
\[
    \Pr(\mathbf{Y}\notin B_R)
    \le
    \Pr(\|\mathbf{G}\|_2>r)
    =
    \int_r^\infty s e^{-s^2/2}\,\mathrm{d}s
    =
    e^{-r^2/2}.
\]
Averaging this estimate gives
\(\int_{B_R^c}|B_r(\mathbf{y})|\,\mathrm{d}^2\mathbf{y}
\le e^{-r^2/2}\).
On \(B_R\), Cauchy--Schwarz therefore yields
\[
    \|B_r\|_1
    \le
    \sqrt{\pi}R\|B_r\|_2+e^{-r^2/2}.
\]

It remains to bound \(\|B_r\|_2\). We use the Fourier transform
convention
\(\widehat h(\boldsymbol{\omega})
=(2\pi)^{-1}\int_{\mathbb{R}^2}
h(\mathbf{y})e^{-i\boldsymbol{\omega}\cdot\mathbf{y}}\,
\mathrm{d}^2\mathbf{y}\),
for which Plancherel's identity reads
\(\|h\|_2=\|\widehat h\|_2\).
Independence of the signs and amplitudes gives
\[
    \widehat B_r(\boldsymbol{\omega})
    =
    \frac{(-i)^r}{2\pi}
    e^{-\sigma_z^2\|\boldsymbol{\omega}\|_2^2/2}
    \prod_{j=1}^r
    \mathop{\mathbb{E}}_{b_j}
    \sin\!\left(b_j\boldsymbol{\omega}\cdot\mathbf{v}_j\right).
\]
For \(b\) uniform on \([\frac12,\frac32]\), write
\[
    \phi(x)
    \coloneqq
    \mathop{\mathbb{E}}_b\sin(bx)
    =
    \sin(x)\frac{\sin(x/2)}{x/2},
\]
with the value at \(x=0\) understood by continuity. It is elementary to check that
\(\sup_{x\in\mathbb{R}}|\phi(x)|<0.91\).

Consequently, Plancherel's identity gives
\[
    \|B_r\|_2
    \le
    \frac{(0.91)^r}{2\pi}
    \left(
        \int_{\mathbb{R}^2}
        e^{-\sigma_z^2\|\boldsymbol{\omega}\|_2^2}
        \,\mathrm{d}^2\boldsymbol{\omega}
    \right)^{1/2}
    =
    \frac{(0.91)^r}{2\sqrt{\pi}\,\sigma_z}.
\]
Substituting into the \(L^1\) estimate yields
\[
    \|B_r\|_1
    \le
    \frac{\frac32+\sigma_z}{2\sigma_z}\,
    r(0.91)^r+e^{-r^2/2}.
\]
Finally,
\((\frac32+\sigma_z)/(2\sigma_z)
=3\sqrt{10}z/4+1/2\le3z\),
since \(z\ge1\). This proves the claim.
\end{proof}

The preceding estimate is useful when \(r\) is large. For small
values of \(r\), we use a direct comparison between the two
parity classes.

\begin{lemma}\label{lemma_small_t_A_tau_ge_2_ht}
Let \(B_r\) be defined by Eq.~\eqref{eq:all_one_coefficient_ht},
with \(r\ge1\). Then
\(\|B_r\|_1\le\exp(-\frac16e^{-45z^2/2})\).
\end{lemma}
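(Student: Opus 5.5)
We bound $\|B_r\|_1$ by isolating the sign of one coordinate, as in Lemma~\ref{lemma_small_t_A_tau_ge_2}, and then pairing the two signs into a comparison between signal strengths $2b$ and $0$, as in Lemma~\ref{lemma_unit_time_tree_contribution_ht}.

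Since $r\ge1$, condition first on all randomness except the first sign $\eta_1$. Fix $\eta_2,\ldots,\eta_r$ and $b_1,\ldots,b_r$, and set $\mathbf{w}\coloneqq\sum_{j\ge2}(-1)^{\eta_j}b_j\mathbf{v}_j$. The average over $\eta_1\in\{0,1\}$ of $(-1)^{|\eta|}\NN[\cdot]$ equals
\[
\tfrac12(-1)^{\eta_2+\cdots+\eta_r}\Bigl(\NN\bigl[\mathbf{w}+b_1\mathbf{v}_1,\tfrac{\mathbb{1}_2}{10z^2}\bigr]-\NN\bigl[\mathbf{w}-b_1\mathbf{v}_1,\tfrac{\mathbb{1}_2}{10z^2}\bigr]\Bigr).
\]
Apply the triangle inequality, that is, convexity of the $L^1$ norm over the remaining expectations in $\eta_2,\ldots,\eta_r$ and $b$. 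Then $\|B_r\|_1$ is at most the average over $b_1$ of $\tfrac12\|\NN[\mathbf{w}+b_1\mathbf{v}_1,\cdot]-\NN[\mathbf{w}-b_1\mathbf{v}_1,\cdot]\|_1$.

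Each such term is a total variation distance between two isotropic Gaussians. By translation and rotation invariance, we translate by $-(\mathbf{w}-b_1\mathbf{v}_1)$ and rotate so that $\mathbf{v}_1$ lies along an axis. After integrating out the orthogonal coordinate, the term equals the one-dimensional distance $\tfrac12\|\NN[2b_1\|\mathbf{v}_1\|_2,\tfrac1{10z^2}]-\NN[0,\tfrac1{10z^2}]\|_1$. We have $2b_1\|\mathbf{v}_1\|_2\le 3$, and Lemma~\ref{tvd_one_g} shows this distance is monotone in the separation of the means. Hence the term is at most $\tfrac12\|\NN[3,\tfrac1{10z^2}]-\NN[0,\tfrac1{10z^2}]\|_1$, uniformly in everything that was conditioned on.

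Finally, apply \eqref{eq_2_l} with $t^2=45z^2/2$. This gives the bound $1-\tfrac16e^{-45z^2/2}$, after dropping the nonnegative second term. Then $1-x\le e^{-x}$ yields $\|B_r\|_1\le\exp(-\tfrac16e^{-45z^2/2})$.

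The only delicate point is justifying that the single-coordinate pairing loses nothing beyond the triangle inequality. The remaining $r-1$ coordinates enter only through the common shift $\mathbf{w}$ and the global sign $\pm1$, both of which disappear under the norm. No decay in $r$ is needed here, since that regime is covered by Lemma~\ref{lemma_large_t_A_tau_ge_2_ht}.
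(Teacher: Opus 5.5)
Your proof is correct and follows the paper's argument essentially step for step. You condition on everything except $\eta_1$, and convexity plus translation invariance reduce the bound to $\tfrac12\,\mathbb{E}_{b_1}\|\NN[b_1\mathbf{v}_1,\tfrac{\mathbb{1}_2}{10z^2}]-\NN[-b_1\mathbf{v}_1,\tfrac{\mathbb{1}_2}{10z^2}]\|_1$; you then rotate to one dimension, use monotonicity with $2b_1\|\mathbf{v}_1\|_2\le 3$, and finish with \eqref{eq_2_l} at $t^2=45z^2/2$ and $1-x\le e^{-x}$.
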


\begin{proof}
Condition on the amplitudes \(b_1,\ldots,b_r\) and the bits
\(\eta_2,\ldots,\eta_r\), and average over \(\eta_1\).
The two values of \(\eta_1\) have opposite parity, and the
corresponding Gaussian means differ only in the first contribution.
By convexity of the \(L^1\) norm and translation invariance,
\[
    \|B_r\|_1
    \le
    \frac12\mathop{\mathbb{E}}_{b_1}
    \left\|
        \NN\!\left[
            b_1\mathbf{v}_1,\frac{\mathbb{1}_2}{10z^2}
        \right]
        -
        \NN\!\left[
            -b_1\mathbf{v}_1,\frac{\mathbb{1}_2}{10z^2}
        \right]
    \right\|_1.
\]
Rotating the first axis along \(\mathbf{v}_1\) and integrating out
the other coordinate reduces this distance to the one-dimensional
case. If \(\mathbf{v}_1=0\), the distance is zero.
Since \(2b_1\|\mathbf{v}_1\|_2\le3\), monotonicity of the Gaussian
total variation distance in the separation of the means gives
\[
    \|B_r\|_1
    \le
    \frac12
    \left\|
        \NN\!\left[3,\frac{1}{10z^2}\right]
        -
        \NN\!\left[0,\frac{1}{10z^2}\right]
    \right\|_1.
\]
Applying Lemma~\ref{tvd_one_g}, with
\(\alpha=3\sqrt{10}z/2\), we obtain
\[
    \|B_r\|_1
    \le
    1-\frac16e^{-45z^2/2}
    \le
    \exp\!\left(-\frac16e^{-45z^2/2}\right).
\]
This proves the claim.
\end{proof}

We now combine the two bounds.

\begin{lemma}\label{lemma_min_bound_ht}
Let \(z\ge1\), and let \(r\ge2\) be an integer. Then
\[
    \min\left\{
        \exp\!\left(-\frac16 e^{-45z^2/2}\right),
        \;
        3zr(0.91)^r+e^{-r^2/2}
    \right\}
    \le
    \exp\!\left(
        -\frac{e^{-25z^2}}{30}\,r
    \right).
\]
\end{lemma}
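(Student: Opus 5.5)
The proof is an elementary two-regime argument that mirrors the proof of Lemma~\ref{lemma_tau_ge_2_fu_bound}. Set $\kappa\coloneqq e^{-25z^2}/30$. We split according to whether $r$ is large compared to a threshold $r_0(z)$ of order $\log z$, say $r_0(z)\coloneqq c_1\log z+c_2$ for suitable absolute constants. For $r\ge r_0(z)$ we use the second entry of the minimum. For $r<r_0(z)$ we use the first entry.

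\emph{Large $r$.} First note that $e^{-r^2/2}\le (0.91)^r$ for all $r\ge2$. Hence
\[
3zr(0.91)^r+e^{-r^2/2}\le (3zr+1)(0.91)^r\le 4zr(0.91)^r .
\]
Write $(0.91)^r=e^{-\lambda r}$ with $\lambda=\log(1/0.91)\approx0.094$. For $r\ge c_2$, the factor $4r$ is absorbed as $4r\le e^{\lambda r/4}$; this needs $c_2$ of size a few hundred, which is harmless since only absolute constants are involved. For $r\ge c_1\log z$ with $c_1\ge 4/\lambda$, we also have $z\le e^{\lambda r/4}$. Together these give the bound $\le e^{-\lambda r/2}\le e^{-r/30}\le e^{-\kappa r}$, using $e^{-25z^2}\le1$.

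\emph{Small $r$.} Put $x\coloneqq \tfrac16 e^{-45z^2/2}$. Then the first entry of the minimum is $e^{-x}$, and it suffices to show $x\ge \kappa r$ for all $r<r_0(z)$. Since $r<r_0(z)$, it is enough to prove
\[
\tfrac16 e^{-45z^2/2}\ge \tfrac{1}{30}e^{-25z^2}\,(c_1\log z+c_2),
\]
which is equivalent to
\[
g(z)\coloneqq (c_1\log z+c_2)\,e^{-5z^2/2}\le 5 .
\]
As in Lemma~\ref{lemma_tau_ge_2_fu_bound}, $g$ is decreasing on $[1,\infty)$. Indeed, $g'(z)=e^{-5z^2/2}\bigl(c_1/z-5z(c_1\log z+c_2)\bigr)\le0$ once $c_2\ge c_1/5$. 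So it remains to check $g(1)=c_2e^{-5/2}\le5$, that is, $c_2\le 5e^{5/2}\approx 61$.

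\emph{Main obstacle.} This upper limit on $c_2$ is the one constraint that needs care. The large-$r$ regime wants $c_2$ big so that $4r\le e^{\lambda r/4}$, while the small-$r$ regime caps $c_2$ at about $61$. If the two requirements clash, I would first not absorb the factor $4r$ so crudely. Instead, split $e^{-\lambda r}=e^{-\lambda r/2}e^{-\lambda r/2}$ and use the sharp bound $\sup_r 4r e^{-\lambda r/2}=8/(e\lambda)\approx 31$. This leaves the requirement $31\,z\,e^{-\lambda r/2}\le1$, which holds for $r\ge (2/\lambda)(\log z+\log 31)\approx 21\log z+73$. If that is still marginally too large, two further slacks are available. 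One can exploit that the small-$r$ side holds with substantial room for $z$ near $1$, since the constant $1/30$ is small. One can also check the finitely many residual values of $r$ directly at $z=1$, where $e^{-\kappa r}$ is extremely close to $1$ for moderate $r$, making the inequality easy. Once the constants are reconciled, all remaining estimates are routine.
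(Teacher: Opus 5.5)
\textbf{There is a genuine gap: as written, your two regimes do not meet at $z=1$.} Since the lemma is nothing but a constants check, leaving the reconciliation open means the proof is not done.

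At $z=1$ your small-$r$ criterion has no slack at all. The inequality $\frac16 e^{-45z^2/2}\ge\kappa r$ is \emph{equivalent} to $r\le 5e^{5z^2/2}$. So the first entry of the minimum covers exactly $r\le 5e^{5/2}\approx 60.9$ and nothing more, which means the first ``slack'' you invoke does not exist. Your large-$r$ argument, even with the refinement $\sup_r 4re^{-\lambda r/2}=8/(e\lambda)$, needs $r\ge 73$ or so at $z=1$. Hence the integers $61\le r\le 72$ are covered by neither argument. This happens at $z=1$ and also for a whole neighbourhood of $z$ values above $1$ (up to roughly $z\approx 1.03$). Checking these $r$ only at $z=1$, as your second ``slack'' suggests, does not handle that continuum of $z$.

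\textbf{The missing idea.} Your chain $e^{-\lambda r/2}\le e^{-r/30}\le e^{-\kappa r}$ discards almost all the available room. It uses only $\kappa\le 1/30$, whereas in fact $\kappa\le e^{-25}/30<10^{-12}$. In the large-$r$ regime you need no leftover decay. It suffices to peel $e^{-\kappa r}$ off $(0.91)^r$ and require only
\[
4zr\,e^{-(\lambda-\kappa)r}\le 1 .
\]
At $z=1$ this holds for every $r\ge 58$, which is below $60.9$. With this change your log-threshold scheme closes, for example with $r_0(z)=13\log z+60$:
\begin{itemize}
\item Small-$r$ side: $g(1)=60e^{-5/2}<5$, and $g$ is decreasing.
\item Large-$r$ side: $(\lambda-\kappa)r_0-\log(4zr_0)$ is nonnegative at $z=1$ and increasing in $\log z$. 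Moreover, $(\lambda-\kappa)r-\log r$ is increasing for $r\ge r_0$.
\end{itemize}

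\textbf{How the paper does it.} The paper avoids a $\log z$-sized threshold altogether. It takes the threshold to be exactly $S=5e^{5z^2/2}$, the point where the first entry stops sufficing, so the case $r\le S$ is immediate. For $r\ge S$ it proceeds as follows:
\begin{itemize}
\item It uses $\log z\le\frac52(z^2-1)$ to get $z\le r/(5e^{5/2})$.
\item It writes $(0.91)^r\le e^{-0.094r}e^{-\gamma r}$.
\item It bounds $\frac{3r^2}{5e^{5/2}}e^{-0.094r}$, which is decreasing for $r\ge 5e^{5/2}$, by $15e^{5/2}e^{-0.47e^{5/2}}<\frac23$.
\item It uses $e^{-r^2/2}\le\frac13e^{-\gamma r}$.
\end{itemize}
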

\begin{proof}
Set \(\gamma\coloneqq e^{-25z^2}/30\) and
\(S\coloneqq5e^{5z^2/2}\). Since
\(\gamma S=\frac16e^{-45z^2/2}\), the first term in the minimum is
\(e^{-\gamma S}\).

If \(r\le S\), then the minimum is at most
\(e^{-\gamma S}\le e^{-\gamma r}\). We may therefore assume that
\(r\ge S\).

Since \(z\ge1\), we have
\(\log z\le z-1\le\frac52(z^2-1)\). Hence
\[
    z
    \le
    e^{\frac52(z^2-1)}
    =
    \frac{S}{5e^{5/2}}
    \le
    \frac{r}{5e^{5/2}}.
\]
Moreover,
\(0.094+\gamma
\le0.094+e^{-25}/30
<-\log(0.91)\), and therefore
\((0.91)^r\le e^{-0.094r}e^{-\gamma r}\). It follows that
\[
\begin{aligned}
    3zr(0.91)^r
    &\le
    \frac{3r^2}{5e^{5/2}}\,
    e^{-0.094r}e^{-\gamma r} \le
    15e^{5/2}e^{-0.47e^{5/2}}
    e^{-\gamma r}
    <
    \frac23e^{-\gamma r}.
\end{aligned}
\]
Here, in the second inequality, we used the fact that
\(x\mapsto x^2e^{-0.094x}\) is decreasing for
\(x\ge5e^{5/2}\).

Finally, since \(r\ge S>5\) and \(\gamma\le1/30\), we have
\[
    \frac{r^2}{2}-\gamma r
    =
    r\left(\frac r2-\gamma\right)
    \ge
    5\left(\frac52-\frac1{30}\right)
    >
    \log 3.
\]
Thus \(e^{-r^2/2}\le\frac13e^{-\gamma r}\). Combining the last two
estimates gives
\[
    3zr(0.91)^r+e^{-r^2/2}
    \le
    e^{-\gamma r},
\]
which concludes the proof.
\end{proof}

We can now state the local estimate in the form needed for the hypothesis-testing lower bound.

 \begin{lemma}\label{lemma_tau_ge_2_A_bound_ht}
Let \(u\) be a node of the \(T\)-time tree with
\(t=\tau(u)\ge2\). Then, for every
\(\sigma\in\{0,1\}^t\),
\[
    \left\|A_{\sigma}^{(u)}\right\|_1
    \le
    \exp\!\left(-\frac{e^{-25z^2}}{30}\,|\sigma|\right).
\]
\end{lemma}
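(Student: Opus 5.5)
The plan is to reduce $A_\sigma^{(u)}$ to a coefficient of the form $B_r$ in Eq.~\eqref{eq:all_one_coefficient_ht}, with $r=|\sigma|$. Then Lemmas~\ref{lemma_large_t_A_tau_ge_2_ht}, \ref{lemma_small_t_A_tau_ge_2_ht} and~\ref{lemma_min_bound_ht} finish the argument.

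First dispose of the trivial cases. If $\sigma=0$, then $A_0^{(u)}=\mathbb{E}_\xi p_\xi^{(u)}$ is a probability density, so $\|A_0^{(u)}\|_1=1$, which matches the claimed bound with $|\sigma|=0$. For $|\sigma|\ge1$, let $S=\operatorname{supp}(\sigma)$ and $r=|S|$, and set $\mathbf{v}_j\coloneqq(a_j^{(u)},d_j^{(u)})^\intercal$, so that $\|\mathbf{v}_j\|_2\le1$. Substituting Eq.~\eqref{eq_gausscond_ht_tau_ge_2} into the definition of $A_\sigma^{(u)}$, the sign factor $(-1)^{\xi\cdot\sigma}$ depends only on $\xi_S$. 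Condition on the bits $\xi_{S^c}$ and the amplitudes $b_{S^c}$. The contributions $\mathbf{w}\coloneqq\sum_{j\notin S}(-1)^{\xi_j}b_j\mathbf{v}_j$ then enter only as a common translation of the mean. Since $\xi_{S^c}$, $b_{S^c}$ are independent of $\xi_S$, $b_S$, we can write
\[
A_\sigma^{(u)}(\mathbf{y})=\mathop{\mathbb{E}}_{\xi_{S^c},b_{S^c}} B_r^{(S)}(\mathbf{y}-\mathbf{w}),
\]
where $B_r^{(S)}$ is exactly $B_r$ built from the vectors $\{\mathbf{v}_j\}_{j\in S}$. By convexity of the $L^1$ norm and translation invariance, $\|A_\sigma^{(u)}\|_1\le\|B_r\|_1$.

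Next split on $r$. For $r=1$, Lemma~\ref{lemma_small_t_A_tau_ge_2_ht} gives $\|B_1\|_1\le\exp(-\tfrac16e^{-45z^2/2})$. Since $\tfrac16e^{-45z^2/2}\ge\tfrac{1}{30}e^{-25z^2}$ for $z\ge1$, this is at most $\exp(-\tfrac{e^{-25z^2}}{30}\cdot1)$. For $r\ge2$, both Lemma~\ref{lemma_large_t_A_tau_ge_2_ht} and Lemma~\ref{lemma_small_t_A_tau_ge_2_ht} apply, so $\|B_r\|_1$ is bounded by the minimum of the two expressions. Lemma~\ref{lemma_min_bound_ht} then bounds that minimum by $\exp(-\tfrac{e^{-25z^2}}{30}r)$, which is the claim.

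The only step needing care is the factorization in the reduction step. One must check that the Walsh sign $(-1)^{\xi\cdot\sigma}$ becomes $(-1)^{|\eta|}$ with $\eta=\xi_S$ uniform on $\{0,1\}^r$. One must also check that averaging over the off-support variables is a mixture of translates of $B_r$ rather than something with signed weights. Both hold because $\xi$ is uniform and independent across coordinates and the amplitudes are independent, so the off-support variables are averaged with positive weights. Everything else is a direct invocation of the preceding lemmas.
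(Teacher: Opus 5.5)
Your proposal is correct and follows the paper's proof essentially step for step. The paper likewise writes $A_\sigma^{(u)}$ as an average of translates $\widetilde A^{(u)}(\mathbf{y}-\mathbf{Z})$ of the coefficient $B_r$ built from the on-support vectors, where $\mathbf{Z}$ collects the off-support signs and amplitudes. It then bounds $\|A_\sigma^{(u)}\|_1\le\|B_r\|_1$ by convexity and translation invariance, and splits into $r=1$ (Lemma~\ref{lemma_small_t_A_tau_ge_2_ht} plus the same exponent comparison) and $r\ge2$ (Lemmas~\ref{lemma_large_t_A_tau_ge_2_ht}, \ref{lemma_small_t_A_tau_ge_2_ht} and~\ref{lemma_min_bound_ht}).
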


\begin{proof}
If \(\sigma=0\), then
\(A_0^{(u)}
=
\mathop{\mathbb{E}}_{\xi\in\{0,1\}^t}
p_\xi^{(u)}\)
is a probability density. Hence
\(\|A_0^{(u)}\|_1=1\), which proves the claim in this case.

Assume now that \(\sigma\ne0\), and set
\(r\coloneqq|\sigma|\). Write
\(S\coloneqq\operatorname{supp}(\sigma)
=\{j_1,\ldots,j_r\}\), and set
\(\mathbf{v}_j^{(u)}
\coloneqq
(a_j^{(u)},d_j^{(u)})^\intercal\)
for every \(j\in[t]\). Define
\[
\begin{aligned}
    \widetilde A^{(u)}(\mathbf{y})
    \coloneqq
    \mathop{\mathbb{E}}_{\eta\in\{0,1\}^r}
    \mathop{\mathbb{E}}_{b_1,\ldots,b_r}
    (-1)^{|\eta|}
    \NN\!\left[
        \sum_{\ell=1}^r
        (-1)^{\eta_\ell}b_\ell
        \mathbf{v}_{j_\ell}^{(u)},
        \frac{\mathbb{1}_2}{10z^2}
    \right](\mathbf{y}),
\end{aligned}
\]
where \(b_1,\ldots,b_r\) are independent and uniformly distributed
in \([\frac12,\frac32]\).

Let
\[
    \mathbf{Z}
    \coloneqq
    \sum_{j\notin S}
    (-1)^{\zeta_j}\bar b_j
    \mathbf{v}_j^{(u)},
\]
where the bits \(\zeta_j\) are independent and uniformly distributed
in \(\{0,1\}\), the amplitudes \(\bar b_j\) are independent and
uniformly distributed in \([\frac12,\frac32]\), and all these
variables are mutually independent and independent of the variables
entering the definition of \(\widetilde A^{(u)}\). If \(S=[t]\), we
use the convention \(\mathbf{Z}=0\).

For brevity, set
\(\mathbf{M}_{\eta,b}
\coloneqq
\sum_{\ell=1}^r
(-1)^{\eta_\ell}b_\ell
\mathbf{v}_{j_\ell}^{(u)}\).
Since
\(\xi\cdot\sigma
\equiv
\sum_{\ell=1}^r\xi_{j_\ell}\pmod 2\),
separating the coordinates inside and outside \(S\) gives
\[
\begin{aligned}
    A_{\sigma}^{(u)}(\mathbf{y})
    &=
    \mathop{\mathbb{E}}_{(\zeta_j,\bar b_j)_{j\notin S}}
    \mathop{\mathbb{E}}_{\eta\in\{0,1\}^r}
    \mathop{\mathbb{E}}_{b_1,\ldots,b_r}
    (-1)^{|\eta|}
    \NN\!\left[
        \mathbf{M}_{\eta,b}+\mathbf{Z},
        \frac{\mathbb{1}_2}{10z^2}
    \right](\mathbf{y})\\
    &=
    \mathop{\mathbb{E}}_{(\zeta_j,\bar b_j)_{j\notin S}}
    \mathop{\mathbb{E}}_{\eta\in\{0,1\}^r}
    \mathop{\mathbb{E}}_{b_1,\ldots,b_r}
    (-1)^{|\eta|}
    \NN\!\left[
        \mathbf{M}_{\eta,b},
        \frac{\mathbb{1}_2}{10z^2}
    \right](\mathbf{y}-\mathbf{Z})\\
    &=
    \mathop{\mathbb{E}}_{(\zeta_j,\bar b_j)_{j\notin S}}
    \widetilde A^{(u)}(\mathbf{y}-\mathbf{Z})\\
    &=
    \mathop{\mathbb{E}}_{\mathbf{Z}}
    \widetilde A^{(u)}(\mathbf{y}-\mathbf{Z}).
\end{aligned}
\]
Here, in the second equality, we used
\(\NN[\mathbf{m}+\mathbf{z},\Sigma](\mathbf{y})
=
\NN[\mathbf{m},\Sigma](\mathbf{y}-\mathbf{z})\), while the last
equality is simply expectation with respect to the distribution of
the random vector \(\mathbf{Z}\).
Consequently, by convexity of the absolute value and translation
invariance of the \(L^1\) norm,
\[
\begin{aligned}
    \left\|A_{\sigma}^{(u)}\right\|_1
    &=
    \int_{\mathbb{R}^2}
    \left|
        \mathop{\mathbb{E}}_{\mathbf{Z}}
        \widetilde A^{(u)}(\mathbf{y}-\mathbf{Z})
    \right|
    \mathrm{d}^2\mathbf{y}\le
    \mathop{\mathbb{E}}_{\mathbf{Z}}
    \int_{\mathbb{R}^2}
    \left|
        \widetilde A^{(u)}(\mathbf{y}-\mathbf{Z})
    \right|
    \mathrm{d}^2\mathbf{y}
    =
    \left\|\widetilde A^{(u)}\right\|_1.
\end{aligned}
\]

After relabelling the coordinates in \(S\) as \([r]\),
\(\widetilde A^{(u)}\) is exactly the function \(B_r\) in
Eq.~\eqref{eq:all_one_coefficient_ht}, with coefficient vectors
\(\mathbf{v}_{j_1}^{(u)},\ldots,\mathbf{v}_{j_r}^{(u)}\).
These vectors satisfy \(\|\mathbf{v}_{j_\ell}^{(u)}\|_2\le1\)
for every \(\ell\in[r]\).

If \(r=1\), Lemma~\ref{lemma_small_t_A_tau_ge_2_ht} gives
\(\|\widetilde A^{(u)}\|_1
\le\exp(-\frac16e^{-45z^2/2})\).
Since
\(\frac16e^{-45z^2/2}\ge\frac1{30}e^{-25z^2}\),
this proves the desired bound in this case.

If \(r\ge2\), Lemmas~\ref{lemma_large_t_A_tau_ge_2_ht}
and~\ref{lemma_small_t_A_tau_ge_2_ht} give
\[
    \left\|\widetilde A^{(u)}\right\|_1
    \le
    \min\left\{
        \exp\!\left(-\frac16e^{-45z^2/2}\right),
        \;
        3zr(0.91)^r+e^{-r^2/2}
    \right\}.
\]
Applying Lemma~\ref{lemma_min_bound_ht}, we obtain
\(\|\widetilde A^{(u)}\|_1
\le\exp(-\frac{e^{-25z^2}}{30}r)\).
Recalling that \(r=|\sigma|\) and combining this estimate with
\(\|A_{\sigma}^{(u)}\|_1
\le\|\widetilde A^{(u)}\|_1\)
concludes the proof.
\end{proof}

\subsubsection{Analysis of the $T$-time tree}

We are now ready to propagate the local estimates through the whole \(T\)-time tree. The key point is that Lemma~\ref{lemma_adapt_prob_hyp_testing} has already reduced the leafwise distinguishability to a sum over high-weight Fourier modes, and, for each such mode, to a product of local factors along the branch. Thus, it remains only to sum these products over the leaves of the tree.

Throughout this paragraph, set
\(\gamma_z\coloneqq e^{-25z^2}/30\).
\begin{lemma}\label{lemma_uniform_A_bound_T_tree_ht}
Let \(u\) be an internal node of the \(T\)-time tree, and let
\(\sigma\in\{0,1\}^{\tau(u)}\). Then
\[
    \left\|A_{\sigma}^{(u)}\right\|_1
    \le \exp\!\left(-\gamma_z|\sigma|\right).
\]
\end{lemma}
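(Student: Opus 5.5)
The proof splits into two cases according to the waiting time $\tau(u)$, and each case is an already-proved local estimate combined with an elementary comparison of exponents.

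\emph{Case $\tau(u)\ge2$.} The claim is exactly Lemma~\ref{lemma_tau_ge_2_A_bound_ht}, since $\gamma_z=e^{-25z^2}/30$ is the constant appearing there. That lemma already covers every $\sigma\in\{0,1\}^{\tau(u)}$, including $\sigma=0$, where $A_0^{(u)}$ is a probability density and the bound reads $1\le1$.

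\emph{Case $\tau(u)=1$.} Here $\sigma\in\{0,1\}$ and $|\sigma|=\sigma$, and the node is refined by the attached unit-time tree. Lemma~\ref{lemma_unit_time_tree_contribution_ht} gives
\[
\bigl\|A_\sigma^{(u)}\bigr\|_1\le\exp\!\left(-\tfrac16e^{-45z^2/2}|\sigma|\right).
\]
It then suffices to check that $\tfrac16e^{-45z^2/2}\ge\tfrac1{30}e^{-25z^2}$ for all $z\ge1$. This is equivalent to $5e^{5z^2/2}\ge1$, which holds trivially; in fact it holds for every real $z$.

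Multiplying this inequality by $|\sigma|\ge0$ and using monotonicity of $x\mapsto e^{-x}$ gives $\|A_\sigma^{(u)}\|_1\le e^{-\gamma_z|\sigma|}$. The case $\sigma=0$ is again the trivial identity $\|A_0^{(u)}\|_1=1$.

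Neither case presents a real obstacle. The only point needing care is that the definition of $A_\sigma^{(u)}$ used in Lemma~\ref{lemma_adapt_prob_hyp_testing} matches the local coefficients bounded in the two cited lemmas. For $\tau(u)=1$, the object $p_\xi^{(u)}$ is the leaf distribution of the unit-time tree averaged over the amplitude $b$, and this is precisely the object treated in Lemma~\ref{lemma_unit_time_tree_contribution_ht}.
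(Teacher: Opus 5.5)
Your proposal is correct and follows the paper's proof: you split on $\tau(u)\ge2$ (a direct citation of Lemma~\ref{lemma_tau_ge_2_A_bound_ht}) versus $\tau(u)=1$ (Lemma~\ref{lemma_unit_time_tree_contribution_ht}, plus the exponent comparison $\tfrac16e^{-45z^2/2}\ge\tfrac1{30}e^{-25z^2}$, which is equivalent to $5e^{5z^2/2}\ge1$). The paper handles $\sigma=0$ up front as the trivial density case rather than inside each branch, but that difference is purely organizational.
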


\begin{proof}
If \(\sigma=0\), then
\(A_0^{(u)}=\mathop{\mathbb{E}}_{\xi\in\{0,1\}^{\tau(u)}}p_\xi^{(u)}\)
is a probability density. Hence
\(\|A_0^{(u)}\|_1=1=\exp(-\gamma_z|0|)\), as desired.

We may therefore assume that \(\sigma\ne0\). If \(\tau(u)\ge2\), the
claim follows directly from Lemma~\ref{lemma_tau_ge_2_A_bound_ht}.
It remains to consider the case \(\tau(u)=1\). Then necessarily
\(\sigma=1\), and Lemma~\ref{lemma_unit_time_tree_contribution_ht}
gives \(\|A_1^{(u)}\|_1\le\exp(-\frac16e^{-45z^2/2})\).
Since \(z\ge1\), we have
\(\frac16e^{-45z^2/2}\ge\frac1{30}e^{-25z^2}=\gamma_z\).
Therefore \(\|A_1^{(u)}\|_1\le\exp(-\gamma_z)
=\exp(-\gamma_z|\sigma|)\), which concludes the proof.
\end{proof}

We can now bound the total variation distance between the two leaf distributions induced by \(p\) and \(q\) on an arbitrary \(T\)-time tree.

\begin{lemma}\label{lemma_T_time_tree_ht_bound}
Let \(c\in[T]\), and let \(p\) and \(q\) be two probability
distributions over \(\{0,1\}^T\) whose marginals on every subset of
at most \(T-c\) coordinates coincide. Let \(\mathcal{T}\) be any \(T\)-time tree. Then
\[
    \frac12
    \left\|P_p^{(\mathcal{T})}-P_q^{(\mathcal{T})}\right\|_1
    \le
    \frac{\|p-q\|_1}{2}
    \left(\sum_{r=0}^{c-1}\binom{T}{r}\right)
    \exp\!\left(
        -\gamma_z\,(T-c+1)
    \right).
\]
\end{lemma}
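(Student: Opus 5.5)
The plan is to mirror the proof of Lemma~\ref{lemma_T_time_tree_parity_bound}, with the parity-averaged local factor $f^{(u)}$ replaced by the high-weight Walsh expansion of Lemma~\ref{lemma_adapt_prob_hyp_testing}. First sum that lemma's leafwise bound over all leaves and exchange the (finite) sum over modes with the sum over leaves:
\[
\frac12\left\|P_p^{(\mathcal T)}-P_q^{(\mathcal T)}\right\|_1
\le
\frac{\|p-q\|_1}{2}\sum_{s:\,|s|>T-c}\;\sum_{l\in\mathrm{leaves}(r)}\prod_{i=1}^{N(l)}\left|A^{(u_{i-1}^{(l)})}_{s_i^{(l)}}\!\left(\mathbf y_i^{(l)}\right)\right| .
\]
The number of strings with $|s|>T-c$, i.e.\ $|s|\ge T-c+1$, equals the number with at most $c-1$ zeros, which is $\sum_{r=0}^{c-1}\binom{T}{r}$. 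It therefore suffices to show that for each fixed $s\in\{0,1\}^T$ the inner leaf sum is at most $e^{-\gamma_z|s|}$; since $|s|\ge T-c+1$, this gives the stated bound.

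For fixed $s$, define the prefix weight $\Phi_s(v)\coloneqq\prod_{i=1}^{N(v)}|A^{(u_{i-1}^{(v)})}_{s_i^{(v)}}(\mathbf y_i^{(v)})|$, where $s_i^{(v)}$ is the substring of $s$ on the time window $[t(u_{i-1}^{(v)}),t(u_{i-1}^{(v)})+\tau(u_{i-1}^{(v)}))$. The subtlety is that the decomposition of $s$ depends on the branch. However, the substring used at node $u$ is determined by $u$ alone, namely $s$ restricted to $[t(u),t(u)+\tau(u))$; call it $s|_u$. This is what makes the backward induction well defined. I claim that for every node $v$,
\[
\sum_{l\in\mathrm{leaves}(v)}\Phi_s(l)\le\Phi_s(v)\exp\!\left(-\gamma_z\,\bigl|s|_{[t(v),T)}\bigr|\right).
\]
For a leaf this holds trivially. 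For an internal node $v$ with children $w$, we have $\Phi_s(w)=\Phi_s(v)\,|A^{(v)}_{s|_v}(\mathbf y_w)|$ and $|s|_{[t(w),T)}|=|s|_{[t(v),T)}|-|s|_v|$. Summing (integrating) over children produces the factor $\|A^{(v)}_{s|_v}\|_1\le e^{-\gamma_z|s|_v|}$ by Lemma~\ref{lemma_uniform_A_bound_T_tree_ht}, and the exponents add up. Applying the claim at the root, where $\Phi_s(r)=1$ and $t(r)=0$, gives $\sum_l\Phi_s(l)\le e^{-\gamma_z|s|}\le e^{-\gamma_z(T-c+1)}$.

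The main care point is the induction. One must note that Hamming weight is additive over the disjoint consecutive windows along any branch, so the $\tau(u)=1$ nodes, whose outcomes are unit-time-tree leaves, fit the same scheme: Lemma~\ref{lemma_uniform_A_bound_T_tree_ht} covers both cases. One must also justify Fubini-type exchanges of sums and integrals over continuous outcome spaces, which is harmless because all integrands are nonnegative. Everything else is bookkeeping.
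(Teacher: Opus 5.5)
Your proposal is correct and follows the paper's proof essentially step for step. You use the same prefix weights $\Phi_s$, the same backward induction with the weight of the unexplored part of $s$ in the exponent (your $\bigl|s|_{[t(v),T)}\bigr|$ is the paper's $|s^{(v)}_{\mathrm{rem}}|$) closed by Lemma~\ref{lemma_uniform_A_bound_T_tree_ht}, and the same summation of Lemma~\ref{lemma_adapt_prob_hyp_testing} over leaves with $\sum_{r=0}^{c-1}\binom{T}{r}$ high-weight modes. The only difference is that you exchange the mode and leaf sums before the induction rather than after, which is immaterial.
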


\begin{proof}
Let \(r\) be the root of \(\mathcal{T}\). Fix a string \(s\in\{0,1\}^T\) with \(|s|>T-c\). For a node \(v\) of the \(T\)-time tree, define the prefix weight
\[
    \Phi_s(v)
    \coloneqq
    \prod_{i=1}^{N(v)}
    \left|A_{s_i^{(v)}}^{(u_{i-1}^{(v)})}
    \!\left(\mathbf{y}_i^{(v)}\right)\right|,
\]
with the convention \(\Phi_s(r)=1\). Here, along the path from the root to \(v\), the string \(s\) is decomposed as
\[
    s=
    \left(
    s_1^{(v)},s_2^{(v)},\ldots,s_{N(v)}^{(v)},s_{\mathrm{rem}}^{(v)}
    \right),
\]
where \(s_i^{(v)}\) is the restriction of \(s\) to the interval explored by the \(i\)-th experiment along the path, and \(s_{\mathrm{rem}}^{(v)}\) is the restriction of \(s\) to the portion of the block that has not yet been explored after reaching \(v\).

We first prove that, for every node \(v\),
\bb\label{eq:induction_claim_ht_tree}
    \sum_{l\in\mathrm{leaves}(v)}
    \Phi_s(l)
    \le
    \Phi_s(v)
    \exp\!\left(
        -\gamma_z |s_{\mathrm{rem}}^{(v)}|
    \right).
\ee
The proof is by backward induction on the remaining time \(T-t(v)\). If \(v\) is a leaf, then \(s_{\mathrm{rem}}^{(v)}=0\) and \(\mathrm{leaves}(v)=\{v\}\), so \eqref{eq:induction_claim_ht_tree} holds with equality.

Now let \(v\) be an internal node, and assume that the claim holds for every child \(w\) of \(v\). Let \(\sigma_v\) be the restriction of \(s\) to the interval explored by the experiment prescribed at \(v\). Then, for every child \(w\) of \(v\),
\[
    |s_{\mathrm{rem}}^{(v)}|
    =
    |\sigma_v|
    +
    |s_{\mathrm{rem}}^{(w)}|.
\]
Using the induction hypothesis at the children of \(v\), we obtain
\[
\begin{aligned}
    \sum_{l\in\mathrm{leaves}(v)}
    \Phi_s(l)
    &=
    \sum_{w\in\mathrm{children}(v)}
    \sum_{l\in\mathrm{leaves}(w)}
    \Phi_s(l)\\
    &\le
    \sum_{w\in\mathrm{children}(v)}
    \Phi_s(w)
    \exp\!\left(
        -\gamma_z |s_{\mathrm{rem}}^{(w)}|
    \right)\\
    &=
    \Phi_s(v)
    \exp\!\left(
        -\gamma_z\bigl(|s_{\mathrm{rem}}^{(v)}|-|\sigma_v|\bigr)
    \right)
    \sum_{w\in\mathrm{children}(v)}
    \left|A_{\sigma_v}^{(v)}(\mathbf{y}_w)\right|\\
    &=
    \Phi_s(v)
    \exp\!\left(
        -\gamma_z\bigl(|s_{\mathrm{rem}}^{(v)}|-|\sigma_v|\bigr)
    \right)
    \left\|A_{\sigma_v}^{(v)}\right\|_1 .
\end{aligned}
\]
By Lemma~\ref{lemma_uniform_A_bound_T_tree_ht},
\(\|A_{\sigma_v}^{(v)}\|_1\le\exp(-\gamma_z|\sigma_v|)\). Substituting this bound into the previous estimate yields
\[
    \sum_{l\in\mathrm{leaves}(v)}
    \Phi_s(l)
    \le
    \Phi_s(v)
    \exp\!\left(
        -\gamma_z |s_{\mathrm{rem}}^{(v)}|
    \right),
\]
which proves the induction step.

Applying \eqref{eq:induction_claim_ht_tree} to the root \(r\), and using \(\Phi_s(r)=1\) and \(s_{\mathrm{rem}}^{(r)}=s\), gives
\bb\label{eq:fixed_s_tree_product_bound_ht}
    \sum_{l\in\mathrm{leaves}(\mathcal{T})}
    \prod_{i=1}^{N(l)}
    \left|A_{s_i^{(l)}}^{(u_{i-1}^{(l)})}
    \!\left(\mathbf{y}_i^{(l)}\right)\right|
    \le
    \exp(-\gamma_z|s|).
\ee

We now use Lemma~\ref{lemma_adapt_prob_hyp_testing}. Summing \eqref{eq:adapt_prob_product_bound_hyp_testing} over all leaves and multiplying by \(1/2\), we obtain
\[
\begin{aligned}
    \frac12
    \left\|P_p^{(\mathcal{T})}-P_q^{(\mathcal{T})}\right\|_1
    &=
    \frac12
    \sum_{l\in\mathrm{leaves}(\mathcal{T})}
    \left|P_p(l)-P_q(l)\right|\\
    &\le
    \frac{\|p-q\|_1}{2}
    \sum_{s:\,|s|>T-c}
    \sum_{l\in\mathrm{leaves}(\mathcal{T})}
    \prod_{i=1}^{N(l)}
    \left|A_{s_i^{(l)}}^{(u_{i-1}^{(l)})}
    \!\left(\mathbf{y}_i^{(l)}\right)\right|\\
    &\le
    \frac{\|p-q\|_1}{2}
    \sum_{s:\,|s|>T-c}
    \exp(-\gamma_z|s|),
\end{aligned}
\]
where the last step uses \eqref{eq:fixed_s_tree_product_bound_ht}. Since every string in the sum satisfies \(|s|\ge T-c+1\), we get
\[
    \sum_{s:\,|s|>T-c}
    \exp(-\gamma_z|s|)
    \le
    \left|\{s\in\{0,1\}^T:|s|>T-c\}\right|
    \exp\!\left(-\gamma_z(T-c+1)\right).
\]
Finally,
\[
    \left|\{s\in\{0,1\}^T:|s|>T-c\}\right|
    =
    \sum_{r=0}^{c-1}\binom{T}{r},
\]
because choosing a string of weight larger than \(T-c\) is equivalent to choosing at most \(c-1\) zero coordinates. Therefore
\[
    \frac12
    \left\|P_p^{(\mathcal{T})}-P_q^{(\mathcal{T})}\right\|_1
    \le
    \frac{\|p-q\|_1}{2}
    \left(\sum_{r=0}^{c-1}\binom{T}{r}\right)
    \exp\!\left(
        -\gamma_z(T-c+1)
    \right),
\]
as claimed.
\end{proof}

Combining Lemma~\ref{lemma_T_time_tree_ht_bound} with the full-time-tree reduction in Lemma~\ref{lemma_sensing_time_from_T_tree_bound} gives the desired lower bound in the simplified access model.

\begin{lemma}\label{lemma_lower_bound_ht_simplified}
Let \(T\in\mathbb{N}^+\), let \(z\ge1\), and let \(c\in[T]\).
Let \(p\) and \(q\) be two distinct probability distributions over
\(\{0,1\}^T\) whose marginals on every subset of at most \(T-c\)
coordinates coincide. Suppose that a protocol in the simplified
access model solves the CV Hypothesis Testing problem using total
sensing time \(nT\). Then
\[
    n
    \ge
    \frac{2}{3\|p-q\|_1}
    \left(\sum_{r=0}^{c-1}\binom{T}{r}\right)^{-1}
    \exp\!\left(\frac{e^{-25z^2}}{30}(T-c+1)\right).
\]
\end{lemma}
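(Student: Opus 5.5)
The proof combines the uniform one-block estimate of Lemma~\ref{lemma_T_time_tree_ht_bound} with the full-time-tree reduction of Lemma~\ref{lemma_sensing_time_from_T_tree_bound}.

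First we embed the hypothesis test into the labelled framework, as described at the start of this subsection. Introduce an auxiliary label $s\in\{0,1\}^T$. Let $q_0$ be the one-block law of $m$ induced by $p$ (signs from $p$, independent uniform amplitudes). For every $s\neq0$, let $q_s$ be the one-block law induced by $q$.

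With this choice, for any $T$-time tree $\mathcal{T}$ and any $s\neq0$, we have $P_s^{(\mathcal{T})}=P_q^{(\mathcal{T})}$ and $P_0^{(\mathcal{T})}=P_p^{(\mathcal{T})}$. Hence the average over nonzero $s$ in the hypothesis \eqref{eq:one_block_attached_tree_bound_final} collapses to the single quantity $\frac12\|P_p^{(\mathcal{T})}-P_q^{(\mathcal{T})}\|_1$.

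A protocol solving the CV Hypothesis Testing problem with success probability at least $2/3$ also solves Problem~\ref{def:PROB_revealed} for this family. The revealed $\bar s$ is simply ignored: the learner outputs $0$ when it decides $p$ and $\bar s$ when it decides $q$. The priors in Problem~\ref{def:PROB_revealed} are equal, and the success guarantee holds under each hypothesis, so the success probability is still at least $2/3$. (Alternatively, one can argue directly: the success guarantee forces $\frac12\|P_p^{(\mathcal{A})}-P_q^{(\mathcal{A})}\|_1\ge 1/3$, which is exactly \eqref{eq_crucial} in this setting.)

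Next, Lemma~\ref{lemma_T_time_tree_ht_bound} shows that \eqref{eq:one_block_attached_tree_bound_final} holds uniformly over all $T$-time trees with
\[
C=\frac{\|p-q\|_1}{2}\left(\sum_{r=0}^{c-1}\binom{T}{r}\right)\exp\!\left(-\gamma_z(T-c+1)\right).
\]
Since $p\neq q$, we have $C>0$, as Lemma~\ref{lemma_sensing_time_from_T_tree_bound} requires.

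Lemma~\ref{lemma_sensing_time_from_T_tree_bound} then gives $n\ge 1/(3C)$. Substituting $\gamma_z=e^{-25z^2}/30$ yields exactly
\[
n\ge\frac{2}{3\|p-q\|_1}\left(\sum_{r=0}^{c-1}\binom{T}{r}\right)^{-1}\exp\!\left(\frac{e^{-25z^2}}{30}(T-c+1)\right).
\]

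The only point needing care is that Lemma~\ref{lemma_full_time_tree_global_bound}, inside Lemma~\ref{lemma_sensing_time_from_T_tree_bound}, uses $D\le C$ for multi-block experiments. This still holds here, because every single-block Gaussian experiment is itself a particular $T$-time tree, and Lemma~\ref{lemma_T_time_tree_ht_bound} covers all $T$-time trees. Beyond that, the argument is bookkeeping, and no genuine obstacle remains.
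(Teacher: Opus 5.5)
Your proposal is correct and follows essentially the same route as the paper. You embed the test into the labelled framework ($q_0$ induced by $p$, $q_s$ induced by $q$ for $s\neq0$), take $C$ from Lemma~\ref{lemma_T_time_tree_ht_bound}, and apply Lemma~\ref{lemma_sensing_time_from_T_tree_bound} to get $n\ge 1/(3C)$, which rearranges to the stated bound; your extra remarks ($C>0$ since $p\neq q$, the revealed string being ignorable, and $D\le C$ still holding) are details the paper leaves implicit, and they are sound.
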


\begin{proof}
By Lemma~\ref{lemma_T_time_tree_ht_bound}, every \(T\)-time tree
\(\mathcal{T}\) satisfies
\(\frac12\|P_p^{(\mathcal{T})}-P_q^{(\mathcal{T})}\|_1\le C\), where
\[
    C
    \coloneqq
    \frac{\|p-q\|_1}{2}
    \left(\sum_{r=0}^{c-1}\binom{T}{r}\right)
    \exp\!\left(-\frac{e^{-25z^2}}{30}(T-c+1)\right).
\]
Applying Lemma~\ref{lemma_sensing_time_from_T_tree_bound} gives
\(n\ge1/(3C)\). Substituting the value of \(C\) proves the statement.
\end{proof}

We now return to the pulsed-signal hypothesis-testing problem.
Combining Lemmas~\ref{lem:pulse_to_piecewise} and~\ref{lemma_access_model},
a Gaussian protocol with physical sensing-time budget
$T_{\mathrm{sensing}}$ can be simulated in the simplified access model
over $n=\lceil T_{\mathrm{sensing}}/T\rceil$ blocks. The preceding bound
on $n$ therefore gives the following theorem.

\begin{thm}\label{thm_lower_bound_ht_gaussian}
Let \(T\in\mathbb{N}^+\), let \(z\ge1\), and let \(c\in[T]\).
Let \(p\) and \(q\) be two distinct probability distributions over
\(\{0,1\}^T\) whose marginals on every subset of at most \(T-c\)
coordinates coincide. Suppose that a Gaussian protocol with
squeezing bounded by \(z\) solves the CV Hypothesis Testing problem
(Problem~\ref{def:cv_hypothesis_testing}) using total sensing time
\(T_{\mathrm{sensing}}\). Then
\[
    T_{\mathrm{sensing}}
    \ge
    T\left[
    \frac{2}{3\|p-q\|_1}
    \left(\sum_{r=0}^{c-1}\binom{T}{r}\right)^{-1}
    \exp\!\left(\frac{e^{-25z^2}}{30}(T-c+1)\right)
    -1
    \right].
\]
\end{thm}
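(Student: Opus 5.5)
The plan is to mirror the proof of Theorem~\ref{thm_lower_bound_learning_parity_gaussian}, using the hypothesis-testing bound of Lemma~\ref{lemma_lower_bound_ht_simplified} in place of the parity bound. We set $n\coloneqq\lceil T_{\mathrm{sensing}}/T\rceil$ and take a Gaussian protocol with squeezing bounded by $z$ that solves Problem~\ref{def:cv_hypothesis_testing} with success probability at least $2/3$ in sensing time $T_{\mathrm{sensing}}$.

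\textbf{Step 1: reduce to the piecewise-constant signal.} Lemma~\ref{lem:pulse_to_piecewise} gives a protocol for the piecewise-constant signal $f_{\mathrm{step}}$ with the same squeezing bound and the same success probability. Its sensing time $S_{\mathrm{step}}$ satisfies $\lceil S_{\mathrm{step}}/T\rceil=n$, so this protocol fits inside a total budget $nT$.

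\textbf{Step 2: pass to the simplified access model.} Lemma~\ref{lemma_access_model} shows that the protocol of Step 1 can be simulated in the simplified access model over these $n$ blocks. The simulation uses only post-processing that does not depend on the unknown hypothesis, so the success probability is preserved.

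\textbf{Step 3: embed into the distribution-identification framework.} As explained at the beginning of this subsection, set $q_0$ to be the one-block law induced by $p$, and $q_s$ the one-block law induced by $q$ for every $s\ne0$. The revealed string $\bar s$ then carries no information, so a hypothesis tester with success probability at least $2/3$ solves Problem~\ref{def:PROB_revealed} with the same success probability. This is exactly the setting of Lemma~\ref{lemma_sensing_time_from_T_tree_bound}, which underlies Lemma~\ref{lemma_lower_bound_ht_simplified}. That lemma therefore gives
\[
n\ge \frac{2}{3\|p-q\|_1}\Bigl(\sum_{r=0}^{c-1}\binom{T}{r}\Bigr)^{-1}\exp\!\Bigl(\frac{e^{-25z^2}}{30}(T-c+1)\Bigr).
\]

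\textbf{Step 4: convert back to sensing time.} Since $n\le T_{\mathrm{sensing}}/T+1$, we have $T_{\mathrm{sensing}}\ge T(n-1)$. Substituting the bound on $n$ gives the stated inequality.

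The only point requiring care is Step 3: we must check that the binary test fits the framework through the labels $s$. This amounts to checking that $P_s^{(\mathcal{A})}=P_q^{(\mathcal{A})}$ for every $s\ne0$. Then the average over $s\ne0$ in \eqref{eq_crucial} reduces to the single distance $\frac12\|P_p^{(\mathcal{A})}-P_q^{(\mathcal{A})}\|_1$, and the one-block constant $C$ is the one supplied by Lemma~\ref{lemma_T_time_tree_ht_bound}. No other obstacle is expected; the remaining steps are bookkeeping already done in the parity case.
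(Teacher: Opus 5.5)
Your proposal is correct and matches the paper's proof essentially step for step: set $n=\lceil T_{\mathrm{sensing}}/T\rceil$, apply Lemma~\ref{lem:pulse_to_piecewise} and then Lemma~\ref{lemma_access_model} to land in the simplified access model over $n$ blocks, invoke Lemma~\ref{lemma_lower_bound_ht_simplified}, and finish with $n\le T_{\mathrm{sensing}}/T+1$. Your Step~3 only makes explicit the labelling $q_0\leftrightarrow p$, $q_s\leftrightarrow q$ for $s\ne0$. The paper explains this labelling at the start of the hypothesis-testing subsection and already uses it inside Lemma~\ref{lemma_lower_bound_ht_simplified}.
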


\begin{proof}
Set \(n\coloneqq\lceil T_{\mathrm{sensing}}/T\rceil\).
Lemma~\ref{lem:pulse_to_piecewise} gives a protocol for the
piecewise-constant version with the same squeezing bound and budget
$nT$. Lemma~\ref{lemma_access_model} simulates it in the simplified
access model over the same $n$ blocks. Hence
Lemma~\ref{lemma_lower_bound_ht_simplified} gives
\[
    n
    \ge
    \frac{2}{3\|p-q\|_1}
    \left(\sum_{r=0}^{c-1}\binom{T}{r}\right)^{-1}
    \exp\!\left(\frac{e^{-25z^2}}{30}(T-c+1)\right).
\]
Since \(n\le T_{\mathrm{sensing}}/T+1\), the claim follows.
\end{proof}

\newpage

\section{Protocols to solve the CV Learning Parity problem and its variants \label{sec:upper_bound}}

In this section, we give explicit Gaussian sensing protocols for the
CV Learning Parity problem and its variants, establishing the upper
bounds stated above. All these protocols work directly with the pulsed
signal and only apply controls or measurements at integer times.

By Eq.~\eqref{eq:pulse_bin_area}, the uncontrolled evolution over the
$i$-th unit-time interval is exactly $e^{-im_i\hat p}$. In particular,
a zero-mean Gaussian input with covariance matrix
$\operatorname{diag}(z^{-2},z^2)$, followed by position homodyne
detection at the end of the interval, gives
\begin{equation}\label{eq:pulse_bin_readout}
    R_i\mid m_i
    \sim
    \mathcal{N}\!\left(m_i,\frac{1}{2z^2}\right).
\end{equation}
Thus, the pulse shape $\varphi(t)$ does not affect the readout distributions or the
integer-time control sequences used in the protocols below.
\subsection{Protocol to solve the CV Learning Parity problem}

In Table~\ref{table_algo_learnings} we introduce a Gaussian protocol with bounded squeezing that solves the CV Learning Parity problem (Problem~\ref{def:PROB}) with success probability at least $2/3$, and we show its correctness in Theorem~\ref{thm_upp1}. Notably, this protocol is sensing-time efficient if the squeezing satisfies, for instance, \(z=\omega(\sqrt{\log T})\), with the sensing time scaling polynomially with the pattern size \(T\). However, the protocol is computationally inefficient, as computing the quantities \(Y_r\) for all non-zero \(r\in\{0,1\}^T\) requires time that grows exponentially with \(T\).

\begin{figure}[h]
    \centering
    \includegraphics[width=\linewidth]{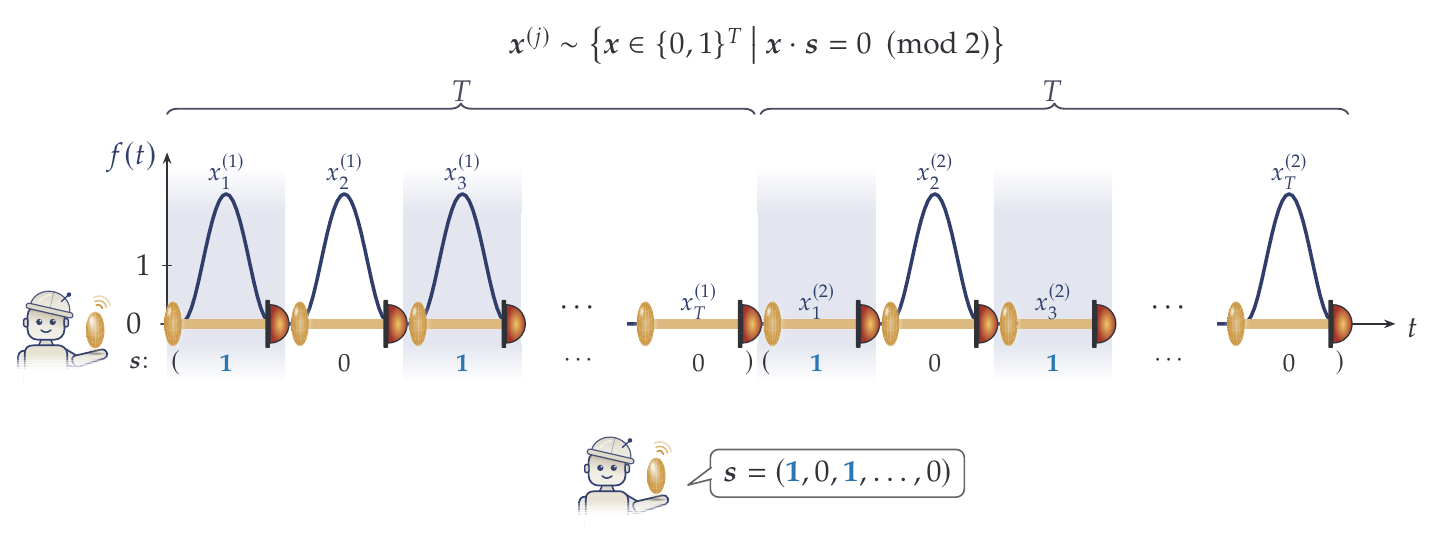}
    \caption{The protocol to solve the CV Learning Parity problem (Problem~\ref{def:PROB}), which involves preparing a new probe at each unit time, and then measuring at the end of this unit time. In the end, some classical post processing can be done to learn the hidden string $\bm{s}$. A similar unit time readout protocol solves the CV Hypothesis Testing problem (Problem~\ref{def:cv_hypothesis_testing}), but with different classical post processing.}
    \label{fig:unit_time_protocol}
\end{figure}

\begin{table}[!htbp]
  \caption{Gaussian protocol with bounded squeezing that solves the CV Learning Parity problem (Problem~\ref{def:PROB})}
  \label{table_algo_learnings}
  \begin{mdframed}[linewidth=2pt, roundcorner=10pt, backgroundcolor=white!10, innerbottommargin=10pt, innertopmargin=10pt]
\textbf{Input:}
\begin{itemize}[topsep=4pt,itemsep=2pt,parsep=0pt,partopsep=0pt]
        \item Squeezing \(z\ge1\);
        \item Pattern size \(T\);
        \item Setting of the CV Learning Parity problem (Problem~\ref{def:PROB}). Specifically, the learner has access to a single-mode system subjected to the Hamiltonian evolution \(\hat{H}(t)\) in Eq.~\eqref{hamiltonian_classical_signal} governed by an unknown parity pattern \(s\in\{0,1\}^T\) over the time interval \(t\in[0,T_{\mathrm{sensing}}]\), where 
        \bb
            T_{\mathrm{sensing}}\coloneqq T\left\lceil 8\exp\left(4e^{-z^2/4}T\right)\log\left(3\cdot 2^T\right) \right\rceil\,.
        \ee
    \end{itemize}
    \textbf{Output:} A bit string \(\tilde{s}\in\{0,1\}^T\) such that \(\tilde{s}=s\) with probability at least \(2/3\).
    \begin{algorithmic}[1]
    \State The time is set to be \(t=0\), and the learner starts to have access to the Hamiltonian evolution.
    \State Define \(n\coloneqq \left\lceil 8\exp\left(4e^{-z^2/4}T\right)\log\left(3\cdot 2^T\right) \right\rceil\).
      \For{\(j \leftarrow 1\) \textbf{to} \(n\)}
        \For{\(i \leftarrow 1\) \textbf{to} \(T\)}
         \State Prepare the system in the Gaussian state with zero first moment and covariance matrix \(\diag(z^{-2},z^2)\).
        \State Wait a unit time, so that the system evolves for a unit time under the Hamiltonian evolution.
        \State Measure the position observable (homodyne detection with respect to the \(x\)-axis).
          \If{the outcome is larger than \(\frac12\)}
              \State Set \(\tilde{x}^{(j)}_i\coloneqq 1\).
          \Else
              \State Set \(\tilde{x}^{(j)}_i\coloneqq 0\).
          \EndIf
      \EndFor
      \State Set \(\tilde{x}^{(j)}\coloneqq \left(\tilde{x}^{(j)}_1,\tilde{x}^{(j)}_2,\ldots, \tilde{x}^{(j)}_T\right)\).
       \EndFor
     \State For each non-zero \(r\in\{0,1\}^T\), compute \(Y_r\coloneqq \sum_{j=1}^n\frac{1+(-1)^{\tilde{x}^{(j)}\cdot r}}{2}\) (i.e.~the total number of \(j\)'s such that \(\tilde{x}^{(j)}\cdot r\equiv 0\)).
    \State Find \(r'\) such that \(Y_{r'}=\max\left\{Y_r: r\in\{0,1\}^T,\ r\ne0\right\}\) (i.e.~the non-zero string \(r'\) that maximizes the total number of \(j\)'s such that \(\tilde{x}^{(j)}\cdot r\equiv 0\)).
    \If{\(Y_{r'}\ge \frac n2+\frac{n}{4}\exp\left(-2e^{-z^2/4}T\right)\)} 
    \State\Return \(\tilde{s}\coloneqq r'\).
    \Else 
    \State\Return  \(\tilde{s}\coloneqq 0\).
    \EndIf
    \end{algorithmic}
  \end{mdframed}
\end{table}

\begin{thm}[(Protocol to solve the CV Learning Parity problem)]\label{thm_upp1}
Let \(T\in\mathbb{N}^+\) denote the pattern size and let \(z\ge 1\) be the squeezing parameter. The algorithm in Table~\ref{table_algo_learnings} is a Gaussian protocol with squeezing bounded by \(z\) that solves the CV Learning Parity problem (Problem~\ref{def:PROB}) with success probability at least \(2/3\), using sensing time
\bb
    T_{\mathrm{sensing}}
    =
    T\left\lceil
    8\exp\left(4e^{-z^2/4}T\right)\log\left(3\cdot 2^T\right)
    \right\rceil .
\ee
Its computational time scales exponentially with \(T\). In particular, if \(z=\omega(\sqrt{\log T})\), then the sensing time \(T_{\mathrm{sensing}}\) scales polynomially with the pattern size \(T\).
\end{thm}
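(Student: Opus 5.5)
The plan is to analyze the pulse-by-pulse readout of Table~\ref{table_algo_learnings} in three stages: the bit-flip noise it induces, the parity statistics of the resulting noisy samples, and a union bound over candidate strings.

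\textbf{Single-bit error.} By Eq.~\eqref{eq:pulse_bin_readout}, each outcome $R_i$ is distributed as $\mathcal N(m_i,1/(2z^2))$ with $m_i\in\{0,1\}$. Thresholding at $1/2$ therefore flips the bit with probability
\[
\epsilon\coloneqq\int_{z/\sqrt2}^\infty \mathcal N[0,1]\le \tfrac12 e^{-z^2/4},
\]
using Lemma~\ref{lemma_bounds_Q}. The flips are independent across bins and independent of $x$. So each $\tilde x^{(j)}=x^{(j)}\oplus e^{(j)}$, with $e^{(j)}$ i.i.d.\ Bernoulli$(\epsilon)^{\otimes T}$ and the $n$ blocks independent.

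\textbf{Bias of each parity test.} For nonzero $r$, write $\beta_r\coloneqq\mathbb E(-1)^{\tilde x\cdot r}=\mathbb E(-1)^{x\cdot r}\,(1-2\epsilon)^{|r|}$. If $r\neq s$, then $x\cdot r$ is uniform on $C_s$ (as $r\notin\{0,s\}$), so $\beta_r=0$ and $\mathbb E Y_r=n/2$. If $r=s\neq0$, then $\beta_s=(1-2\epsilon)^{|s|}\ge(1-e^{-z^2/4})^T\ge \exp(-2e^{-z^2/4}T)\eqqcolon\delta$. The last inequality needs $e^{-z^2/4}\le 1/2$ (true for $z\ge1$? $e^{-1/4}\approx0.78$ fails). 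So I would use $1-2\epsilon$ directly together with a sharper tail bound making $2\epsilon\le 1-e^{-2e^{-z^2/4}}$, or accept adjusting constants. This is the step I expect to need care with: it hinges on matching $\log(1-2\epsilon)\ge -2e^{-z^2/4}$ for all $z\ge1$, possibly using $\epsilon\le \frac12 e^{-z^2/4}$ together with $-\log(1-y)\le 2y$ for $y\le 0.79$. In that case $\mathbb E Y_s=\frac n2(1+\beta_s)\ge \frac n2+\frac n2\delta$.

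\textbf{Concentration and union bound.} Set the threshold $\theta=\frac n2+\frac n4\delta$. Hoeffding gives, for each $r\neq s$, $\Pr[Y_r\ge\theta]\le e^{-n\delta^2/8}$, and for $r=s\neq0$, $\Pr[Y_s<\theta]\le e^{-n\delta^2/8}$. With $n\ge 8\delta^{-2}\log(3\cdot2^T)=8e^{4e^{-z^2/4}T}\log(3\cdot 2^T)$, each failure probability is at most $1/(3\cdot 2^T)$. A union bound over the fewer than $2^T$ strings then gives, with probability at least $2/3$, that $Y_s\ge\theta>Y_r$ for all $r\ne s$ when $s\ne0$, and that all $Y_r<\theta$ when $s=0$. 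In either case the output equals $s$.

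\textbf{Sensing and computational time.} The sensing time is $nT$. The computation enumerates $2^T$ strings, so it is exponential in $T$. When $z=\omega(\sqrt{\log T})$ we have $e^{-z^2/4}T=o(1)$, so $n=O(T)$ and the sensing time is polynomial in $T$.
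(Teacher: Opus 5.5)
Your proposal is correct and follows the paper's route: bit-flip probability at most $\frac12 e^{-z^2/4}$, bias $(1-2\epsilon)^{|s|}$ for the true parity and zero bias for every other nonzero $r$, Hoeffding at deviation $\frac n4\exp(-2e^{-z^2/4}T)$, and a union bound over fewer than $2^T$ strings. The paper's $A\ge B$ bookkeeping is exactly your condition $n\ge 8\delta^{-2}\log(3\cdot 2^T)$, and the step you flagged is not an obstacle: $1-y\ge e^{-2y}$ holds on all of $[0,e^{-1/4}]$ (indeed up to $y\approx 0.797$), so $(1-e^{-z^2/4})^T\ge\exp(-2e^{-z^2/4}T)$ holds for every $z\ge1$ with the stated constants, which is precisely the inequality the paper uses.
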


\begin{proof}
Let us introduce some notation and some preliminaries. Let
\bb
    n&\coloneqq \left\lceil 8\exp\left(4e^{-z^2/4}T\right)\log\left(3\cdot 2^T\right) \right\rceil,\\
    A&\coloneqq \frac{n}{2}\exp\left(-2e^{-z^2/4}T\right) -\sqrt{\frac{n}{2}\log\left(3\cdot 2^T\right)},\\
    B&\coloneqq \sqrt{\frac{n}{2}\log\left(3\cdot 2^T\right)}.
\ee
The definition of \(n\) implies that \(A\ge B\). Moreover, the threshold used by the algorithm is exactly
\[
    \frac n2+\frac{A+B}{2}
    =
    \frac n2+\frac{n}{4}\exp\left(-2e^{-z^2/4}T\right).
\]
Thus the output \(\tilde{s}\) of the algorithm satisfies \(\tilde{s}=r'\) if
\[
    Y_{r'}\ge \frac n2+\frac{A+B}{2},
\]
and \(\tilde{s}=0\) otherwise. We now prove correctness by considering separately the cases \(s\ne0\) and \(s=0\).

First, suppose that \(s\ne0\). It suffices to prove that
\bb\label{eq_0_union}
    \Pr\left[ Y_s\le \frac n2+ \frac{A+B}{2}\right]\le \frac{1}{3\cdot 2^T},
\ee
and that
\bb\label{eq_1_union}
    \Pr\left[ Y_r\ge \frac n2+ \frac{A+B}{2}\right]\le \frac{1}{3\cdot 2^T}
    \qquad
    \forall r\in\{0,1\}^T\setminus \{0,s\}.
\ee 
Indeed, by the union bound, these two estimates imply that, with probability at least \(2/3\), the string \(s\) both passes the threshold and strictly maximizes \(Y_r\) among all non-zero \(r\). Hence the algorithm returns \(\tilde{s}=s\).

Second, suppose that \(s=0\). It suffices to prove that
\bb\label{eq_2_union}
    \Pr\left[ Y_r\ge \frac n2+ \frac{A+B}{2}\right]\le \frac{1}{3\cdot 2^T}
    \qquad
    \forall r\in\{0,1\}^T\setminus \{0\}.
\ee 
In fact, by the union bound, this implies that no non-zero string passes the threshold with probability at least \(2/3\), and therefore the algorithm returns \(\tilde{s}=0=s\).

Let \(s\in\{0,1\}^T\) be the unknown parity pattern. Recall that the bit strings
\[
    x^{(j)}\coloneqq (x_i^{(j)})_{i\in[T]},
    \qquad j\in[n],
\]
are i.i.d.~samples from the uniform distribution over
\[
    C_s\coloneqq \left\{x\in\{0,1\}^T:\, x\cdot s\equiv 0\right\}.
\]
In the case \(s=0\), sampling uniformly from \(C_s\) is simply sampling uniformly from the full hypercube \(\{0,1\}^T\). In the case \(s\ne0\), if \(k\) is any index such that \(s_k=1\), then sampling \(x\) uniformly from \(C_s\) is equivalent to sampling the bits \((x_i)_{i\in[T]\setminus\{k\}}\) uniformly at random, and then setting
\[
    x_k\equiv \sum_{i\in[T]\setminus\{k\}}x_is_i \pmod 2.
\]

For each \(j\in[n]\) and \(i\in[T]\), the outcome probability distribution of the homodyne measurement at Step~6 of Table~\ref{table_algo_learnings}, conditioned on the underlying bit \(x_i^{(j)}\), is
\[
    \mathcal{N}\!\left[x_i^{(j)},\frac{1}{2z^2}\right].
\]
The protocol sets \(\tilde{x}^{(j)}_i=1\) if the outcome is larger than \(1/2\), and \(\tilde{x}^{(j)}_i=0\) otherwise. Hence, for each fixed \(x_i^{(j)}\in\{0,1\}\), the probability of reading the bit incorrectly is
\bb\label{eq_p}
    p
    \coloneqq
    \Pr\left[\tilde{x}^{(j)}_i\ne x^{(j)}_i\right]
    =
    \int_{1/2}^\infty \mathrm{d}y\,
    \mathcal{N}\left[0,\frac{1}{2z^2}\right](y)
    =
    \int_{\frac{z}{\sqrt{2}}}^\infty \mathrm{d}y\,
    \mathcal{N}\left[0,1\right](y).
\ee
Equivalently, we can write
\bb\label{eq_ee}
    \tilde{x}_i^{(j)}
    =
    x_i^{(j)}+e_i^{(j)}
    \pmod 2,
\ee
where the error variables \(e_i^{(j)}\) are independent Bernoulli random variables satisfying
\bb\label{eq_eeA}
    \Pr[e_i^{(j)}=1]&=p,\qquad\Pr[e_i^{(j)}=0]&=1-p.
\ee

We now prove Eq.~\eqref{eq_0_union}. Assume \(s\ne0\). We have
\bb\label{eq_ys1}
    Y_s=\sum_{j=1}^n y^{(j)},
\ee
where
\[
    y^{(j)}
    \coloneqq
    \begin{cases}
        1, & \text{if } \tilde{x}^{(j)}\cdot s\equiv0,\\
        0, & \text{otherwise.}
    \end{cases}
\]
The random variables \(y^{(1)},\ldots,y^{(n)}\) are i.i.d.~Bernoulli random variables. Let
\[
    q\coloneqq \Pr\!\left[\tilde{x}^{(1)}\cdot s\equiv0\right].
\]
Since \(x^{(1)}\cdot s\equiv0\), we get
\bb\label{eq_ys2}
    q
    &=
    \Pr\left[\tilde{x}^{(1)}\cdot s\equiv 0\right] =
    \Pr\left[\sum_{i=1}^T e_i^{(1)}s_i\equiv 0\right] =
    \sum_{\substack{0\le l\le |s|\\ l\equiv0}}
    \binom{|s|}{l}p^l(1-p)^{|s|-l} 
    =
    \frac{1+\left(1-2p\right)^{|s|}}{2},
\ee
where \(|s|\coloneqq\sum_{i=1}^Ts_i\). In the last equality we used the identity
\[
    \sum_{\substack{0\le l\le m\\l\equiv 0}}
    \binom{m}{l}a^lb^{m-l}
    =
    \frac{(a+b)^m+(b-a)^m}{2},
\]
which is valid for all \(a,b\in\mathbb{R}\) and \(m\in\mathbb{N}\). Thus \(Y_s\sim\mathrm{Bin}(n,q)\). By Hoeffding's inequality,
\[
    \Pr\left[ Y_s\le nq-t\right]\le e^{-2t^2/n}
    \qquad
    \forall\,t>0.
\]
Setting
\[
    t\coloneqq \sqrt{\frac{n}{2}\log\left(3\cdot 2^T\right)}
\]
gives
\bb\label{y_s}
    \Pr\left[
    Y_s
    \le
    nq-\sqrt{\frac{n}{2}\log\left(3\cdot 2^T\right)}
    \right]
    \le
    \frac{1}{3\cdot 2^T}.
\ee

We now lower bound \(q\). From Eq.~\eqref{eq_p} and the Gaussian tail bound in Lemma~\ref{lemma_bounds_Q},
\[
    2p
    =
    2\int_{\frac{z}{\sqrt{2}}}^\infty
    \mathcal{N}[0,1](y)\,\mathrm{d}y
    \le
    e^{-z^2/4}.
\]
Since \(p\in(0,1/2)\) and \(|s|\le T\), Eq.~\eqref{eq_ys2} gives
\bb\label{boundd_q}
    q
    &=
    \frac{1+\left(1-2p\right)^{|s|}}{2} \ge
    \frac{1+\left(1-e^{-z^2/4}\right)^T}{2} \ge
    \frac{1+\exp\left(-2e^{-z^2/4}T\right)}{2}.
\ee
In the last step we used \(1-y\ge e^{-2y}\), valid for \(y\in[0,e^{-1/4}]\), together with \(z\ge1\).

Combining \eqref{y_s} and \eqref{boundd_q}, we obtain
\bb\label{eq_0thm1}
    \Pr\left[
    Y_s
    \le
    \frac{n}{2}
    +
    \frac{n}{2}\exp\left(-2e^{-z^2/4}T\right)
    -
    \sqrt{\frac{n}{2}\log\left(3\cdot 2^T\right)}
    \right]
    \le
    \frac{1}{3\cdot 2^T}.
\ee
By the definition of \(A\), this says
\[
    \Pr\left[Y_s\le \frac n2+A\right]\le \frac{1}{3\cdot 2^T}.
\]
Since \(A\ge B\), we have \(A\ge (A+B)/2\), and therefore
\[
    \Pr\left[Y_s\le \frac n2+\frac{A+B}{2}\right]
    \le
    \Pr\left[Y_s\le \frac n2+A\right]
    \le
    \frac{1}{3\cdot 2^T}.
\]
This proves Eq.~\eqref{eq_0_union}.

We now prove Eq.~\eqref{eq_1_union}. Assume again that \(s\ne0\), and fix
\[
    r\in\{0,1\}^T\setminus\{0,s\}.
\]
We first recall the following fact:
\bb\label{eq_12_PROB}
    \Pr_{x\sim C_s}\left[x\cdot r\equiv 0\right]=\frac12.
\ee
Indeed, \(C_s=s^\perp\) is a hyperplane over \(\mathbb{F}_2^T\). Since \(r\notin\{0,s\}\), the linear functional \(x\mapsto x\cdot r\) is non-zero when restricted to \(C_s\), and therefore it is balanced on \(C_s\).

Define
\bb\label{eq_yr1}
    Y_r=\sum_{j=1}^n y_r^{(j)},
\ee
where
\[
    y_r^{(j)}
    \coloneqq
    \begin{cases}
        1, & \text{if } \tilde{x}^{(j)}\cdot r\equiv0,\\
        0, & \text{otherwise.}
    \end{cases}
\]
The random variables \(y_r^{(1)},\ldots,y_r^{(n)}\) are i.i.d.~Bernoulli random variables. Let
\[
    q_r\coloneqq \Pr\left[\tilde{x}^{(1)}\cdot r\equiv0\right].
\]
Using \(\tilde{x}^{(1)}=x^{(1)}+e^{(1)}\pmod2\), independence of the measurement noise from \(x^{(1)}\), and the balance property \eqref{eq_12_PROB}, we obtain
\bb\label{eq_yr2}
    q_r
    &=
    \Pr\left[x^{(1)}\cdot r+e^{(1)}\cdot r\equiv0\right] =
    \frac12\Pr\left[e^{(1)}\cdot r\equiv0\right]
    +
    \frac12\Pr\left[e^{(1)}\cdot r\equiv1\right] =
    \frac12.
\ee
Hence \(Y_r\sim\mathrm{Bin}(n,1/2)\). By Hoeffding's inequality,
\[
    \Pr\left[ Y_r\ge \frac{n}{2}+t\right]\le e^{-2t^2/n}
    \qquad
    \forall\,t>0.
\]
Setting
\[
    t\coloneqq \sqrt{\frac{n}{2}\log\left(3\cdot 2^T\right)}
\]
gives
\bb\label{y_r}
    \Pr\left[
    Y_r
    \ge
    \frac{n}{2}
    +
    \sqrt{\frac{n}{2}\log\left(3\cdot 2^T\right)}
    \right]
    \le
    \frac{1}{3\cdot 2^T}.
\ee
Thus
\[
    \Pr\left[Y_r\ge \frac n2+B\right]\le \frac{1}{3\cdot 2^T}.
\]
Since \(A\ge B\), we have \((A+B)/2\ge B\), and therefore
\[
    \Pr\left[
    Y_r
    \ge
    \frac n2+\frac{A+B}{2}
    \right]
    \le
    \Pr\left[Y_r\ge \frac n2+B\right]
    \le
    \frac{1}{3\cdot 2^T}.
\]
This proves Eq.~\eqref{eq_1_union}.

It remains to prove Eq.~\eqref{eq_2_union}. Suppose that \(s=0\). Then \(x^{(1)},\ldots,x^{(n)}\) are i.i.d.~uniform samples from \(\{0,1\}^T\). Hence, for every non-zero \(r\in\{0,1\}^T\), the random variable \(x^{(j)}\cdot r\) is uniform on \(\{0,1\}\). Arguing exactly as above, we obtain
\[
    Y_r\sim\mathrm{Bin}\left(n,\frac12\right)
    \qquad
    \forall r\in\{0,1\}^T\setminus\{0\}.
\]
Therefore Hoeffding's inequality gives
\[
    \Pr\left[ Y_r\ge \frac{n}{2}+B\right]
    \le
    \frac{1}{3\cdot 2^T}
    \qquad
    \forall r\in\{0,1\}^{T}\setminus\{0\}.
\]
Since \((A+B)/2\ge B\), this implies
\[
    \Pr\left[
    Y_r
    \ge
    \frac{n}{2}+\frac{A+B}{2}
    \right]
    \le
    \frac{1}{3\cdot 2^T}
    \qquad
    \forall r\in\{0,1\}^{T}\setminus\{0\}.
\]
This proves Eq.~\eqref{eq_2_union}, and hence the theorem.
\end{proof}

\subsubsection{Protocol to solve the CV Learning Parity problem with improved computational time}

In the previous subsection, we introduced a sensing-time efficient but computationally inefficient protocol for solving the CV Learning Parity problem. The computational inefficiency comes from the classical post-processing step, where the learner must compute $Y_r$ for every non-zero string $r\in\{0,1\}^T$. We now present a second protocol which becomes computationally efficient once the squeezing is at least $z\ge2\sqrt{2\log(10T)}$. Its classical post-processing consists only of computing a rank and solving a linear system over \(\mathbb{F}_2\), and therefore runs in time polynomial in \(T\).

\begin{table}[!htbp]
  \caption{Gaussian protocol with bounded squeezing that solves the CV Learning Parity problem (Problem~\ref{def:PROB}) with an improved computational time}
  \label{table_algo_learnings:comp}
  \begin{mdframed}[linewidth=2pt, roundcorner=10pt, backgroundcolor=white!10, innerbottommargin=10pt, innertopmargin=10pt]
\textbf{Input:}
\begin{itemize}[topsep=4pt,itemsep=2pt,parsep=0pt,partopsep=0pt]
        \item Pattern size \(T\);
        \item Squeezing \(z\ge 2\sqrt{2\log(10T)}\);
        \item Setting of the CV Learning Parity problem (Problem~\ref{def:PROB}). Specifically, the learner has access to a single-mode system subjected to the Hamiltonian evolution \(\hat{H}(t)\) in Eq.~\eqref{hamiltonian_classical_signal} governed by an unknown parity pattern \(s\in\{0,1\}^T\) over the time interval \(t\in[0,T_{\mathrm{sensing}}]\), where $T_{\mathrm{sensing}}\coloneqq 25\,T^2$.
    \end{itemize}
    \textbf{Output:} A bit string \(\tilde{s}\in\{0,1\}^T\) such that \(\tilde{s}=s\) with probability at least \(2/3\).
    \begin{algorithmic}[1]
    \State The time is set to be \(t=0\), and the learner starts to have access to the Hamiltonian evolution.
    \State Define \(n\coloneqq 25\,T\).
      \For{\(j \leftarrow 1\) \textbf{to} \(n\)}
        \For{\(i \leftarrow 1\) \textbf{to} \(T\)}
         \State Prepare the system in the Gaussian state with zero first moment and covariance matrix \(\diag(z^{-2},z^2)\).
        \State Wait a unit time.
        \State Measure the position observable.
          \If{the outcome is larger than \(\frac12\)}
              \State Set \(\tilde{x}^{(j)}_i\coloneqq 1\).
          \Else
              \State Set \(\tilde{x}^{(j)}_i\coloneqq 0\).
          \EndIf
      \EndFor
      \State Set \(\tilde{x}^{(j)}\coloneqq \left(\tilde{x}^{(j)}_1,\tilde{x}^{(j)}_2,\ldots,\tilde{x}^{(j)}_T\right)\).
       \EndFor
       \If{the rank over \(\mathbb{F}_2\) of the list \((\tilde{x}^{(1)},\tilde{x}^{(2)},\ldots,\tilde{x}^{(n)})\) is \(T-1\)}
       \State Set \(\tilde{s}\in\{0,1\}^T\) to be the unique non-zero solution of the linear system \(\tilde{x}^{(j)}\cdot \tilde{s}\equiv 0\) for all \(j\in[n]\).
       \Else
       \State Set \(\tilde{s}\coloneqq 0\).
       \EndIf
    \State\Return \(\tilde{s}\).
    \end{algorithmic}
  \end{mdframed}
\end{table}

\begin{thm}[(Protocol to solve the CV Learning Parity problem with improved computational time)]\label{thm_upp1_comp}
Let \(T\in\mathbb{N}^+\) denote the pattern size, and let \(z\) be a squeezing parameter satisfying $z\ge 2\sqrt{2\log(10T)}$. Then the algorithm in Table~\ref{table_algo_learnings:comp} is a Gaussian protocol with squeezing bounded by \(z\) that solves the CV Learning Parity problem (Problem~\ref{def:PROB}) with success probability at least \(2/3\), using sensing time $T_{\mathrm{sensing}}=25\,T^2$ and computational time polynomial in \(T\).
\end{thm}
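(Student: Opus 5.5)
The plan has three parts: bound the per-bit error rate, show that the noiseless samples have the right rank and that the noisy ones almost surely equal them, and then handle the two cases $s\neq0$ and $s=0$. The protocol is the pulse-by-pulse readout of Table~\ref{table_algo_learnings:comp}, so the sensing time $nT=25T^2$ is immediate. The post-processing is Gaussian elimination over $\mathbb{F}_2$ on an $n\times T$ matrix, which costs $O(nT^2)=O(T^3)$, so the computational time is polynomial. What remains is correctness.

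First, control the readout errors. As in the proof of Theorem~\ref{thm_upp1}, write $\tilde{x}^{(j)}=x^{(j)}+e^{(j)}$ with independent $\mathrm{Bernoulli}(p)$ errors. Lemma~\ref{lemma_bounds_Q} gives $p\le \tfrac12 e^{-z^2/4}$. Under the assumption $z\ge 2\sqrt{2\log(10T)}$ we have $z^2/4\ge 2\log(10T)$, so $p\le \tfrac{1}{2}(10T)^{-2}$. A union bound over all $nT=25T^2$ readouts then gives
\[
\Pr[\exists\, i,j:\ e^{(j)}_i=1]\le 25T^2\cdot \frac{1}{200T^2}=\frac18 .
\]
Hence, with probability at least $7/8$, every $\tilde{x}^{(j)}=x^{(j)}$, and it suffices to analyse the noiseless samples.

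Second, show the noiseless samples have the expected rank. The $x^{(j)}$ are i.i.d.\ uniform on $C_s$, which is a linear subspace of dimension $d=T-1$ if $s\ne0$ and $d=T$ if $s=0$. The probability that $n$ uniform vectors fail to span a $d$-dimensional space is at most the probability that they all lie in some hyperplane of that space. There are at most $2^d$ such hyperplanes, each containing a given sample with probability $1/2$. This gives a failure probability of at most $2^{d}2^{-n}\le 2^{T-25T}=2^{-24T}$, which is tiny.

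Third, combine by cases. If $s\ne0$, then on the good event the list spans $C_s=s^\perp$ and has rank $T-1$. The solution space of $\tilde{x}^{(j)}\cdot\tilde{s}\equiv0$ is then $(s^\perp)^\perp=\{0,s\}$, so the unique nonzero solution is $s$. If $s=0$, then on the good event the rank is $T\ne T-1$, so the algorithm outputs $0$. In both cases the total failure probability is at most $1/8+2^{-24T}<1/3$.

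The only delicate step is the error union bound, which is where the threshold on $z$ enters. It needs the constant in the tail bound from Lemma~\ref{lemma_bounds_Q} to absorb the $25T^2$ readouts, and it does, with a margin of $1/8$.
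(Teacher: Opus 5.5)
Your proof is correct. Its skeleton matches the paper's: the per-bit bound $p\le\tfrac12 e^{-z^2/4}$ from Lemma~\ref{lemma_bounds_Q}, a union bound over the $25T^2$ readouts giving failure at most $1/8$ (the paper writes this as $(1-p)^{nT}\ge 1-nTp$), and the same rank-based case analysis.

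The genuine difference is the rank step. The paper proves Lemma~\ref{lemma_geom} by summing geometric waiting times to get $\mathbb{E}[X]\le T+2$ samples. It then uses Markov's inequality, which gives success probability only $\sqrt{3/4}$. For $s\neq0$ it applies this in dimension $T-1$ via the isomorphism $C_s\cong\mathbb{F}_2^{T-1}$. Finally it multiplies by the readout-success probability, using independence of the signal strings and the measurement noise.

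Your union bound over the $2^d-1$ hyperplanes of $C_s$ buys several things. It is shorter. It treats $s=0$ and $s\neq0$ uniformly, and the degenerate case $T=1$, $s\neq0$ is automatic because there are no hyperplanes. It gives an exponentially small failure probability $2^{-24T}$. It also lets you combine the two bad events by a union bound instead of appealing to independence.

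It also shows that the factor $25$ in $n=25T$ is an artifact of the Markov argument. Since the failure probability is at most $2^{d-n}\le 2^{T-n}$, already $n=T+O(1)$ blocks would suffice for constant success probability.
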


\begin{proof}
We use the notation introduced in the proof of Theorem~\ref{thm_upp1}, together with the notation from Table~\ref{table_algo_learnings:comp}. Let \(n=25T\). Recall that the bit strings \(x^{(1)},\ldots,x^{(n)}\), which define the Hamiltonian evolution, are sampled independently and uniformly from
\[
    C_s
    \coloneqq
    \left\{
    x\in\{0,1\}^T:
    x\cdot s\equiv 0
    \right\}.
\]

First, consider the case \(s=0\). Then \(C_s=\{0,1\}^T\), so the strings \(x^{(1)},\ldots,x^{(n)}\) are sampled uniformly from the whole vector space \(\mathbb{F}_2^T\). By Lemma~\ref{lemma_geom}, with probability at least \(\sqrt{3/4}\), the list contains \(T\) linearly independent strings.

Now consider the case \(s\ne0\). Then \(C_s=s^\perp\) is a hyperplane of dimension \(T-1\). Thus all sampled strings belong to this hyperplane, and therefore their rank can never exceed \(T-1\). Moreover, after choosing any coordinate \(k\) such that \(s_k=1\), sampling uniformly from \(C_s\) is equivalent to sampling the \(T-1\) coordinates \((x_i)_{i\in[T]\setminus\{k\}}\) uniformly at random, and then setting
\[
    x_k\equiv \sum_{i\in[T]\setminus\{k\}}x_is_i \pmod 2.
\]
Equivalently, \(C_s\) is linearly isomorphic to \(\mathbb{F}_2^{T-1}\). Hence, by applying Lemma~\ref{lemma_geom} in dimension \(T-1\) (with the case \(T=1\) being trivial), the list \(x^{(1)},\ldots,x^{(n)}\) contains \(T-1\) linearly independent strings with probability at least \(\sqrt{3/4}\). In summary, with probability at least
\bb\label{eq_p_proof}
    q_1\coloneqq \sqrt{\frac34},
\ee
the true sampled strings have rank \(T\) in the case \(s=0\), and rank \(T-1\) in the case \(s\ne0\).

We now control the probability that the noisy readout recovers all sampled bits correctly. As in Eq.~\eqref{eq_p}, the single-bit error probability is
\[
    p
    =
    \int_{\frac{z}{\sqrt{2}}}^\infty
    \mathcal{N}[0,1](y)\,\mathrm{d}y.
\]
Therefore, using Lemma~\ref{lemma_bounds_Q}, $p\le \frac12 e^{-z^2/4}$. Since there are \(nT=25T^2\) measured bits, the probability \(q_2\) that all measured bits are correct satisfies
\bb\label{eq_q_proof}
    q_2
    &\coloneqq
    \Pr\left[\tilde{x}^{(j)}=x^{(j)}\quad\forall j\in[n]\right]\\
    &=
    (1-p)^{nT}\\
    &\ge
    \left(1-\frac12e^{-z^2/4}\right)^{nT}\\
    &\ge
    1-\frac{nT}{2}e^{-z^2/4}\\
    &=
    1-\frac{25}{2}T^2 e^{-z^2/4}\\
    &\ge
    \sqrt{\frac34}.
\ee
In the last line we used \(z\ge2\sqrt{2\log(10T)}\), which implies \(e^{-z^2/4}\le(100T^2)^{-1}\).

The rank event in \eqref{eq_p_proof} depends only on the sampled strings \(x^{(1)},\ldots,x^{(n)}\), while the event in \eqref{eq_q_proof} depends only on the independent measurement noises in the representation \(\tilde{x}^{(j)}=x^{(j)}+e^{(j)}\pmod2\). Hence the two events are independent. Therefore, with probability at least
\[
    q_1q_2\ge \frac34,
\]
the noisy strings \(\tilde{x}^{(1)},\ldots,\tilde{x}^{(n)}\) coincide with the true strings \(x^{(1)},\ldots,x^{(n)}\), and these strings have rank \(T\) if \(s=0\), and rank \(T-1\) if \(s\ne0\).

On this event, the correctness of the algorithm is immediate. If \(s=0\), then the sampled strings span all of \(\mathbb{F}_2^T\), so the only solution of the system
\[
    \tilde{x}^{(j)}\cdot\tilde{s}\equiv0
    \qquad
    \forall j\in[n]
\]
is the zero vector. The algorithm therefore returns \(\tilde{s}=0=s\). If \(s\ne0\), then the sampled strings span the hyperplane \(C_s=s^\perp\). Hence the orthogonal complement of their span is one-dimensional and is generated by \(s\). Therefore the system has a unique non-zero solution, namely \(s\), and the algorithm returns \(\tilde{s}=s\).

We have proved that $\Pr[\tilde{s}=s]\ge \frac34\ge \frac23$. This establishes correctness. Finally, the computational time is polynomial in \(T\), since the only nontrivial classical post-processing consists of computing a rank and solving a linear system over \(\mathbb{F}_2\).
\end{proof}

\begin{lemma}\label{lemma_geom}
Let \(T\in\mathbb{N}^+\). The probability that a set of \(25T\) bit strings, each sampled uniformly at random from \(\{0,1\}^T\), contains \(T\) linearly independent bit strings is at least \(\sqrt{3/4}\).
\end{lemma}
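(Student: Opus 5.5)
The plan is to bound the failure probability through the standard \(\mathbb{F}_2\) dimension-counting argument, together with a union bound over hyperplanes. The key observation is that the sampled strings \(x^{(1)},\ldots,x^{(n)}\), with \(n=25T\), fail to contain \(T\) linearly independent strings exactly when their span is a proper subspace of \(\mathbb{F}_2^T\). Every proper subspace is contained in some hyperplane \(r^\perp\) with \(r\in\{0,1\}^T\setminus\{0\}\). It therefore suffices to show that, with probability at least \(\sqrt{3/4}\), no nonzero \(r\) satisfies \(x^{(j)}\cdot r\equiv0\) for all \(j\in[n]\).

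Fix a nonzero \(r\). For a uniformly random \(x\), the linear functional \(x\mapsto x\cdot r\) is balanced, so \(\Pr[x\cdot r\equiv0]=1/2\). By independence of the \(n\) samples,
\[
\Pr\bigl[x^{(j)}\cdot r\equiv0\ \ \forall j\in[n]\bigr]=2^{-n}.
\]
A union bound over the \(2^T-1\) nonzero strings \(r\) then bounds the failure probability by
\[
(2^T-1)\,2^{-25T}\le 2^{-24T}\le 2^{-24}.
\]

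It remains to compare this with the target. We need
\[
1-2^{-24}\ge\sqrt{3/4}\approx0.866,
\]
which holds trivially for every \(T\ge1\). Note that the case \(T=1\) requires no separate treatment: there the statement is that at least one of the \(25\) bits is \(1\), and the bound \(2^{-25}\) covers it.

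No step here is a genuine obstacle; the only care needed is to state the equivalence between rank deficiency and containment in a hyperplane, which over \(\mathbb{F}_2\) follows because a proper subspace \(V\) has \(V^\perp\ne\{0\}\). If a sharper constant were wanted, one could instead use the exact product formula \(\prod_{k=0}^{T-1}(1-2^{k-n})\). For the stated bound, however, the union bound suffices.
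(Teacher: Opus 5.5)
Your proof is correct, but it takes a different route from the paper's. The paper argues sequentially. It treats the samples as arriving one at a time and lets \(X_i\) be the geometric waiting time, with success probability \(1-2^{i-1-T}\), for a string independent of the first \(i-1\) independent ones. It then bounds \(\mathbb{E}\bigl[\sum_i X_i\bigr]=\sum_{\ell=1}^T(1-2^{-\ell})^{-1}\le T+2\) and applies Markov's inequality, which gives failure probability at most \((T+2)/(25T)\le 3/25<1-\sqrt{3/4}\).

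You instead use the dual characterization: the samples are rank deficient exactly when they all lie in some hyperplane \(r^\perp\). A union bound over the \(2^T-1\) nonzero \(r\) then gives failure probability at most \(2^{T-25T}\). Your argument is just as elementary but quantitatively much stronger. The bound decays exponentially in the number of samples, and it shows that \(T+3\) samples would already suffice for the stated \(\sqrt{3/4}\), since the failure probability is then below \(2^{-3}<1-\sqrt{3/4}\). The paper's first-moment/Markov bound, by contrast, only gives a constant failure probability and needs the linear overhead factor \(25\) to beat the threshold. What the paper's approach offers is direct control of the expected number of samples until full rank is reached, which is the natural quantity if one samples until success.

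Your equivalence step is sound. Over \(\mathbb{F}_2\) the standard dot product is nondegenerate, so \(\dim V^\perp=T-\dim V>0\) for any proper subspace \(V\), even though \(V\cap V^\perp\) may be nontrivial.
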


\begin{proof}
Fix \(i\in[T]\), and suppose that \(i-1\) linearly independent bit strings in \(\{0,1\}^T\) have already been obtained. The probability \(p_i\) that a further uniformly sampled bit string is linearly independent of the previously chosen \(i-1\) strings is
\[
    p_i
    =
    \frac{2^T-2^{i-1}}{2^T}.
\]
Let \(X_i\) be the number of additional samples needed to obtain such an independent string. Then \(X_i\) is a geometric random variable with success probability \(p_i\), and therefore $\mathbb{E}[X_i]=\frac{1}{p_i}$. The random variable $X\coloneqq \sum_{i=1}^T X_i$ is the total number of samples needed to obtain \(T\) linearly independent strings. Its expectation satisfies
\[
    \mathbb{E}[X]
    =
    \sum_{i=1}^T\frac{2^T}{2^T-2^{i-1}}
    =
    \sum_{\ell=1}^T\frac{1}{1-2^{-\ell}}
    \le
    T+2.
\]
Hence, by Markov's inequality,
\[
    \Pr[X\ge25T]
    \le
    \frac{\mathbb{E}[X]}{25T}
    \le
    \frac{T+2}{25T}
    \le
    1-\sqrt{\frac34},
\]
where the last inequality holds for all \(T\ge1\). Therefore, with probability at least \(\sqrt{3/4}\), \(25T\) uniform samples contain \(T\) linearly independent strings.
\end{proof}
\subsection{Why unit-time readout is not optimal}
\label{sec:coherent_parity_readout}

The protocols above use a particularly simple sensing strategy: they prepare a fresh state and measure the signal separately after each unit-time interval, and only afterwards process the resulting data classically. We refer to this strategy as \emph{unit-time readout}. While unit-time readout is already sufficient to establish our polynomial upper bounds and the exponential separation with squeezing, it is in fact, not optimal.

In fact, for $z=1$, one can do strictly better by exploiting the freedom to control the sensor during its evolution. Rather than measuring after every interval, the same mode can interact coherently with an entire block, with only phase-space inversions applied in between to control how the different signal amplitudes accumulate. Only at the end of the $T$ block will there be a single homodyne measurement. As we show below, this gives an improvement over unit-time readout already for $z=1$. Thus, the continuous-time control allowed in our sensing model is genuinely useful, rather than merely a technical complication that must be accounted for in the lower-bound analysis.

\subsubsection{An improved protocol for low squeezing}

Fix $z\ge1$. For each block $x\in C_s$, the learner proceeds as follows:
\begin{enumerate}
    \item \textbf{Choose a random trial string.}
    Sample $\bm{\tilde x} \in\{0,1\}^T$ uniformly at random and prepare a zero-mean Gaussian state with covariance matrix $\diag(z^{-2},z^2)$. The string $\bm{\tilde x}$ is a guess of the current signal realization $x$, not of the hidden parity string $\bm{s}$.

    \item \textbf{Let the same mode sense the entire block and only apply gates.} For each unit time interval $i\in[T]$, if $\tilde x_i=1$, apply $R_\pi$ immediately before that interval and $R_\pi^\dagger$ immediately afterwards. No measurement is performed in between. (See \Cref{sec:gaussian_unitaries} for an explicit definition of $R_\pi$.)

    \item \textbf{Measure once at the end.}
    Perform a position homodyne measurement after the $T$ intervals and denote the outcome by $R$. By Eq.~\eqref{eq:pulse_bin_area}, each interval contributes $e^{-ix_i\hat p}$, while $R_\pi^\dagger e^{-ix_i\hat p}R_\pi=e^{ix_i\hat p}$. Hence
    \[
        R=\sum_{i=1}^T(-1)^{\tilde x_i}x_i+N,
        \qquad
        N\sim\NN\!\left[0,\frac{1}{2z^2}\right].
    \]
    Adding the known number $|\bm{\tilde x}|$ gives
    \begin{equation}\label{eq:coherent_parity_readout}
        y\coloneqq R+|\bm{\tilde x}|
        =
        |\bm{x}\oplus \bm{\tilde x}|+N,
    \end{equation}
    where $\oplus$ denotes bitwise addition $\mod 2$ and $|\cdot|$ denotes Hamming weight. Thus, a single homodyne measurement gives a noisy estimate of the number of positions in which $\bm{\tilde x}$ differs from $\bm{x}$.

    \item \textbf{Keep only sufficiently convincing guesses.}
    Retain $\bm{\tilde x}$ only if
    \begin{equation}\label{eq:coherent_parity_accept}
        y\le-\frac{\log(4T)}{2z^2}.
    \end{equation}
    As shown below, conditioned on this event, the retained trial string
    equals the actual signal realization with probability at least $3/4$.
\end{enumerate}

Repeat this procedure for a fixed number $n$ of independent blocks, using a fresh state and a fresh random trial string $\bm{\tilde{x}}$ each time, as depicted in \cref{fig:coh_protocol}.
Every block counts toward the sensing time, including those for which the trial string is rejected, so that $T_{\mathrm{sensing}}=nT$.

Relabel the $m$ retained strings as
$\bm{\tilde x^{(1)}},\ldots, \bm{\tilde x^{(m)}}$ and, as in
Table~\ref{table_algo_learnings}, compute
\[
    Y_{\bm{r}}
    =
    \sum_{j=1}^m
    \frac{1+(-1)^{\bm{\tilde x^{(j)}} \cdot \bm{r}}}{2}
\] for each $\bm{r}\in\{0,1\}^T\setminus\{0\}$.
Thus, $Y_{\bm{r}}$ is simply the number of retained trial strings whose parity with respect to $\bm{r}$ is even, i.e.~the number of $j$'s such that
$\bm{\tilde x^{(j)}} \cdot \bm{r} \equiv0\pmod 2$.

If $m=0$, output zero. Otherwise, output the unique $\bm{r}$ such that
$Y_{\bm{r}}\ge5m/8$, or output zero if no such unique string exists.

\begin{figure}
    \centering
    \includegraphics[width=\linewidth]{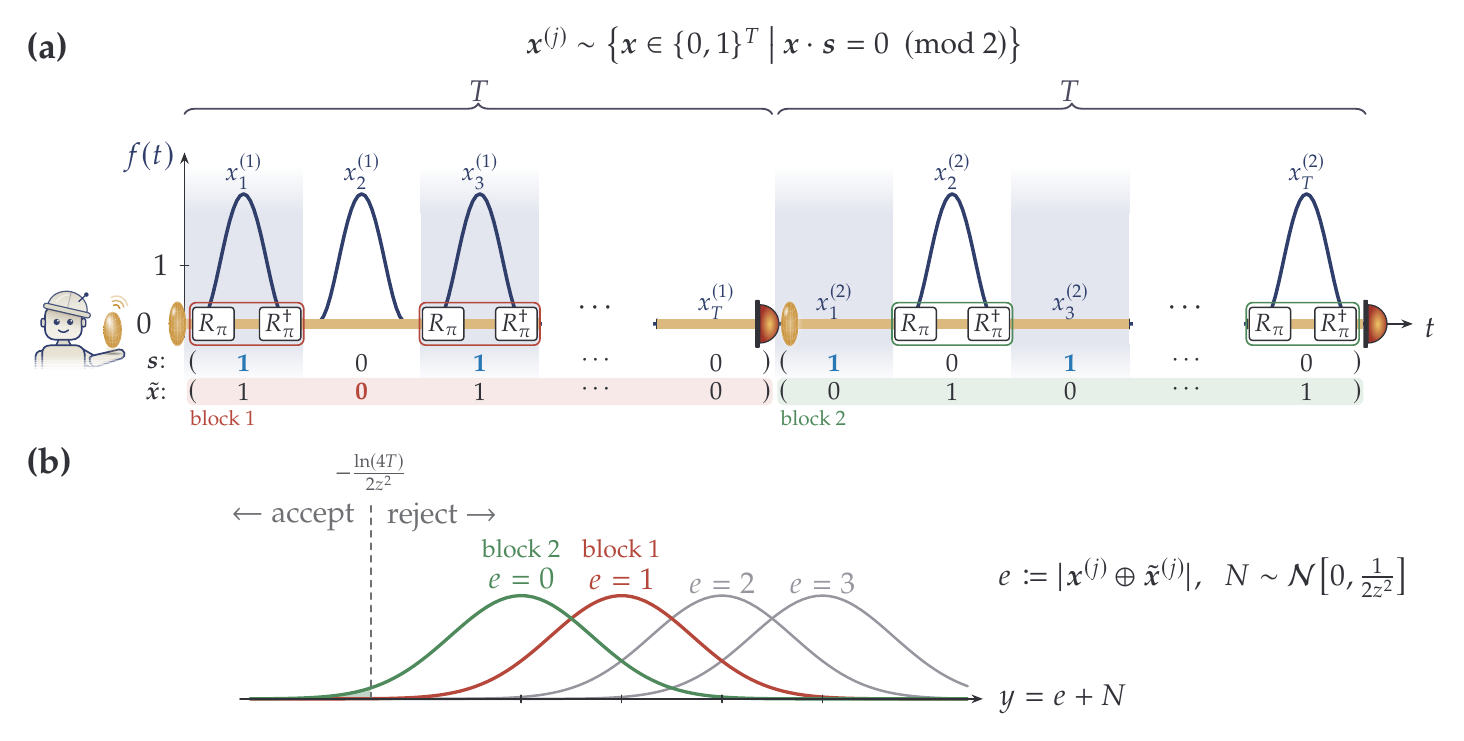}
    \caption{(a) The improved sensing protocol which solves the CV Learning Parity Problem (Problem~\ref{def:PROB}). This protocol outperforms any unit time readout protocols, for the case when $z=1$. Instead of measuring after every unit time, the probe now senses an entire block of length $T$. For each block, the learner draws a random guess of that block's signal $\bm{\tilde{x}} \in \{0,1\}^T$. For each $\tilde{x}_i = 1$, the learner applies $R_{\pi}$ and $R_{\pi}^\dagger$ immediately before and after that unit time interval. A homodyne measurement at the end of the block gives $y=|\bm x^{(j)}\oplus\tilde{\bm x}|+N$ where $N\sim\mathcal N[0,\frac{1}{2z^2}]$ is the noise from the measurement. In block 1, the guess differs by 1 ($e=1)$, and in block 2 it is exactly correct $(e=0)$. (b) A guess is only kept if $y\le-\frac{\ln(4T)}{2z^2}$, which happens essentially only when the guess is $e=0$. The kept strings are then classically processed, and the learner returns it's guess for $\bm{s}$.}
    \label{fig:coh_protocol}
\end{figure}

\subsubsection{Analysis of the improved protocol}
\begin{thm}
\label{thm:coherent_parity_readout}
For every $T\in\mathbb N^+$ and squeezing $z = 1$, the protocol above solves Problem~\ref{def:PROB} with success probability at least $2/3$ using sensing time at most
\begin{equation}\label{eq:coherent_parity_time}
    T_{\mathrm{sensing}}
    \le 2^T\exp\!\left(O((\log T)^2)\right).
\end{equation}
For $z=1$, this protocol has a better exponential factor than any unit-time readout protocol, as proved in Theorem~\ref{thm:unit_time_readout_lower_bound}.
\end{thm}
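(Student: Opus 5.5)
We analyse a single block and then amplify over blocks. With $z=1$ the noise is $N\sim\NN[0,1/2]$. By Eq.~\eqref{eq:coherent_parity_readout}, $y=e+N$ with $e\coloneqq|\bm x\oplus\bm{\tilde x}|$. Since $\bm{\tilde x}$ is uniform and independent of $\bm x$, the error weight $e$ is $\mathrm{Bin}(T,1/2)$ regardless of $s$. In particular $\Pr[e=0]=2^{-T}$.

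Set $\theta\coloneqq\log(4T)/2$, which is the acceptance threshold of \eqref{eq:coherent_parity_accept} at $z=1$.

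\textbf{Step 1 (acceptance probabilities).} For $e=0$, we have $\Pr[N\le-\theta]=Q(\sqrt2\,\theta)$. Here $Q$ denotes the standard Gaussian tail, and by Lemma~\ref{lemma_bounds_Q} this is at least $e^{-O((\log T)^2)}$. For $e=k\ge1$, we have $\Pr[N\le-\theta-k]\le\frac12 e^{-(\theta+k)^2}$. Summing against $\binom{T}{k}2^{-T}$ gives
\[
\Pr[\text{accept},e\ge1]\le 2^{-T}\tfrac12 e^{-\theta^2}\sum_{k\ge1}\binom{T}{k}e^{-2\theta k-k^2}.
\]
Since $e^{-2\theta}=1/(4T)$, we get $\sum_{k\ge1}\binom{T}{k}(4T)^{-k}\le e^{1/4}-1$. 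We then compare this with the $e=0$ term using the Mills-ratio lower bound $Q(x)\gtrsim e^{-x^2/2}/(x+1)$, and aim to show
\[
\Pr[e=0\mid\text{accept}]\ge 3/4 .
\]
I expect this to be the main obstacle. The constant comparison is delicate because the lower tail bound carries a factor of $1/(\theta+1)$. If it fails as stated, I would shrink the threshold by a constant factor (e.g.\ use $\log(cT)$) and adjust; this keeps the rate $e^{-O((\log T)^2)}$.

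\textbf{Step 2 (retained strings).} A retained $\bm{\tilde x}$ is therefore distributed as a mixture: with weight at least $3/4$ it is uniform on $C_s$ (the true $\bm x$), and with the remaining weight it is some string. Conditioned on $e\ge1$, the string $\bm{\tilde x}=\bm x\oplus\bm\epsilon$ has a noise $\bm\epsilon$ whose law is independent of $\bm x$. For $\bm r\ne 0,s$, the parity $\bm{\tilde x}\cdot\bm r$ is then exactly balanced, by the hyperplane balance argument \eqref{eq_12_PROB}. This gives $\Pr[\bm{\tilde x}\cdot\bm r\equiv0]=1/2$ for every nonzero $\bm r$ when $s=0$, and for $\bm r\notin\{0,s\}$ when $s\neq0$. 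For $\bm r=s$, the even-parity probability is at least $3/4$. The retained strings are i.i.d.\ across blocks.

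\textbf{Step 3 (Hoeffding plus union bound).} Hoeffding with $m=\Theta(T)$ retained samples separates the threshold $5m/8$ from both $1/2$ and $3/4$, with failure probability at most $1/(6\cdot 2^T)$ per string $\bm r$. A union bound over all $\bm r$, exactly as in Theorem~\ref{thm_upp1}, then makes the unique-maximiser rule correct with probability at least $5/6$.

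\textbf{Step 4 (sensing time).} The per-block acceptance probability is
\[
\alpha\ge 2^{-T}e^{-O((\log T)^2)} .
\]
Choose $n=C\,T/\alpha$. A Chernoff bound then gives $m\ge cT$ with probability at least $5/6$, so the total success probability is at least $2/3$. The sensing time is
\[
nT=2^T e^{O((\log T)^2)},
\]
as claimed. The comparison with unit-time readout is deferred to Theorem~\ref{thm:unit_time_readout_lower_bound} and is not part of this proof.
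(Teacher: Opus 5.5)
Your Steps 2--4 are sound. Step 4's Chernoff bound on the number $m$ of retained blocks is a fine substitute for the paper's averaging of $2^T e^{-m/32}$ over $m\sim\mathrm{Bin}(n,p)$. But Step 1, which carries the whole argument, does not go through as written, and your proposed fallback does not repair it.

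You bound the $e=k\ge1$ terms with the upper tail bound $\int_x^\infty\NN[0,1]\le\frac12e^{-x^2/2}$ of Lemma~\ref{lemma_bounds_Q}, and the $e=0$ term with the Mills-type lower bound $\frac{1}{\sqrt{2\pi}}\frac{e^{-x^2/2}}{x+1}$. With these, the ratio $\Pr[\text{accept},e\ge1]/\Pr[\text{accept},e=0]$ is bounded only by a constant times $\sqrt2\,\theta+1=\Theta(\log T)$. The reason is that the upper bound has lost the $1/x$ factor that the lower bound keeps. So for all but the smallest $T$ you cannot conclude $\Pr[e=0\mid\text{accept}]\ge 3/4$.

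Replacing $\log(4T)$ by $\log(cT)$ does not help: it improves the prefactor only by the constant $1/c$, so the $\Theta(\log T)$ growth survives. In any case, changing the threshold means you are analysing a different protocol from the one the statement is about.

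The missing idea is to bound the posterior pointwise in $y$ rather than comparing integrated tails. For any accepted outcome $y\le-\theta=-\tfrac12\log(4T)$, the likelihood ratio satisfies $\NN[k,\tfrac12](y)/\NN[0,\tfrac12](y)=e^{2ky-k^2}\le e^{2ky}\le(4T)^{-k}$. The prior on the error string is uniform, so Bayes' rule gives $\Pr(e=0\mid y)\ge\bigl[\sum_{k=0}^{T}\binom{T}{k}(4T)^{-k}\bigr]^{-1}=(1+\tfrac{1}{4T})^{-T}\ge e^{-1/4}>\tfrac34$. Acceptance is a function of $y$ alone, so this bound survives conditioning on acceptance. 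This is the paper's argument, and it needs no tail estimates at all for this step.

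Alternatively, your integrated comparison does close for the stated threshold if you use the sharp Mills upper bound $\int_x^\infty\NN[0,1]\le\frac{e^{-x^2/2}}{\sqrt{2\pi}\,x}$ for the $k\ge1$ terms. Since $\sqrt2(\theta+k)\ge\sqrt2\,\theta+1$ for $k\ge1$, each term of the ratio is then at most $\binom{T}{k}(4T)^{-k}e^{-k^2}$. Summing gives a ratio of at most $e^{-1}(e^{1/4}-1)<1/3$.
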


\begin{proof}
To start, we will compute the sensing time required for any $z \geq 1$. 
Let us first understand what happens when a trial string is retained. Let
\[
    \bm{e}\coloneqq \bm{x}\oplus\bm{\tilde x}
\] denote the indices in which $\bm{x}$ and $\bm{\tilde{x}}$ differ. Since $\bm{\tilde x}$ is chosen uniformly at random, the error string $\bm{e}$ is uniform on $\{0,1\}^T$ and independent of the actual signal string $\bm{x}$.
By Eq.~\eqref{eq:coherent_parity_readout},
\[
    y=|\bm{e}|+N,
    \qquad
    N\sim\NN\!\left[0,\frac{1}{2z^2}\right].
\]
Thus, conditioned on $\bm{e}$, the outcome $y$ is Gaussian with mean $|\bm{e}|$. In particular, a very negative value of $y$ is much more likely when $\bm{e}=0$ than when the trial string contains one or more errors.

Suppose that the trial is accepted, so
\[
    y\le-\frac{\log(4T)}{2z^2}.
\]
Under this condition, for any error string $\bm{e}$, its likelihood relative to the perfect guess
$\bm{e} = 0$ satisfies
\[
    \frac{\NN[|\bm{e}|,1/(2z^2)](y)}
         {\NN[0,1/(2z^2)](y)}
    =
    e^{2z^2|e|y-z^2|e|^2}
    \le
    (4T)^{-|e|}.
\]
Since all error strings have the same prior probability $2^{-T}$,
Bayes' rule gives
\[
\begin{aligned}
    \Pr(\bm{e}=0\mid y)
    &= \frac{\Pr(y| \bm{e}=0) \Pr(e=0)}{\sum_{\bm{e'}\in\{0,1\}^T} \Pr(y|\bm{e'}) \Pr(\bm{e'})} \\
    &= \frac{\NN[0,1/(2z^2)](y)}
    {\sum_{\bm{e'}\in\{0,1\}^T}
     \NN[|\bm{e'}|,1/(2z^2)](y)}\\
    &\ge
    \left[
        \sum_{j=0}^T
        \binom{T}{j}(4T)^{-j}
    \right]^{-1}\\
    &=
    \left(1+\frac1{4T}\right)^{-T}
    \ge
    e^{-1/4}
    >
    \frac34.
\end{aligned}
\]

Note that this bound is pointwise in $y$ and holds only for accepted
values of $y$. Since acceptance is determined by $y$ alone, averaging
over the accepted values of $y$ preserves the bound. Moreover,
$\bm{e}=\bm{0}$ and $\bm{\tilde x}=\bm{x}$ are the same event, and hence
\begin{equation}\label{eq:retained_guess_correct}
    \Pr(\bm{\tilde x}=\bm{x}\mid\text{accepted})\ge\frac34.
\end{equation}

There is one more useful consequence of the construction.
The acceptance event depends only on $e$ and on the Gaussian noise $N$, and not on $\bm{x}$. Since $\bm{x}$ and $\bm{e}$ are independent before conditioning, they remain independent after conditioning on acceptance. Moreover,
$x$ remains uniformly distributed on $C_s$.

We can now see why the parity decoder works. Suppose first that $\bm{s}\ne0$.
Since every $\bm{x}\in C_s$ satisfies $\bm{x}\cdot \bm{s}\equiv0$, we have
\[
    \bm{\tilde x}\cdot \bm{s}
    =
    (\bm{x}\oplus \bm{e})\cdot \bm{s}
    \equiv
    \bm{e}\cdot \bm{s}
    \pmod 2.
\]
In particular, whenever the trial string is exactly correct, i.e.~$\bm{\tilde{x}} = \bm{x}$ and $\bm{e}=0$,
we certainly have $\bm{\tilde x}\cdot \bm{s}\equiv0$. Hence
Eq.~\eqref{eq:retained_guess_correct} implies
\[
    \Pr(\bm{\tilde x}\cdot \bm{s}\equiv0\mid\text{accepted})
    \ge
    \frac34.
\]

Now fix any wrong candidate $\bm{r}$, so $\bm{r}\notin\{0,\bm{s}\}$. Since $\bm{x}$ is uniform on $C_s$,
Eq.~\eqref{eq_12_PROB} gives
\[
    \Pr(\bm{x}\cdot \bm{r}\equiv0)=\Pr(\bm{x}\cdot \bm{r}\equiv1)=\frac12.
\]
Moreover, $x$ is independent of the retained error string $e$.
Therefore, regardless of the value of $e\cdot r$, the bit
\[
    \tilde x\cdot r
    \equiv
    x\cdot r+e\cdot r
    \pmod 2
\]
remains uniform. Thus, 
\[
    \Pr(\tilde x\cdot r\equiv0\mid\text{accepted})
    =
    \frac12
    \qquad
    \forall\,r\notin\{0,s\}.
\] Or in other words, a wrong candidate $\bm{r}$ will look random on the real field strings $\bm{\tilde{x}}$.
If $s=0$, then $x$ is uniform on the whole hypercube, and the same
argument gives probability $1/2$ for every non-zero $r$.

Therefore, conditioned on retaining $m$ trial strings, the true parity $\bm{r} = \bm{s} \neq 0$ has expected count $Y_{\bm{s}}$ at least $3m/4$, whereas any non-zero wrong candidate $\bm{r} \notin \{0, \bm{s}\}$ has expected count $Y_{\bm{r}}$ exactly $m/2$. This explains the threshold $5m/8$, which lies halfway between these two values.
By Hoeffding's inequality,
\[
    \Pr\!\left(
        Y_s<\frac{5m}{8}
        \,\middle|\,
        m
    \right)
    \le
    e^{-m/32}
    \qquad
    (s\ne0),
\]
while, for every $r\notin\{0,s\}$,
\[
    \Pr\!\left(
        Y_r\ge\frac{5m}{8}
        \,\middle|\,
        m
    \right)
    \le
    e^{-m/32}.
\]
When $s=0$, the second bound holds for every non-zero $r$. A decoding error happens when $Y_s$ for the true $s$ falls below this $5m/8$ threshold, or a $r$ reaches it. So, a union bound over all candidates therefore gives
\begin{equation}\label{eq:conditional_coherent_error}
    \Pr(\text{decoding error}\mid m)
    \le
    2^T e^{-m/32}.
\end{equation}
Thus, with high probability, $s$ is the unique string above threshold when $s\ne0$, whereas no non-zero string is above threshold when $s=0$.

It remains to determine how many blocks are needed to obtain enough retained guesses. Let $p$ denote the probability that one trial is accepted. A simple lower bound is obtained by considering only the event that the random trial string $\bm{\tilde{x}}$ is exactly correct $\bm{\tilde{x}} = \bm{x}$. Since $\bm{\tilde x}$ is uniform on $\{0,1\}^T$, this event has probability $2^{-T}$.
Conditioned on $\bm{\tilde x}=\bm{x}$, Eq.~\eqref{eq:coherent_parity_readout}
reduces to $y=N$. Hence
\[
    p
    \ge
    2^{-T}
    \Pr\!\left(
        N\le-\frac{\log(4T)}{2z^2}
    \right).
\]
Using Lemma~\ref{lemma_bounds_Q}, we obtain
\begin{equation}
\label{eq:coherent_parity_acceptance_probability}
    p
    \ge
    \frac{
        2^{-T}
        \exp[-\log^2(4T)/(4z^2)]
    }{
        \sqrt{2\pi}
        \left(
            1+\frac{\log(4T)}{\sqrt2\,z}
        \right)
    }.
\end{equation}

After $n$ independent blocks, the number $m$ of retained guesses is
distributed as $\mathrm{Bin}(n,p)$. Averaging
Eq.~\eqref{eq:conditional_coherent_error} over $m$ gives
\[
\begin{aligned}
    \Pr(\text{decoding error})
    &\le
    2^T\mathbb E[e^{-m/32}]\\
    &=
    2^T(1-p+pe^{-1/32})^n\\
    &\le
    2^T e^{-np/64},
\end{aligned}
\]
where we used $1-e^{-1/32}\ge1/64$ and $1-u\le e^{-u}$.
Therefore it is enough to choose
\[
    n
    \ge
    \frac{64\log(3\cdot2^T)}{p}.
\]
Substituting the lower bound on $p$ from Eq.~\eqref{eq:coherent_parity_acceptance_probability}, rounding $n$ up, and multiplying by the block duration $T$ gives
\begin{equation}\label{eq:coherent_parity_time_explicit}
T_{\mathrm{sensing}}
    \le T+64\sqrt{2\pi}\,T\,2^T\log(3\cdot2^T)
    \left(1+\frac{\log(4T)}{\sqrt2\,z}\right)
    \exp\!\left(\frac{\log^2(4T)}{4z^2}\right).
\end{equation}
The additional $T$ accounts for the rounding.
For $z=1$, we have $\log(3\cdot2^T)=O(T)$,
$1+\log(4T)/\sqrt2=O(\log T)$, and
$\log^2(4T)=(\log T)^2+O(\log T)$. Therefore,
\[
\begin{aligned}
    T_{\mathrm{sensing}}
    &\le 2^T\exp\!\left(\frac14(\log T)^2+O(\log T)\right)\\
    &=2^T\exp\!\left(O((\log T)^2)\right),
\end{aligned}
\]

\end{proof}
We now compare the improved coherent protocol above with the simpler sensing strategy used in Protocol~\ref{table_algo_learnings}, where a fresh state is prepared and measured after every unit-time interval. We call this
\emph{unit-time readout}. To make the comparison as strong as possible, we do not restrict how the resulting data are processed: the learner keeps the full real-valued homodyne outcomes and may apply an arbitrary classical post processing. We show that the sensing time for a unit time readout protocol, with \textit{any} classical post processing, is still exponential in $T$ (for no squeezing $z=1$). Both the improved coherent protocol and any unit time readout protocol have exponential sensing time, but the coherent protocol has a better exponent.


\begin{thm}[(Lower bound for unit-time readout)]
\label{thm:unit_time_readout_lower_bound}
Consider the CV Learning Parity problem with no squeezing, i.e.~$z=1$.
Suppose that the learner performs unit-time readout: on every complete
unit-time interval, it prepares a fresh vacuum state, lets it interact
with the signal for that interval, and performs a position homodyne
measurement. The learner may retain the full real-valued outcomes and
apply any classical post-processing to them.

Any such protocol that solves Problem~\ref{def:PROB} with success
probability at least $2/3$ must use sensing time
\begin{equation}\label{eq:coherent_parity_unit_time_lower}
    T_{\mathrm{sensing}}
    \ge
    \frac{T}{3}\left(\frac52\right)^T .
\end{equation}
\end{thm}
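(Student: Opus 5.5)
The plan follows the template of the general lower bound, but here the protocol has no adaptivity in time, so no tree is needed. Reduce to Problem~\ref{def:PROB2}, and hence to \eqref{eq_crucial}: the learner must distinguish $s=0$ from a uniformly random $s=\bar s\neq 0$. With unit-time readout at $z=1$, each block of $T$ unit intervals produces a vector $\mathbf{R}\in\mathbb{R}^T$ with independent coordinates $R_i\sim\NN[x_i,1/2]$, since a vacuum state followed by position homodyne gives variance $1/2$. Blocks are i.i.d., so the full transcript over $n=\lceil T_{\mathrm{sensing}}/T\rceil$ blocks is a product distribution. Intervals cut by the sensing-time budget can only help the learner by at most one extra block, and this is absorbed into the ceiling.

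The per-block distance is computed by Fourier analysis, exactly as in Lemma~\ref{lemma_adapt_prob}. Let $P_s$ be the law of $\mathbf{R}$ under $s$, and let $g_b=\NN[b,1/2]$. Then
\[
P_s(\mathbf{R})-P_0(\mathbf{R})=\prod_{i=1}^T A_{s_i}(R_i),\qquad A_0=\tfrac12(g_0+g_1),\quad A_1=\tfrac12(g_0-g_1).
\]
Integrating therefore gives $\|P_s-P_0\|_1=\|A_1\|_1^{|s|}$, because the product factorizes over coordinates and $\|A_0\|_1=1$. By Lemma~\ref{tvd_one_g} with $V=1/2$ and $t=1/(2\sqrt{2})$, we have $\|A_1\|_1=\tfrac12\|g_0-g_1\|_1=2\int_0^{1/(2\sqrt2)}\NN[0,1]$. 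A direct numerical bound shows this is at most about $0.276<2/5$.

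For a single block, average over $s\neq0$:
\[
\mathop{\mathbb{E}}_{s\ne0}\tfrac12\|P_s-P_0\|_1\le \frac{1}{2(2^T-1)}\bigl[(1+2/5)^T-1\bigr]\lesssim (7/10)^T .
\]
This is too weak, so I will use the $\chi^2$ distance instead. Write $\phi=(g_0-g_1)/(g_0+g_1)$, so that under $P_0$ the density ratio is $P_s/P_0=\prod_i(1+\phi(R_i))^{s_i}$. Set $\beta=\mathbb{E}_{A_0}[\phi^2]$, where the expectation is taken against the density $A_0$. Then $\chi^2(P_s\|P_0)=(1+\beta)^{|s|}-1$ per block, and over $n$ blocks it becomes $(1+\beta)^{n|s|}-1$. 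Because of the averaging over $s$, though, this $\chi^2$ approach interacts badly with the mixture.

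The simpler route is to use Lemma~\ref{lemma_sensing_time_from_T_tree_bound} directly with $C=\mathbb{E}_{s\ne0}\tfrac12\|P_s-P_0\|_1$. The full-time tree here is non-adaptive, and subadditivity over blocks gives $n\ge 1/(3C)$. It then suffices to show
\[
C\le \frac{(1+\|A_1\|_1)^T}{2\cdot 2^T}\le (2/5)^T,
\]
which requires $\|A_1\|_1\le 3/5$. That holds, since $0.276<0.6$. The resulting bound is $T_{\mathrm{sensing}}\ge nT\ge \frac{T}{3}(5/2)^T$, matching \eqref{eq:coherent_parity_unit_time_lower}. A little care is needed with the factor $1/(1-2^{-T})$ and the subtraction of the $s=0$ term, as in the proof of Lemma~\ref{lemma_adapt_prob}.

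The main obstacle is the constant: the bound $(1+\|A_1\|_1)/2\le 2/5$ must be verified rigorously. I would use \eqref{eq_first_upppp}, which gives $\|A_1\|_1\le\sqrt{2/\pi}\cdot\frac{1}{2\sqrt2}=1/(2\sqrt\pi)\approx0.282$. This yields $(1+0.282)/2\approx0.641$, which is less than $1$ but not at most $2/5$. So I must recheck the averaging. The sum over $s\neq0$ of $\|A_1\|_1^{|s|}$, divided by $2^T-1$, is $((1+a)^T-1)/(2^T-1)\approx((1+a)/2)^T$, and its reciprocal is $(2/(1+a))^T\approx(1.56)^T$, not $(5/2)^T$.

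To reach $(5/2)^T$, I will instead use the $L^2$/$\chi^2$ route via Lemma~\ref{lemma_l2_A_tau_ge_2}-style Parseval on a single block, or exploit that $\tfrac12\|P_s-P_0\|_1\le\tfrac12\sqrt{\chi^2}$ with $\beta$ small. Numerically $\beta\approx a^2$ is small, and averaging $\sqrt{(1+\beta)^{|s|}-1}$ over $s$ should give decay near $\bigl((1+\sqrt{1+\beta})/2\bigr)^T/(2^{T})$-type rates. Pinning this computation down precisely is the main step to work out.
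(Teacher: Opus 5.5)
There is a genuine gap: the proposal never reaches $\frac T3(5/2)^T$, and the route you settle on cannot reach it.

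\textbf{The averaged reduction is too lossy.} Reducing to the averaged Problem~\ref{def:PROB2} (equivalently \eqref{eq_crucial}) loses an exponential factor for unit-time readout. Per block, the correct density ratio is $P_{\bar s}/P_0=1+\prod_{i:\bar s_i=1}\phi(R_i)$, not $\prod_i(1+\phi(R_i))^{s_i}$. Here $|\phi|\le 1$, $\mathbb{E}_0\phi=0$, and $\beta=\mathbb{E}_0[\phi^2]\ge(\mathbb{E}_0|\phi|)^2=\operatorname{TV}(g_0,g_1)^2\approx 0.27$. Several consequences follow:
\begin{itemize}
\item The chi-squared distance is $\chi^2(P_{\bar s}\|P_0)=\beta^{|\bar s|}$, which decreases in $|\bar s|$, and over $n$ blocks it is $(1+\beta^{|\bar s|})^n-1$. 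Your formulas $(1+\beta)^{|s|}-1$ and $(1+\beta)^{n|s|}-1$ are wrong.
\item The Bhattacharyya coefficient is at most $1-c\,\beta^{|\bar s|}$ for an absolute $c>0$.
\item A learner can run the likelihood-ratio test after $\bar s$ is revealed. It then succeeds on every $\bar s$ of weight at most $\lceil T/2\rceil$ (at least half of all nonzero strings) once $n=O(\beta^{-T/2})=O(1.93^T)$, giving average success at least $2/3$.
\end{itemize}
So the averaged problem is solvable well below $(5/2)^T$. No bound routed through it, whether by averaged TV or averaged $\chi^2$, can prove the theorem.

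\textbf{Numerical and accumulation errors.} Your numerics hide this. With $V=1/2$ one has $t=1/\sqrt2$, not $1/(2\sqrt2)$, so $\|A_1\|_1=\operatorname{TV}(g_0,g_1)\approx 0.52$. Your averaged $C$ then yields only $n\gtrsim 1.3^T$. Separately, the linear accumulation $nC$ of Lemma~\ref{lemma_sensing_time_from_T_tree_bound} loses a square root. Even for a single fixed pair it gives only $n\gtrsim \operatorname{TV}(g_0,g_1)^{-T}\approx 1.92^T$.

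\textbf{What the paper does instead.} It makes two moves.
\begin{itemize}
\item Problem~\ref{def:PROB} demands correctness for every $s$. It therefore suffices to distinguish the single worst-case pair $H_0:s=0$ versus $H_1:s=(1,\ldots,1)$, with no averaging and no revealed string.
\item The data are $n$ i.i.d.\ blocks with no adaptivity. Use $\operatorname{TV}(P_0,P_1)\le\frac12\sqrt{\mathbb{E}_0[(L-1)^2]}$ with $L=\prod_{j=1}^n(1+Z_j)$ and $Z_j=\prod_{i=1}^T\phi(R^{(j)}_i)$. The $Z_j$ are independent and mean zero under $H_0$, so $\mathbb{E}_0[L^2]=(1+\beta^T)^n$.
\end{itemize}
The constant comes from $\beta=1-2\int\frac{g_0g_1}{g_0+g_1}\le 1-\bigl(\int\sqrt{g_0g_1}\bigr)^2=1-e^{-1/2}<\frac25$. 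This uses Cauchy--Schwarz and the Gaussian Bhattacharyya coefficient $e^{-1/4}$. Then $\frac13\le\frac12\sqrt{(1+(2/5)^T)^n-1}$ forces $n\ge\log(13/9)\,(5/2)^T\ge\frac13(5/2)^T$.

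Finally, do not absorb a partial block into a ceiling, which would cost a $-1$. Outcomes from an incomplete final block carry no information, because every proper subset of the $T$ bits is uniform under both hypotheses.
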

\begin{proof}
It is enough to consider the two hypotheses $H_0:s=0$ and
$H_1:s=(1,\ldots,1)$. Any protocol that learns $s$ with success
probability at least $2/3$ must, in particular, distinguish these two
cases.

Suppose first that the learner observes $n$ complete blocks, and let
$P_0$ and $P_1$ denote the joint densities of all the homodyne outcomes
under $H_0$ and $H_1$, respectively. Successful discrimination requires
$\operatorname{TV}(P_0,P_1)\ge1/3$.
Let $L\coloneqq P_1/P_0$ denote the likelihood ratio of the complete
dataset. Then
\[
    \operatorname{TV}(P_0,P_1)
    =\frac12\int P_0\,|L-1|
    =\frac12\mathbb E_0[|L-1|]
    \le\frac12\sqrt{\mathbb E_0[(L-1)^2]},
\]
where $\mathbb E_0$ denotes expectation under $H_0$, and the last
inequality follows from Cauchy--Schwarz. Thus, to upper bound the total
variation distance, it is enough to upper bound
$\mathbb E_0[(L-1)^2]$.

We first determine the likelihood ratio produced by one block.
For a single unit-time interval, conditioned on the underlying bit being
$b\in\{0,1\}$, Eq.~\eqref{eq:pulse_bin_readout} gives the homodyne
density $p_b(y)\coloneqq\NN[b,1/2](y)$. Define
\[
    h(y)\coloneqq\frac{p_0(y)-p_1(y)}{p_0(y)+p_1(y)}.
\]
Under $H_0$, every bit is independently uniform, so each homodyne
outcome has density $q(y)\coloneqq(p_0(y)+p_1(y))/2$. Hence
$\mathbb E_0[h(R)]=0$. We will also need its second moment. A direct
calculation gives
\[
    \mathbb E_0[h(R)^2]
    =
    1-2\int_{\mathbb R}\frac{p_0p_1}{p_0+p_1}\,\mathrm{d}y.
\]
By Cauchy--Schwarz,
\[
    2\int_{\mathbb R}\frac{p_0p_1}{p_0+p_1}\,\mathrm{d}y
    \ge
    \left(\int_{\mathbb R}\sqrt{p_0p_1}\,\mathrm{d}y\right)^2.
\]
For $p_0=\NN[0,1/2]$ and $p_1=\NN[1,1/2]$, we have
$\int_{\mathbb R}\sqrt{p_0p_1}\,\mathrm{d}y=e^{-1/4}$, and therefore
\begin{equation}\label{eq:unit_time_h_second_moment}
    \mathbb E_0[h(R)^2]\le1-e^{-1/2}<\frac25.
\end{equation}

Now consider the $T$ outcomes $R=(R_1,\ldots,R_T)$ collected from one
complete block, and let $Q_0$ and $Q_1$ denote their joint densities.
Under $H_0$, the string $x$ is uniform on $\{0,1\}^T$, whereas under
$H_1$ it is uniform over the strings of even Hamming weight. Hence
\[
    Q_0(R)
    =
    2^{-T}\sum_{x\in\{0,1\}^T}\prod_{i=1}^T p_{x_i}(R_i)
    =
    2^{-T}\prod_{i=1}^T\bigl(p_0(R_i)+p_1(R_i)\bigr).
\]
For $Q_1$, using
$\mathbf{1}_{\{|x|\ {\rm even}\}}=(1+(-1)^{|x|})/2$, we obtain
\[
\begin{aligned}
    Q_1(R)
    &=
    2^{-(T-1)}
    \sum_{\substack{x\in\{0,1\}^T\\ |x|\ {\rm even}}}
    \prod_{i=1}^T p_{x_i}(R_i)\\
    &=
    2^{-T}
    \left[
        \prod_{i=1}^T\bigl(p_0(R_i)+p_1(R_i)\bigr)
        +
        \prod_{i=1}^T\bigl(p_0(R_i)-p_1(R_i)\bigr)
    \right].
\end{aligned}
\]
Indeed, the first product comes from summing over all strings $x$, while
the factor $(-1)^{|x|}=\prod_i(-1)^{x_i}$ changes
$p_0(R_i)+p_1(R_i)$ into $p_0(R_i)-p_1(R_i)$ in the second term.
Dividing by $Q_0(R)$ gives
\begin{equation}\label{eq:unit_time_block_likelihood}
    \frac{Q_1(R)}{Q_0(R)}
    =
    1+
    \prod_{i=1}^T
    \frac{p_0(R_i)-p_1(R_i)}
         {p_0(R_i)+p_1(R_i)}
    =
    1+\prod_{i=1}^T h(R_i).
\end{equation}
This is the crucial point: with unit-time readout, all the information
distinguishing the two hypotheses appears only through a correlation
involving all $T$ outcomes.

For the $j$-th block, let
$R^{(j)}=(R_1^{(j)},\ldots,R_T^{(j)})$ and set
$Z_j\coloneqq\prod_{i=1}^T h(R_i^{(j)})$.
Equation~\eqref{eq:unit_time_block_likelihood} says that the likelihood
ratio contributed by this block is simply
$Q_1(R^{(j)})/Q_0(R^{(j)})=1+Z_j$.

Since the $n$ blocks are independent under both hypotheses, the joint
densities of the complete dataset factorize as
\[
    P_0(R^{(1)},\ldots,R^{(n)})
    =
    \prod_{j=1}^n Q_0(R^{(j)}),
    \qquad
    P_1(R^{(1)},\ldots,R^{(n)})
    =
    \prod_{j=1}^n Q_1(R^{(j)}).
\]
Therefore their ratio is
\[
    L
    =
    \frac{P_1(R^{(1)},\ldots,R^{(n)})}
         {P_0(R^{(1)},\ldots,R^{(n)})}
    =
    \prod_{j=1}^n
    \frac{Q_1(R^{(j)})}{Q_0(R^{(j)})}
    =
    \prod_{j=1}^n(1+Z_j).
\]

Under $H_0$, the $T$ outcomes within each block are independent and each
has density $q$. Since $\mathbb E_0[h(R)]=0$,
\[
    \mathbb E_0[Z_j]
    =
    \prod_{i=1}^T\mathbb E_0[h(R_i^{(j)})]
    =
    0.
\]
Moreover, Eq.~\eqref{eq:unit_time_h_second_moment} gives
\[
    \mathbb E_0[Z_j^2]
    =
    \left(\mathbb E_0[h(R)^2]\right)^T
    \le
    \left(\frac25\right)^T.
\]
The variables $Z_1,\ldots,Z_n$ are also independent under $H_0$, and
therefore
\[
    \mathbb E_0[L^2]
    =
    \prod_{j=1}^n\mathbb E_0[(1+Z_j)^2]
    =
    \prod_{j=1}^n\left(1+\mathbb E_0[Z_j^2]\right)
    \le
    \left(1+\left(\frac25\right)^T\right)^n.
\]
Since $\mathbb E_0[L]=1$, it follows that
\begin{equation}\label{eq:unit_time_chi_squared}
    \mathbb E_0[(L-1)^2]
    =
    \mathbb E_0[L^2]-1
    \le
    \left(1+\left(\frac25\right)^T\right)^n-1.
\end{equation}

Substituting this into our original bound on the total variation distance
gives
\[
    \operatorname{TV}(P_0,P_1)
    \le
    \frac12
    \sqrt{
        \left(1+\left(\frac25\right)^T\right)^n-1
    }.
\]
But successful discrimination requires
$\operatorname{TV}(P_0,P_1)\ge1/3$. Hence
\[
    \frac13
    \le
    \frac12
    \sqrt{
        \left(1+\left(\frac25\right)^T\right)^n-1
    },
\]
which implies
\[
    n
    \ge
    \frac{\log(13/9)}
         {\log\!\left(1+(2/5)^T\right)}
    \ge
    \frac13\left(\frac52\right)^T,
\]
where we used $\log(1+u)\le u$ and $\log(13/9)>1/3$.

Thus at least $\frac13(5/2)^T$ complete blocks are necessary. Since
each block has duration $T$,
\[
    T_{\mathrm{sensing}}
    \ge
    \frac{T}{3}\left(\frac52\right)^T.
\]

Finally, observing part of one additional block cannot help. Under
$H_0$ and $H_1$, every proper subset of the $T$ underlying bits has
exactly the same uniform distribution. Hence any homodyne outcomes
obtained before completing that block have the same distribution under
the two hypotheses and provide no additional information. This concludes
the proof.
\end{proof}
Taken together, when there is no squeezing ($z=1$), Theorems~\ref{thm:coherent_parity_readout}
and~\ref{thm:unit_time_readout_lower_bound} show that unit-time readout
is not optimal. By letting the same mode evolve coherently across a full block, interleaving the evolution only with phase-space inversions, and measuring once at the end, one can achieve
\[
    T_{\mathrm{sensing}}=2^T \exp\left(O(\log T)^2)\right) = 2^{T + o(T)}.
\]
In contrast, measuring and re-preparing the mode after every unit-time
interval requires exponential time in T, as
\[
    T_{\mathrm{sensing}}
    \ge
    \frac{T}{3}\left(\frac52\right)^T,
\]
even with arbitrary classical post-processing of the full analog
homodyne outcomes. Since $5/2>2$, the improved coherent protocol is still exponential as well, but has a better exponent than any unit-time strategy.
The advantage comes from how the sensor is controlled during the
evolution: keeping the mode coherent allows the signal-induced
displacements to accumulate before the final measurement.

\subsection{Problem of CV Learning Parity with Partially Revealed Information}

In Table~\ref{table_algo_learnings2} we introduce a Gaussian protocol with bounded squeezing that solves the CV Learning Parity with Partially Revealed Information problem with success probability at least $2/3$, and we show its correctness in Theorem~\ref{thm_correct2}. Notably, in the regime where the squeezing satisfies $z=\omega(\sqrt{\log T})$, this protocol is efficient, with both its sensing time and computational time scaling polynomially with the pattern size $T$.

\begin{table}[!htbp]
  \caption{Protocol to solve the CV Learning Parity with Partially Revealed Information problem (Problem~\ref{def:PROB2})}
  \label{table_algo_learnings2}
  \begin{mdframed}[linewidth=2pt, roundcorner=10pt, backgroundcolor=white!10, innerbottommargin=10pt, innertopmargin=10pt]
\textbf{Input:}
\begin{itemize}[topsep=4pt,itemsep=2pt,parsep=0pt,partopsep=0pt]
        \item Squeezing $z\ge1$;
        \item Pattern size $T$;
        \item Setting of the CV Learning Parity with Partially Revealed Information problem (Problem~\ref{def:PROB2}). Specifically, the learner has access to a single-mode system subjected to the Hamiltonian evolution $\hat{H}(t)$ governed by the unknown parity pattern $s$, as defined in the statement of Problem~\ref{def:PROB2}, over the time interval $t\in[0,T_{\mathrm{sensing}}]$, where $T_{\mathrm{sensing}}
            \coloneqq
            10\,T
            \exp\!\left(4e^{-z^2/4}T\right)$.
    \end{itemize}
    \textbf{Output:} A bit string $\tilde{s}\in\{0,1\}^T$ such that $\tilde{s}=s$ with probability at least $2/3$.
    \begin{algorithmic}[1]
    \State The referee selects a non-zero bit string $\bar{s}\in\{0,1\}^T\setminus\{0\}$ uniformly at random. 
    \State The referee defines the hidden string $s$ to be either $s\coloneqq0$ or $s\coloneqq \bar{s}$, with equal probability.
    \State The time is set to be $t=0$, and the learner starts to have access to the Hamiltonian evolution.
    \State Define $n\coloneqq
        \left\lceil
        8\exp\left(4e^{-z^2/4}T\right)\log(3)
        \right\rceil$.
      \For{$j \leftarrow 1$ \textbf{to} $n$}
        \For{$i \leftarrow 1$ \textbf{to} $T$}
         \State Prepare the system in the Gaussian state with zero first moment and covariance matrix $\diag(z^{-2},z^2)$.
        \State Wait a unit time, so that the system evolves for a unit time under the Hamiltonian evolution.
        \State Measure the position observable (homodyne detection with respect to the $x$ axis).
          \If{the outcome is larger than $\frac12$}
              \State Set $\tilde{x}^{(j)}_i\coloneqq 1$.
          \Else
              \State Set $\tilde{x}^{(j)}_i\coloneqq 0$.
          \EndIf
      \EndFor
      \State Set $\tilde{x}^{(j)}\coloneqq \left(\tilde{x}^{(j)}_1,\tilde{x}^{(j)}_2,\ldots, \tilde{x}^{(j)}_T\right)$.
       \EndFor
    \State The referee reveals $\bar{s}$.
    \State The learner computes $Y_{\bar{s}}
        \coloneqq
        \sum_{j=1}^n
        \frac{1+(-1)^{\tilde{x}^{(j)}\cdot \bar{s}}}{2}$, i.e.~the total number of $j$'s such that $\tilde{x}^{(j)}\cdot \bar{s}\equiv 0$.
    \If{$Y_{\bar{s}}\ge \frac n2+\frac{n}{4}\exp\left(-2e^{-z^2/4}T\right)$} 
    \State\Return $\tilde{s}\coloneqq \bar{s}$.
    \Else 
    \State\Return  $\tilde{s}\coloneqq 0$.
    \EndIf
    \end{algorithmic}
  \end{mdframed}
\end{table}

\begin{thm}[(Protocol to solve the CV Learning Parity with Partially Revealed Information problem)]\label{thm_correct2}
Let $T\in\mathbb{N}^+$ denote the pattern size and let $z\ge 1$ be the squeezing parameter. The algorithm in Table~\ref{table_algo_learnings2} is a Gaussian protocol with squeezing bounded by $z$ that solves the CV Learning Parity with Partially Revealed Information problem (Problem~\ref{def:PROB2}) with success probability at least $2/3$, using sensing time at most
\bb
    T_{\mathrm{sensing}}
    =
    10\,T
    \exp\!\left(4e^{-z^2/4}T\right)
\ee
and computational time $O(T_{\mathrm{sensing}})$.
\end{thm}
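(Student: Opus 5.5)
The proof follows the template of Theorem~\ref{thm_upp1}, with one simplification: only the single revealed candidate $\bar{s}$ has to be tested, so no union bound over $2^T$ strings is needed. The sensing and measurement part of Table~\ref{table_algo_learnings2} is the same unit-time readout as before, so Eq.~\eqref{eq_ee} still applies: each readout bit is $\tilde{x}_i^{(j)}=x_i^{(j)}+e_i^{(j)}\pmod 2$. The errors $e_i^{(j)}$ are i.i.d.\ Bernoulli$(p)$ and independent of the signal, with $2p\le e^{-z^2/4}$ by Lemma~\ref{lemma_bounds_Q}. The referee's choice of $s$ and $\bar{s}$ is made before sensing and is independent of the noise. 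I will therefore condition on $\bar{s}$ and on which hypothesis holds, and show that the conditional error probability is at most $1/3$ in each case.

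Introduce $A\coloneqq \frac n2 e^{-2e^{-z^2/4}T}-\sqrt{\frac n2\log 3}$ and $B\coloneqq\sqrt{\frac n2\log 3}$, with $n=\lceil 8e^{4e^{-z^2/4}T}\log 3\rceil$. This choice of $n$ gives $\frac n2 e^{-2e^{-z^2/4}T}\ge 2B$, i.e.\ $A\ge B$, and the threshold in the algorithm equals $\frac n2+\frac{A+B}{2}$.

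\emph{Case $s=\bar{s}$.} The indicators $\mathbf{1}[\tilde{x}^{(j)}\cdot\bar{s}\equiv0]$ are i.i.d.\ Bernoulli$(q)$ with $q=\frac{1+(1-2p)^{|\bar{s}|}}{2}$, as in Eq.~\eqref{eq_ys2}. The bound \eqref{boundd_q} gives $q\ge\frac{1+e^{-2e^{-z^2/4}T}}{2}$. Hoeffding's inequality with deviation $B$ then shows $\Pr[Y_{\bar{s}}\le \frac n2+A]\le 1/3$, and $A\ge\frac{A+B}{2}$ gives error probability at most $1/3$.

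\emph{Case $s=0$.} The strings $x^{(j)}$ are uniform on $\{0,1\}^T$ and $\bar{s}\ne0$, so $x^{(j)}\cdot\bar{s}$ is a uniform bit independent of the noise. Hence $\tilde{x}^{(j)}\cdot\bar{s}$ is also uniform, and $Y_{\bar{s}}\sim\mathrm{Bin}(n,1/2)$. Hoeffding gives $\Pr[Y_{\bar{s}}\ge\frac n2+B]\le 1/3$, and $\frac{A+B}{2}\ge B$ shows the algorithm outputs $0$ with probability at least $2/3$.

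Averaging over $\bar{s}$ and over the two hypotheses preserves success probability $\ge 2/3$. The sensing time is $nT\le T(8\log3\, e^{4e^{-z^2/4}T}+1)\le 10\,T e^{4e^{-z^2/4}T}$, since $8\log 3<9$ and $e^{4e^{-z^2/4}T}\ge1$. The computation is $O(nT)$: thresholding each of the $nT$ outcomes, computing $n$ inner products of length $T$, and one comparison. None of these steps is a real obstacle. The only thing needing care is the arithmetic that checks $A\ge B$ and the constant $10$; both are direct specialisations of the calculations in Theorem~\ref{thm_upp1} with $\log(3\cdot2^T)$ replaced by $\log 3$.
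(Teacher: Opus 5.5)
Your proposal is correct and follows essentially the same route as the paper's proof. The paper also specialises the proof of Theorem~\ref{thm_upp1}, with the same $A$ and $B$, the same threshold identity $\frac n2+\frac{A+B}{2}$, and the two Hoeffding bounds with $\log(3\cdot 2^T)$ replaced by $\log 3$ and no union bound; you additionally spell out the checks $A\ge B$ and $nT\le 10\,T\exp\!\left(4e^{-z^2/4}T\right)$, which the paper only asserts.
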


\begin{proof}
The proof is completely analogous to that of Theorem~\ref{thm_upp1}, except that here the union bound over all non-zero strings is not required. Since
\[
T\left\lceil 8\exp\left(4e^{-z^2/4}T\right)\log(3) \right\rceil
\le
10\,T\exp\!\left(4e^{-z^2/4}T\right)
=
T_{\mathrm{sensing}},
\]
the protocol performs all its measurements within the announced sensing time. We therefore analyse the
\[
        n\coloneqq \left\lceil 8\exp\left(4e^{-z^2/4}T\right)\log(3) \right\rceil
\]
block repetitions actually used by the algorithm.

As in the proof of Theorem~\ref{thm_upp1}, define
\[
        A\coloneqq \frac{n}{2}\exp\left(-2e^{-z^2/4}T\right) -\sqrt{\frac{n}{2}\log(3)},
        \qquad
        B\coloneqq \sqrt{\frac{n}{2}\log(3)} .
\]
The definition of $n$ implies that $A\ge B$. Moreover, the threshold used by the algorithm is exactly
\[
    \frac n2+\frac{A+B}{2}
    =
    \frac n2+\frac n4\exp\left(-2e^{-z^2/4}T\right).
\]
Thus the algorithm outputs $\tilde{s}=\bar{s}$ if
\[
Y_{\bar{s}}\ge \frac n2+\frac{A+B}{2},
\]
and outputs $\tilde{s}=0$ otherwise. Hence, to prove the theorem, we consider two cases: $s=\bar{s}$ and $s=0$.

On the one hand, in the case $s=\bar{s}$, it suffices to prove that
\[
        \Pr\left[ Y_{\bar{s}}\le \frac n2+ \frac{A+B}{2}\right]\le \frac13 .
\]
Indeed, this implies that $\tilde{s}=s$ with probability at least $2/3$. This bound follows from the same argument used to prove Eq.~\eqref{eq_0_union}, with $\log(3\cdot 2^T)$ replaced by $\log(3)$.

On the other hand, in the case $s=0$, it suffices to prove that
\[
        \Pr\left[ Y_{\bar{s}}\ge \frac n2+ \frac{A+B}{2}\right]\le \frac13 .
\]
Indeed, this implies that $\tilde{s}=0=s$ with probability at least $2/3$. This bound follows from the same argument used to prove Eq.~\eqref{eq_2_union}, again with $\log(3\cdot 2^T)$ replaced by $\log(3)$. This concludes the proof.
\end{proof}

\subsection{Problem of CV Learning Parity with partial information revealed beforehand}\label{sec_partial_info_bef}

To obtain an exponential separation for the CV Learning Parity with Partially Revealed Information problem, it is crucial that the referee reveals the partial information \emph{after} the sensing phase. Indeed, if the referee reveals the partial information \emph{before} the sensing phase, then the corresponding problem can be solved efficiently even by a Gaussian protocol with no squeezing, i.e.~with $z=1$. To see this, let us consider the following variant of Problem~\ref{def:PROB2}.

\begin{problem}[(CV Learning Parity with partial information revealed beforehand)] 
\label{def:PROB3}
    Let $T\in\mathbb{N}^+$ denote the pattern size. The problem is divided into three phases:
    \begin{itemize}
        \item \textbf{Referee's choice phase:} A referee selects a non-zero bit string $\bar{s}\in\{0,1\}^T\setminus\{0\}$ and reveals it. Then, the referee defines the hidden string $s$ to be either $s\coloneqq0$ or $s\coloneqq \bar{s}$, with equal probability. Thus only $s$ remains unknown to the learner, while $\bar{s}$ is known.

        \item \textbf{Evolution phase:} The learner has access to the time evolution generated by the Hamiltonian $\hat{H}(t)$ from time $t=0$, where $\hat{H}(t)$ is constructed as described above using a sequence of samples from 
        \bb
            C_s
            \coloneqq
            \left\{
            x\in\{0,1\}^T:
            x\cdot s\equiv 0\pmod 2
            \right\}.
        \ee

        \item \textbf{Classical post-processing phase:} Using the data gathered during the evolution phase, the learner must determine whether the true string $s$ was $0$ or $\bar{s}$, with success probability at least $2/3$.
    \end{itemize} 
\end{problem}

\begin{table}[!htbp]
  \caption{Protocol to solve Problem~\ref{def:PROB3}}
  \label{table_algo_learnings3}
  \begin{mdframed}[linewidth=2pt, roundcorner=10pt, backgroundcolor=white!10, innerbottommargin=10pt, innertopmargin=10pt]
\textbf{Input:}
\begin{itemize}[topsep=4pt,itemsep=2pt,parsep=0pt,partopsep=0pt]
        \item Pattern size $T$;
        \item No squeezing, i.e.~$z=1$;
        \item Setting of Problem~\ref{def:PROB3}. Specifically, the learner has access to a single-mode system subjected to the Hamiltonian evolution $\hat{H}(t)$ governed by the unknown parity pattern $s$, as defined in the statement of Problem~\ref{def:PROB3}, over the time interval $t\in[0,T_{\mathrm{sensing}}]$, where $T_{\mathrm{sensing}}\coloneqq \left\lceil2000\log(3)\right\rceil T$.
    \end{itemize}
    \textbf{Output:} A bit string $\tilde{s}\in\{0,1\}^T$ such that $\tilde{s}=s$ with probability at least $2/3$.
    \begin{algorithmic}[1]
    \State The referee selects a non-zero bit string $\bar{s}\in\{0,1\}^T\setminus\{0\}$ and reveals it. 
    \State The referee defines the hidden string $s$ to be either $s\coloneqq0$ or $s\coloneqq \bar{s}$, with equal probability.
    \State The time is set to be $t=0$, and the learner starts to have access to the Hamiltonian evolution.
    \State Define $n\coloneqq \left\lceil2000\log(3)\right\rceil$.
      \For{$j \leftarrow 1$ \textbf{to} $n$}
        \State Prepare the system in the vacuum state.
        \For{$i \leftarrow 1$ \textbf{to} $T$}
        \If{$\bar{s}_i=0$}
        \State Apply a phase-space rotation of angle $\pi/2$.
        \EndIf
        \State Wait a unit time.
        \If{$\bar{s}_i=0$}
        \State Apply a phase-space rotation of angle $-\pi/2$.
        \EndIf        
         \EndFor
    \State Measure the position observable and denote the outcome by $\tilde{x}$.
    \If{$\tilde{x}\in\bigcup_{m\in\mathbb{Z}}[2m-\frac12,2m+\frac12]$}
              \State Set $Y_j\coloneqq 1$.
          \Else
              \State Set $Y_j\coloneqq 0$.
          \EndIf
    \EndFor
    \If{$\frac{1}{n}\sum_{j=1}^nY_j\ge \frac{\frac{1}{2}+0.55}{2}$} 
    \State\Return $\tilde{s}\coloneqq \bar{s}$.
    \Else 
    \State\Return  $\tilde{s}\coloneqq 0$.
    \EndIf
    \end{algorithmic}
  \end{mdframed}
\end{table}

\begin{thm}[(Protocol to solve Problem~\ref{def:PROB3})]\label{thm_correct3}
Let $T\in\mathbb{N}^+$ denote the pattern size. The algorithm in Table~\ref{table_algo_learnings3} is a Gaussian protocol with no squeezing, i.e.~with $z=1$, that solves Problem~\ref{def:PROB3} with success probability at least $2/3$, using sensing time
\bb
    T_{\mathrm{sensing}}
    =
    \left\lceil2000\log(3)\right\rceil T
\ee
and computational time $O(T_{\mathrm{sensing}})$. In particular, the computational time scales linearly with the pattern size $T$.
\end{thm}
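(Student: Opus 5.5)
We analyse a single block of the protocol in Table~\ref{table_algo_learnings3} and then amplify with Hoeffding. The sensing-time and computational-time claims are immediate from the table: there are $n=\lceil 2000\log 3\rceil$ blocks of duration $T$, and $O(1)$ work per unit interval. No squeezing is used, since the vacuum has $z=1$. Only rotations are applied, so the protocol is a Gaussian protocol with squeezing bounded by $1$. What remains is correctness.

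\textbf{Step 1: the outcome law of one block.} Conjugating $e^{-ix_i\hat p}$ by the rotation of angle $\pi/2$ turns the displacement into one along the $\hat p$ quadrature. Hence the position quadrature accumulates $x_i$ only when $\bar s_i=1$. The homodyne outcome of block $j$ is therefore
\[
\tilde x = \textstyle\sum_{i:\bar s_i=1} x^{(j)}_i + N,\qquad N\sim\NN[0,\tfrac12].
\]
Write $K\coloneqq x^{(j)}\cdot\bar s$ as an integer, so that $\tilde x=K+N$. If $s=\bar s$, then $K$ is always even. If $s=0$, then $x^{(j)}$ is uniform on the hypercube, so $K$ is even or odd with probability exactly $1/2$ each (since $\bar s\neq0$).

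\textbf{Step 2: acceptance probabilities.} Let $E\coloneqq\bigcup_{m\in\mathbb Z}[2m-\tfrac12,2m+\tfrac12]$. Because $E$ is invariant under shifts by even integers, $\Pr[K+N\in E]$ depends only on the parity of $K$. Define $g_0\coloneqq\Pr[N\in E]$ and $g_1\coloneqq\Pr[N+1\in E]=\Pr[N\in E+1]$. Since $E$ and $E+1$ partition $\mathbb R$ up to a null set, $g_0+g_1=1$.

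Under $s=0$, the acceptance probability is therefore $\tfrac12(g_0+g_1)=\tfrac12$ exactly. Under $s=\bar s$, it equals $g_0$. The main quantitative step is to lower bound $g_0\ge \Pr[|N|\le\tfrac12]=\Pr[|Z|\le 1/\sqrt2]$ for $Z\sim\NN[0,1]$. This is about $0.52$, which is not enough on its own.

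To reach $0.55$, I will also add the mass of $[\pm2-\tfrac12,\pm2+\tfrac12]$. Its contribution is $2\Pr[Z\in[1.5\sqrt2,2.5\sqrt2]]\approx 2(0.0170-0.0002)\approx0.034$, which gives $g_0\approx0.554\ge0.55$. I expect this numerical verification to be the delicate point. If the margin is too thin, I would use explicit numeric bounds on $\int\NN[0,1]$ rather than the crude tail bounds of Lemma~\ref{lemma_bounds_Q}.

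\textbf{Step 3: concentration.} Given $s$, the $Y_j$ are i.i.d.\ Bernoulli random variables, with mean $1/2$ if $s=0$ and mean $g_0\ge0.55$ if $s=\bar s$. The threshold $(0.5+0.55)/2$ sits at distance at least $0.025$ from either mean. By Hoeffding's inequality, the error probability in each case is at most $\exp(-2n(0.025)^2)=\exp(-n/800)$. With $n\ge2000\log 3$, this is at most $3^{-2.5}<1/3$. The algorithm is therefore correct with probability at least $2/3$.
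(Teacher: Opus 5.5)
Your proposal is correct and follows the paper's proof. Both reduce each block to a single homodyne outcome $x\cdot\bar s+N$ with $N\sim\NN[0,\tfrac{1}{2}]$, both use the same even/odd-shift argument to show the acceptance probability is exactly $\tfrac{1}{2}$ when $s=0$, and both finish with a concentration bound for the same $n$. The differences are only in execution. The paper computes $\Pr[N\in E]$ exactly by Poisson summation, obtaining $\tfrac{1}{2}+\tfrac{2}{\pi}\sum_{k\ge0}\tfrac{(-1)^k}{2k+1}e^{-\pi^2(2k+1)^2/4}\approx0.554$. Your truncation to the intervals with $m\in\{0,\pm1\}$ is a valid lower bound and gives essentially the same value, so the $0.004$ margin above $0.55$ is safe. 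Your Hoeffding bound $e^{-n/800}$ is also slightly stronger than the paper's multiplicative Chernoff bound $e^{-n/2000}$.
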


\begin{proof}
We first identify the effective measurement performed by the protocol.
By Eq.~\eqref{eq:pulse_bin_area}, each complete bin contributes a
displacement of strength $x_i$, before accounting for the rotations. Whenever $\bar{s}_i=1$, the learner lets the signal displace the measured quadrature. Whenever $\bar{s}_i=0$, the learner rotates the state before the unit-time evolution and rotates it back afterwards; the corresponding displacement is then placed in the conjugate quadrature and does not affect the final position measurement. Therefore, in one block, the final position outcome has mean $x\cdot\bar{s}$ and variance $1/2$.

Hence, conditioned on the hidden string being $s=\bar{s}$, the probability density of the homodyne outcome $\tilde{x}$ in one repetition is
\bb
    p_{\bar{s}}(\tilde{x})
    \coloneqq
    \mathop{\mathbb{E}}_{\substack{x\in\{0,1\}^T\\ x\cdot \bar{s}\equiv 0}}
    \mathcal{N}\left[x\cdot \bar{s},\frac{1}{2}\right](\tilde{x})\,,
\ee
where the expectation is uniform over the strings satisfying the parity constraint. Moreover, conditioned on the hidden string being $s=0$, the probability density is
\bb
    p_0(\tilde{x})
    \coloneqq
    \mathop{\mathbb{E}}_{x\in\{0,1\}^T}
    \mathcal{N}\left[x\cdot \bar{s},\frac{1}{2}\right](\tilde{x})\,,
\ee
where the expectation is uniform over the whole hypercube.

Let
\[
    E
    \coloneqq
    \bigcup_{m\in\mathbb{Z}}
    \left[2m-\frac12,2m+\frac12\right].
\]
In the case $s=0$, the parity of $x\cdot\bar{s}$ is uniform, because $\bar{s}\ne0$ and $x$ is uniformly distributed over $\{0,1\}^T$. Moreover, shifting a Gaussian by an odd integer exchanges the set $E$ with its complement, up to boundaries of measure zero. Hence, for every $j\in[n]$,
\bb
        \Pr_{\tilde{x}\sim p_0}[Y_j=1]
        =
        \Pr_{\tilde{x}\sim p_0}[\tilde{x}\in E]
        =
        \frac{1}{2}.
\ee

In the case $s=\bar{s}$, the quantity $x\cdot\bar{s}$ is always an even integer. Therefore, by translation invariance modulo $2$,
\bb\label{pr_succ_odd}
        \Pr_{\tilde{x}\sim p_{\bar{s}}}[Y_j=1]
        &=
        \Pr_{\tilde{x}\sim p_{\bar{s}}}[\tilde{x}\in E]\\
        &=
        \sum_{m\in\mathbb{Z}}
        \int_{2m-\frac12}^{2m+\frac12}
        \mathrm{d}\tilde{x}\,
        \mathcal{N}\left[0,\frac{1}{2}\right](\tilde{x})\\
        &=
        \int_{-\frac12}^{\frac12}
        \mathrm{d}\tilde{x}\,
        \left(
        \sum_{m\in\mathbb{Z}}
        \mathcal{N}\left[0,\frac{1}{2}\right](2m+\tilde{x})
        \right).
\ee
By the Poisson summation formula,
\bb\label{Poisson}
        \sum_{m\in\mathbb{Z}}
        \mathcal{N}\left[0,\frac{1}{2}\right](2m+\tilde{x})
        =
        \sum_{\ell\in\mathbb{Z}}S(\ell),
\ee
where
\bb
        S(\ell)
        &\coloneqq
        \int_{\mathbb{R}}
        \mathrm{d}t\,
        \mathcal{N}\left[0,\frac{1}{2}\right](2t+\tilde{x})
        e^{- i\,2\pi t \ell}
        =
        \frac12
        e^{i\pi \ell \tilde{x}}
        e^{-\frac{\pi^2}{4}\ell^2}.
\ee
Consequently,
\bb
        \sum_{m\in\mathbb{Z}}
        \mathcal{N}\left[0,\frac{1}{2}\right](2m+\tilde{x})
        =
        \frac12
        +
        \sum_{\ell=1}^\infty
        e^{-\frac{\pi^2}{4}\ell^2}
        \cos(\pi \ell\tilde{x}) .
\ee
Together with Eq.~\eqref{pr_succ_odd}, this implies
\bb
        \Pr_{\tilde{x}\sim p_{\bar{s}}}[Y_j=1]
        =
        \frac12
        +
        \frac{2}{\pi}
        \sum_{k=0}^\infty
        \frac{(-1)^k}{2k+1}
        e^{-\frac{\pi^2}{4}(2k+1)^2}.
\ee
The right-hand side satisfies
\bb
        \frac12
        +
        \frac{2}{\pi}
        \sum_{k=0}^\infty
        \frac{(-1)^k}{2k+1}
        e^{-\frac{\pi^2}{4}(2k+1)^2}
        >
        0.55.
\ee
In summary, each $Y_j$ is a Bernoulli random variable such that
\[
    \Pr[Y_j=1]=\frac12
    \quad\text{if }s=0,
    \qquad
    \Pr[Y_j=1]>0.55
    \quad\text{if }s=\bar{s}.
\]

We now apply Chernoff's bound. If $Y_1,\ldots,Y_n$ are i.i.d.~Bernoulli random variables with $\Pr[Y_j=1]=p$, then
\bb
        \Pr\left[\frac{1}{n}\sum_{j=1}^nY_j\le (1-\eta)p\right]
        &\le
        \exp\left(-\frac{\eta^2 p}{2}n\right),\\
        \Pr\left[\frac{1}{n}\sum_{j=1}^nY_j\ge (1+\eta)p\right]
        &\le
        \exp\left(-\frac{\eta^2 p}{2+\eta}n\right),
\ee
for all $\eta\in(0,1)$.

In the case $s=\bar{s}$, the algorithm fails only if
\[
    \frac{1}{n}\sum_{j=1}^nY_j
    \le
    \frac{\frac12+0.55}{2}.
\]
Using the lower bound $\Pr[Y_j=1]>0.55$, we obtain
\bb
        \Pr_{\tilde{x}_1,\ldots,\tilde{x}_n\sim p_{\bar{s}}}
        \left[
        \frac{1}{n}\sum_{j=1}^nY_j
        \le
        \frac{\frac{1}{2}+0.55}{2}
        \right]
        &\le
        \exp\left(
        -\frac{1}{2}
        \left(\frac{0.55-\frac12}{2\cdot0.55}\right)^2
        0.55\,n
        \right)
        \le
        \exp\left(-\frac{n}{2000}\right).
\ee
Analogously, in the case $s=0$, the algorithm fails only if
\[
    \frac{1}{n}\sum_{j=1}^nY_j
    \ge
    \frac{\frac12+0.55}{2}.
\]
Since here $\Pr[Y_j=1]=1/2$, Chernoff's bound gives
\bb
        \Pr_{\tilde{x}_1,\ldots,\tilde{x}_n\sim p_0}
        \left[
        \frac{1}{n}\sum_{j=1}^nY_j
        \ge
        \frac{\frac{1}{2}+0.55}{2}
        \right]
        &\le
        \exp\left(
        -\frac{0.05^2}{2+0.05}\,0.5\,n
        \right)
        \le
        \exp\left(-\frac{n}{2000}\right).
\ee
Thus, choosing $n\coloneqq\lceil 2000\log(3)\rceil$ guarantees that the failure probability is at most $1/3$ in both cases. Since the sensing time of the protocol is equal to $nT$, this concludes the proof.
\end{proof}
\newpage
\section{Protocol to solve the CV Hypothesis Testing problem}\label{sec_cv_prob_hyph}

We now give a Gaussian protocol with bounded squeezing for the CV Hypothesis Testing problem. The protocol uses the same sensing step as in the previous section: it reads each coordinate of the signal by a homodyne measurement, thereby producing a noisy classical sample from the underlying sign distribution.

Specifically, the block vectors \(m\in\mathbb{R}^T\) of the signal are i.i.d.~random variables constructed as follows. Let \(p\) and \(q\) be two probability distributions over \(\{0,1\}^T\). In each block of length \(T\), a bit string
\[
    x=(x_1,\ldots,x_T)\in\{0,1\}^T
\]
is sampled either from \(p\) or from \(q\). Independently, amplitudes \(b_1,\ldots,b_T\) are sampled uniformly from \([\frac12,\frac32]\). The corresponding block vector of the signal is
\bb\label{eq:hyp_testing_block_protocol}
    m
    =
    \bigl(
    (-1)^{x_1}b_1,\ldots,(-1)^{x_T}b_T
    \bigr).
\ee
The task is to decide whether the hidden distribution of \(x\) is \(p\) or \(q\).

\begin{table}[!htbp]
  \caption{Gaussian protocol with bounded squeezing that solves the CV Hypothesis Testing problem}
  \label{table_algo_hyp_testing}
  \begin{mdframed}[linewidth=2pt, roundcorner=10pt, backgroundcolor=white!10, innerbottommargin=10pt, innertopmargin=10pt]
\textbf{Input:}
\begin{itemize}[topsep=4pt,itemsep=2pt,parsep=0pt,partopsep=0pt]
        \item Squeezing \(z\ge1\);
        \item Pattern size \(T\);
        \item Two probability distributions \(p,q\) over \(\{0,1\}^T\), with \(p\ne q\);
        \item Setting of the CV Hypothesis Testing problem. Specifically, the learner has access to a single-mode system subjected to the Hamiltonian evolution \(\hat H(t)\) in Eq.~\eqref{hamiltonian_classical_signal}, where the block vectors of the signal are distributed either always according to \(p\) or always according to \(q\), as in Eq.~\eqref{eq:hyp_testing_block_protocol}, over the time interval \(t\in[0,T_{\mathrm{sensing}}]\), where
\[
    T_{\mathrm{sensing}}
    \coloneqq
    T\left\lceil
        \frac{9}
        {
            \left(1-e^{-z^2/4}\right)^{2T}
            \|p-q\|_1^2
        }
    \right\rceil .
\]
    \end{itemize}

    \textbf{Output:} A hypothesis \(\widetilde{h}\in\{p,q\}\) which is correct with probability at least \(2/3\).

    \begin{algorithmic}[1]
    \State Set
    \[
        n
        \coloneqq
        \left\lceil
        \frac{9}
        {
        \left(1-e^{-z^2/4}\right)^{2T}
        \|p-q\|_1^2
        }
        \right\rceil .
    \]
    \State The time is set to be \(t=0\), and the learner starts to have access to the Hamiltonian evolution.
      \For{\(j \leftarrow 1\) \textbf{to} \(n\)}
        \For{\(i \leftarrow 1\) \textbf{to} \(T\)}
            \State Prepare the system in the Gaussian state with zero first moment and covariance matrix \(\diag(z^{-2},z^2)\).
            \State Wait a unit time.
            \State Measure the position observable.
            \If{the outcome is non-negative}
                \State Set \(\widetilde{x}^{(j)}_i\coloneqq 0\).
            \Else
                \State Set \(\widetilde{x}^{(j)}_i\coloneqq 1\).
            \EndIf
        \EndFor
        \State Set \(\widetilde{x}^{(j)}
        \coloneqq
        (\widetilde{x}^{(j)}_1,\ldots,\widetilde{x}^{(j)}_T)\).
      \EndFor
      \State Let \(Q_p\) and \(Q_q\) be the one-block output distributions induced by the above noisy readout when the hidden string is sampled from \(p\) and \(q\), respectively.
      \If{$\prod_{j=1}^n Q_p(\widetilde{x}^{(j)})
          \ge
          \prod_{j=1}^n Q_q(\widetilde{x}^{(j)})$}
          \State \Return \(\widetilde{h}\coloneqq p\).
      \Else
          \State \Return \(\widetilde{h}\coloneqq q\).
      \EndIf
    \end{algorithmic}
  \end{mdframed}
\end{table}

\begin{thm}[(Protocol to solve the CV Hypothesis Testing problem)]\label{thm_upper_bound_hyp_testing}
Let \(T\in\mathbb{N}^+\), let \(z\ge1\), and let \(p,q\) be two distinct probability distributions over \(\{0,1\}^T\). The algorithm in Table~\ref{table_algo_hyp_testing} is a Gaussian protocol with squeezing bounded by \(z\) that solves the CV Hypothesis Testing problem between \(p\) and \(q\) with success probability at least \(2/3\), using sensing time at most
\bb\label{eq:sensing_time_ht_protocol}
    T_{\mathrm{sensing}}
    \le
    \frac{12\,T}
    {
        \left(1-e^{-z^2/4}\right)^{2T}
        \|p-q\|_1^2
    } .
\ee
In particular, if \(z=\omega(\sqrt{\log T})\), and \(\|p-q\|_1\) is at least inverse-polynomial in \(T\), then the sensing time scales polynomially with \(T\). Conversely, if \(\|p-q\|_1\) is exponentially small in \(T\), then the minimum sensing time required to distinguish \(p\) from \(q\) with success probability at least $2/3$ is exponential in \(T\), even in the idealized limit \(z\to\infty\), where the learner can read the sign pattern exactly and the problem reduces to ordinary classical hypothesis testing between \(p\) and \(q\).
\end{thm}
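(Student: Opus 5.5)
The plan is to analyze the noisy readout channel per coordinate, reduce to classical hypothesis testing between the induced block distributions $Q_p,Q_q$, and bound the sample complexity of the likelihood-ratio test. Then the converse follows from the standard classical lower bound, since infinite squeezing reveals the exact sign string.

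\textbf{Step 1: readout channel.} By Eq.~\eqref{eq:pulse_bin_readout}, conditioned on $m_i=(-1)^{x_i}b_i$, the homodyne outcome is $\mathcal N[m_i,1/(2z^2)]$. The sign decision flips $x_i$ with probability $\epsilon(b_i)=\int_{\sqrt2 z b_i}^\infty\mathcal N[0,1]$. Since $b_i\ge 1/2$, this is at most $\int_{z/\sqrt2}^\infty\mathcal N[0,1]\le \tfrac12 e^{-z^2/4}$ by Lemma~\ref{lemma_bounds_Q}. After averaging over the independent $b_i$, each bit is flipped independently with the same probability $\epsilon\le\tfrac12 e^{-z^2/4}$, independent of $x$ and of the hypothesis. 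Hence $Q_h=h\ast\mathrm{BSC}_\epsilon^{\otimes T}$ for $h\in\{p,q\}$.

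\textbf{Step 2: distinguishability survives.} I would lower bound $\|Q_p-Q_q\|_1$ by $(1-2\epsilon)^T\|p-q\|_1\ge(1-e^{-z^2/4})^T\|p-q\|_1$. The cleanest route I expect is to note that with probability $(1-\epsilon)^T$ no flip occurs, which gives $\|Q_p-Q_q\|_1\ge (1-\epsilon)^T\|p-q\|_1-(1-(1-\epsilon)^T)\|p-q\|_1$. This is too weak, so instead I would invert the noise. The map $\mathrm{BSC}_\epsilon^{\otimes T}$ has inverse $K^{\otimes T}$ with $K=\frac{1}{1-2\epsilon}\begin{pmatrix}1-\epsilon&-\epsilon\\-\epsilon&1-\epsilon\end{pmatrix}$, whose $\ell_1\to\ell_1$ norm is $\frac{1}{1-2\epsilon}$. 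Hence $\|p-q\|_1\le(1-2\epsilon)^{-T}\|Q_p-Q_q\|_1$. This is the key step, and it gives the factor $(1-e^{-z^2/4})^{T}$.

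\textbf{Step 3: testing sample complexity.} With $\delta:=\tfrac12\|Q_p-Q_q\|_1$, the maximum-likelihood test on $n$ i.i.d.\ blocks errs with probability at most $\tfrac12(1-\mathrm{TV}(Q_p^{\otimes n},Q_q^{\otimes n}))$. I would use the Hellinger (Bhattacharyya) bound: the error is at most $\tfrac12 BC(Q_p,Q_q)^n\le\tfrac12(1-\delta^2)^{n/2}\le \tfrac12 e^{-n\delta^2/2}$. Requiring this to be at most $1/3$ needs only $n\ge 2\log(3/2)/\delta^2$. This is implied by $n\ge 9/\big((1-e^{-z^2/4})^{2T}\|p-q\|_1^2\big)$, since $\delta\ge\tfrac12(1-e^{-z^2/4})^T\|p-q\|_1$. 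The ceiling is at most $\tfrac43$ of its argument when the argument is at least $9$, which yields the bound $12T/(\cdots)$ in Eq.~\eqref{eq:sensing_time_ht_protocol}.

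\textbf{Step 4: the remaining claims.} For the polynomial regime, $(1-e^{-z^2/4})^{-2T}\le e^{4Te^{-z^2/4}}=O(1)$ when $z=\omega(\sqrt{\log T})$. For the converse, any protocol is no better than one given exact i.i.d.\ samples of $x$. Two-point testing with $N$ samples requires $1/3\le \mathrm{TV}(p^{\otimes N},q^{\otimes N})$. I would use $1-\mathrm{TV}(p^{\otimes N},q^{\otimes N})\ge \tfrac12 BC^{2N}\ge\tfrac12(1-\mathrm{TV}(p,q))^{2N}$, together with the requirement that the success probability $\tfrac12(1+\mathrm{TV})\ge 2/3$. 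This gives the displayed bound $N\ge \log(9/8)/\big(2\log\frac{1}{1-\frac12\|p-q\|_1}\big)$ up to the stated constant, which is exponential when $\|p-q\|_1$ is exponentially small.
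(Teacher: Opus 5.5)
Your achievability argument follows the paper's route: a binary symmetric channel per bit, inversion of $P_\epsilon^{\otimes T}$ using $\|P_\epsilon^{-1}\|_{1\to1}=1/(1-2\epsilon)$, tensorization of the Bhattacharyya coefficient, and Fuchs--van de Graaf. The converse step, however, fails as written. From $\mathrm{TV}(p^{\otimes N},q^{\otimes N})\ge 1/3$ you get $1-\mathrm{TV}(p^{\otimes N},q^{\otimes N})\le 2/3$. Combining this with your inequality $1-\mathrm{TV}(p^{\otimes N},q^{\otimes N})\ge\tfrac12 BC(p,q)^{2N}$ only gives $BC(p,q)^{2N}\le 4/3$. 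That holds for every $N$, since $\tfrac12 BC^{2N}\le\tfrac12<\tfrac23$. So this chain yields no lower bound on $N$ at all, not the displayed one ``up to constants.''

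The fix is the other Fuchs--van de Graaf inequality, $\mathrm{TV}(p^{\otimes N},q^{\otimes N})\le\sqrt{1-BC(p,q)^{2N}}$. Together with $\mathrm{TV}\ge 1/3$ it gives $BC(p,q)^{2N}\le 8/9$. Then $BC(p,q)\ge 1-\tfrac12\|p-q\|_1$ yields exactly $N\ge \log(9/8)/\bigl(2\log\frac{1}{1-\frac12\|p-q\|_1}\bigr)$. This is what the paper does; the constant $\log(9/8)$ comes from $1-(1/3)^2$. Alternatively, subadditivity $\mathrm{TV}(p^{\otimes N},q^{\otimes N})\le N\,\mathrm{TV}(p,q)$ gives $N\ge 2/(3\|p-q\|_1)$ directly, which already suffices for the exponential claim.

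There are two smaller slips in the achievability part, neither of which breaks the result.

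First, write $D=(1-e^{-z^2/4})^{2T}\|p-q\|_1^2$. The ceiling's argument $9/D$ is only guaranteed to be at least $9/4$, not $9$, because $\|p-q\|_1$ can be as large as $2$. The conclusion survives because $\lceil x\rceil\le\tfrac43 x$ holds for all $x\ge 9/4$.

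Second, you bounded the Bayes (average) error under a uniform prior. Since the hypothesis is not random, you should control the error under each hypothesis, as the paper does. Use $\max\{\alpha_n,\beta_n\}\le\alpha_n+\beta_n=1-\mathrm{TV}(Q_p^{\otimes n},Q_q^{\otimes n})\le BC(Q_p,Q_q)^n\le e^{-n\delta^2/2}$. This requires $n\ge 2\log 3/\delta^2$. Since $\delta^2\ge D/4$, that is at most $8\log 3/D<9/D$, so your choice of $n$ still works.
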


\begin{proof}
We first analyse the distribution of a single noisy readout. Fix one
coordinate $i$ in one block. Applying Eq.~\eqref{eq:pulse_bin_readout}
with $m_i=(-1)^{x_i}b_i$, the homodyne outcome, conditioned on $x_i$
and $b_i$, is distributed as
\[
    R_i
    \sim
    \mathcal{N}\!\left((-1)^{x_i}b_i,\frac{1}{2z^2}\right).
\]
The protocol sets \(\widetilde{x}_i=0\) if \(R_i\ge0\), and \(\widetilde{x}_i=1\) otherwise. Hence the probability of reading the wrong bit is the same for \(x_i=0\) and for \(x_i=1\), and is given by
\bb\label{eq:epsilon_ht_def}
    \varepsilon
    =
    \mathop{\mathbb{E}}_{b\sim \mathrm{Unif}([\frac12,\frac32])}
    \int_{b}^{\infty}
    \mathcal{N}\!\left(0,\frac{1}{2z^2}\right)(r)\,\mathrm{d}r .
\ee
Equivalently,
\[
    \widetilde{x}_i
    =
    x_i+e_i \pmod 2,
\]
where \(e_i\) is a Bernoulli random variable with
\[
    \Pr[e_i=1]=\varepsilon,
    \qquad
    \Pr[e_i=0]=1-\varepsilon.
\]
Thus one full block of measurements produces a sample from the product binary symmetric channel
\[
    P_\varepsilon^{\otimes T}(\widetilde{x}\mid x)
    =
    \prod_{i=1}^T
    P_\varepsilon(\widetilde{x}_i\mid x_i),
\]
where
\[
    P_\varepsilon(y\mid x)
    =
    \begin{cases}
        1-\varepsilon, & y=x,\\
        \varepsilon, & y\ne x.
    \end{cases}
\]
Consequently, if the hidden string is sampled from \(p\), the distribution of one observed string is
\bb\label{eq:Qp_ht_def}
    Q_p(y)
    =
    \mathop{\mathbb{E}}_{x\sim p}
    P_\varepsilon^{\otimes T}(y\mid x),
    \qquad y\in\{0,1\}^T,
\ee
and similarly
\bb\label{eq:Qq_ht_def}
    Q_q(y)
    =
    \mathop{\mathbb{E}}_{x\sim q}
    P_\varepsilon^{\otimes T}(y\mid x).
\ee
Since the protocol repeats the same block experiment \(n\) times independently, the full observed data are distributed either according to \(Q_p^{\otimes n}\) or according to \(Q_q^{\otimes n}\).

We now show that the choice of \(n\) in
Table~\ref{table_algo_hyp_testing} is sufficient. Let
\[
    \alpha_n
    \coloneqq
    \Pr_{Q_p^{\otimes n}}[\widetilde{h}=q],
    \qquad
    \beta_n
    \coloneqq
    \Pr_{Q_q^{\otimes n}}[\widetilde{h}=p]
\]
be the two conditional error probabilities, and define
\[
    P_{\mathrm{err}}
    \coloneqq
    \max\{\alpha_n,\beta_n\}.
\]
For the likelihood-ratio rule used by the algorithm,
\[
    \alpha_n+\beta_n
    =
    1-\frac12
    \left\|
        Q_p^{\otimes n}
        -
        Q_q^{\otimes n}
    \right\|_1 .
\]
Consequently,
\bb\label{eq:error_likelihood_ht}
    P_{\mathrm{err}}
    \le
    1-\frac12
    \left\|
        Q_p^{\otimes n}
        -
        Q_q^{\otimes n}
    \right\|_1 .
\ee
To lower bound the total variation distance after \(n\) repetitions, we use the classical fidelity
\[
    F(P,Q)\coloneqq \sum_y \sqrt{P(y)Q(y)}.
\]
The fidelity is multiplicative under tensor products, and the Fuchs--van de Graaf inequalities give
\[
    1-F(P,Q)
    \le
    \frac12\|P-Q\|_1
    \le
    \sqrt{1-F(P,Q)^2}.
\]
Therefore,
\bb\label{eq:tvd_tensor_ht}
\begin{aligned}
    \frac12
    \left\|
        Q_p^{\otimes n}
        -
        Q_q^{\otimes n}
    \right\|_1
    &\ge
    1-F(Q_p^{\otimes n},Q_q^{\otimes n})  \\
    &=
    1-F(Q_p,Q_q)^n  \\
    &\ge
    1-
    \left(
        1-\frac14\|Q_p-Q_q\|_1^2
    \right)^{n/2}  \\
    &\ge
    1-
    \exp\!\left(
        -\frac{n}{8}\|Q_p-Q_q\|_1^2
    \right).
\end{aligned}
\ee
Combining \eqref{eq:error_likelihood_ht} and \eqref{eq:tvd_tensor_ht}, we obtain
\bb\label{eq:error_bound_Q_ht}
    P_{\mathrm{err}}
    \le
    \exp\!\left(
        -\frac{n}{8}\|Q_p-Q_q\|_1^2
    \right).
\ee

It remains to lower bound \(\|Q_p-Q_q\|_1\) in terms of \(\|p-q\|_1\). We view \(p\) and \(q\) as vectors indexed by \(\{0,1\}^T\). Then
\[
    Q_p-Q_q
    =
    P_\varepsilon^{\otimes T}(p-q),
\]
where the one-bit channel matrix is
\[
    P_\varepsilon
    =
    \begin{pmatrix}
        1-\varepsilon & \varepsilon\\
        \varepsilon & 1-\varepsilon
    \end{pmatrix}.
\]
Since \(\varepsilon<1/2\), this matrix is invertible, with
\[
    P_\varepsilon^{-1}
    =
    \frac{1}{1-2\varepsilon}
    \begin{pmatrix}
        1-\varepsilon & -\varepsilon\\
        -\varepsilon & 1-\varepsilon
    \end{pmatrix}.
\]
The induced \(1\to1\) norm of \(P_\varepsilon^{-1}\) is
\[
    \left\|P_\varepsilon^{-1}\right\|_{1\to1}
    =
    \frac{1}{1-2\varepsilon}.
\]
Hence, using multiplicativity of the induced \(1\to1\) norm under tensor products,
\[
    \left\|
    (P_\varepsilon^{\otimes T})^{-1}
    \right\|_{1\to1}
    =
    \frac{1}{(1-2\varepsilon)^T}.
\]
Therefore,
\bb\label{eq:l1_lower_Q_ht}
    \|p-q\|_1
    =
    \left\|
    (P_\varepsilon^{\otimes T})^{-1}
    (Q_p-Q_q)
    \right\|_1
    \le
    \frac{1}{(1-2\varepsilon)^T}
    \|Q_p-Q_q\|_1,
\ee
and so
\bb\label{eq:Q_l1_lower_ht}
    \|Q_p-Q_q\|_1
    \ge
    (1-2\varepsilon)^T
    \|p-q\|_1 .
\ee

We now relate \(\varepsilon\) to the squeezing parameter. From \eqref{eq:epsilon_ht_def} and the Gaussian tail bound in Lemma~\ref{lemma_bounds_Q},
\bb\label{eq:eps_alpha_ht}
    2\varepsilon
    &=
    2\,
    \mathop{\mathbb{E}}_{b}
    \int_{\sqrt{2}zb}^{\infty}
    \mathcal{N}(0,1)(r)\,\mathrm{d}r
    \le
    \mathop{\mathbb{E}}_{b}
    e^{-z^2b^2}
    \le
    e^{-z^2/4}.
\ee
Thus \(1-2\varepsilon\ge1-e^{-z^2/4}\), and \eqref{eq:Q_l1_lower_ht} gives
\bb\label{eq:Q_l1_lower_alpha_ht}
    \|Q_p-Q_q\|_1
    \ge
    (1-e^{-z^2/4})^T
    \|p-q\|_1 .
\ee
Substituting this estimate into \eqref{eq:error_bound_Q_ht}, we find
\[
    P_{\mathrm{err}}
    \le
    \exp\!\left(
        -\frac{n}{8}
        (1-e^{-z^2/4})^{2T}
        \|p-q\|_1^2
    \right).
\]
To guarantee \(P_{\mathrm{err}}\le \frac13\), and hence success probability at least \(2/3\), it is enough to choose
\[
    n
    \ge
    \frac{8\log 3}
    {
    (1-e^{-z^2/4})^{2T}
    \|p-q\|_1^2
    } .
\]
The choice of \(n\) in Table~\ref{table_algo_hyp_testing} satisfies this condition. This proves the correctness claim.

Moreover, each repetition uses exactly \(T\) units of sensing time. Let
\[
    D
    \coloneqq
    \left(1-e^{-z^2/4}\right)^{2T}
    \|p-q\|_1^2.
\]
Since \(p\ne q\), we have \(D>0\), while $D\le \|p-q\|_1^2\le 4$. Hence \(9/D\ge 9/4\). Using
\(\lceil x\rceil\le \frac43x\) for every \(x\ge 9/4\), we obtain
\bb
    T_{\mathrm{sensing}}
    =
    nT
    =
    T\left\lceil\frac{9}{D}\right\rceil
    \le
    \frac{12T}{D}
    =
    \frac{12\,T}
    {
        \left(1-e^{-z^2/4}\right)^{2T}
        \|p-q\|_1^2
    }.
\ee
This proves Eq.~\eqref{eq:sensing_time_ht_protocol}.

If $z= \omega(\sqrt{\log T})$, then \(e^{-z^2/4}=O(1/T)\). In this regime, \((1-e^{-z^2/4})^{-2T}=O(1)\), and the sensing time scales as
\[
    T_{\mathrm{sensing}}
    =
    O\!\left(
        \frac{T}{\|p-q\|_1^2}
    \right).
\]
Thus, whenever \(\|p-q\|_1\) is at least inverse-polynomial in \(T\), the sensing time is polynomial in \(T\).

Let us finally justify the converse statement in the theorem. It is enough to prove it in the idealized limit \(z\to\infty\). Indeed, in that limit the learner can read the sign pattern in each block exactly. Thus the problem reduces to the following classical task: given \(N\) i.i.d.~samples from either \(p\) or \(q\), decide which hypothesis generated the samples.

Suppose that a classical test distinguishes \(p\) from \(q\) with success probability at least \(2/3\) using \(N\) samples. Then, we must have
\[
    \frac12
    \left\|
        p^{\otimes N}-q^{\otimes N}
    \right\|_1\ge \frac13.
\]
We now upper bound the left hand side in terms of $\frac12\|p-q\|_1$. Let
\[
    F(p,q)
    \coloneqq
    \sum_{x\in\{0,1\}^T}\sqrt{p(x)q(x)}
\]
be the classical fidelity. By the Fuchs--van de Graaf inequalities, $F(p,q)\ge 1-\frac12     \left\|         p-q   \right\|_1$. Moreover, fidelity is multiplicative under tensor products, so
\[
    F(p^{\otimes N},q^{\otimes N})=F(p,q)^N.
\]
Applying again the Fuchs--van de Graaf inequalities, this time to the \(N\)-sample distributions, gives
\[
\begin{aligned}
    \frac12     \left\|         p^{\otimes N}-q^{\otimes N}     \right\|_1
    &\le
    \sqrt{
        1-
        F(p^{\otimes N},q^{\otimes N})^2
    }  \\
    &=
    \sqrt{
        1-
        F(p,q)^{2N}
    }  \\
    &\le
    \sqrt{
        1-
        (1-\frac12     \left\|         p-q   \right\|_1)^{2N}
    } .
\end{aligned}
\]
Since \(\frac12     \left\|         p^{\otimes N}-q^{\otimes N}     \right\|_1\ge1/3\), it follows that
\[
    (1-\frac12     \left\|         p-q   \right\|_1)^{2N}
    \le
    \frac89 .
\]
Equivalently,
\[
    N
    \ge
    \frac{\log(9/8)}
    {2\log\!\left(\frac{1}{1-\frac12     \left\|         p-q   \right\|_1}\right)} .
\]
Therefore, if \(\|p-q\|_1\) is exponentially small in \(T\), the number of exact classical samples needed to distinguish \(p\) from \(q\) with success probability at least \(2/3\) is exponentially large in \(T\). Since the idealized exact-readout model is at least as informative as any finite-squeezing sensing protocol, the same exponential obstruction applies to the CV Hypothesis Testing problem for every value of the squeezing. 
\end{proof}

\end{document}